\newtheorem{lemma}{Lemma}
\newtheorem{theorem}{Theorem}
\newtheorem{corollary}{Corollary}
\newtheorem{proposition}{Proposition}
\newtheorem{claimu}{Claim}
\newtheorem{observation}{Observation}
\DeclareMathOperator{\suc}{succ}
\DeclareMathOperator{\pre}{pred}
\begin{document}

\title{Stable Matchings with Covering Constraints:\\ A Complete Computational Trichotomy\thanks{A preliminary version of the results presented in here appeared at the Proceedings of the 10th International Symposium on Algorithmic Game Theory~\cite{MnichSchlotter2017}.
        Research of M. Mnich supported by ERC Starting Grant 306465 (BeyondWorstCase) and DFG Grant MN 59/4-1.
        Research of I. Schlotter supported by the Hungarian National Research Fund (OTKA grants no. K 128611 and no. K 124171).}
}

\author{Matthias Mnich\thanks{Universit{\"a}t Bonn, Bonn, Germany. \texttt{mmnich@uni-bonn.de}}
          \and
        Ildik{\'o} Schlotter\thanks{Budapest University of Technology and Economics, Budapest, Hungary. \texttt{ildi@cs.bme.hu}}
}

\date{October 5, 2019}
% The correct dates will be entered by the editor

\maketitle

\begin{abstract}
  Stable matching problems with lower quotas are fundamental in academic hiring and ensuring operability of rural hospitals.
  Only few tractable (polynomial-time solvable) cases of stable matching with lower quotas have been identified; most such problems are $\mathsf{NP}$-hard and also hard to approximate (Hamada et al., Algorithmica 74(1), 2016).

  We thus consider stable matching problems with lower quotas under a relaxed notion of tractability, namely fixed-parameter tractability.
  By cloning hospitals we focus on the case when all hospitals have upper quota equal to~1, which generalizes the setting of ``arranged marriages'' first considered by Donald Knuth in 1976.
  We investigate how a set of natural parameters, namely the maximum length of preference lists for men and women, the number of distinguished men and women, and the number of blocking pairs allowed determine the computational tractability of this problem.

  Our main result is a complete complexity trichotomy: for each choice of parameters we either provide a polynomial-time algorithm, or an $\mathsf{NP}$-hardness proof and fixed-parameter algorithm, or $\mathsf{NP}$-hardness proof and $\mathsf{W}[1]$-hardness proof.
  As corollary, we negatively answer a question by Hamada et al. (Algorithmica 74(1), 2016) by showing fixed-parameter intractability parameterized by optimal solution size.
  We also classify all cases of one-sided constraints where only women may be distinguished.

  \medskip
  \noindent
  \textbf{Keywords.}{Stable marriage, lower quotas, fixed-parameter algorithms.}
% \PACS{PACS code1 \and PACS code2 \and more}
%\subclass{68R10 \and 91B68}
\end{abstract}

\section{Introduction}
\label{sec:introduction}
The {\sc Stable Marriage (SM)} problem is a fundamental problem first studied by Gale and Shapley~\cite{GaleShapley1962} in 1962.
%\citet{GaleShapley1962}.
An instance of SM consists of a set $\mathcal{M}$ of men, a set $\mathcal{W}$ of women, and a preference list for each person ordering members of the opposite sex.
We aim to find a stable matching, i.e., a matching for which there exists no pair of a man and a woman who prefer each other to their partners given by the matching; such a pair is called a blocking pair.
Gale and Shapley proved~\cite{GaleShapley1962} that any instance of SM admits at least one stable matching, and gave a polynomial-time algorithm, known as the Gale-Shapley algorithm, to find one.
Gale and Shapley also considered the many-to-one extension of SM, known as the {\sc Hospitals/Residents} (HR) problem.
In HR, the two sets $\mathcal H$ and $\mathcal R$ that correspond to men and women in the SM problem are residents and hospitals, respectively.
Each hospital $H\in\mathcal H$ has an upper quota on the number of residents in $\mathcal R$ that it can accept.
For HR it still holds true that a stable matching always exists, and can be found efficiently.

An extension of HR that is motivated by several real-world applications is the \textsc{Hospitals/Residents with Lower Quota} (HRLQ) problem, where hospitals declare both \emph{lower} and \emph{upper quotas} which bound the number of residents they can accept; as before, residents rank hospitals and vice versa.
Now it is no longer true that a stable assignment always exists.
The possible non-existence of stable assignments motivates the design of algorithms that find an assignment with a minimum number of blocking pairs; this is the task of the HRLQ problem.
Indeed, the HRLQ problem and its variants have recently gained quite some interest from the algorithmic community~
\cite{BiroEtAl2010,CechlarovaFleiner2016,FleinerKamiyama2014,GotoEtAl2016,HamadaEtAl2014,Huang2010,Kamiyama2013,MonteTumennasan2013,Yokoi2017b,Yokoi2017}. 
In his book, Manlove~\cite[Chapter 5.2]{Manlove2013} devotes an entire chapter to the algorithmics of different versions of the HRLQ problem.

The reason for this high interest in HRLQ is explained by its importance in several real-world matching markets~\cite{FragiadakisEtAl2016,FragiadakisTroyan2017,Veskioja2005} such as school admission systems, centralized assignment of residents to hospitals, or of cadets to military branches.
Lower quotas are a common feature of such admission systems.
Their purpose is often to remedy the effects of under-staffing that are explained by the well-known Rural Hospitals Theorem~\cite{GaleSotomayor1985}: as an example, governments usually want to assign at least a small number of medical residents to each rural hospital to guarantee a minimum service level.
Minimum quotas are also discussed in controlled school choice programs~\cite{EhlersEtAl2014,KominersSonmez2016,Westkamp2013}
where students are divided into a small number of types, and schools set lower bounds for each type.
Such models can represent various forms of affirmative actions taken by schools to, e.g., admit a certain number of minority students~\cite{EhlersEtAl2014}.
Another example is the German university admission system for admitting students to highly oversubscribed subjects, where a certain percentage of study places is assigned according to high school grades or waiting time~\cite{Westkamp2013}.
But lower quotas may also arise due to financial considerations: for instance, a business course with too few (tuition-paying) attendees may not be profitable.
Certain aspects of airline preferences for seat upgrade allocations can be also modelled by lower quotas~\cite{KominersSonmez2016}.

Much of the algorithmic research found that the HRLQ problem (in its different variants) is $\mathsf{NP}$-hard, and thus considered intractable.
A common approach then to identify tractable (polynomial-time solvable) cases of HRLQ; this avenue has lead to several beautiful algorithms~\cite{FleinerKamiyama2014,Huang2010,Yokoi2017}.
However, restricting to polynomial-time solvability necessarily means (if $\mathsf{P} \not= \mathsf{NP}$) that some original features of HRLQ must be restricted more or less, which may be undesirable in the application.
Another approach aimed at addressing the intractability of HRLQ was the design of approximation algorithms.
Unfortunately, it turns out that non-trivial approximation algorithms for HRLQ are highly unlikely to exist: Hamada et al.~\cite{HamadaEtAl2014} showed that, unless $\mathsf{P} = \mathsf{NP}$, no algorithm with approximation guarantee $(|\mathcal H| + |\mathcal R|)^{1-\varepsilon}$ can exist for any $\varepsilon > 0$ (which they complement by an algorithm with approximation guarantee $(|\mathcal H| + |\mathcal R|)$).
In light of their strong inapproximability bounds, Hamada et al. explicitly suggested to consider a more relaxed notion of tractability for HRLQ, namely, fixed-parameter tractability.
They particularly asked whether HRLQ is fixed-parameter tractable parameterized by the minimum number~$b$ of blocking pairs over all matchings meeting all lower and upper quota requirements.

In this paper we follow this avenue, and study the fixed-parameter tractabi\-lity of HRLQ.
It allows us to provide a fine-grained analysis of HRLQ, and the design of efficient algorithms for $\mathsf{NP}$-hard variants of HRLQ for small parameter values.
Our main focus will be the case of HRLQ when each hospital has unit upper quota.
The reason is that by the frequently applied method of ``cloning'' hospitals, stable instances of HRLQ reduce to the case where each hospital has unit upper quota---and thus we can reduce stable instances of HRLQ to the SM setting with lower quotas.
In fact, this is equivalent to the special case of SM where only a subset of women (or, equivalently, men) are distinguished by having also a unit lower quota.
From now on, we refer to HRLQ with unit upper quotas as SMC (where C means that we have to \emph{cover} the women/men who have a unit lower quota), to the special case of SMC with one-sided covering constraints, linking SMC and HRLQ, as SMC-1.
%, modelling a situation where medical residents apply for jobs in hospitals: 
%
%
%
% Many properties of the stable matching problem hold in hr. For example, any instance of hr admits at
%least one stable assignment, and we can efficiently find a stable assignment.
%Recently, several variations of hr-type problems with lower quotas for the hospitals were studied
%
%We consider a problem that we call {\sc Stable Marriage with Covering Constraints (SMC)}.
So formally, in SMC a set $\mathcal W^\star$ of women and a set $\mathcal M^\star$ of men are distinguished, and a feasible matching is one where each person in~$\mathcal W^\star\cup\mathcal M^\star$ gets matched.
By the Rural Hospitals Theorem~\cite{GaleSotomayor1985} we know that the set of unmatched men and women is the same in all stable matchings, so clearly, feasible stable matchings may not exist.
Thus, we define the task in SMC as finding a feasible matching with a minimum number of blocking pairs.

Apart from the recent interest in HRLQ, its reduction SMC also serves as a ``modern version'' of a classical problem first introduced by Donald Knuth.
Knuth~\cite{Knuth1976} considered SM with ``arranged marriages'', which are a set $\mathcal X$ of man-woman pairs that must be matched with each other.
He showed that the Gale-Shapley algorithm can be extended to decide the existence of a stable matching still in time $O(n^2)$ for $n$-person instances, but in case of absence of a stable matching, minimizing the number of blocking pairs is $\mathsf{NP}$-hard.
Now in SMC, one does not prescribe any more which woman has to marry which man, but only requires certain women and men to marry without dictating their partner.
There is a natural Turing reduction from SMC to the variant considered by Knuth.
Coarse analysis into polynomial-time solvable and $\mathsf{NP}$-hard cases has been studied by several researchers~\cite{ArulselvanEtAl2016}.

%Somewhat surprisingly, this natural extension of SM has not been considered before.

%
%Our main motivation for studying SMC---apart from its natural definition---is its close relationship with the 

Another motivation for studying the SMC problem comes from the following scenario that we dub \emph{Control for Stable Marriage}. 
Consider a two-sided market where each participant of the market expresses its preferences over members of the other party, and some central agent (e.g., a government) performs the task of finding a stable matching in the market. 
It might happen that this central agency wishes to apply a certain \emph{control} on the stable matching produced: it may favour some participants by trying to assign them a partner in the resulting matching.
Such a behaviour might be either malicious (e.g., the central agency may accept bribes and thus favour certain participants) or beneficial (e.g., it may favour those who are at disadvantage, like handicapped or minority participants).
However, there might not be a stable matching that covers all participants the agency wants to favour; thus arises the need to produce a matching that is \emph{as stable as possible} among those that fulfil our constraints---the most natural aim in such a case is to minimize the number of blocking pairs in the produced matching, which yields exactly the SMC problem. 
Similar control problems have been extensively studied in the area of social choice for voting systems~\cite[Chapter 7]{comsoc-handbook} 
%following the work initiated by Bartholdi III. et al.~\cite{bartholdi-control}
and recently also for fair division scenarios~\cite{aziz-walsh-schlotter}, 
but have not yet been considered in connection to stable matchings.

\subsection{Our Results}
\label{sec:ourresults}
We provide an extensive algorithmic analysis of the SMC problem and its special case SMC-1.
In our analysis, we examine how different aspects of the input influence the tractability of these problems.
To this end, we apply the framework of parameterized complexity, which deals with computationally hard problems and focuses on how certain \emph{parameters} of a problem instance influence its tractability; for background, we refer to the book by Cygan et al.~\cite{CyganEtAl2015}.
We aim to design so-called \emph{fixed-parameter algorithms}, which perform well in practice if the value of the parameter on hand is small (for the precise definitions, see Section~\ref{sec:preliminaries}). 

The parameters we consider are
\begin{itemize}
  \item the number $b$ of blocking pairs allowed, 
  \item the number $|\mathcal W^\star|$ of women with covering constraint, 
  \item the number $|\mathcal M^\star|$ of men with covering constraint, 
  \item the maximum length $\Delta_{\mathcal W}$ of women's preference lists, and 
  \item the maximum length $\Delta_{\mathcal M}$ of men's preference lists.
\end{itemize}
The choice of each of these parameters is motivated by the aforementioned applications.
For instance, we seek matchings where ideally no blocking pairs at all or at least only few of them appear, to ensure stability of the matching and happiness of those getting matched. The number of women/men with covering constraints corresponds, for instance, to the number of rural hospitals for which a minimum quota specifically must be enforced, which we can expect to be small among the set of all hospitals accepting medical residents.
Finally, preference lists of hospitals and residents can be expected to be small, as each hospital might not rank many more candidates than the number of positions it has to fill, whereas residents might rank only their top choices of hospitals.

We investigate in detail how these parameters influence the complexity of the SMC problem.
A \emph{parameterized restriction of SMC} with respect to the set $S=\{b,|\mathcal W^\star|,|\mathcal M^\star|,\Delta_{\mathcal M},\Delta_{\mathcal W}\}$ means a (possibly parameterized) special case of SMC where each element of $S$ is either restricted to be some constant integer, or regarded as a parameter, or left unbounded. 
Intuitively, these different choices for the elements of $S$ correspond to their expected ``range'' in applications, from very small to mid-range to large (compared to the size of the entire system).
By considering all combinations, we can flexibly model the whole range of applications mentioned above.
%Note that SMC-1 is itself a parameterized restriction of SMC w.r.t. $S$, as it is the restriction of SMC where $|\mathcal M^\star|=0$. 
We can even cover some cases of \emph{master lists}, where all men's preference lists are restrictions of the exact same total order over the women, and likewise all women's preference lists are restrictions of the exact same total order over the men.

\begin{theorem}
\label{thm:trichotomy-main}
  Any parameterized restriction of SMC with respect to $\{b,|\mathcal W^\star|,|\mathcal M^\star|,\Delta_{\mathcal M},\Delta_{\mathcal W}\}$ is in $\mathsf{P}$, or $\mathsf{NP}$-hard and fixed-parameter tractable, or $\mathsf{NP}$-hard and $\mathsf{W[1]}$-hard with the given parameterization\footnote{Restrictions without any parameters are simply classified as polynomial-time solvable or $\mathsf{NP}$-hard.}, and is covered by one of the results shown in  Table~\ref{tab:summary_of_results}. 
    
  In particular, SMC is $\mathsf{W}[1]$-hard parameterized by $b + |\mathcal W^\star|$, even if there are no distinguished men (i.e., $|\mathcal M^\star|=0$), there is a master list over men as well as one over women, $\Delta_{\mathcal{M}} = 3$, $\Delta_{\mathcal{W}} = 3$ and each distinguished woman finds only a single man acceptable.
\end{theorem}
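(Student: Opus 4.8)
The plan is to give a polynomial-time reduction from \textsc{Multicolored Independent Set}: given a graph $G$ whose vertex set is partitioned into $k$ colour classes $V_1,\dots,V_k$, decide whether $G$ contains an independent set with exactly one vertex per class. Since this problem is $\mathsf{W}[1]$-hard parameterized by $k$, it is enough to build from $(G,k)$ an SMC instance $\mathcal I$ and a bound $b$ such that (i) $\mathcal I$ admits a feasible matching with at most $b$ blocking pairs if and only if $G$ has a multicoloured independent set; (ii) $\mathcal I$ obeys all the stated restrictions, namely $|\mathcal M^\star|=0$, a master list over men and one over women, $\Delta_{\mathcal M}=\Delta_{\mathcal W}=3$, and every distinguished woman having a single acceptable man; and (iii) $b+|\mathcal W^\star|$ is bounded by a function of $k$ alone. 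An FPT algorithm for SMC parameterized by $b+|\mathcal W^\star|$ would then solve \textsc{Multicolored Independent Set} in FPT time, the desired contradiction.

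First I would design a \emph{selection gadget} for each colour class $V_i$: a constant number of distinguished women whose forced partners (each is the woman's unique acceptable man) anchor a path-like chain of non-distinguished men and women, all of degree at most~$3$, ordered so that the men's order and the women's order along the chain extend consistently to global master lists. The gadget is built so that every feasible matching \emph{activates} exactly one of the $|V_i|$ ``vertex positions'' of $V_i$, and every possible activation costs the same fixed number $c_i$ of unavoidable blocking pairs, localised at the ``break point'' of the chain. Then, for each edge $\{u,v\}$ of $G$ with $u\in V_i$ and $v\in V_j$, I would add an \emph{edge gadget} connecting the position of $u$ in the $V_i$-gadget to the position of $v$ in the $V_j$-gadget---again only through people of degree at most~$3$ and compatibly with the master lists---so that this edge contributes one extra blocking pair exactly when both $u$ and $v$ are activated. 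Finally set $b=\sum_{i=1}^{k}c_i$, so the budget pays for exactly one activation per class and nothing more.

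For correctness, the forward direction takes a multicoloured independent set $\{v_1,\dots,v_k\}$, activates each $v_i$ in its gadget in the corresponding $c_i$-blocking-pair configuration, and puts every edge gadget in its non-blocking state (possible because no two $v_i$ are adjacent), giving a feasible matching with exactly $b$ blocking pairs. For the converse, from any feasible matching with at most $b$ blocking pairs one argues that every selection gadget must be in one of its canonical configurations---any other behaviour, or failing to activate a position, strictly increases the blocking-pair count---so it activates a unique $v_i\in V_i$; and since $b$ leaves no slack for an edge gadget to fire, no two activated vertices are adjacent. The reduction is clearly polynomial; there are $O(k)$ distinguished women and $b=O(k)$, so $b+|\mathcal W^\star|$ depends on $k$ only. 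One also checks that the distinguished women have pairwise distinct acceptable men, so feasibility is not precluded outright.

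The hard part will be coping with the master lists and the degree bound simultaneously. A master list on each side makes the stable structure of the instance essentially rigid, so the ability to select a vertex cannot stem from women (or men) having diverse preferences---the usual lever for gadget design---and must instead be encoded solely in the acceptability pattern, within degree~$3$; this forces the chain-shaped selection gadgets and a single global ordering of all men (and of all women) that stays consistent across all $k$ selection gadgets and all edge gadgets at once. Compounding this, the ``penalty'' for an inconsistent choice has to be channelled through distinguished women that each see only one man, and one must prove that any deviation from a canonical configuration costs at least one additional blocking pair; making this accounting watertight while keeping both $b$ and $|\mathcal W^\star|$ bounded by a function of $k$ is the delicate core of the argument.
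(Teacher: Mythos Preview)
Your proposal addresses only the second paragraph of the theorem---the specific $\mathsf{W}[1]$-hardness claim---and says nothing about the first paragraph, which is the trichotomy itself. In the paper, the trichotomy is not proved by a single argument but by the union of all results in Table~\ref{tab:summary_of_results} (polynomial algorithms, FPT algorithms, and several hardness reductions), together with a case analysis (the decision diagram in Section~\ref{sec:discussion}) verifying that every parameterized restriction falls under one of these results. A proof of Theorem~\ref{thm:trichotomy-main} that omits this case analysis is incomplete regardless of how the hardness part is handled.

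For the hardness claim, your approach differs from the paper's. The paper reduces from \textsc{Multicolored Clique} and uses, for each colour pair $(i,j)$, an \emph{edge-selecting} gadget: one augmenting path per pair picks an edge of $E_{i,j}$, and consistency with the node selections is enforced by the structure of the node gadgets. This yields $b=2k+\binom{k}{2}$ and $|\mathcal W^\star|=O(k^3)$. You instead reduce from \textsc{Multicolored Independent Set} and place one gadget per \emph{edge of $G$}, aiming for $b=O(k)$ and $|\mathcal W^\star|=O(k)$. The obstacle you do not confront is that a single vertex $v$ may be incident to many edges of $G$, so many edge gadgets must ``read'' the activation status of $v$'s position; under $\Delta_{\mathcal M}=\Delta_{\mathcal W}=3$ this forces a chain of length $d(v)$ inside the selection gadget with one tap per incident edge. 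The paper's node gadget does exactly this (the $b_x^1,\dots,b_x^{d(x)}$ chain), but making such a chain behave correctly---each tap creating a blocking pair precisely when $v$ is activated, with no spurious cost otherwise, and everything consistent with a single global master list on each side---is the entire substance of the proof. Your proposal asserts that such gadgets exist with the right accounting but does not construct them; in particular, the claim that any non-canonical configuration of a selection gadget costs strictly more than $c_i$ blocking pairs is the step most likely to fail without an explicit preference table, and you give none.
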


We give a decision diagram in Section~\ref{sec:discussion} to show that the presented results indeed cover all restrictions of SMC with respect to $\{b,|\mathcal W^\star|,|\mathcal M^\star|,\Delta_{\mathcal M},\Delta_{\mathcal W}\}$. 
Table~\ref{tab:summary_of_results} summarizes our results on the complexity of SMC. 
Note that some results are implied directly by the symmetrical roles of men and women in SMC, and thus are not stated explicitly. 
Here and later, we assume for simplicity that $\Delta_{\mathcal M} \geq 2$ and $\Delta_{\mathcal W} \geq 2$.
%Due to space constraints, proofs marked by $(\star)$ are in Appendix~\ref{sec:deferredproofs}.

\begin{table}[ht]
  \centering
  \begin{tabular}{lll}
    \toprule
    constants & parameters~~~~ &  complexity\\
    \midrule
    $|\mathcal M^\star| = 0$, $|\mathcal W^\star| = 0$  & & in $\mathsf{P}$ (Gale-Shapley alg.)\\
    $|\mathcal M^\star| = 0, |\mathcal W^\star|,\Delta_{\mathcal M}$ & & in $\mathsf{P}$~(Theorem~\ref{thm:smc-allconstant})\\
    $|\mathcal M^\star|,|\mathcal W^\star|,\Delta_{\mathcal M},\Delta_{\mathcal W}$ && in $\mathsf{P}$~(Theorem~\ref{thm:smc-allconstant})\\
    $|\mathcal M^\star| = 0, \Delta_{\mathcal M}\leq 2$ & & in $\mathsf{P}$~(Theorem~\ref{thm:restricted-matching-poly})\\
    $\Delta_{\mathcal W}\leq 2, \Delta_{\mathcal M}\leq 2$ & & in $\mathsf{P}$~(Observation~\ref{observ:maxlist2})\\            
    $b$ & & in $\mathsf{P}$~(Observ.~\ref{observ:bounded-b})\\
    $|\mathcal M^\star| = 0, \Delta_{\mathcal W} = 2, \Delta_{\mathcal M}\geq 3$ & & $\mathsf{NP}$-hard~(Theorem~\ref{thm-case23})\\
    $|\mathcal W^\star| = 1,\Delta_{\mathcal W} = 2$ && $\mathsf{NP}$-hard (Theorem~\ref{thm:smc2-npc-1woman})\\
    $|\mathcal M^\star| = 0, \Delta_{\mathcal W}\geq 3, \Delta_{\mathcal M}\geq 3, \Delta^{\star} = 1$~~ & $b+|\mathcal W^\star|$ & $\mathsf{W}[1]$-hard~(Theorem~\ref{thm:smc-minblock-mainhardness})\\
    $|\mathcal M^\star| = 0, |\mathcal W^\star|\geq 1, \Delta_{\mathcal W} \geq 3, \Delta^\star = 1$~~~ & $b + \Delta_{\mathcal M}$ & $\mathsf{W}[1]$-hard~(Theorem~\ref{thm:1woman-strict})\\
            %      $|\mathcal M^\star| = 0, |\mathcal W^\star|\geq 2, \Delta_{\mathcal W^\star} = 1, \Delta_{\mathcal W} \geq 3$, two master lists    & $b + \Delta_{\mathcal M}$ & $\mathsf{W}[1]$-hard (Thm.~\ref{thm:2women})\\
    $\Delta_{\mathcal W}\leq 2$ & $|\mathcal W^\star|+|\mathcal M^\star|$           & $\mathsf{FPT}$ (Theorem~\ref{thm:fpt-case2x})\\
    $\Delta_{\mathcal W}\leq 2$ & $b$           & $\mathsf{FPT}$ (Corollary~\ref{thm:fpt-case2xCor})\\
    $\Delta_{\mathcal W}\leq 2$ & $|\mathcal W^\star|+\Delta_{\mathcal M}$           & $\mathsf{FPT}$ (Theorem~\ref{thm:fpt-case2x-extended})\\

            %
            %                                  & $\ell_{\mathcal M}+\Delta_{\mathcal M}$ & $\mathsf{FPT}$~(Thm.~\ref{thm:maxlistlength3few3entities})\\
    \bottomrule
  \end{tabular}
  \caption{Summary of our results for {\sc Stable Marriage with Covering Constraints}.
    Here,~$\Delta^\star$ denotes the maximum length of the preference list of any distinguished person. 
    \label{tab:summary_of_results}}
\end{table}

As a special case, we answer a question by Hamada et al.~\cite{HamadaEtAl2014} who gave an exponential-time algorithm that in time $O(|I|^{b+1})$ decides for a given instance~$I$ of HRLQ whether it admits a feasible matching with at most~$b$ blocking pairs\footnote{Hamada et al. claim only a run time $O((|\mathcal W||\mathcal M|)^{b+1})$, but their algorithm can easily be implemented to run in time $O(|I|^{b+1})$.}; the authors asked whether HRLQ is fixed-parameter tractable parameterized by~$b$.
As shown by Theorem~\ref{thm:trichotomy-main}, SMC-1 and therefore also HRLQ is $\mathsf{W}[1]$-hard when parameterized by $b$, already in a very restricted setting. 
Thus, the answer to the question by Hamada et al.~\cite{HamadaEtAl2014} is negative: 
\mbox{SMC-1}, and hence HRLQ, admits no fixed-parameter algorithm with parameter $b$ unless $\mathsf{FPT}=\mathsf{W}[1]$.

\subsection{Related Work}
\label{sec:relatedwork}
There is a dynamically growing literature on matching markets with lower quotas~\cite{BiroEtAl2010,CechlarovaFleiner2016,FleinerKamiyama2014,FragiadakisEtAl2016,FragiadakisTroyan2017,GotoEtAl2016,HamadaEtAl2014,Huang2010,Kamiyama2013,MonteTumennasan2013,Yokoi2017b,Yokoi2017}. 
These papers study several variants of HRLQ, 
adapting the general model to the various particularities of practical problems. 
However, there are only a few papers which consider the problem of minimizing the number of blocking pairs~\cite{FragiadakisEtAl2016,HamadaEtAl2014}.
The most closely related work to ours is the paper by Hamada et al.~\cite{HamadaEtAl2014}: they prove that the HRLQ problem is $\mathsf{NP}$-hard and give strong inapproximability results; they also consider the SMC-1 problem directly and propose an $O(|I|^{b+1})$ time algorithm for it.  

A different line of research connected to SMC is the problem of \emph{arranged marriages}, an early extension of SM suggested by  
Knuth~\cite{Knuth1976} in 1976.
Here, a set $\mathcal Q^\star$ of man-woman pairs is distinguished, and we seek a stable matching that contains~$\mathcal Q^\star$ as a subset.
Thus, as opposed to SMC, we not only require that each distinguished person is assigned \emph{some} partner, but instead prescribe its partner exactly. 
Initial work on arranged marriages \cite{Knuth1976,GusfieldIrving1989} was extended by Dias et al.~\cite{DiasEtAl2003} to consider also \emph{forbidden marriages}, and was further generalized by Fleiner et al.~\cite{Fleineretal2007} and Cseh and Manlove~\cite{CsehManlove2016}. 
Despite the similar flavour of the studied problems, none of these papers have a direct consequence on the complexity of SMC. 

Our work also fits into the line of research that addresses computationally hard problems in the area of stable matchings by focusing on instances with bounded preference lists~\cite{BiroMcDermid2012,ImmorlicaMahdian2005,IrvingEtAl2009,KojimaEtAl2013,RothPeranson1999} or by applying the more flexible approach of parameterized complexity~\cite{ArulselvanEtAl2016,AzizEtAl2015,BiroEtAl2011,MarxSchlotter2010,MarxSchlotter2011}.

\medskip
\noindent
\textbf{Organization.} 
After the preliminaries in Section~\ref{sec:preliminaries}, 
we start with the main intractability result in Section~\ref{sec:hrlq-min-blocking-pairs}, which answers Hamada et al.'s question. 
This result shows $\mathsf{W}[1]$-hardness of SMC parameterized by $b+|\mathcal{W}^\star|$ 
even when $\mathcal{M}^\star=\emptyset$ and $\Delta_{\mathcal M}=\Delta_{\mathcal W}=3$. 
Thus, we explore three directions to achieve tractability: 
(i) to lower $b$ to be a constant, 
(ii) to lower $|\mathcal W^\star|$ to be a constant, or 
(iii) to lower either $\Delta_{\mathcal W}$ or $\Delta_{\mathcal M}$ to $2$. 
We cover the cases (i) and (ii) in Section~\ref{sec:boundednumberofwomentobecovered}, 
and case (iii) in Section~\ref{sec:prefatmost2}. 
In addition, Section~\ref{sec:hrlq-min-blocking-pairs-shortlists} provides polynomial-time approximation results for HRLQ and SMC, used also in the polynomial-time algorithms of Section~\ref{sec:boundednumberofwomentobecovered}. 

\section{Preliminaries}
\label{sec:preliminaries}
An instance $I$ of the {\sc Stable Marriage (SM)} problem consists of a set~$\mathcal M$ of men and a set $\mathcal W$ of women.
Each person $x \in \mathcal M\cup\mathcal W$ has a preference list~$L(x)$ that strictly orders the members of the other party acceptable for $x$.
We thus write $L(x)$ as a vector $L(x) = (y_1,\hdots,y_t)$, denoting that~$y_i$ is (strictly) \emph{preferred} by $x$ over $y_j$ for each $i$ and $j$ with $1 \leq i<j \leq t$.
%Equivalently, there is a total and transitive order $\prec_x$ over $L(x)$ with $y_i \prec_x y_j$ if and only if $i < j$.
A \emph{matching}~$M$ for~$I$ is a set of man-woman pairs appearing in each other's preference lists such that each person is contained in at most one pair of $M$; some persons may be left unmatched by~$M$.
For each person~$x$ we denote by $M(x)$ the person assigned by~$M$ to~$x$.
%, and likewise denote by $M(w)$ the man assigned to by $M$ to $w$ for each $w \in\mathcal W$.
For a matching~$M$, a man $m$ and a woman~$w$ included in each other's preference lists form a \emph{blocking pair} if (i) $m$ is either unmatched or prefers~$w$ to~$M(m)$, and (ii)~$w$ is either unmatched or prefers $m$ to $M(w)$.
In the {\sc Stable Marriage with Covering Constraints (SMC)} problem, we are given additional subsets $\mathcal W^\star\subseteq\mathcal W$ and $\mathcal M^\star\subseteq\mathcal M$ of \emph{distinguished} people that must be matched; a matching~$M$ is \emph{feasible} if it matches everybody in $\mathcal W^\star\cup\mathcal M^\star$.
The objective of SMC is to find a feasible matching for $I$  with minimum number of blocking pairs.
If only people from one gender are distinguished, then without loss of generality, we assume these to be women; this special case will be denoted by SMC-1.
%From this optimization version of SMC we can naturally create its decision version by specifying an upper bound $b \in \mathbb{N}$ on the number of blocking pairs allowed in the desired feasible matching.  

The many-to-one extension of SMC-1 is the {\sc Hospitals/Residents with Lower Quotas (HRLQ)} problem whose input consists of a set $\mathcal R$ of residents and a set $\mathcal H$ of hospitals that have ordered preferences over the acceptable members of the other party.
%where each resident $r$ has a strict preference list $L(r)$ ranking the hospitals acceptable for $x$,  
%and similarly, each hospital~$h$ has a preference list~$L(h)$ ranking the residents that acceptable for $h$.
%
Each hospital $h\in\mathcal H$ has a quota lower bound~$\mathsf{\underline{q}}(h)$ and a quota upper bound~$\mathsf{\overline{q}}(h)$, which bound the number of residents that can be assigned to $h$ from below and above.
One seeks an \emph{assignment}~$M$ that maps a subset of the residents to hospitals that respects acceptability
%%%%
%each resident $r\in\mathcal R$ is either unassigned or assigned to a hospital $M(r)$ acceptable for $r$, 
%and the set $M(h)$ of residents assigned to a hospital $h$ is a subset of $L(h)$ for each $h\in\mathcal H$.
and is \emph{feasible}, that is, $\mathsf{\underline{q}}(h) \leq |M(h)| \leq \mathsf{\overline{q}}(h)$ for each hospital~$h$.
Here, $M(h)$ is the set of residents assigned to some $h \in \mathcal H$ by $M$.
%For an assignment~$M$, a hospital $h$ is \emph{full} if $|M(h)| = \mathsf{\overline{q}}(h)$ and 
We say that a hospital~$h$ is \emph{under-subscribed} if $|M(h)| < \mathsf{\overline{q}}(h)$.
For an assignment~$M$ of an instance of HRLQ, a pair $\{r,h\}$ of a resident~$r$ and a hospital~$h$ is \emph{blocking} if (i) $r$ is unassigned or prefers $h$ to the hospital assigned to $r$ by $M$, and 
(ii)~$h$ is under-subscribed or prefers~$r$ to one of the residents in~$M(h)$.
The task in HRLQ is to find a feasible assignment with minimum number of blocking pairs.

Some instances of SMC may admit a \emph{master list} over women, which is a total ordering~$L_{\mathcal{W}}$ of all women, such that for each man $m \in \mathcal{M}$, the preference list~$L(m)$ is the restriction of~$L_{\mathcal{W}}$ to those women that $m$ finds acceptable. 
Similarly, we consider master lists over men.
%, residents, or hospitals.

With each instance $I$ of SMC (or HRLQ) we can naturally associate a bipartite graph~$G_I$ whose vertex partitions correspond to $\mathcal M$ and $\mathcal W$ (or $\mathcal R$ and~$\mathcal H$, respectively), and there is an edge between a man $m\in \mathcal M$ and a woman $w\in\mathcal W$ (or between a resident $r\in\mathcal R$ and a hospital $h\in\mathcal H$, respectively) if they appear in each other's preference lists.
We may refer to entities of~$I$ as vertices, or a pair of entities as edges, without mentioning $G_I$ explicitly.
For a graph~$G$, we denote its vertex set by $V(G)$ and its edge set by $E(G)$; furthermore, let~$d_G(v)$ denote the degree of vertex $v\in V(G)$ in $G$.
A \emph{path} $P$ in $G$ is a series of vertices that contains each vertex at most once, with an edge of $G$ connecting any two consecutive vertices of $P$.
With a slight abuse of the notation, we will often identify paths with their edge sets; 
we will write $P' \subseteq P$ to express that $P'$ is a subpath of $P$.
A \emph{matching} in $G$ is a set of edges $M \subseteq E(G)$ such that no vertex in $G$ is adjacent to more than one edges of $M$;
note that a matching in an instance $I$ of SMC indeed corresponds to a matching in the graph $G_I$. 
For a matching $M$ in $G$, a path~$P$ is called \emph{$M$-alternating}, if among any two consecutive edges along~$P$ exactly one belongs to $M$.
We will use a few other notions from the theory of matchings about the symmetric difference of matchings, 
see e.g., the book by Lov{\'a}sz and Plummer~\cite{LovaszPlummer-book} for an introduction to this topic.

\paragraph{Parameterized complexity.}
The framework of parameterized complexity deals with computationally hard problems, examining their complexity in a more detailed way than classical complexity theory.
In a parameterized problem problem $\Pi$, each input instance $I$ is associated with an integer~$k$ called the \emph{parameter}.
An algorithm which decides instances $I$ of $\Pi$ in time $f(k) \cdot |I|^{O(1)}$ for some computable function~$f$ is called a \emph{fixed-parameter algorithm}.
Note that the dependence of the polynomial in the run time is constant, but the dependence on the parameter~$k$ can be arbitrary (and is typically exponential).
However, if the parameter of a given instance is small, then such an algorithm can be useful 
in practice even if the overall size of the instance is large. 

The class of problems admitting fixed-parameter algorithms is denoted by $\mathsf{FPT}$.
To argue that a problem is \emph{not} in $\mathsf{FPT}$, parameterized complexity provides a hardness theory.
For two parameterized problems $\Pi_1$ and $\Pi_2$, a \emph{parameterized reduction} from $\Pi_1$ to $\Pi_2$ is a function~$f$, computable by a fixed-parameter algorithm, that maps each instance $(I_1,k_1)$ of $\Pi_1$ to an instance $f(I_1,k_1)=(I_2,k_2)$ of $\Pi_2$ such that
(i) $(I_1,k_1)$ is a ``yes''-instance of $\Pi_1$ if and only if $(I_2,k_2)$ is a ``yes''-instance of~$\Pi_2$, and 
(ii) $k_2 \leq g(k_1)$ for some function~$g$.
The basic class of parameterized intractability is $\mathsf{W}[1]$: proving a problem $\Pi$ to be $\mathsf{W}[1]$-hard is strong evidence that $\Pi \notin \mathsf{FPT}$.
Given some problem $\Pi$ that is known to be $\mathsf{W}[1]$-hard, a parameterized reduction from $\Pi$ to some parameterized problem $\Pi'$ implies $\mathsf{W}[1]$-hardness of $\Pi'$ as well. 

For more on parameterized complexity, we refer the reader to the book by Cygan et al.~\cite{CyganEtAl2015}.

\section{Strong Parameterized Intractability of SMC}
\label{sec:hrlq-min-blocking-pairs}
This section provides parameterized intractability and inapproximability results for SMC showing the hardness of finding feasible matchings with minimum number of blocking pairs.
Namely, we prove SMC-1 to be  $\mathsf{W}[1]$-hard parameterized by the number $b$ of blocking pairs we aim for plus the number~$|\mathcal W^\star|$ of distinguished women, even in a very restricted setting.

\begin{theorem}
\label{thm:smc-minblock-mainhardness}
  SMC-1 is $\mathsf{W}[1]$-hard parameterized by $b+|\mathcal W^\star|$, even if there is a master list over men as well as one over women, all preference lists are of length at most $3$, and $|L(w)|=1$ for each woman $w \in \mathcal W^\star$.
    %  \begin{itemize}
    %    \item there is a master list over men as well as one over women,
    %    \item all preference lists are of length at most $3$, and 
    %    \item each woman in $\mathcal W^\star$ has only $1$ man in her preference list.
    %  \end{itemize}
\end{theorem}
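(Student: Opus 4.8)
The plan is to give a parameterized reduction from a standard $\mathsf{W}[1]$-hard problem with a natural ``numeric'' flavour, most plausibly \textsc{Multicolored Clique} or, even more conveniently, \textsc{Multicolored Independent Set} / \textsc{Clique}-type problems, but given that the preference lists must have length at most $3$ and the blocking-pair budget $b$ together with $|\mathcal{W}^\star|$ is the parameter, I expect the cleanest source to be a problem where we must select one element per group subject to pairwise compatibility — so \textsc{Multicolored Clique} on a graph with $k$ colour classes, parameterized by $k$. The target parameters are $b+|\mathcal{W}^\star|$, so both must be bounded by a function of $k$; a natural choice is $|\mathcal{W}^\star| = \Theta(k)$ (one or a constant number of distinguished women per colour class to ``force'' a choice) and $b = \Theta(k^2)$ or $\Theta(k)$ (one unavoidable blocking pair per colour class or per colour pair, with the instance engineered so that the minimum is attained exactly when a clique is selected).

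First I would set up, for each colour class $V_i$ of the \textsc{Multicolored Clique} instance, a ``selection gadget'': a distinguished woman $w_i \in \mathcal{W}^\star$ with $|L(w_i)| = 1$, so $w_i$ is forced to be matched to her unique acceptable man, and this single man is a ``hub'' man $m_i$ whose preference list (length $\le 3$) is arranged so that matching him to $w_i$ (as feasibility demands) creates a controlled amount of ``pressure'' downstream. The actual choice of a vertex $v \in V_i$ would be encoded by which of several vertex-men in class $i$ ends up matched or unmatched; non-distinguished women and men forming short preference chains would propagate this choice. The key design principle is the standard one for minimum-blocking-pair reductions: build long ``ladders'' or ``chains'' of forced preferences such that any feasible matching is forced to introduce at least one blocking pair per gadget, and exactly one iff the gadget behaves ``consistently''. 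Consistency across a pair of colour classes $i,j$ is enforced by edge-gadgets that check, via a single potential blocking pair, whether the vertices selected in $V_i$ and $V_j$ are adjacent in $G$. Master lists over men and over women can be accommodated because, with lists of length at most $3$ and a careful global numbering of all men and all women, one can always realize the local preferences as restrictions of one global order — this is a known trick and is mostly a bookkeeping constraint on how gadgets are indexed.

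The main obstacle, and where I would spend most of the effort, is the simultaneous satisfaction of all four restrictions: (a) list length $\le 3$ everywhere, (b) $|L(w)| = 1$ for distinguished women, (c) existence of master lists on both sides, and (d) keeping $b$ and $|\mathcal{W}^\star|$ bounded by a function of $k$ only (not of $n$). Constraint (a) is severe: it rules out gadgets where a single vertex compares many alternatives, so all the combinatorics of ``which vertex did I pick'' must be pushed into the \emph{topology} of many short-list men and women rather than into long individual preference lists — this forces the use of long chains of auxiliary agents, and one must verify that these chains introduce no spurious blocking pairs beyond the intended ones, which is the delicate case analysis. Constraint (d) is the parameterized heart: the blocking pairs must be ``charged'' to gadgets whose number depends only on $k$ (the $k$ selection gadgets and the $\binom{k}{2}$ edge gadgets), so the auxiliary chains of length $\Theta(n)$ must be \emph{blocking-pair-free} in every feasible matching — I would make them into forced alternating paths whose two endpoints are pinned by distinguished women or by dead-end preferences, so that the matching along the chain is unique and stable.

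Having built the reduction, the forward and backward directions should then be routine: given a multicolored clique, match the selection gadgets to reflect the chosen vertices, the edge gadgets all ``verify successfully'' and contribute the minimum number of blocking pairs, yielding a feasible matching with $\le b$ blocking pairs; conversely, from a feasible matching with $\le b$ blocking pairs one reads off a choice of one vertex per colour class, and the budget being exactly the number of gadgets forces each edge gadget to ``verify'', hence the chosen vertices are pairwise adjacent, i.e.\ form a clique. Finally I would note that since $|L(w)| = 1$ for each $w \in \mathcal{W}^\star$, no man is distinguished, and the construction only uses lists of length $\le 3$ compatible with master lists on both sides, this establishes all the claimed restrictions simultaneously; and since SMC-1 is the special case of SMC with $\mathcal{M}^\star = \emptyset$, the ``In particular'' part of Theorem~\ref{thm:trichotomy-main} follows immediately.
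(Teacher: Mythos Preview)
Your plan is correct and matches the paper's approach: the paper also reduces from \textsc{Multicolored Clique} parameterized by $k$, builds a vertex-selection gadget $G_i$ for each colour class and an edge-selection gadget $G_{i,j}$ for each colour pair, sets $b = 2k + \binom{k}{2}$, analyses feasible matchings via the symmetric difference with a fixed stable matching $M_s$ (so that selections correspond to augmenting paths), and arranges master lists by a global numbering exactly as you suggest.

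One point where the actual construction goes beyond your sketch is worth flagging. You estimate $|\mathcal{W}^\star| = \Theta(k)$, with one distinguished woman per colour class to force a choice; the paper in fact needs $|\mathcal{W}^\star| = \Theta(k^3)$. The reason is precisely the difficulty you identify under constraint~(a): with degree-$3$ lists, the augmenting path in $G_i$ that encodes the vertex choice could in principle terminate at many different auxiliary agents, and ruling out the ``bad'' termination points cannot be done locally. The paper handles this by a multiplicity trick: it adds $b+1$ parallel layers of auxiliary women $c_x^1,\dots,c_x^{b+1}$ and $b+1$ distinguished women $u_i^1,\dots,u_i^{b+1}$ per colour class, so that any augmenting path ending in the forbidden region is forced to create $b+1$ blocking pairs and blow the budget. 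This is the one genuinely non-obvious gadget idea your plan does not anticipate; everything else in your outline (forced chains that are blocking-pair-free, exactly one unavoidable blocking pair per gadget, the edge-gadget consistency check via a single potential blocking pair, and the forward/backward arguments) is exactly how the paper proceeds.
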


Before proving Theorem~\ref{thm:smc-minblock-mainhardness}, 
let us quickly state a simple but useful claim.
\begin{proposition}
\label{prop:alt-paths-blocking-pairs}
  Let $M_1$ and $M_2$ be two matchings in an instance of SMC.
  Let $v_0 v_1 \dots v_p$ (with $p \geq 1$) be a maximal path in the symmetric differenc of $M_1$ and $M_2$, denoted by $M_1 \triangle M_2$.
  Then
  \begin{enumerate}
    \item[(a)] $P$ contains an edge that blocks either $M_1$ or $M_2$, and
    \item[(b)] if $i \in \{1,\dots, p-1\}$ is such that $v_i$ prefers $v_{i-i}$ to $v_{i+1}$, then $P_i=v_0 v_1 \dots v_i$ contains an edge that blocks either $M_1$ or $M_2$. 
  \end{enumerate}
\end{proposition}
\begin{proof}
  We call a person $v_i$ a \emph{leftist} if either $i=p$, or $i \in \{1,\dots, p-1\}$ and~$v_i$ prefers $v_{i-1}$ to $v_{i+1}$.
  Similarly, we call $v_i$ a \emph{rightist} if either $i=0$, or $i \in \{1,\dots, p-1\}$ and $v_i$ prefers $v_{i+1}$ to $v_{i-1}$.
  Observe that any person on $P$ is either a leftist or a rightist. 
  Moreover, the path $P$, and under the conditions of (b) also the path $P_i$, must contain an edge $\{x,y\}$ such that $x$ is a leftist and~$y$ is a rightist.
  Let $M_i$ be the matching that does \emph{not} contain $\{x,y\}$ (where $i \in \{1,2\}$). 
  Then both $x$ and $y$ prefer being matched to each other as opposed to their situation in $M_i$ (where they may or may not be matched), proving both (a) and (b).
\end{proof}

\begin{proof}[Proof of Theorem~\ref{thm:smc-minblock-mainhardness}]
  We give a reduction from the $\mathsf{W}[1]$-hard {\sc Multicolored Clique} parameterized by the size of the solution~\cite{FellowsEtAl2009}. 
  Let $G$ be the input graph, with its vertex set partitioned into $k$ sets $V_1, \hdots, V_k$; the task is to find a clique of size $k$ in~$G$ containing exactly one vertex from each of the sets~$V_i$. 
  We let~$E_{i,j}$ denote those edges that run between~$V_i$ and~$V_j$ for some $1 \leq i < j \leq k$. 
  We fix an ordering on the vertices and edges of $G$ that places vertices of $V_i$ before vertices of~$V_j$ whenever $i<j$ (the ordering on the edges of $G$ can be chosen arbitrarily). 
  We will write $\suc(x)$ to denote the vertex following~$x$ in this ordering, and we let $v_i^1$ and $v_i^{\infty}$ denote the first and last vertices in~$V_i$, respectively.
  Similarly, we write $\suc(\{x,y\})$ for the edge following $\{x,y\}$, and we let~$e_{i,j}^1$ and $e_{i,j}^{\infty}$ denote the first and last edges in~$E_{i,j}$, respectively.
  We will also write $\pre(x)$ and $\pre(\{x,y\})$ for the predecessor of~$x$ or $\{x,y\}$, respectively.
%Thus we may refer to the $h$-th vertex among a set of vertices or edges for some integer $h$;
%in particular, we denote the $h$-th vertex in $V_i$ by $v_i^h$, and 
%we let $e_{i,j}^h$ denote the $h$-th edge in $E_{i,j}$. 
  Also, we denote the $h$-th neighbor of some vertex~$x$ as $n(x,h)$.
  For simplicity, we assume that $G$ has no isolated vertices. 

  We construct an instance $I$ of {\sc SMC} as follows; see Figure~\ref{fig:2a} and Figure~\ref{fig:2b} for an illustration.
  \begin{figure}[thpb]
    \tikzset{rdvertex/.style={minimum size=2mm,circle,fill=white,draw, inner sep=0pt},
      sqvertex/.style={minimum size=2mm,diamond,fill=black,draw, inner sep=0pt},
      opvertex/.style={minimum size=2mm,diamond,fill=white,draw, inner sep=0pt},
      decoration={markings,mark=at position .5 with {\arrow[black,thick]{stealth}}}}
    \centering
    \begin{tikzpicture}[scale=0.7]
        \node (sx) at (0,0)[sqvertex,label=left:$s_i$]{};
        \node (f1) at (1,0)[rdvertex]{};
        \node (f1-1) at (0.8,0.2){{\footnotesize $2$}};
        \node (f1-2) at (1.2,0.2){{\footnotesize $1$}};
        \node (f2) at (2,0)[opvertex]{};
        \node (f2-1) at (1.8,0.2){{\footnotesize $2$}};
        \node (f2-2) at (2.2,0.2){{\footnotesize $1$}};
        \node (f2-3) at (2.2,-0.2){{\footnotesize $3$}};
        \node (f3) at (2,-1)[rdvertex]{};
        \node (f4) at (2,-2)[opvertex]{};
        \node (vdots1) at (2,-3){$\vdots$};
        \node (f5) at (3,0)[rdvertex,label=above:$\hat{a}_x$]{};
        \node (f5-1) at (2.8,-0.2){{\footnotesize $2$}};
        \node (f5-2) at (3.2,-0.2){{\footnotesize $1$}};
        \node (f6) at (4,0)[opvertex,label=above:$a_x$]{};
        \node (f6-1) at (3.7,0.2){{\footnotesize $2$}};
        \node (f6-2) at (4.3,0.2){{\footnotesize $1$}};
        \node (f6-3) at (4.2,-0.2){{\footnotesize $3$}};
        \node (hdots1) at (5,0){$\hdots$};
        \node (f7) at (4,-1)[rdvertex,label={[label distance=2pt]left:$\hat{b}_x^1$}]{};
        \node (f7-1) at (3.85,-0.8){{\footnotesize $1$}};
        \node (f7-2) at (3.85,-1.2){{\footnotesize $3$}};
        \node (f7-3) at (4.2,-0.8){{\footnotesize $2$}};
        \node (f8) at (5,-1)[opvertex]{};
        \node (f9) at (4,-2)[opvertex,label={[label distance=2pt]left:$b_x^1$}]{};
        \node (f9-1) at (3.85,-1.8){{\footnotesize $1$}};
        \node (f9-2) at (3.85,-2.2){{\footnotesize $2$}};
        \node (f9-3) at (4.2,-1.8){{\footnotesize $3$}};
        \node (f10) at (5,-2)[rdvertex]{};
        \node (f11) at (4,-3)[rdvertex,label={[label distance=2pt]left:$\hat{b}_x^2$}]{};
        \node (f11-1) at (3.85,-2.8){{\footnotesize $1$}};
        \node (f11-2) at (3.85,-3.2){{\footnotesize $3$}};
        \node (f11-3) at (4.2,-2.8){{\footnotesize $2$}};
        \node (f12) at (5,-3)[opvertex]{};
        \node (f13) at (4,-4)[opvertex,label={[label distance=2pt]left:$b_x^2$}]{};
        \node (f13-1) at (3.85,-3.8){{\footnotesize $1$}};
        \node (f13-2) at (3.85,-4.2){{\footnotesize $2$}};
        \node (f13-3) at (4.2,-3.8){{\footnotesize $3$}};
        \node (f14) at (5,-4)[rdvertex]{};
        \node (vdots2) at (4,-5){$\vdots$};
        \node (f15) at (4,-6)[rdvertex,label={[label distance=-2pt]100:$\hat{b}_x^{d_G(x)}$}]{};
        \node (f15-1) at (3.8,-5.8){{\footnotesize $1$}};
        \node (f15-2) at (3.8,-6.2){{\footnotesize $3$}};
        \node (f15-3) at (4.2,-5.8){{\footnotesize $2$}};
        \node (f16) at (5,-6)[opvertex]{};
        \node (f17) at (4,-7)[opvertex,label={[label distance=-2pt]100:$b_x^{d_G(x)}$}]{};
        \node (f17-1) at (3.8,-6.8){{\footnotesize $1$}};
        \node (f17-2) at (3.8,-7.2){{\footnotesize $2$}};
        \node (f17-3) at (4.2,-6.8){{\footnotesize $3$}};
        \node (f18) at (5,-7)[rdvertex]{};
        %%%
        \draw (sx)--(f1);
        \draw[double] (f1)--(f2);
        \draw (f2)--(f3);
        \draw[double] (f3)--(f4);
        \draw (f4)--(vdots1);
        \draw (f2)--(f5);
        \draw[double] (f5)--(f6);
        \draw (f6)--(hdots1);
        \draw (f6)--(f7);
        \draw[dashed] (f7)--(f8);
        \draw[double] (f7)--(f9);
        \draw (f9)--(f10);
        \draw (f9)--(f11);
        \draw[dashed] (f11)--(f12);
        \draw[double] (f11)--(f13);
        \draw (f13)--(f14);
        \draw (f13)--(vdots2);
        \draw (vdots2)--(f15);
        \draw (f15)--(f16);
        \draw[double](f15)--(f17);
        \draw (f17)--(f18);
        %%%%
        \node (f19) at (6,0)[rdvertex]{};
        \node (f19-1) at (5.8,0.2){{\footnotesize $2$}};
        \node (f19-2) at (6.2,0.2){{\footnotesize $1$}};
        \node (f20) at (7,0)[opvertex]{};
        \node (f20-1) at (6.8,0.2){{\footnotesize $2$}};
        \node (f20-2) at (7.2,0.2){{\footnotesize $1$}};
        \node (f20-3) at (7.2,-0.2){{\footnotesize $3$}};
        \node (f21) at (7,-1)[rdvertex]{};
        \node (f22) at (7,-2)[opvertex]{};
        \node (vdots3) at (7,-3){$\vdots$};
        \node (f23) at (8,0)[rdvertex]{};
        \node (f23-1) at (7.8,0.2){{\footnotesize $2$}};
        \node (f23-2) at (8.2,0.2){{\footnotesize $1$}};
        \node (f24) at (9,0)[sqvertex,label=right:$t_i$]{};
        \draw (hdots1)--(f19);
        \draw[double] (f19)--(f20);
        \draw (f20)--(f21);
        \draw[double] (f21)--(f22);
        \draw (f22)--(vdots3);
        \draw (f20)--(f23);
        \draw[double] (f23)--(f24);
        \node (vdots4) at (2,-6){$\vdots$};
        \node (f25) at (2,-7)[opvertex]{};
        \node (f26) at (3,-7)[rdvertex]{};
        \node (f27) at (2,-8)[rdvertex]{};
        \node (f27-1) at (1.8,-7.8){{\footnotesize $1$}};
        \node (f27-2) at (2.2,-8.2){{\footnotesize $2$}};
        \node (f28) at (3,-8)[opvertex]{};
        \node (f28-1) at (2.8,-7.8){{\footnotesize $1$}};
        \node (f28-2) at (3.2,-7.8){{\footnotesize $2$}};
        \node (f28-3) at (2.8,-8.2){{\footnotesize $3$}};
        \node (f29) at (4,-8)[rdvertex,label={[label distance=-4pt]120:$\hat{c}_x^1$}]{}; %%% connect here to upper part
        \node (f29-1) at (4.2,-7.8){{\footnotesize $1$}};
        \node (f29-2) at (4.2,-8.2){{\footnotesize $3$}};
        \node (f29-3) at (3.8,-8.2){{\footnotesize $2$}};
        \node (f30) at (5,-8)[opvertex,label={[label distance=-5pt]below right:$c_x^1$}]{};
        \node (f30-1) at (4.8,-7.8){{\footnotesize $1$}};
        \node (f30-2) at (5.2,-7.8){{\footnotesize $2$}};
        \node (f30-3) at (4.8,-8.2){{\footnotesize $3$}};   
        \node (f31) at (6,-8)[rdvertex]{};
        \node (f31-1) at (5.8,-7.8){{\footnotesize $1$}};
        \node (f31-2) at (6.2,-8.2){{\footnotesize $3$}};
        \node (f31-3) at (5.8,-8.2){{\footnotesize $2$}};
        \node (f32) at (6,-7)[opvertex]{};
        \node (f33) at (7,-7)[rdvertex]{};
        \node (vdots5) at (6,-6){$\vdots$};
        \node (f34) at (7,-8)[opvertex]{};
        \node (f34-1) at (6.8,-7.8){{\footnotesize $1$}};
        \node (f34-2) at (7.2,-7.8){{\footnotesize $2$}};
        \node (f34-3) at (6.8,-8.2){{\footnotesize $3$}};   
        \node (hdots2) at (8,-8){$\hdots$};
        \node (f35) at (3,-9)[rdvertex]{};
        \node (f35-1) at (2.8,-8.8){{\footnotesize $1$}};
        \node (f35-2) at (3.2,-9.2){{\footnotesize $2$}};
        \node (f36) at (4,-9)[opvertex]{};
        \node (f36-1) at (3.8,-8.8){{\footnotesize $1$}};
        \node (f36-2) at (4.2,-8.8){{\footnotesize $2$}};
        \node (f36-3) at (3.8,-9.2){{\footnotesize $3$}};   
        \node (f37) at (5,-9)[rdvertex,label={[label distance=-5.5pt]60:$\hat{c}_x^2$}]{};
        \node (f37-1) at (4.8,-8.8){{\footnotesize $1$}};
        \node (f37-2) at (5.2,-9.2){{\footnotesize $3$}};
        \node (f37-3) at (4.8,-9.2){{\footnotesize $2$}};
        \node (f38) at (6,-9)[opvertex,label={[label distance=-4pt]-60:$c_x^2$}]{};
        \node (f38-1) at (5.8,-8.8){{\footnotesize $1$}};
        \node (f38-2) at (6.2,-8.8){{\footnotesize $2$}};
        \node (f38-3) at (5.8,-9.2){{\footnotesize $3$}};   
        \node (f39) at (7,-9)[rdvertex]{};
        \node (f39-1) at (6.8,-8.8){{\footnotesize $1$}};
        \node (f39-2) at (7.2,-9.2){{\footnotesize $3$}};
        \node (f39-3) at (6.8,-9.2){{\footnotesize $2$}};
        \node (f40) at (8,-9)[opvertex]{};
        \node (f40-1) at (7.8,-8.8){{\footnotesize $1$}};
        \node (f40-2) at (8.2,-8.8){{\footnotesize $2$}};
        \node (f40-3) at (7.8,-9.2){{\footnotesize $3$}};   
        \node (hdots3) at (9,-9){$\hdots$};
        \node (vdots6) at (4,-10){$\vdots$};
        \node (vdots7) at (6,-10){$\vdots$};
        \node (vdots8) at (8,-10){$\vdots$};
        \node (f41) at (4,-11)[rdvertex]{};
        \node (f41-1) at (3.8,-10.8){{\footnotesize $1$}};
        \node (f41-2) at (4.2,-11.2){{\footnotesize $2$}};
        \node (f42) at (5,-11)[opvertex]{};
        \node (f43) at (6,-11)[rdvertex,label={[label distance=-4pt]100:$\hat{c}_x^{b+1}$}]{};
        \node (f44) at (7,-11)[opvertex,label={[label distance=-2pt]above:$c_x^{b+1}$}]{};
        \node (f45) at (8,-11)[rdvertex]{};
        \node (f46) at (9,-11)[opvertex]{};
        \node (hdots4) at (10,-11){$\hdots$};
        \node (f47) at (5,-12)[rdvertex]{};
        \node (f48) at (7,-12)[rdvertex,label=right:$\widetilde{c}_x^{b+1}$]{};
        \node (f49) at (9,-12)[rdvertex]{};
        \draw[double] (vdots4)--(f25);
        \draw (f25)--(f26);
        \draw (f25)--(f27);
        \draw[double] (f27)--(f28);
        \draw (f28)--(f29);
        \draw (f29)--(f17); %%% connection to upper part
        \draw[double] (f29)--(f30);    
        \draw (f30)--(f31);
        \draw (f31)--(f32);
        \draw[double] (f32)--(vdots5);
        \draw (f32)--(f33);
        \draw[double] (f31)--(f34);
        \draw (f34)--(hdots2);
        \draw (f28)--(f35);
        \draw[double] (f35)--(f36);
        \draw (f36)--(f37);
        \draw (f37)--(f30);
        \draw[double] (f37)--(f38);
        \draw (f38)--(f39);
        \draw (f39)--(f34);
        \draw[double] (f39)--(f40);
        \draw (f40)--(hdots3);
        \draw (f36)--(vdots6);
        \draw (f38)--(vdots7);
        \draw (f40)--(vdots8);
        \draw (vdots6)--(f41);
        \draw[double] (f41)--(f42);
        \draw (f42)--(f43);
        \draw (vdots7)--(f43);
        \draw[double] (f43)--(f44);
        \draw (f44)--(f45);
        \draw (vdots8)--(f45);
        \draw[double] (f45)--(f46);
        \draw (f46)--(hdots4);
        \draw (f47)--(f42);
        \draw (f48)--(f44);
        \draw (f49)--(f46);
        %    %%%%
        \node (f50) at (9,-8)[rdvertex]{};
        \node (f50-1) at (8.8,-7.8){{\footnotesize $1$}};
        \node (f50-2) at (9.2,-8.2){{\footnotesize $3$}};
        \node (f50-3) at (8.8,-8.2){{\footnotesize $2$}};
        \node (f51) at (9,-7)[opvertex]{};
        \node (f52) at (10,-7)[rdvertex]{};
        \node (vdots9) at (9,-6){$\vdots$};
        \node (f53) at (10,-8)[opvertex]{};
        \node (f53-1) at (9.8,-7.8){{\footnotesize $1$}};
        \node (f53-2) at (10.2,-7.8){{\footnotesize $2$}};
        \node (f53-3) at (9.8,-8.2){{\footnotesize $3$}};
        \node (f54) at (11,-8)[rdvertex]{};
        \node (f54-1) at (10.8,-7.8){{\footnotesize $1$}};
        \node (f54-2) at (11.2,-7.8){{\footnotesize $2$}};
        \node (f55) at (12,-8)[sqvertex,label=right:$u_i^1$]{};
        \node (f56) at (10,-9)[rdvertex]{};
        \node (f56-1) at (9.8,-8.8){{\footnotesize $1$}};
        \node (f56-2) at (10.2,-9.2){{\footnotesize $3$}};
        \node (f56-3) at (9.8,-9.2){{\footnotesize $2$}};
        \node (f57) at (11,-9)[opvertex]{};
        \node (f57-1) at (10.8,-8.8){{\footnotesize $1$}};
        \node (f57-2) at (11.2,-8.8){{\footnotesize $2$}};
        \node (f57-3) at (11.2,-9.2){{\footnotesize $3$}};
        \node (f58) at (12,-9)[rdvertex]{};
        \node (f58-1) at (11.8,-8.8){{\footnotesize $1$}};
        \node (f58-2) at (12.2,-8.8){{\footnotesize $2$}};    
        \node (f59) at (13,-9)[sqvertex,label=right:$u_i^2$]{};
        \node (vdots10) at (11,-10){$\vdots$};
        \node (ddots1) at (14,-10){$\ddots$};
        \node (f60) at (11,-11)[rdvertex]{};
        \node (f61) at (12,-11)[opvertex]{};
        \node (f62) at (13,-11)[rdvertex]{};
        \node (f63) at (14,-11)[sqvertex,label=right:$u_i^{b+1}$]{};
        \node (f64) at (12,-12)[rdvertex]{};
        %    %%
        \draw (hdots2)--(f50);
        \draw (f50)--(f51);
        \draw (f51)--(f52);
        \draw[double] (f51)--(vdots9);
        \draw[double] (f50)--(f53);
        \draw (f53)--(f54);
        \draw[double] (f54)--(f55);
        \draw (hdots3)--(f56);
        \draw (f53)--(f56);
        \draw[double] (f56)--(f57);
        \draw (f57)--(vdots10);
        \draw (f57)--(f58);
        \draw[double] (f58)--(f59);
        \draw (hdots4)--(f60);
        \draw (f60)--(vdots10);
        \draw[double] (f60)--(f61);
        \draw (f61)--(f62);
        \draw[double] (f62)--(f63);
        \draw (f61)--(f64);
        \end{tikzpicture}
        %\caption{Node selecting gadget $G_i$ in the proof of Theorem~\ref{thm:smc-minblock-mainhardness}.}
        %\label{fig:2a}
  \caption{Node selecting gadget $G_i$ in the proof of Theorem~\ref{thm:smc-minblock-mainhardness}.
    Throughout the paper, we use squares for women, circles for men; 
    distinguished persons are denoted by filled squares/circles.  
    The numbering of edges incident to some vertex (or, sometimes, arrows between edges) indicate preferences. 
    Double edges denote edges of the stable matching $M_s$, and dashed edges are those leaving some gadget.
  \label{fig:2a}}
\end{figure}

%    \tikzset{rdvertex/.style={minimum size=2mm,circle,fill=white,draw, inner sep=0pt},
%        sqvertex/.style={minimum size=2mm,diamond,fill=black,draw, inner sep=0pt},
%        opvertex/.style={minimum size=2mm,diamond,fill=white,draw, inner sep=0pt},
%        decoration={markings,mark=at position .5 with {\arrow[black,thick]{stealth}}}}
%    \begin{figure}[thpb]
\begin{figure}[thpb]
    \tikzset{rdvertex/.style={minimum size=2mm,circle,fill=white,draw, inner sep=0pt},
        sqvertex/.style={minimum size=2mm,diamond,fill=black,draw, inner sep=0pt},
        opvertex/.style={minimum size=2mm,diamond,fill=white,draw, inner sep=0pt},
        decoration={markings,mark=at position .5 with {\arrow[black,thick]{stealth}}}}
        \centering
        \begin{tikzpicture}
        \node (sij) at (0,0)[sqvertex,label=left:$s_{i,j}$]{};
        \node (f1) at (1,0)[rdvertex]{};
        \node (f1-1) at (0.8,-0.2){{\footnotesize $2$}};
        \node (f1-2) at (1.2,-0.2){{\footnotesize $1$}};
        \node (f2) at (2,0)[opvertex]{};
        \node (f2-1) at (1.8,-0.2){{\footnotesize $2$}};
        \node (f2-2) at (2.2,0.2){{\footnotesize $1$}};
        \node (f2-3) at (2.2,-0.2){{\footnotesize $3$}};
        \node (f2p) at (2,-1)[rdvertex]{};
        \node (vdots1) at (2,-2){$\vdots$};
        \node (f3) at (3,0)[rdvertex]{};
        \node (f3-1) at (2.8,-0.2){{\footnotesize $2$}};
        \node (f3-2) at (3.2,-0.2){{\footnotesize $1$}};
        \node (f4) at (4,0)[opvertex,label=above:$a_{\{x,y\}}$]{};
        \node (f4-1) at (3.8,-0.2){{\footnotesize $2$}};
        \node (f4-2) at (4.5,0.2){{\footnotesize $1$}};
        \node (f4-3) at (4.2,-0.2){{\footnotesize $3$}};
        \node (f5) at (4,-1)[rdvertex,label=right:$\hat{b}_{x\rightarrow y}$]{};
        \node (f5-1) at (3.8,-0.8){{\footnotesize $1$}};
        \node (f5-2) at (3.8,-1.2){{\footnotesize $2$}};
        \node (f6) at (4,-2)[opvertex,label=left:$b_{x\rightarrow y}$]{};
        \node (f6-1) at (3.8,-1.8){{\footnotesize $1$}};
        \node (f6-2) at (3.8,-2.2){{\footnotesize $3$}};
        \node (f6-3) at (4.2,-1.8){{\footnotesize $2$}};
        \node (f7) at (5,-2)[rdvertex,label=right:$\hat{b}_x^h$]{};
        \node (f8) at (4,-3)[rdvertex,label=right:$\hat{b}_{y \rightarrow x}$]{};
        \node (f8-1) at (3.8,-2.8){{\footnotesize $1$}};
        \node (f8-2) at (3.8,-3.2){{\footnotesize $2$}};
        \node (f9) at (4,-4)[opvertex,label=left:$b_{y\rightarrow x}$]{};
        \node (f9-1) at (3.8,-3.8){{\footnotesize $1$}};
        \node (f9-2) at (3.8,-4.2){{\footnotesize $3$}};
        \node (f9-3) at (4.2,-4.2){{\footnotesize $2$}};
        \node (f10) at (5,-4)[rdvertex,label=right:$\hat{b}_y^{\ell}$]{};
        \node (f11) at (4,-5)[rdvertex,label=right:$\widetilde{b}_{y\rightarrow x}$]{};
        \node (hdots) at (5,0){$\hdots$};
        \node (f12) at (6,0)[rdvertex]{};
        \node (f12-1) at (5.8,-0.2){{\footnotesize $2$}};
        \node (f12-2) at (6.2,-0.2){{\footnotesize $1$}};
        \node (f13) at (7,0)[opvertex]{};
        \node (f13-1) at (6.8,-0.2){{\footnotesize $2$}};
        \node (f13-2) at (7.2,0.2){{\footnotesize $1$}};
        \node (f13-3) at (7.2,-0.2){{\footnotesize $3$}};
        \node (f14) at (7,-1)[rdvertex]{};
        \node (vdots2) at (7,-2){$\vdots$};
        \node (f15) at (8,0)[rdvertex]{};
        \node (f15-1) at (7.8,-0.2){{\footnotesize $2$}};
        \node (f15-2) at (8.2,-0.2){{\footnotesize $1$}};
        \node (tij) at (9,0)[sqvertex,label=right:$t_{i,j}$]{};
        \draw (sij)--(f1);
        \draw[double](f1)--(f2);
        \draw (f2)--(f3);
        \draw (f2)--(f2p);
        \draw[double] (f2p)--(vdots1);
        \draw[double] (f3)--(f4);
        \draw (f4)--(f5);
        \draw[double] (f5)--(f6);
        \draw[dashed] (f6)--(f7);
        \draw (f6)--(f8);
        \draw[double] (f8)--(f9);
        \draw[dashed] (f9)--(f10);
        \draw (f9)--(f11);
        \draw (f4)--(hdots);
        \draw (hdots)--(f12);
        \draw[double] (f12)--(f13);
        \draw (f13)--(f14);
        \draw[double] (f14)--(vdots2);
        \draw (f13)--(f15);
        \draw[double](f15)--(tij);
        \end{tikzpicture}
    \caption{Edge selecting gadget $G_{i,j}$ in the reduction of Theorem~\ref{thm:smc-minblock-mainhardness}.\label{fig:2b}}
\end{figure}
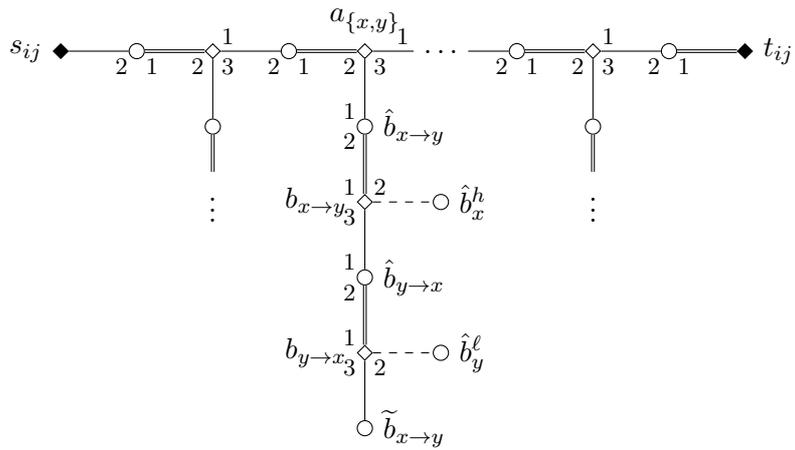
We set the number of blocking pairs allowed for $I$ to be $b=2k+\binom{k}{2}$. 
Together with the instance~$I$, we will define a stable (but not feasible) matching~$M_s$ for $I$ as well. If a woman $w$ of~$I$ is matched by $M_s$, we will denote the man $M_s(w)$ by~$\hat{w}$. 
Some women will need ``dummy'' partners in their preference lists: 
we denote the dummy of $w$ by~$\widetilde{w}$.
The dummy $\widetilde{w}$ will always appear as the last item on $w$'s preference list, and its preference list will always be $L(\widetilde{w})=(w)$. 

For each $i$ and $j$ with $1 \leq i <j \leq k$, we construct an \emph{edge selecting gadget}~$G_{i,j}$ that involves women~$s_{i,j}$ and~$t_{i,j}$, together with women $a_{\{x,y\}}$, $b_{x \to y}$, and $b_{y \to x}$ for each edge $\{x,y\} \in E_{i,j}$.
All women in $G_{i,j}$ are matched by~$M_s$ except for $s_{i,j}$, and~$G_{i,j}$ contains the man~$\hat{w}$ for each of these women $w$, together with additional dummies~$\widetilde{b}_{y \to x}$ for each $\{x,y\} \in E_{i,j}$ with $x$ preceding~$y$. 

For each $i \in \{1, \dots, k\}$, we also construct a \emph{node selecting gadget} $G_i$ involving women $s_i,t_i$, and $u_i^1, \dots, u_i^{b+1}$, together with women $a_x$, $b_x^1, \dots, b_x^{d_G(x)}$, and $c_x^1, \dots, c_x^{b+1}$ for each $x \in V_i$.   
The men in~$G_i$ include $\hat{w}$ for each woman~$w$ of $G_i$ except for $s_i$, and additional dummies $\widetilde{b}_x^1, \dots, \widetilde{b}_x^{d_G(x)}$ and~$\widetilde{c}^{b+1}_x$ for each $x \in V_i$.

We define the following sets of women: 
\begin{equation*}
  \begin{array}{l@{}l@{}ll@{}l@{}l}
    A  & =\ &\{a_x \mid x \in V(G)\} &
    C  & = &\{c_x^h \mid x \in V(G), 1 \leq h \leq b+1\} \\
    A' & = &\{ a_{\{x,y\}} \mid \{x,y\} \in E(G)\} &
    S  & = &\{s_i \mid 1 \leq i \leq k\} \cup \{s_{i,j} \mid 1 \leq i <j \leq k\} \\
    B  & = &\{b_x^h \mid x \in V(G), 1 \leq h \leq d_G(x)\} &
    T  & = &\{t_i \mid 1 \leq i \leq k\} \cup \{t_{i,j} \mid 1 \leq i <j \leq k\} \\
    B' & = &\{ b_{x \to y},b_{y\to x} \mid \{x,y\} \in E(G)\} \qquad &
    U  & = &\{u_i^h \mid 1 \leq i \leq k, 1 \leq h \leq b+1\}
  \end{array}
\end{equation*}
To define the set $\mathcal{W}^{\star}$ of women in $I$ with covering constraint we let $\mathcal{W}^{\star} = S \cup T \cup U$; note $|\mathcal{W}^{\star}|=2 \binom{k}{2} + 2k + k(\binom{k}{2}+2k+1)$.
To finish the definition of~$I$, we define the precise structure of these gadgets as well as the connections between them by the preference lists shown in Tables~\ref{table-prefs-node} and~\ref{table-prefs-edge}; when not stated otherwise, indices take all possible values.
For simplicity, we write $b_x^0=a_x$, $b_x^{d_G(x)+1}=c_x^1$, and $c_x^0=b_x^{d_G(x)}$ 
for any vertex $x \in V(G)$. 

\begin{table}[thpb]
\renewcommand{\arraystretch}{1.3}
        \begin{tabular}{lll} 
            $L(a_x)$ & $= (\hat{a}_{\suc(x)},\hat{a}_x,\hat{b}_x^1),$ & \textrm{ where $x \in V_i \setminus \{ v_i^{\infty} \}$,} \\ 
            $L(a_x)$ & $= (\hat{t}_i,\hat{a}_x,\hat{b}_x^1),$ & \textrm{ where $x=v_i^{\infty}$,} \\ 
            $L(b_x^h)$&$= (\hat{b}_x^h,\hat{b}_x^{h+1},\widetilde{b}_x^h),$ & \textrm{ where $1 \leq h \leq d_G(x)$,} \\
            $L(c_x^h)$&$= (\hat{c}_x^h,\hat{c}_{\suc(x)}^h,\hat{c}_x^{h+1}),$ & 
            \textrm{ where $1 \leq h \leq b$, $x \in V_i \setminus \{ v_i^{\infty} \}$,} \\
            $L(c_x^h)$&$= (\hat{c}_x^h,\hat{u}_i^h,\hat{c}_x^{h+1},),$ & 
            \textrm{ where $1 \leq h \leq b$ and $x=v_i^{\infty}$,} \\  
            $L(c_x^{b+1})$&$= (\hat{c}_x^{b+1},\hat{c}_{\suc(x)}^h,\widetilde{c}_x^{b+1}),$ & 
            \textrm{ where $x \in V_i \setminus \{ v_i^{\infty} \}$,} \\
            $L(c_x^{b+1})$&$= (\hat{c}_x^{b+1},\hat{u}_i^{b+1},\widetilde{c}_x^{b+1},),$ & 
            \textrm{ where $x=v_i^{\infty}$,} \\
            $L(s_i)$&$= (\hat{a}_x),$ & \textrm{ where $x=v_i^1$,} \\ 
            $L(t_i)$&$= (\hat{t}_i),$ & \\ 
            $L(u_i^h)$&$= (\hat{u}_i^h),$ & \\ 
            $L(\hat{a}_x)$&$= (a_{x},a_{\pre(x)}),$ & 
            \textrm{ where $x \in V_i \setminus \{v_i^1\}$,} \\ 
            $L(\hat{a}_x)$&$= (a_x,s_i),$ & \textrm{ where $x=v_i^1$,} \\ 
            $L(\hat{b}_x^h)$&$= (b_x^{h-1},b_{x \to y},b_x^h),$ & 
            \textrm{ where $y=n(x,h)$, $x \in V_i$, $y \in V_j$ and $i<j$} \\
            $L(\hat{b}_x^h)$&$= (b_x^{h-1},b_{y \to x},b_x^h),$ & \textrm{ where $y=n(x,h)$, $x \in V_i$, $y \in V_j$ and $i>j$} \\
            $L(\hat{c}_x^h)$&$= (c_x^{h-1},c_{\pre(x)}^h,c_x^h),$ & \textrm{ where $x \in V_i \setminus \{v_i^1\}$,} \\
            $L(\hat{c}_x^h)$&$= (c_x^{h-1},c_x^h),$ & \textrm{ where $x=v_i^1$,} \\
            $L(\hat{t}_i)$ & $= (t_i,a_x),$ & \textrm{ where $x=v_i^{\infty}$,} \\
            $L(\hat{u}_i^h)$   & $= (c_x^h,u_i^h),$ & \textrm{ where $x=v_i^{\infty}$,} \\
            $L(\widetilde{w})$  &$= (w),$ & \textrm{ for any dummy woman~$\widetilde{w}$.} \\  
        \end{tabular}
  \caption{Preference lists of women and men in node selecting gadgets.\label{table-prefs-node}}
\end{table}

\begin{table}[thpb]
        \renewcommand{\arraystretch}{1.3}
        \begin{tabular}{lll} 
            %................... EDGE GADGET ................
            $L(a_{\{x,y\}})$&$= (\hat{a}_{\suc(\{x,y\})},\hat{a}_{\{x,y\}},\hat{b}_{x \to y}),$ & 
            \textrm{ where $\{x,y\} \in E_{i,j} \setminus \{e_{i,j}^{\infty}, \}$ and $x$ precedes $y$,} \\ 
            $L(a_{\{x,y\}})$&$= (\hat{t}_{i,j},\hat{a}_{\{x,y\}},\hat{b}_{x \to y}),$ & 
            \textrm{ where $\{x,y\}=e_{i,j}^{\infty}$ and $x$ precedes $y$,} \\ 
            $L(b_{x \to y})$&$= (\hat{b}_{x \to y},\hat{b}_x^h,\hat{b}_{y \to x}),$ & 
            \textrm{ where $y=n(x,h)$ and $x$ precedes $y$ in $V(G)$,} \\
            $L(b_{y \to x})$&$= (\hat{b}_{y \to x},\hat{b}_y^h,\widetilde{b}_{y \to x}),$ & 
            \textrm{ where $y=n(x,h)$ and $x$ precedes $y$ in $V(G)$,} \\
            $L(s_{i,j})$&$= (\hat{a}_{\{x,y\}}),$ & 
            \textrm{ where $\{x,y\}=e_{i,j}^1$,} \\ 
            $L(t_{i,j})$&$= (\hat{t}_{i,j}),$ & \\ 
            $L(\hat{a}_{\{x,y\}})$&$= (a_{\{x,y\}},a_{\pre(\{x,y\})}),$ & 
            \textrm{ where $\{x,y\} \in E_{i,j} \setminus \{ e_{i,j}^1\}$,} \\ 
            $L(\hat{a}_{\{x,y\}})$&$= (a_{\{x,y\}},s_{i,j}),$ & 
            \textrm{ where $\{x,y\}=e_{i,j}^1$,} \\ 
            $L(\hat{b}_{x \to y})$&$= (a_{\{x,y\}},b_{x \to y}),$ & 
            \textrm{ where $x$ precedes $y$ in $V(G)$,} \\ 
            $L(\hat{b}_{y \to x})$&$= (b_{x \to y},b_{y \to x}),$ & 
            \textrm{ where $x$ precedes $y$ in $V(G)$,} \\   
            $L(\hat{t}_{i,j})$&$= (t_{i,j},a_{\{x,y\}}),$ & 
            \textrm{ where $\{x,y\} = e_{i,j}^{\infty}$,} \\
            $L(\widetilde{w})$&$= (w),$ & 
            \textrm{ for any dummy woman~$\widetilde{w}$.} \\
        \end{tabular}
  \caption{Preference lists of women and men in edge selecting gadgets.\label{table-prefs-edge}}
\end{table} 

Let us define a master list~$L_{\mathcal{W}}$ over all women as follows.
The first women in~$L_{\mathcal{W}}$ are those in~$T$, in any ordering.
They are followed by women in $A$, ordered according to the reversed ordering over~$V(G)$, that is,~$a_x$ precedes $a_y$ 
exactly if $y$ precedes $x$. 
Next follow women of~$A'$, ordered  according to the reversed ordering over $E(G)$. 
Next come women in~$B \cup B'$.
To order them, we first order those in $B$ by putting~$b_x^h$ before $b_y^{\ell}$ in~$L_{\mathcal{W}}$ if and only if $x$ precedes~$y$ or $x=y$ and $h<\ell$, then for each edge $\{x,y\} \in E(G)$ with $x$ preceding~$y$, $y=n(x,h)$ and $x=n(y,\ell)$ we insert $b_{x \to y}$ just before $b_x^h$, and we insert $b_{y \to x}$ just before $b_y^{\ell}$, thus determining the ordering of $B \cup B'$. 
After women in $B \cup B'$ come women of $C$, with~$c_x^h$ preceding~$c_y^{\ell}$ exactly if $h <\ell$ or $h=\ell$ and $x$ precedes~$y$. 
We finish the definition of the master list $L_{\mathcal{W}}$ by putting all women in $S \cup U$ at the end of $L_{\mathcal{W}}$ in an arbitrary order. 

The master list over men is derived from $L_{\mathcal{W}}$ by letting $\hat{w_1}$ precede $\hat{w_2}$ whenever~$w_1$ precedes~$w_2$ in~$L_{\mathcal{W}}$, and adding all dummies at the end in an arbitrary order. 
It is easy to check that the preference lists given in Tables~\ref{table-prefs-node} and~\ref{table-prefs-edge} are indeed compatible with these master lists. 
This completes the construction of the instance.

\medskip 

We are going to prove that the constructed instance $I$ admits a feasible assignment with at most~$b$ blocking pairs if and only if there is a clique of size~$k$ in the graph $G$.

''$\Rightarrow$'':
Suppose there is a feasible matching $M$ of men to women with at most~$b$ blocking pairs.   
Let~$G_{\Delta}$ be the symmetric difference $M \triangle M_s$.
Notice that for each woman $s \in S$, the difference $G_{\Delta}$ must contain exactly one maximal path containing $s$ as its endpoint, since the women in~$S$ must be matched in~$M$, but are unmatched in~$M_s$.
Similarly, no path of $G_{\Delta}$ can contain a woman in $T \cup U$, 
because these women are matched by $M_s$ to their only possible partners, 
and they must be matched by $M$ as well, since $T \cup U$ is contained in~$\mathcal W^{\star}$.
We call a maximal path~$P$ in $G_{\Delta}$ with an endpoint $s$ in $S$ an \emph{improving path}. 
We say that $P$ \emph{starts} at $s$ and \emph{ends} at its other endpoint, and we refer to the path starting at $s_i$ (or~$s_{i,j}$) as~$P_i$ (or $P_{i,j}$, respectively). 

We define the \emph{cost} of some path $P$ of $G_{\Delta}$ as the number of blocking pairs $\{m,w\}$ for $M$ involving a  woman~$w$ that appears on $P$. 
By Proposition~\ref{prop:alt-paths-blocking-pairs}, each improving  path contains at least one edge that is blocking for $M$, because no edge can block $M_s$.
%To see this, let $v_0 v_1 \dots v_p$ be an improving  path of $G_{\Delta}$ starting from some $v_0 \in S$. 
%First, suppose that $v_i$ prefers $v_{i-1}$ over $v_{i+1}$ for some $i \in \{1, \dots, p-1\}$; let us choose $v_i$ to be the first such vertex on the path. 
%Then $v_{i-1}$ prefers $v_i$ over $v_{i-2}$ (if existent). This yields that $\{v_{i-1},v_i\}$ cannot be an edge of $M$ (since then it would block the stable matching $M_s$), implying that $\{v_{i-1}, v_i\} \in M_s$ is an edge that blocks $M$. 
%Second, if no such $v_i$ exist, then $v_{p-1}$ prefers $v_p$ over $v_{p-2}$. This yields that $\{v_{p-1}, v_p\}$ cannot be an edge of $M$ (since then it would block  $M_s$), implying that $\{v_{p-1}, v_p\} \in M_s$ is an edge that blocks $M$. 
Therefore, each path in $G_{\Delta}$ has cost at least $1$.

As there are exactly $k+ \binom{k}{2}$ improving  paths (as all women in $S$ must be matched by $M$), we get a minimum cost of $k+ \binom{k}{2}$. 
Note also that the total cost of all paths in $G_{\Delta}$ cannot exceed $b=2k+ \binom{k}{2}$. 
Claim~\ref{claim-path-costs} is therefore crucial. 
  \begin{claimu}
  \label{claim-path-costs}
  The following holds for any improving  path $P$ of $G_{\Delta}$:
    \begin{itemize}
      \item[(a)] $P$ cannot end at a dummy $\widetilde{c}_x^{b+1}$ for some $x \in V(G)$.
      \item[(b)] $P$ contains an edge $\{a, \hat{a}\}$ for some $a \in A \cup A'$ that blocks $M$.
      \item[(c)] If $P$ is not disjoint from $G_i$ for some $i$, then $P$ has cost at least $2$.
    \end{itemize}
  \end{claimu}
  \begin{proof}[Proof of Claim~\ref{claim-path-costs}.]
    To prove (a), suppose for contradiction that $P$ ends at~$\widetilde{c}_x^{b+1}$, where $x \in V_i$.
    Clearly, $P$ must contain at least one woman from each of the $b+1$ sets $\{c_v^h \mid v \in V_i\}$, $h=1, \dots, b+1$.
    Fix $h$, and let us consider the last $v \in V_i$ for which $c_v^h$ is incident to an edge of $G_{\Delta}$.
    Let $w=c_{\suc(v)}^h$ if $v \neq v_i^{\infty}$,  or otherwise let $w=u_i^h$.  
    Then the edge $\{c_v^h, \hat{w}\}$ yields a blocking pair in $M$, as $M(w)=M_s(w)=\hat{w}$, and thus~$\hat{w}$ prefers~$c_v^h$ to $w$.
    This reasoning gives us $b+1$ different blocking pairs for~$M$, one for each index~$h$, contradicting our assumption on $M$. 

    To prove (b), let us consider the case when $P=P_i$ for some $i$; the argument goes the same way for the case where $P=P_{i,j}$ for some $i$ and $j$.
    If $P$ ends at~$a_x$ for some $x \in V_i$, then~$a_x$ forms a blocking pair with $\hat{a}_x$ in $M$.
    If $P$ does not end at a woman in $A$, then it must contain the edge $\{a_x, b_x^1\}$ for some $x$, in which case $\{a_x, \hat{a}_x\}$ is again blocking in $M$, showing~(b).

    To see (c), first observe that if $P$ is not disjoint from $G_i$, then $P$ ends in~$G_i$, simply because of its property that it contains edges from $M$ and $M_s$ in an alternating fashion.
    Therefore, the last woman~$w$ on $P$ must be in $B \cup C$.
    If $w=b_x^h$ for some $b \in B$, then the edge $\{b_x^h, \hat{b}_x^{h+1}\}$ is blocking $M$, as $b_x^h$ cannot get its first choice~$\hat{b}_x^h$ in $M$ (and $\hat{b}_x^{h+1}$ cannot be on $P$, as that would imply that $b_x^{h+1}$ is on~$P$, contradicting the choice of~$w$). 
    If, by contrast,  $w=c$ for some $c \in C$, then~$P$ must end at~$w$ by (a), and then $c$ forms a blocking pair with the third man in its preference list (for whom $c$ is the first choice).
    In either case, $w$ is involved in a blocking pair, which together with the blocking pair guaranteed by (b) implies that $P$ has cost at least~$2$.

    This completes the proof of Claim~\ref{claim-path-costs}.
  \renewcommand{\qedsymbol}{$\Diamond$}
  \end{proof}

  Claim~\ref{claim-path-costs} proves that for each $i \in \{1,\hdots, k\}$ the improving  path $P_i$ has cost at least~$2$.
  Since all the remaining $\binom{k}{2}$ improving  paths have cost at least $1$, and the total cost of these paths must be at most $b=2k+\binom{k}{2}$, we get that any path~$P_i$ (or $P_{i,j}$) must have cost \emph{exactly}~$2$ (or~$1$, respectively).
  Furthermore, it also follows that no other path of $G_{\Delta}$ can enter or start in~$G_i$, for any $i$, as that would imply that the number of blocking pairs for $M$ is more than $b$.
  In addition, it is not hard to see that $G_{\Delta}$ does not contain any cycle, because all cycles in the graph underlying $I$ contain two consecutive edges not in $M_s$. 
  Hence, it follows that the only connected component in $G_{\Delta}$ that is not disjoint from $G_i$ is $P_i$. 

  To deal with the possible courses the path $P_i$ may take in the graph for some $i\in\{1,\hdots,k\}$, let $x_i$ denote the vertex in $V_i$ for which $\{a_{x_i}, \hat{a}_{x_i}\}$ is the blocking edge guaranteed by statement (b) of Claim~\ref{claim-path-costs}.
  Observe that $P_i$ either ends at~$a_{x_i}$ or contains the edge $\{a_{x_i}, \hat{b}_{x_i}^1\}$.
  In either case, we say that~$P_i$ \emph{selects}~$x_i$ from $V_i$; clearly, there can be only one vertex in $V_i$ selected by $P_i$. 

  Consider now $P_{i,j}$ for some $1 \leq i<j \leq k$.
  Recall that $P_{i,j}$ has cost $1$.
  Therefore, statement~(b) of Claim~\ref{claim-path-costs} proves that the only blocking edge incident to some woman on~$P_{i,j}$ must be $\{a_{\{x,y\}}, \hat{a}_{\{x,y\}}\}$ for some $\{x,y\} \in E_{i,j}$.
  We say that $P_{i,j}$ \emph{selects} the edge $\{x,y\}$; without loss of generality, let us assume that~$x$ precedes~$y$. 
  By statement (c) of Claim~\ref{claim-path-costs}, we also know that~$P_{i,j}$ cannot leave~$G_{i,j}$, which means that it can only have cost $1$ if it ends at $\widetilde{b}_{y \to x}$. 
  In particular, it contains the edges $\{b_{x \to y}, \hat{b}_{y \to x}\}$ and $\{b_{y \to x}, \widetilde{b}_{y \to x}\}$.
  Observe that the edge $\{b_{x \to y}, \hat{b}_x^h\}$ where~$h$ is such that $y=n(x,h)$ cannot be blocking in~$M$ (as this would indicate a cost of $2$ for $P_{i,j}$), yielding that $\hat{b}_x^h$ must be matched to $b_x^{h-1}$ in $M$. 
  By the arguments of the previous paragraph, this means that~$P_i$ must contain the subpath $(a_x, \hat{b}_x^1, b_x^1, \dots, \hat{b}_x^h, b_x^h)$.
  Hence, we obtain that $x$ must be selected by $P_i$.
  Similarly, from the fact that the edge $\{b_{y \to x}, \hat{b}_y^{\ell}\}$ where $x=n(y,\ell)$ is not blocking in $M$ we get that $y$ must be selected by $P_j$. 

  Thus, we obtain that if an edge is selected by $P_{i,j}$ for some $i$ and $j$, then its endpoints must be selected by $P_i$ and $P_j$. As this must hold for each pair of indices with $1 \leq i<j \leq k$, we obtain that there must be $\binom{k}{2}$ edges in $G$ whose endpoints are among the $k$ selected vertices.
  This can only happen if these edges are the edges of a clique of size $k$.

  ''$\Leftarrow$'': 
  Suppose now that $G$ has a clique of size $k$ formed by the vertices $x_1,\hdots,x_k$, with $x_i \in V_i$ for each $i \in \{1,\dots,k\}$.
  Instead of directly defining the required matching $M$ that is feasible and admits at most $b$ blocking pairs, we give~$M_s \triangle M$ as the union of paths $P_{i}$ for $i \in \{1,\dots,k\}$, and paths $P_{i,j}$ for $1 \leq i<j \leq k$, defined as follows. 

  We set $P_i$ as the path
  \begin{equation*}
    P_i=(s_i, \hat{a}_{v_i^1}, a_{v_i^1}, \dots, \hat{a}_{x_i}, a_{x_i}, \hat{b}_{x_i}^1, b_{x_i}^1, \dots, \hat{b}_{x_i}^{d_G(x_i)}, b_{x_i}^{d_G(x_i)}, \widetilde{b}_{x_i}^{d_G(x_i)}) \enspace .
  \end{equation*}
  Similarly, we define
  \begin{equation*}
    P_{i,j}=(s_{i,j}, \hat{a}_{e_{i,j}^1}, a_{e_{i,j}^1}, \dots, \hat{a}_{\{x_i, x_j\}}, a_{\{x_i, x_j\}}, 
    \hat{b}_{x_i \to x_j}, b_{x_i \to x_j}, \hat{b}_{x_j \to x_i}, b_{x_j \to x_i}, 
    \widetilde{b}_{x_j \to x_i}) \enspace .
  \end{equation*}
  It is straightforward to verify that the blocking pairs for $M$ are then the~$k$ edges $\{a_{x_i}, \hat{a}_{x_i}\}$, $i \in \{1, \dots, k\}$,  the $k$ edges $\{b_{x_i}^{d_G(x_i)}, c_{x_i}^1\}$, and the $\binom{k}{2}$ edges $\{a_{\{x_i, x_j\}}, \hat{a}_{\{x_i, x_j\}}\}$, $1 \leq i <j \leq k$.
  The feasibility of $M$ is trivial; this completes the proof of Theorem~\ref{thm:smc-minblock-mainhardness}.
\end{proof}

A fundamental hypothesis about the complexity of $\mathsf{NP}$-hard problems is the \emph{Exponential Time Hypothesis} (ETH), which stipulates that algorithms solving all {\sc Satisfiability} instances in subexponential time cannot exist~\cite{ImpagliazzoEtAl2001}.
Assuming ETH, the fundamental {\sc Clique} problem parameterized by solution size~$k$ was shown not to admit any algorithm giving the correct answer in time $f(k)\cdot n^{o(k)}$ for all $n$-vertex instances and any computable function~$f$~\cite[Thm. 5.4]{ChenEtAl2004}.
The known reduction from {\sc Clique} to {\sc Multicolored Clique} does not change the parameter~\cite{FellowsEtAl2009}.
Finally, in the proof of Theorem~\ref{thm:smc-minblock-mainhardness}, an instance of {\sc Multicolored Clique} with solution size $k$ is reduced to an instance of SMC-1 with parameter $b = O(k^2)$.
\begin{corollary}
  Assuming ETH, SMC-1 cannot be solved in time $f'(b) \cdot n^{o(\sqrt{b})}$ for any com\-pu\-table function~$f'$, even if there is a master list over men and over women, all preference lists have length at most~$3$, and each woman in~$\mathcal W^\star$ finds only a single man acceptable.
\end{corollary}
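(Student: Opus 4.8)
The plan is to compose three known facts: the ETH-based lower bound for \textsc{Clique}, the parameter-preserving reduction from \textsc{Clique} to \textsc{Multicolored Clique}, and the reduction of Theorem~\ref{thm:smc-minblock-mainhardness}; the only work is to track how the parameter and the instance size transform along this chain.

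First I would recall the starting ingredient: under ETH, \textsc{Clique} on $n$-vertex graphs admits no $f(k)\cdot n^{o(k)}$-time algorithm for any computable $f$, where $k$ is the solution size~\cite[Thm.~5.4]{ChenEtAl2004}, and the standard reduction from \textsc{Clique} to \textsc{Multicolored Clique}~\cite{FellowsEtAl2009} preserves the parameter $k$ and increases the instance size only polynomially, so the same lower bound holds verbatim for \textsc{Multicolored Clique}. Next I would quantify the reduction of Theorem~\ref{thm:smc-minblock-mainhardness}: it is polynomial-time computable, it outputs an SMC-1 instance satisfying exactly the structural restrictions claimed in the corollary (master lists over both sides, all lists of length at most $3$, each woman in $\mathcal W^\star$ with a singleton list), it sets $b = 2k + \binom k2 = \Theta(k^2)$, and — since there are $k + \binom k2 = O(k^2)$ gadgets, each of size polynomial in $n=|V(G)|$ and in $b$, with $k\le n$ — it produces an instance of total size $N = n^{O(1)}$.

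Then I would argue by contradiction. Suppose SMC-1, restricted as above, could be solved in time $f'(b)\cdot N^{o(\sqrt b)}$ for some computable $f'$. Composing this algorithm with the reduction above and substituting $b = \Theta(k^2)$, hence $\sqrt b = \Theta(k)$, and $N = n^{c}$ for a constant $c$, the running time on a \textsc{Multicolored Clique} instance becomes $f'(\Theta(k^2))\cdot n^{c\cdot o(k)} = g(k)\cdot n^{o(k)}$ for the computable function $g(k):=f'(\Theta(k^2))$. Passing this back through the reduction from \textsc{Clique} yields an $f(k)\cdot n^{o(k)}$-time algorithm for \textsc{Clique}, contradicting ETH; this establishes the corollary.

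There is no substantial obstacle here: the result is essentially bookkeeping on top of Theorem~\ref{thm:smc-minblock-mainhardness}. The only points needing mild care are that the polynomial blow-up $N=n^{c}$ is absorbed by the sublinear exponent (i.e.\ $c\cdot o(k)=o(k)$), that $\sqrt{\Theta(k^2)}=\Theta(k)$ so an $o(\sqrt b)$ exponent really does become an $o(k)$ exponent, and that the structural guarantees surviving the reduction coincide with those asserted in the corollary statement — all of which are routine and follow directly from the explicit construction in the proof of Theorem~\ref{thm:smc-minblock-mainhardness}.
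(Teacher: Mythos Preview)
Your proposal is correct and follows essentially the same approach as the paper: both chain the ETH lower bound for \textsc{Clique}, the parameter-preserving reduction to \textsc{Multicolored Clique}, and the reduction of Theorem~\ref{thm:smc-minblock-mainhardness}, observing that $b=\Theta(k^2)$ so that an $n^{o(\sqrt b)}$ algorithm would yield an $n^{o(k)}$ algorithm for \textsc{Clique}. You have simply made the bookkeeping (instance size bound, contradiction argument) more explicit than the paper does.
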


\section{Polynomial-Time Approximation}
\label{sec:hrlq-min-blocking-pairs-shortlists}
Here we first provide a polynomial-time algorithm that yields an approximation for HRLQ with factor $(\Delta_{\mathcal R}-1) \mathsf{\underline{q}}_{\Sigma}$, where $\Delta_{\mathcal R}$ is the maximum length~$\Delta_{\mathcal R}$ of residents' preference lists and $\mathsf{\underline{q}}_{\Sigma}$ is the total sum of all lower quotas.
Then we use this result to propose an exact poly\-nomial-time algorithm for {\sc HRLQ} for the case where both~$\Delta_{\mathcal R}$ and $\mathsf{\underline{q}}_{\Sigma}$ are constant.
%a constant-factor approximation for instances of  {\sc HRLQ} where both the maximum length~$\Delta_{\mathcal R}$ of residents' preference lists and the total sum $\mathsf{\underline{q}}_{\Sigma}$  of all lower quotas is constant.
Recall that in {\sc HRLQ}, our objective is to find an assignment that satisfies all quota lower and upper bounds and minimizes the number of blocking pairs.

\begin{theorem}
\label{thm-poly-approx}
  Let $I$ be an instance of HRLQ.
  Let~$\Delta_{\mathcal R}$ denote the maximum length of residents' preference lists, and let $\mathsf{\underline{q}}_{\Sigma}$ denote the sum of lower quota bounds taken over all hospitals in $I$.
  There is an algorithm that in polynomial time either outputs a feasible assignment for~$I$ with at most $(\Delta_{\mathcal R}-1) \mathsf{\underline{q}}_{\Sigma}$ blocking pairs, involving only $\mathsf{\underline{q}}_{\Sigma}$ residents, or concludes that no feasible assignment exists.
\end{theorem}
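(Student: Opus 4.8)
The strategy is to repair a stable matching of the lower-quota-free relaxation with as little local surgery as possible, arranging that only the residents touched by the surgery can lie in a blocking pair. First, let $I^{0}$ be obtained from~$I$ by resetting every lower quota to~$0$, and let $M_{0}$ be a stable matching of~$I^{0}$ (computable in polynomial time). Since whether a pair $\{r,h\}$ blocks an assignment depends only on the \emph{upper} quotas (through the notion of an under-subscribed hospital), $M_{0}$ has \emph{no} blocking pair in~$I$ either; it is typically infeasible for~$I$, but its total deficiency $d:=\sum_{h\in\mathcal H}\max\{0,\,\mathsf{\underline{q}}(h)-|M_{0}(h)|\}$ satisfies $d\le\mathsf{\underline{q}}_{\Sigma}$, and every hospital contributing to~$d$ has positive lower quota. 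Next, I would decide feasibility by a max-flow computation: build the flow network naturally associated with~$I$ \emph{around}~$M_{0}$, so that pushing one extra unit into a deficient hospital corresponds to an $M_{0}$-alternating ``rejection cascade'' --- the deficient hospital grabs a new resident, whose former hospital either can spare it (the cascade stops) or must grab a replacement (the cascade continues), ending at an unassigned resident or at a hospital with slack. The instance~$I$ is feasible if and only if this network admits a flow of value~$d$; if it does not, output ``no feasible assignment''. Otherwise, fix an integral maximum flow, decompose it into~$d$ cascades, and let~$M$ be the feasible assignment obtained by applying all of them to~$M_{0}$, so that $M\triangle M_{0}$ is the union of these cascades.

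\textbf{Bounding the blocking pairs.} Let $R'$ be the set of residents whose $M$-assignment differs from their $M_{0}$-assignment; every such resident is matched by~$M$. Two claims finish the count. (i)~$|R'|\le\mathsf{\underline{q}}_{\Sigma}$: in each cascade, the starting deficient hospital together with every hospital forced to grab a replacement (one that would otherwise fall below its lower quota) accounts for exactly one resident and one distinct hospital of positive lower quota; as the cascades are vertex-disjoint, this charges the residents of~$R'$ injectively to hospitals with positive lower quota, of which there are at most $\mathsf{\underline{q}}_{\Sigma}$. (ii)~Every blocking pair of~$M$ meets~$R'$: if $\{r,h\}$ blocks~$M$ with $r\notin R'$, then $M(r)=M_{0}(r)$, so stability of~$M_{0}$ in~$I^{0}$ forces $h$ to be full in~$M_{0}$ with every resident of~$M_{0}(h)$ preferred to~$r$; but $M(h)$ differs from~$M_{0}(h)$ only through replacement steps, each of which installs $h$'s most preferred \emph{still available} resident --- and $r$ was available --- so $h$ still prefers every member of~$M(h)$ to~$r$, a contradiction. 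Combining (i) and (ii), every blocking pair involves a resident of~$R'$, and each such resident, being matched by~$M$, blocks only with one of the at most $\Delta_{\mathcal R}-1$ hospitals it prefers to its partner; hence there are at most $(\Delta_{\mathcal R}-1)\mathsf{\underline{q}}_{\Sigma}$ blocking pairs, involving only $|R'|\le\mathsf{\underline{q}}_{\Sigma}$ residents.

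\textbf{Main obstacle.} The delicate point is to set up the flow so that it is \emph{simultaneously} large enough --- value~$d$ precisely when $I$ is feasible, which is what makes the feasibility test sound and produces~$M$ --- and structured enough that its cascades realise every replacement as ``grab the most preferred available resident'' and route only through hospitals of positive lower quota, since statements (i) and (ii) rely on exactly this. I would handle it by not taking an arbitrary maximum flow but one that is lexicographically optimal from the hospitals' point of view (processing hospitals and their preference lists in a fixed order), and by first short-cutting any cascade at the earliest hospital that can spare a resident. Verifying that such a flow exists whenever $I$ is feasible --- a routine exchange argument against a hypothetical feasible assignment, decomposing $M^{\star}\triangle M_{0}$ into its components and short-cutting each --- and that its cascades indeed enjoy the two properties above is the technical heart of the proof; once that is in place, the counting in (i)--(ii) is immediate.
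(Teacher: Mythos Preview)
Your approach has a genuine gap: claim~(ii) fails precisely at the hospital where a cascade terminates. When a cascade ends ``at a hospital with slack'', that hospital loses a resident without gaining a replacement, so it becomes (more) under-subscribed in~$M$ and can now form a blocking pair with \emph{any} resident that prefers it---including residents outside~$R'$. Your argument for~(ii) only treats hospitals that underwent a replacement step; it never addresses a terminal hospital that simply lost a resident.

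Here is a concrete instance. Take hospitals $A$ with $\mathsf{\underline q}(A)=\mathsf{\overline q}(A)=1$ and $L(A)=(r_1)$; $B$ with $\mathsf{\underline q}(B)=0$, $\mathsf{\overline q}(B)=1$, $L(B)=(r_1,r_2)$; $C$ with $\mathsf{\underline q}(C)=0$, $\mathsf{\overline q}(C)=1$, $L(C)=(r_2)$; and residents $r_1$ with $L(r_1)=(B,A)$, $r_2$ with $L(r_2)=(B,C)$. Then $\mathsf{\underline q}_\Sigma=1$ and $\Delta_{\mathcal R}=2$. The stable matching $M_0$ of $I^0$ is $r_1\to B$, $r_2\to C$, with $A$ deficient. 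The only cascade moves $r_1$ to $A$ and stops at $B$ (which has lower-quota slack), so $R'=\{r_1\}$. But now $B$ is empty, and $\{r_2,B\}$ blocks $M$ with $r_2\notin R'$. You cannot repair this by also moving $r_2$ to $B$: that yields $|R'|=2>\mathsf{\underline q}_\Sigma$, breaking claim~(i). No ``lexicographically optimal'' choice of flow helps, since this instance admits exactly one cascade. The paper avoids this tension by reversing the order of your two phases: it first commits $\mathsf{\underline q}_\Sigma$ residents to the lower-quota hospitals via a greedy rule (each such hospital holds its best residents not already claimed by another lower-quota hospital), deletes those residents, and only \emph{then} runs Gale--Shapley on the remainder. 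Because the stable matching is computed relative to the residual instance, no hospital is left newly under-subscribed by the surgery, and every blocking pair is forced to contain one of the committed residents.
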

\begin{proof}
  Let $\mathcal H^\star$ denote the set of hospitals with positive lower quotas. 
  We start by finding an assignment~$M_q$ that assigns $\mathsf{\underline{q}}(h)$ residents to each hospital $h \in \mathcal H^\star$, and has the following property:
  \begin{equation}
  \label{eqn:propdagger}
    \tag{$\dagger$}
    \parbox{0.9\textwidth}{
      \emph{for each hospital $h \in \mathcal H^\star$, all residents that are
      not in~$M_q(h)$ but preferred by~$h$ to the least preferred resident in~$M_q(h)$~are contained in~$\bigcup_{h' \in \mathcal H^\star\setminus\{h\}} M_q(h')$.}
    }
  \end{equation}
  Such an assignment can be obtained as follows.
  We start from an arbitrary assignment $M$ that assigns $\mathsf{\underline{q}}(h)$ residents to each $h \in \mathcal H^\star$ (if no such assignment exists, then we can stop and reject); such an assignment, if existent, can be found in polynomial time by an algorithm of Hopcroft and Karp~\cite{HopcroftKarp1973}.  
  Then we greedily re-assign residents to hospitals of $\mathcal H^\star$, one-by-one: at each step, we take a hospital $h \in \mathcal H^\star$, and if there exists a resident $r$ not assigned to any other hospital in~$\mathcal H^\star$ that~$h$ prefers to the least preferred resident~$r'$ in $M(h)$, then we replace~$r'$ with~$r$ in $M(h)$.
  If this step cannot be applied anymore, then we arrive at an assignment~$M_q$ with the desired property~\eqref{eqn:propdagger}.

  Given $M_q$, we reduce the upper quotas of each hospital $h \in \mathcal H^\star$ by $\mathsf{\underline{q}}(h)$, set all lower quotas to~$0$, and delete all residents in $\mathcal R^\star := M_q(\mathcal H^\star)$.
  We then find a stable assignment~$M_s$ in the resulting instance~$I'$; note that  $I'$ is an instance of \textsc{HR}, so we can find~$M_s$ in polynomial time~\cite{GaleShapley1962}.
  Finally, we output $M^{\textnormal{out}}=M_s \cup M_q$.
  Clearly, $M^{\textnormal{out}}$ is feasible.
  Also, any blocking pair that~$M^{\textnormal{out}}$ admits must involve either a hospital from $\mathcal H^\star$ or a resident from $\mathcal R^\star = M_q(\mathcal H^\star)$ by the stability of~$M_s$ with respect to $I'$.
  Observe that if some $h \in \mathcal H^\star$ is involved in some blocking pair $\{r,h\}$ of $M^{\textnormal{out}}$, then we must have $r \in \mathcal R^\star$.
  To see this, recall that each resident that is preferred by $h$ to its least preferred resident in $M_q(h)$ must be in~$\mathcal R^\star$ because of property~\eqref{eqn:propdagger}, and furthermore,~$h$ is under-subscribed in~$M^{\textnormal{out}}$ (within~$I$) if and only if $h$ is under-subscribed in~$M_s$ (within~$I'$).
  Therefore, we can conclude that each blocking pair for $M^{\textnormal{out}}$ must involve some resident in $\mathcal R^\star$; observe that $|\mathcal R^\star| \leq \sum_{h\in\mathcal H}\mathsf{\underline{q}}(h) = \mathsf{\underline{q}}_{\Sigma}$.
  Since each resident in $\mathcal R^\star$ is incident to at most $\Delta_{\mathcal R}-1$ edges not in $M^{\textnormal{out}}$, we also have that $M^{\textnormal{out}}$ admits at most $(\Delta_{\mathcal R}-1)|\mathcal R^\star| \leq (\Delta_{\mathcal R}-1)\mathsf{\underline{q}}_{\Sigma}$ blocking pairs.
\end{proof}

If both $\Delta_{\mathcal R}$ and $\mathsf{\underline{q}}_{\Sigma}$ are constant, then Theorem~\ref{thm-poly-approx} implies that {\sc HRLQ} becomes polynomial-time solvable.
Indeed, we can use the following simple strategy, depending on the number $b$ of blocking pairs allowed: if $b \geq (\Delta_{\mathcal R}-1) \mathsf{\underline{q}}_{\Sigma}$, then we apply Theorem~\ref{thm-poly-approx} directly; if $b < (\Delta_{\mathcal R}-1) \mathsf{\underline{q}}_{\Sigma}$, then we use the algorithm by Hamada et al.~\cite{HamadaEtAl2014} running in time $O(|I|^{b+1})$ which is polynomial, since $b$ is upper-bounded by a constant.

\begin{corollary}
\label{cor-hrlq-poly}
  If both the maximum length~$\Delta_{\mathcal R}$ of residents' preference lists and the total sum~$\mathsf{\underline{q}}_{\Sigma}$  of all lower quotas is constant, then {\sc HRLQ} is polynomial-time solvable.
\end{corollary}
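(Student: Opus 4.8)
The plan is to combine the polynomial-time procedure of Theorem~\ref{thm-poly-approx} with the known exact algorithm of Hamada et al.~\cite{HamadaEtAl2014}. Write $B_0 := (\Delta_{\mathcal R}-1)\,\mathsf{\underline{q}}_{\Sigma}$; by hypothesis both $\Delta_{\mathcal R}$ and $\mathsf{\underline{q}}_{\Sigma}$ are bounded by a constant, so $B_0$ is a constant independent of the instance size $|I|$. First I would run the algorithm of Theorem~\ref{thm-poly-approx}: in polynomial time it either reports that no feasible assignment exists, in which case we are done, or it outputs a feasible assignment with at most $B_0$ blocking pairs. Hence, whenever a feasible assignment exists at all, the minimum number of blocking pairs over all feasible assignments is a constant lying in $\{0,1,\dots,B_0\}$.

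It then remains to pin down this minimum (or, if the instance comes with a target value $b$, to decide whether it is at most $b$). For each candidate value $b' \in \{0,1,\dots,B_0\}$ I would invoke the algorithm of Hamada et al.~\cite{HamadaEtAl2014}, which in time $O(|I|^{b'+1})$ decides whether $I$ admits a feasible assignment with at most $b'$ blocking pairs and, if so, produces one; since every such call has $b' \le B_0$, each run takes time $O(|I|^{B_0+1})$, which is polynomial because $B_0$ is a constant. Returning the assignment obtained for the smallest $b'$ for which a call succeeds solves the optimization version, and by the previous paragraph some call with $b' \le B_0$ does succeed whenever the instance is feasible. Equivalently, if the number $b$ of blocking pairs allowed is given as part of the input, one branches: if $b \ge B_0$, the assignment from Theorem~\ref{thm-poly-approx} already witnesses a ``yes'' answer, and if $b < B_0$, then $b$ is a constant and a single call to the algorithm of~\cite{HamadaEtAl2014} runs in polynomial time.

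I do not anticipate a genuine obstacle here: all the real work is already encapsulated in Theorem~\ref{thm-poly-approx}, whose purpose is precisely to cap the optimal objective value by a constant, and the remaining argument is a black-box reduction to~\cite{HamadaEtAl2014}. The only points requiring care are bookkeeping ones, namely checking that ``constant'' is understood as independent of $|I|$ (so that $|I|^{B_0+1}$ is a polynomial) and noting that the upper quotas, which are left unbounded, play no role in either ingredient. This is the proof already sketched in the text preceding the corollary, so writing it out amounts to assembling these two facts.
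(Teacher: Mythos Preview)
Your proposal is correct and matches the paper's own argument essentially verbatim: set $B_0=(\Delta_{\mathcal R}-1)\mathsf{\underline{q}}_{\Sigma}$, use Theorem~\ref{thm-poly-approx} when $b\ge B_0$, and otherwise invoke the $O(|I|^{b+1})$ algorithm of Hamada et al.\ with $b<B_0$ constant. The only difference is cosmetic---you additionally spell out the optimization version via looping over $b'\in\{0,\dots,B_0\}$---which is fine.
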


Another application of Theorem~\ref{thm-poly-approx} is an approximation algorithm that works regardless of whether~$\Delta_{\mathcal R}$ or $\mathsf{\underline{q}}_{\Sigma}$ is a constant.
In fact, the algorithm of Theorem~\ref{thm-poly-approx} can be turned into a $(\Delta_{\mathcal R}-1) \mathsf{\underline{q}}_{\Sigma}$-factor approximation algorithm as follows. 
First, we find a stable assignment $M_s$ for $I$ in polynomial time using the extension of the Gale-Shapley algorithm for the \textsc{Hospitals/Residents} problem.
If $M_s$ is not feasible, then by the Rural Hospitals Theorem~\cite{GaleSotomayor1985}, we know that any feasible assignment for~$I$ must admit at least one blocking pair; hence, the algorithm presented in Theorem~\ref{thm-poly-approx} clearly yields an approximation with (multiplicative and also additive) factor $(\Delta_{\mathcal R}-1) \mathsf{\underline{q}}_{\Sigma}$. 

To close this section, we also state an analogue of Theorem~\ref{thm-poly-approx} that deals with SMC: it can handle covering constraints on both sides, but assumes that all quota upper bounds are $1$. 
\begin{theorem}
\label{thm-smc2-poly-approx}
  There is an algorithm that in polynomial time either outputs a feasible matching for an instance $I$ of SMC with at most $(\Delta_{\mathcal W}-1)|\mathcal{M}^{\star}|+(\Delta_{\mathcal M}-1)|\mathcal{W}^{\star}|$ blocking pairs, or concludes that $I$ admits no feasible matching.
\end{theorem}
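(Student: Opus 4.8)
The plan is to follow the template of Theorem~\ref{thm-poly-approx}: compute a feasible matching $M_q$ (one saturating $\mathcal{W}^\star\cup\mathcal{M}^\star$), preprocess it to be ``locally optimal'' for the distinguished people, delete all $M_q$-matched people, extend $M_q$ by a stable matching $M_s$ of the remaining classical \textsc{Stable Marriage} instance $I'$ (via Gale--Shapley), and return $M^{\textnormal{out}}:=M_q\cup M_s$. First I would run the Hopcroft--Karp algorithm~\cite{HopcroftKarp1973} to find a matching saturating $\mathcal{W}^\star\cup\mathcal{M}^\star$; if none exists we report infeasibility (correct, since with all upper quotas $1$ a matching is feasible iff it saturates $\mathcal{W}^\star\cup\mathcal{M}^\star$). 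We may assume, and maintain, that every edge of $M_q$ is incident to a distinguished person, so that the $M_q$-matched people are exactly $\mathcal{W}^\star\cup\mathcal{M}^\star\cup\mathcal{M}_1\cup\mathcal{W}_1$ where $\mathcal{M}_1:=M_q(\mathcal{W}^\star)$ and $\mathcal{W}_1:=M_q(\mathcal{M}^\star)$ are the $M_q$-partners of the distinguished women and men; note $|\mathcal{M}_1|=|\mathcal{W}^\star|$ and $|\mathcal{W}_1|=|\mathcal{M}^\star|$.

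For the preprocessing I would repeat the following improving step as long as possible: if there is a distinguished person $p$ with a \emph{non-distinguished} $M_q$-partner, together with a person $q$ of the opposite sex, such that $p$ prefers $q$ to $M_q(p)$, $q$ is either unmatched or has a non-distinguished $M_q$-partner, and $q$ is either non-distinguished or (distinguished but) prefers $p$ to $M_q(q)$, then re-match $p$ with $q$, leaving the (necessarily non-distinguished) people displaced by this re-matching unmatched. Such a step keeps $M_q$ feasible and keeps every edge incident to a distinguished person; moreover only $p$ and possibly $q$ change partner, and both strictly improve, so the potential $\sum_{p\in\mathcal{W}^\star\cup\mathcal{M}^\star}\mathrm{rank}_p(M_q(p))$ strictly decreases, and the process halts after polynomially many steps. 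At the resulting fixed point the following holds: for every distinguished person $p$ whose $M_q$-partner is non-distinguished, every $q$ that $p$ prefers to $M_q(p)$ either lies in $\mathcal{M}_1\cup\mathcal{W}_1$, or is distinguished, has a non-distinguished $M_q$-partner, and prefers that partner to $p$. The point is that a distinguished person matched to another distinguished person automatically lies in $\mathcal{M}_1\cup\mathcal{W}_1$ and needs no improvement, which is exactly why it suffices to move only people with non-distinguished partners and no cascade arises.

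It then remains to show that every blocking pair of $M^{\textnormal{out}}$ is incident to a vertex of $\mathcal{M}_1\cup\mathcal{W}_1$. Given a blocking pair $\{m,w\}$: if both $m$ and $w$ survive in $I'$ then $\{m,w\}$ blocks $M_s$, contradicting stability; so some endpoint, say $m$ (the other case is symmetric), is $M_q$-matched, hence $m\in\mathcal{M}^\star\cup\mathcal{M}_1$. If $m\in\mathcal{M}_1$ we are done. Otherwise $m\in\mathcal{M}^\star$ has a non-distinguished partner, so $M^{\textnormal{out}}(m)=M_q(m)$ and, since $\{m,w\}$ blocks, $m$ prefers $w$ to $M_q(m)$; by the fixed-point property $w\in\mathcal{M}_1\cup\mathcal{W}_1$ — in fact $w\in\mathcal{W}_1$ since $w$ is a woman — unless $w$ is distinguished with a non-distinguished partner $M^{\textnormal{out}}(w)=M_q(w)$ that $w$ prefers to $m$, which would contradict $\{m,w\}$ being blocking. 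Thus every blocking pair is incident to $\mathcal{M}_1\cup\mathcal{W}_1$. Since every vertex of $\mathcal{M}_1\cup\mathcal{W}_1$ is matched by $M^{\textnormal{out}}$ and has a preference list of length at most $\Delta_{\mathcal M}$ (for men) or $\Delta_{\mathcal W}$ (for women), it is in at most $\Delta_{\mathcal M}-1$ resp.\ $\Delta_{\mathcal W}-1$ blocking pairs, so $M^{\textnormal{out}}$ has at most $(\Delta_{\mathcal M}-1)|\mathcal{M}_1|+(\Delta_{\mathcal W}-1)|\mathcal{W}_1|=(\Delta_{\mathcal W}-1)|\mathcal{M}^{\star}|+(\Delta_{\mathcal M}-1)|\mathcal{W}^{\star}|$ blocking pairs.

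I expect the crux to be the design of the preprocessing: one must pin down exactly which improving re-matchings are ``safe'' (those displacing only non-distinguished people), verify that these admit a monotone potential, and then run it to a fixed point; once the fixed-point property is isolated, the blocking-pair count is a routine case analysis mirroring that of Theorem~\ref{thm-poly-approx}, the only delicate point being the repeated appeal to the fact that a distinguished person with a distinguished partner is automatically ``safe''.
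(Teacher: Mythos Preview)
Your proof is correct and takes essentially the same approach as the paper: start from any feasible matching with all edges incident to a distinguished person, greedily improve distinguished people whose partners are non-distinguished until a local fixed point, then complete with a stable matching on the remaining agents and argue that every blocking pair touches the partner of a distinguished person. Your presentation is in fact a bit more careful than the paper's---you make the termination argument explicit via the rank potential and you spell out the case analysis cleanly---but the underlying construction and the property $(\maltese)$ you reach at the fixed point are the same.
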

\begin{proof}
%[Proof of Theorem~\ref{thm-smc2-poly-approx}] 
 The proof uses the same ideas as those used in our proof for Theorem~\ref{thm-poly-approx}, so the reader may skip the proof below, which we only include for completeness. 

  We start by finding an arbitrary matching $M$ that covers each distinguished person (if no such matching exists, then we can stop and reject); such a matching, if existent, can be found in polynomial time by standard flow techniques.  
  We assume, without loss of generality, that each edge in $M$ is incident to some distinguished person. 
  Let us define $\mathcal{X}^\star= \mathcal{W}^\star \cup \mathcal{M}^\star$, and let $\mathcal{U}^\star$ be the set of those persons $x \in \mathcal X^\star$ whose partner $M(x)$ is also in  $\mathcal{X}^\star$. 
%  i.e., $\mathcal{U}=\{x \mid x \in \mathcal{X}^\star, M(x) \in \mathcal{X}^\star\}$. 

  We proceed by modifying $M$ into a matching $M_q$ that covers $\mathcal{X}^\star$ and has the following property:
  \begin{equation}
  \label{eqn:propstarsmc2}
    \tag{$\maltese$}
    \parbox{0.9\textwidth}{
    \emph{If a person $x \in \mathcal X^\star \setminus \mathcal U^\star$ belongs to 
        a blocking pair $\{x,y\}$ for $M_q$, 
        then $M_q(y) \in \mathcal{X}^\star$.}
    }
  \end{equation}
  Such an assignment can be obtained as follows.
  We greedily assign partners to the men and women in $\mathcal X^\star \setminus \mathcal U^\star$, one-by-one: 
  at each step, we take a person $x \in \mathcal X^\star \setminus \mathcal U^\star$, and if $x$ forms a blocking pair (with respect to the current matching) with some $y$ that is not the partner of a distinguished person, then we replace the partner of $x$ with $y$: we add the edge $\{x,y\}$ to the matching, and delete all the other edges incident to $x$ or $y$.
  Observe that the obtained matching is still feasible.
  If this step cannot be applied anymore, then we arrive at a matching~$M_q$ with the desired property~\eqref{eqn:propstarsmc2}; 
  note also that each edge in~$M_q$ is incident to some distinguished person.

  Given $M_q$, we delete all men and women covered by $M_q$.
  %$\mathcal{Q}^\star$ from $I$.
  We then find a stable matching~$M_s$ in the resulting instance $I'$; note that $I'$ is an instance of \textsc{Stable Marriage}, so we can find~$M_s$ in polynomial time \cite{GaleShapley1962}.
  Finally, we output $M^{\textnormal{out}}=M_s \cup M_q$.
  Clearly, $M^{\textnormal{out}}$ is feasible.
  Also, any blocking pair that~$M^{\textnormal{out}}$ admits must involve a person covered by $M_q$ due to the stability of~$M_s$ with respect to $I'$.  

  We claim that any blocking pair $\{x,y\}$ involves a person whose partner by~$M_q$ is distinguished, so either $M_q(x) \in \mathcal{X}^\star$ or $M_q(y) \in \mathcal{X}^\star$. 
  We can assume that $x$ is covered by $M_q$ (because this holds for at least one of $x$ and $y$). 
  To see the claim, first note that if $x$ is not distinguished, then $M_q(x)$ must be distinguished, because each edge of $M_q$ contains a distinguished person.
  Second, if $x \in \mathcal{X}^\star$, then either $x \in \mathcal{U}^\star$ (in which case $M_q(x) \in \mathcal{X}^\star$) or \mbox{$M_q(y) \in \mathcal{X}^\star$} because of property~\eqref{eqn:propstarsmc2}.
  Therefore, we can conclude that each blocking pair for~$M^{\textnormal{out}}$ must involve the partner of some distinguished resident.
  The partners of distinguished women can be incident to at most $|\mathcal{W}^\star|(\Delta_{\mathcal M}-1)$ blocking pairs, and similarly, the partners of distinguished men can be incident to at most $|\mathcal{M}^\star|(\Delta_{\mathcal W}-1)$ blocking pairs, proving the theorem.
\end{proof}

\section{SMC with Bounded Number of Distinguished Persons or Blocking Pairs}
\label{sec:boundednumberofwomentobecovered}
In Theorem~\ref{thm:smc-minblock-mainhardness} we proved $\mathsf{W}[1]$-hardness of SMC-1 for the case where 
$\Delta_{\mathcal M}=\Delta_{\mathcal W}=3$, with parameter $b+|\mathcal{W}^{\star}|$.
Here we investigate those instances of SMC and SMC-1 where the length of preference lists may be unbounded, 
but either~$b$, or the number of distinguished persons is constant. 

First, if the number $b$ of blocking pairs allowed is constant, then SMC can be solved by simply running the extended Gale-Shapley algorithm after guessing and deleting all blocking pairs. 
This complements the result by Hamada et al.~\cite{HamadaEtAl2014}.
\begin{observation}
\label{observ:bounded-b}
  SMC can be solved in time $O(|I|^{b+1})$, where $b$ denotes the number of blocking pairs allowed in the input instance $I$. 
\end{observation}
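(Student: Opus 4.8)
The plan is to use brute force over all candidate sets of blocking pairs, relying on the Gale--Shapley algorithm together with the Rural Hospitals Theorem to handle the covering constraints. Let $E$ be the edge set of $G_I$, so $|E| = O(|I|)$. First I would enumerate every subset $B \subseteq E$ with $|B| \leq b$; since $\sum_{i=0}^{b} \binom{|E|}{i} = O(|I|^{b})$, there are only polynomially many such sets. For each guessed $B$, delete the pairs in $B$ from all preference lists to obtain an ordinary \textsc{Stable Marriage} instance $I_B$, compute a stable matching $M_B$ of $I_B$ via the Gale--Shapley algorithm, and test whether $M_B$ matches everybody in $\mathcal{W}^\star \cup \mathcal{M}^\star$. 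If some $B$ yields an $M_B$ passing this test, output $M_B$; otherwise report that no feasible matching with at most $b$ blocking pairs exists.

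For correctness, note first that $M_B$ uses only edges of $I_B \subseteq G_I$, hence is a matching of $I$, and any pair $\{m,w\}$ blocking $M_B$ in $I$ but lying outside $B$ would be an edge of $I_B$ blocking $M_B$ there, contradicting the stability of $M_B$ in $I_B$; thus $M_B$ has at most $|B| \leq b$ blocking pairs, and when the coverage test succeeds it is feasible. Conversely, suppose $I$ admits a feasible matching $M^\star$ with blocking-pair set $B^\star$, $|B^\star| \leq b$. Then $M^\star$ is stable in $I_{B^\star}$, so in the iteration with $B = B^\star$ the Gale--Shapley algorithm returns \emph{some} stable matching of $I_{B^\star}$; by the Rural Hospitals Theorem~\cite{GaleSotomayor1985} every stable matching of $I_{B^\star}$ matches exactly the same set of people, so this matching covers $\mathcal{W}^\star \cup \mathcal{M}^\star$ just as $M^\star$ does, and the coverage test succeeds. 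For the running time, there are $O(|I|^{b})$ choices of $B$, and for each we spend $O(|I|)$ time to build $I_B$, run Gale--Shapley, and check coverage, for a total of $O(|I|^{b+1})$.

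The only point requiring slight care is that an optimal matching may have strictly fewer than $b$ blocking pairs, so the enumeration must range over all subsets of size \emph{at most} $b$ rather than exactly $b$; beyond that the argument is routine. The genuinely useful ingredient---and the place where one might otherwise get stuck---is the Rural Hospitals Theorem, which lets us verify the covering constraints against an arbitrary stable matching of $I_B$ instead of having to search among stable matchings of $I_B$ for one with a prescribed matched set.
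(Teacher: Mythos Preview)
Your proposal is correct and follows essentially the same approach as the paper, which states only that one ``simply run[s] the extended Gale--Shapley algorithm after guessing and deleting all blocking pairs.'' Your write-up spells out the details the paper omits, in particular the appeal to the Rural Hospitals Theorem to justify that \emph{any} stable matching of $I_B$ (not just $M^\star$ itself) covers $\mathcal{W}^\star \cup \mathcal{M}^\star$.
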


%\subsection{Covering constraints on one side}
%The results in Section~\ref{sec:prefatmost2} and~\ref{sec:hrlq-min-blocking-pairs}
%fully characterize the computational complexity (in the classic and the parameterized sense) 
%of SMC-1 with restrictions on the preference lists on either one or on both sides. 

In Theorem~\ref{thm:1woman-strict} we prove hardness of SMC-1 even if only one woman must be covered.
If we require preferences to follow master lists, then a slightly weaker version of Theorem~\ref{thm:1woman-strict}, where $|\mathcal{W}^{\star}|=2$, still holds.
\begin{theorem}
\label{thm:1woman-strict}
  SMC-1 is $\mathsf{W}[1]$-hard parameterized by $b+\Delta_{\mathcal{M}}$, even~if $\mathcal{W}^{\star} = \{s\}$, $ \Delta_{\mathcal{W}} = 3$, and $|L(s)|=1$. 
\end{theorem}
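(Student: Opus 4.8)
The plan is to reduce from a $\mathsf{W}[1]$-hard problem with a single ``selection'' target rather than $k$ independent ones, since here only one distinguished woman $s$ is available to drive the augmenting-path machinery. A natural candidate is \textsc{Multicolored Clique} again, but the crucial difference from Theorem~\ref{thm:smc-minblock-mainhardness} is that now $\Delta_{\mathcal M}$ is a \emph{parameter} (so men's lists may be long, bounded by a function of $k$), whereas $\Delta_{\mathcal W}\le 3$ and $|\mathcal W^\star|=1$ are forced. First I would exploit the asymmetry: women keep short lists (length $\le 3$), so the ``gadget chains'' must be built out of women, while a few men with long preference lists can act as global coordinators. I would have the single distinguished woman $s$ sit at the head of one long augmenting path; this path, to reach its cheap terminus, must thread through $k$ vertex-selection blocks and $\binom k2$ edge-consistency blocks in sequence, forcing $M\triangle M_s$ to encode one vertex per color class and a consistent edge per pair of classes. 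The number $b$ of blocking pairs allowed should again be set to roughly $2k+\binom k2$ (a function of $k$ only), so that $b+\Delta_{\mathcal M}$ is bounded by a function of $k$ and the reduction is a valid parameterized reduction.

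The key steps, in order: (i) fix \textsc{Multicolored Clique} with parts $V_1,\dots,V_k$ and a vertex/edge ordering as in the previous proof; (ii) build, for each color $i$, a vertex-selecting gadget consisting of a chain of women of list-length $\le 3$ together with one ``hub'' man $h_i$ whose preference list ranks representatives of all vertices of $V_i$ (this is where $\Delta_{\mathcal M}$ grows, but only to $O(|V_i|)$, still fine as $b$ is the parameter in the statement —wait, $\Delta_{\mathcal M}$ is the parameter, so I must instead keep $\Delta_{\mathcal M}$ itself bounded; see the obstacle below); (iii) build, for each pair $i<j$, an edge-selecting gadget of short-list women plus a hub man $h_{i,j}$ coordinating the edges of $E_{i,j}$; (iv) link the gadgets into a single long alternating structure anchored at $s$, with dummy terminal women ensuring that the unique cheap way to cover $s$ corresponds to a genuine clique; (v) define the stable reference matching $M_s$ and verify stability; (vi) argue the forward direction by the same symmetric-difference / cost-counting argument as in Claim~\ref{claim-path-costs} — each color block contributes cost $\ge 2$, each pair block cost $\ge 1$, total exactly $b$, and the ``exactly'' forces consistency; (vii) argue the backward direction by explicitly exhibiting the augmenting path for a given clique and counting its blocking pairs.

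The main obstacle is reconciling the two simultaneous constraints ``$\Delta_{\mathcal M}$ is the parameter (hence must be bounded by a function of $k$, not of $n$)'' and ``only one distinguished woman.'' With a single $s$, the natural coordinator men want to rank all vertices of a color class, which would make $\Delta_{\mathcal M}$ depend on $|V(G)|$, not on $k$. The fix I would pursue is to \emph{serialize} the choices: replace each long-listed hub man by a chain of constant-degree men and short-listed women (a ``counter'' or ``sweep'' gadget) along which the augmenting path must travel, stopping at exactly one position; this keeps every man's list of bounded length. One then only needs $\Delta_{\mathcal M}$ to be an absolute constant, and the statement's parameter $b+\Delta_{\mathcal M}$ is governed by $b=\Theta(k^2)$. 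The delicate part of the analysis is then showing that the path cannot ``cheat'' by skipping a block or taking a shortcut between blocks without incurring extra blocking pairs that push the total above $b$ — exactly the role played by statement (c) of Claim~\ref{claim-path-costs} in the previous proof, and I expect the bulk of the verification to be an analogous, somewhat intricate case analysis of where an augmenting path can terminate and which edges it is thereby forced to contain. I would also note that if master lists over men and women are additionally required, the same construction goes through after duplicating $s$ into two distinguished women (to break a parity/ordering conflict that a single master list imposes), which is precisely the ``slightly weaker version with $|\mathcal W^\star|=2$'' mentioned before the theorem.
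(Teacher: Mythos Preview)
Your proposal contains a genuine gap. The ``fix'' you settle on --- serializing the hub men so that $\Delta_{\mathcal M}$ becomes an absolute constant --- cannot work: if $\Delta_{\mathcal M}$ is a constant, $|\mathcal W^\star|=1$, and $|\mathcal M^\star|=0$, then Theorem~\ref{thm:smc-allconstant} gives a polynomial-time algorithm (the exponent is $(\Delta_{\mathcal M}-1)\cdot 1+0+1$), so no $\mathsf{W}[1]$-hardness result is possible in that regime. Conversely, your earlier hub-man idea, where $h_i$ ranks all of $V_i$, makes $\Delta_{\mathcal M}$ depend on $|V(G)|$ rather than on $k$, so it is not a valid parameterized reduction. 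The whole point of the parameter $b+\Delta_{\mathcal M}$ is that $\Delta_{\mathcal M}$ must grow with $k$ (to escape Theorem~\ref{thm:smc-allconstant}) but must stay bounded by a function of $k$ alone.

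The paper exploits exactly this: it does \emph{not} rebuild a direct reduction from \textsc{Multicolored Clique}. Instead it takes the finished instance $I$ of Theorem~\ref{thm:smc-minblock-mainhardness}, whose set $\mathcal W^\star_I$ of distinguished women already has size bounded by a function of $k$, and replaces each covering constraint $w\in\mathcal W^\star_I$ by a small ``forcing gadget'' $F_w$. A single new man $t$ is given the preference list $([Y],s)$ with $Y=\{a_w,c_w:w\in\mathcal W^\star_I\}$; since $s$ must be matched to $t$, every $y\in Y$ not matched to her first choice blocks with $t$, and a short case analysis inside $F_w$ shows that avoiding two blocking pairs there forces $w$ to receive her original partner $n(w)$. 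Thus the covering constraints of $I$ are simulated, the only distinguished woman is $s$, and $\Delta_{\mathcal M}=2|\mathcal W^\star_I|+1=O(k^3)$ is bounded by a function of $k$. The long augmenting path you envisage never appears; the work is done by one long preference list on a single man, together with local gadgets.
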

\begin{proof}
%[Proof of Theorem~\ref{thm:1woman-strict}]
  We present a reduction based on the one from~{\sc Multicolored Clique} given in the proof of Theorem~\ref{thm:smc-minblock-mainhardness}.
  Given some graph $G$ and an integer $k$ as inputs, we are going to re-use the instance~$I$ constructed in the proof of Theorem~\ref{thm:smc-minblock-mainhardness}. 
  Recall that $I$ has a feasible matching with at most $b=\binom{k}{2} +2k$ blocking pairs if and only if $G$ has a clique of size $k$.
  Recall also that the set of women that must be covered in $I$ is $S \cup T \cup U$; here we denote this set by $\mathcal{W}_I^{\star}$.
  \tikzset{rdvertex/.style={minimum size=2mm,circle,fill=white,draw, inner sep=0pt},
    sqvertex/.style={minimum size=2mm,diamond,fill=black,draw, inner sep=0pt},
    opvertex/.style={minimum size=2mm,diamond,fill=white,draw, inner sep=0pt},
    decoration={markings,mark=at position .5 with {\arrow[black,thick]{stealth}}}}
  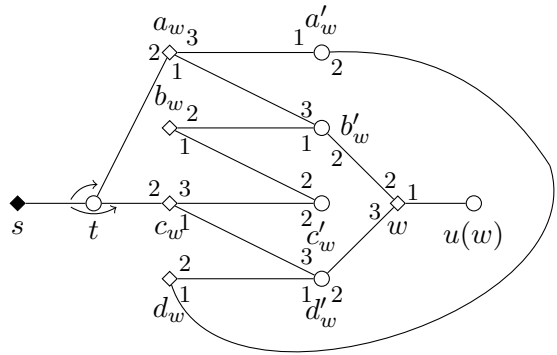
\begin{figure}[thpb]
     \tikzset{rdvertex/.style={minimum size=2mm,circle,fill=white,draw, inner sep=0pt},
        sqvertex/.style={minimum size=2mm,diamond,fill=black,draw, inner sep=0pt},
        opvertex/.style={minimum size=2mm,diamond,fill=white,draw, inner sep=0pt},
        decoration={markings,mark=at position .5 with {\arrow[black,thick]{stealth}}}}
        \centering
        \begin{tikzpicture}
        %%forcing gadget
        \node (s) at (0,0)[sqvertex,label=below:$s$]{};
        \node (t) at (1,0)[rdvertex,label=below:$t$]{};
        \node (aw) at (2,2)[opvertex,label=above:$a_w$]{};
        \node (aw1) at (1.8,2){\footnotesize $2$};
        \node (aw2) at (2.3,2.2){\footnotesize $3$};
        \node (aw3) at (2.1,1.75){\footnotesize $1$};
        \node (awp) at (4,2)[rdvertex,label=above:$a'_w$]{};
        \node (awp1) at (3.7,2.2){\footnotesize $1$};
        \node (awp2) at (4.2,1.8){\footnotesize $2$};
        \node (bw) at (2,1)[opvertex,label=above:$b_w$]{};
        \node (bw1) at (2.3,1.2){\footnotesize $2$};
        \node (bw2) at (2.2,0.75){\footnotesize $1$};
        \node (bwp) at (4,1)[rdvertex,label=right:$b'_w$]{};
        \node (bwp1) at (3.8,1.25){\footnotesize $3$};
        \node (bwp2) at (3.8,0.8){\footnotesize $1$};
        \node (bwp3) at (4.2,0.6){\footnotesize $2$};
        \node (cw) at (2,0)[opvertex,label=below:$c_w$]{};
        \node (cw1) at (1.8,0.2){\footnotesize $2$};
        \node (cw2) at (2.2,0.2){\footnotesize $3$};
        \node (cw3) at (2.2,-0.25){\footnotesize $1$};
        \node (cwp) at (4,0)[rdvertex,label=below:$c'_w$]{};
        \node (cwp1) at (3.8,0.3){\footnotesize $2$};
        \node (cwp2) at (3.8,-0.2){\footnotesize $1$};
        \node (dw) at (2,-1)[opvertex,label=below:$d_w$]{};
        \node (dw1) at (2.2,-0.8){\footnotesize $2$};
        \node (dw2) at (2.2,-1.2){\footnotesize $1$};
        \node (dwp) at (4,-1)[rdvertex,label=below:$d'_w$]{};
        \node (dwp1) at (3.8,-0.7){\footnotesize $3$};
        \node (dwp2) at (3.8,-1.2){\footnotesize $1$};
        \node (dwp3) at (4.2,-1.2){\footnotesize $2$};
        \node (w) at (5,0)[opvertex,label=below:$w$]{};
        \node (w1) at (4.9,0.3){\footnotesize $2$};
        \node (w2) at (4.7,-0.1){\footnotesize $3$};
        \node (w3) at (5.2,0.15){\footnotesize $1$};
        \node (uw) at (6,0)[rdvertex,label=below:$n(w)$]{};
        \draw (s)--(t);
        \draw (t)--(aw);
        %    \draw (t)--(bw);
        \draw (t)--(cw);
        %    \draw (t)--(dw);
        \draw (aw)--(awp);
        \draw (bw)--(bwp);
        \draw (cw)--(cwp);
        \draw (dw)--(dwp);
        \draw (bwp)--(w);
        \draw (dwp)--(w);
        \draw (w)--(uw);
        \draw (aw)--(bwp);
        \draw (bw)--(cwp);
        \draw (cw)--(dwp);
        %\draw (dw)--(awp);
        \path (dw) edge[bend right=90] node [left] {} (7,0.5);
        \path (7,0.5) edge[bend right] node [left] {} (awp);
        \path (0.70,0.05) edge[bend left,->] node [right] {} (1.05,0.25);
        \path (0.70,-0.05) edge[bend right,->] node [left] {} (1.30,-0.05);
        \end{tikzpicture}
    \caption{Illustration depicting the forcing gadget $F_w$ in the proof of Theorem~\ref{thm:1woman-strict}.      \label{fig:forcing-gadget}}
  \end{figure}
  We define a modified instance $I'$ of SMC as follows. 
  For each $ w \in \mathcal{W}^{\star}_I$, we create a \emph{\mbox{forcing} gadget}~$F_w$ which apart from $w$ contains the newly introduced women $a_w,b_w,c_w,d_w$ and men $a'_w,b'_w,c'_w,d'_w$.
  We also add the distinguished woman~$s$, who must be covered in $I'$, and the unique man $t$ in~$L(s)$.
  See Figure~\ref{fig:forcing-gadget} for an illustration. 

  Let $n(w)$ denote the unique man acceptable for some $w \in \mathcal{W}^{\star}_I$ in $I$. 
  Additionally, we let $Y=\{a_w,c_w \mid w \in \mathcal{W}^{\star}_I \}$, and we write $[Y]$ for an arbitrarily fixed ordering of the elements of~$Y$. 
  The preferences of the newly introduced men and women, as well as the modified preferences of those agents that find them acceptable, is given below.
  Here, again, indices take all possible values, and~$w$ can be any woman in $\mathcal{W}_I^{\star}$.
  We let $I'$ contain all other women and men defined in~$I$, having the same preferences as in $I$. 
  \begin{equation*}
    \begin{array}{llll} 
      L(s)    & = (t),                     &   
      L(t)    & = ([Y],s),                   \\
      L(a_w)  & = (b'_w,t,a'_w),           &
      L(a'_w) & = (a_w, d_w),                 \\ 
      L(b_w)  & = (c'_w,b'_w),             &
      L(b'_w) & = (b_w, w, a_w),              \\ 
      L(c_w)  & = (d'_w,t,c'_w),           &
      L(c'_w) & = (c_w, b_w),                  \\ 
      L(d_w)  & = (a'_w,d'_w),             &
      L(d'_w) & = (d_w, w, c_w),               \\ 
      L(w)    & = (n(w),b'_w,d'_w). \qquad & &
% \multicolumn{2}{l}{\forall w \in \mathcal{W}^{\star}_I.} 
    \end{array}
  \end{equation*}
  We will show that $I'$ has a feasible matching with at most $b$ blocking pairs if and only if~$I$ has such a matching; this clearly proves the theorem. 

  First observe that any feasible matching $M'$ for $I'$ contains the edge $\{s,t\}$. 
  Thus, if some woman~$y$ in $Y$ is not matched by $M'$ to her first choice, then $\{y,t\}$ is blocking in $M'$. 
  Consider now $F_w$ for some $w \in \mathcal{W}^{\star}_I$. 
  It is straightforward to check that if $M'(w) \neq n(w)$, then there are at least two blocking pairs incident to a woman in $F_w$.
  Indeed, assume first that $\{t,a_w\}$ is the only blocking pair in $F_w$; this quickly implies $M'(c_w)=d'_w$ and $M'(a_w)=a'_w$, which in turn leads to $\{d_w,d'_w\}$ blocking~$M'$, a contradiction. 
  Second, assume that $\{t,a_w\}$ does not block $M'$; from this follows $M'(a_w)=b'_w$ and we have that $\{b'_w,w\}$ is a blocking pair for $M'$. 
  Now either $\{t,c_w\}$ is blocking (in which case our claim holds), or we get $M'(c_w)=d'_w$, which implies that $\{d'_w,w\}$ blocks~$M'$, again a contradiction.

  Now, let $\mathcal{W}_i$ be the women in $G_i$ that must be covered in $I$, i.e., $\mathcal{W}_i=\{s_i,t_i,u_i^1, \dots, u_i^{b+1}\}$. 
  Consider the number $\beta_i$ of blocking pairs for $M'$ that involve a woman either in the gadget~$G_i$ or in a gadget $F_w$ for some $w \in \mathcal{W}_i$. 
  On the one hand, if some $w \in \mathcal{W}_i$ is not matched by~$M'$ to $n(w)$, then $\beta_i \geq 2$ because of the blocking pairs in $F_w$.
  On the other hand, if each $w \in \mathcal{W}_i$ is matched by $M'$ to~$n(w)$, then using the arguments of the proof for Theorem~\ref{thm:smc-minblock-mainhardness}, we again know $\beta_i \geq 2$ because of the blocking pairs in $G_i$.
  Also, $\beta_i=2$ can only be achieved if (i) $M'(t_i)=n(t_i)$, as otherwise $\{t_i,n(t_i)\}$ would be blocking for~$M'$, in addition to the two blocking pairs in~$F_{t_i}$,  and (ii) $M'(u_i^h)=n(u_i^h)$ for each $h \in \{1, \dots, b+1\}$, as otherwise we would have $M'(s_i)=n(s_i)$ (so as to avoid having four blocking pairs due to women in~$F_{u_i^h}$ and $F_{s_i}$), implying at least one blocking pair in $G_i$ in addition to those in $F_{u_i^h}$. 

  Analogously, let $\beta_{i,j}$ denote the number of blocking pairs for $M'$ that involve a woman either in the gadget~$G_{\{i,j\}}$ or in a gadget $F_w$ for some $w \in \{s_{i,j},t_{i,j} \}$.
  Then either $\beta_{i,j} \geq 2$, or we know that $M'(w)=n(w)$ for both women $w \in \{s_{i,j},t_{i,j} \}$; in this case, from the proof of Theorem~\ref{thm:smc-minblock-mainhardness} we get $\beta_{i,j} \geq 1$. 
  However, supposing that $M'$ has at most $b=2k+\binom{k}{2}$ blocking pairs, it follows that $\beta_i=2$ and $\beta_{i,j}=1$ must hold for each $i \in \{1, \dots, k\}$ and each $i,j$ with $1 \leq i<j \leq k$, respectively. 

  Along the same lines as in the proof of Theorem~\ref{thm:smc-minblock-mainhardness}, it can also be verified that $\beta_{i,j}=1$ for each pair of indices $i,j$ can only be achieved if $M' \triangle M_s$ contains a path in each gadget~$G_i$.
  From $M'(w)=n(w)$ for each $w \in \mathcal{W}_i \setminus \{s_i\}$ we get that such a path contains at least one blocking pair. 
  This implies $M'(s_i)=n(s_i)$, as otherwise we would end up with $\beta_i \geq 3$ because of the blocking pairs incident to women of $F_{s_i}$. 

  Altogether, we have proved that $M'(w)=n(w)$ for each $w \in \mathcal{W}^{\star}_I$.   
  Hence, the restriction of~$M'$ to~$I$ yields a feasible matching for $I$ that admits at most~$b$ blocking pairs.

  For the other direction, suppose that $I$ has a feasible matching $M$. 
  Then it is easy to see that adding the edges $\{a_w,b'_w\}$, $\{b_w,c'_w\}$, $\{c_w,d'_w\}$, and $\{d_w,a'_w\}$ for each $w \in \mathcal{W}^{\star}_I$ together with the edge $\{s,t\}$ to~$M$ yields a feasible matching for $I'$ that contains exactly the same number of blocking pairs in $I'$ as~$M$ does in~$I$.
\end{proof}

\begin{theorem}
\label{thm:2women}
  SMC-1 is $\mathsf{W}[1]$-hard parameterized by $b+\Delta_{\mathcal{M}}$, even~if there is a master list over men as well as one over women, $|\mathcal{W}^{\star}|=2$, $\Delta_{\mathcal{W}} \leq 3$, and $|L(w)|=1$ for each $w\in\mathcal{W}^\star$.
\end{theorem}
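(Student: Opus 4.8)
The plan is to adapt the reduction from Theorem~\ref{thm:1woman-strict} so that all preference lists respect master lists over men and over women, at the cost of allowing a second distinguished woman. The starting point is the instance $I$ from the proof of Theorem~\ref{thm:smc-minblock-mainhardness}, which already comes equipped with master lists $L_{\mathcal W}$ and $L_{\mathcal M}$; I would keep these and only append the new vertices of the forcing gadgets at appropriate places. The key difficulty in Theorem~\ref{thm:1woman-strict} is the woman $t$ with $L(t)=([Y],s)$: she prefers every $a_w$ and $c_w$ (i.e.\ every woman in $Y$) to the distinguished woman $s$, and if $t$ is not matched to her first choice then $\{t,\text{anyone in }Y\}$ blocks. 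To make this compatible with a master list over women we would need $s$ to come after all of $Y$ in $L_{\mathcal W}$, which is fine, but the real obstruction is that the men's master list must then be consistent with \emph{all} the preference lists of the forcing gadgets simultaneously. So the first step is to lay out where in $L_{\mathcal W}$ (respectively $L_{\mathcal M}$) each of $a_w,b_w,c_w,d_w$ (resp.\ $a'_w,b'_w,c'_w,d'_w$) must sit, and verify this can be done globally.

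The second distinguished woman is what buys back the lost freedom. In Theorem~\ref{thm:1woman-strict} the single woman $s$ simultaneously plays the role of ``forcing anchor'' for \emph{all} gadgets $F_w$ through the man $t$. With master-list constraints this cannot be done with one woman, because the single man $t$ would have to appear on too many lists in a way consistent with one global ordering; splitting the anchoring role between two distinguished women $s_1,s_2$ (say, one handling the $a_w$-type forcing and one the $c_w$-type, or one handling odd-indexed and one even-indexed gadgets) relaxes the constraint enough. So the plan is: introduce $\mathcal W^\star = \{s_1,s_2\}$ with $|L(s_i)|=1$, each $s_i$ pointing to its own man $t_i$, and redesign the forcing gadget $F_w$ so that each $F_w$ hooks onto exactly one of $t_1,t_2$ via a ``$t_i$ must be matched to $s_i$ hence not available'' argument, exactly as the single $t$ was used before, but now with the two master lists respected. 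One must then redo the local case analysis showing that if $M'(w)\neq n(w)$ then $F_w$ contributes at least two blocking pairs, which is essentially the same argument as in Theorem~\ref{thm:1woman-strict} with $t$ replaced by the appropriate $t_i$.

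The remainder of the proof is then a near-verbatim repetition of the counting argument from Theorem~\ref{thm:1woman-strict}: set $b=2k+\binom{k}{2}$, define $\beta_i$ as the number of blocking pairs touching women in $G_i$ or in any $F_w$ with $w\in\mathcal W_i$, and $\beta_{i,j}$ analogously for the edge gadgets; show $\beta_i\geq 2$ and $\beta_{i,j}\geq 1$ always, so that equality must hold throughout when $M'$ has at most $b$ blocking pairs; conclude that $M'(w)=n(w)$ for every $w\in\mathcal W_I^\star$, hence the restriction of $M'$ to $I$ is feasible with at most $b$ blocking pairs, so by Theorem~\ref{thm:smc-minblock-mainhardness} the graph $G$ has a $k$-clique. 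The converse direction is again trivial: take a feasible matching $M$ of $I$ with at most $b$ blocking pairs, add $\{s_1,t_1\}$, $\{s_2,t_2\}$, and the four ``inner'' edges of each $F_w$, and check no new blocking pairs appear. Since $\Delta_{\mathcal M}$ in the forcing gadgets is a constant (the men's lists in $F_w$ have length at most $3$, and $t_i$'s list has length $O(k)$ — \emph{here I would need to check that $t_i$'s list stays short, or else route its neighbours through length-bounded chains}), the parameter $b+\Delta_{\mathcal M}$ is preserved. The main obstacle, as flagged, is engineering the two master lists so that every gadget's preference list is simultaneously an induced suborder of both; I expect this to require inserting the $F_w$-vertices into the blocks of $L_{\mathcal W}$ corresponding to $S$ and $U$ (which sit at the end of $L_{\mathcal W}$ and carry the most slack) and a careful but routine verification.
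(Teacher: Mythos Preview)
Your proposal has a genuine gap: the forcing gadget $F_w$ from Theorem~\ref{thm:1woman-strict} is \emph{intrinsically} incompatible with a master list over men, and splitting the anchor between two women $s_1,s_2$ does not fix this. From the women's lists in $F_w$ one reads off the following constraints on the men's master order (writing $x\prec y$ for ``$x$ comes earlier than $y$''):
\[
b'_w \prec t \prec a'_w,\qquad a'_w \prec d'_w,\qquad d'_w \prec t \prec c'_w,\qquad c'_w \prec b'_w,
\]
which already force the cycle $b'_w \prec t \prec c'_w \prec b'_w$. Replacing $t$ by two anchors $t_1,t_2$ (say $a_w$ uses $t_1$ and $c_w$ uses $t_2$) still yields $b'_w \prec t_1 \prec a'_w \prec d'_w \prec t_2 \prec c'_w \prec b'_w$, again a cycle. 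The cyclicity lives \emph{inside each} $F_w$, not in how the gadgets attach to the anchor; so neither ``one $t_i$ per parity'' nor ``one $t_i$ per letter-type'' can help, and the claim that the master-list engineering is ``routine'' is unfounded. You would need a genuinely different gadget, and you have not proposed one.

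The paper's route is completely different and far simpler: it discards the $F_w$ gadgets altogether. It introduces only two new women $z_1,z_2$ (the distinguished ones) and two new men $m_1,m_2$, appended at the end of the respective master lists, with $L(z_i)=(m_i)$ and $L(m_i)=([\mathcal W^\star_I]_\prec,\,z_i)$; every formerly distinguished woman $w$ gets $L(w)=(n(w),m_1,m_2)$. Feasibility forces $M'(z_i)=m_i$, so each $m_i$ is stuck with his last choice; hence any $w\in\mathcal W^\star_I$ not matched to $n(w)$ is unmatched and forms blocking pairs with \emph{both} $m_1$ and $m_2$. This gives exactly the ``two blocking pairs per uncovered $w$'' that your $F_w$ was meant to supply, while being trivially master-list compatible (everything new sits at the tail). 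The counting argument then proceeds as in Theorem~\ref{thm:1woman-strict}. Note that $\Delta_{\mathcal M}$ becomes $|\mathcal W^\star_I|+1 = O(k^3)$, which is bounded in the clique parameter $k$ and hence fine for an FPT reduction.
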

\begin{proof}
%[Proof of Theorem \ref{thm:2women}]
  The proof is very similar to the one for Theorem~\ref{thm:1woman-strict}, so we will only sketch it. 
  Again, we are going to re-use the instance $I$ constructed in the proof of Theorem~\ref{thm:smc-minblock-mainhardness}, and construct a modified instance $I'$ of SMC, adding only two new women $z_1$ and $z_2$ and two men $m_1$ and $m_2$ to~$I$.
  We append $z_1$ and $z_2$, in this order, to the master list over women, and similarly, we append $m_1$ and~$m_2$ to the master list over men. 
  We define the women to be covered in $I'$ as $z_1$ and~$z_2$. 

  Again, we denote the set of women to be covered in $I$ by $\mathcal{W}^{\star}_I$, and we denote by $n(w)$ the unique man acceptable for some $w \in \mathcal{W}^{\star}_I$ in $I$.
  The preferences of the newly introduced men and women, as well as the modified preferences of those agents that find them acceptable, is given below (here, $[\mathcal{W}^{\star}_I]_{\prec}$ denotes the ordering of $\mathcal{W}^{\star}_I$ given by the master list). 
  We let $I'$ contain all other women and men defined in $I$, having the same preferences as in $I$.
  \begin{equation*}
    \begin{array}{llll} 
      L(z_1) & = (m_1), & 
      L(m_1) & = ([\mathcal{W}^{\star}_I]_{\prec}, z_1), \\ 
      L(z_2) & = (m_2), & 
      L(m_2) & = ([\mathcal{W}^{\star}_I]_{\prec}, z_2), \\ 
      L(w)   & = (n(w),m_1,m_2) \qquad &
\multicolumn{2}{l}{\forall w \in \mathcal{W}^{\star}_I.} 
    \end{array}
  \end{equation*}
  Arguing analogously as before in the proof of Theorem~\ref{thm:1woman-strict}, one can show that~$I'$ has a feasible matching with at most $b$ blocking pairs if and only if $I$ has such a matching; this suffices to prove the theorem.
\end{proof}

To contrast our intractability results, we show next that if each of the four parameters $|\mathcal{W}^{\star}|$, $|\mathcal{M}^{\star}|$, $\Delta_{\mathcal W}$, and~$\Delta_{\mathcal M}$ is constant, then SMC becomes polynomial-time solvable.
Our algorithm relies on the observation that in this case, the number of blocking pairs in an optimal solution is at most\linebreak $(\Delta_{\mathcal M}-1)|\mathcal{W}^{\star}|+(\Delta_{\mathcal W}-1)|\mathcal{M}^{\star}|$ by Theorem~\ref{thm-smc2-poly-approx}. 
Note that for instances of SMC-1, Theorem~\ref{thm:smc-allconstant} yields a polynomial-time algorithm already if both $|\mathcal{W}^\star|$ and $\Delta_{\mathcal M}$ are constant. 

\begin{theorem}
\label{thm:smc-allconstant}
  SMC can be solved in time $O(|I|^{(\Delta_{\mathcal M}-1)|\mathcal{W}^{\star}|+(\Delta_{\mathcal W}-1)|\mathcal{M}^{\star}|+1})$.
\end{theorem}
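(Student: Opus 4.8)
The plan is to combine the approximation result of Theorem~\ref{thm-smc2-poly-approx} with a brute-force search over the (necessarily small) set of blocking pairs. First I would invoke Theorem~\ref{thm-smc2-poly-approx}: in polynomial time it either certifies that no feasible matching exists (in which case we reject), or produces a feasible matching with at most $\beta := (\Delta_{\mathcal M}-1)|\mathcal{W}^{\star}|+(\Delta_{\mathcal W}-1)|\mathcal{M}^{\star}|$ blocking pairs. This shows that if a feasible matching exists at all, then an optimal one has at most $\beta$ blocking pairs. Hence it suffices to decide, for each $b' \in \{0,1,\dots,\beta\}$, whether there is a feasible matching with at most $b'$ blocking pairs, and output the smallest such $b'$ together with a witnessing matching.

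The key step is solving the bounded-$b'$ decision problem. Here I would reuse the guess-and-delete strategy underlying Observation~\ref{observ:bounded-b}: guess the set $B$ of at most $b'$ pairs that will be the blocking pairs of the sought matching, and then check whether there is a feasible matching $M$ whose blocking-pair set is exactly contained in $B$. The latter check reduces to finding a stable matching in a modified instance where the edges in $B$ (and, more precisely, all pairs that would \emph{force} a pair outside $B$ to block) are handled by the standard trick of deleting appropriate ends of preference lists, then running the extended Gale--Shapley algorithm and verifying feasibility. Since a matching $M$ has blocking-pair set contained in $B$ if and only if $M$ is stable in the instance obtained by deleting, for each pair $\{m,w\}\notin B$, the tail of $L(m)$ after $w$ or the tail of $L(w)$ after $m$ as appropriate, this amounts to a polynomial-time feasibility-plus-stability test per guess; the number of guesses is $O(|I|^{b'}) \le O(|I|^{\beta})$.

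Putting the pieces together: the total running time is dominated by iterating over all $b' \le \beta$ and all $\binom{|E(G_I)|}{b'}$ choices of $B$, each handled in polynomial time, which is $O(|I|^{\beta+1})$ as claimed. I would also note that for SMC-1 (i.e.\ $\mathcal{M}^\star=\emptyset$) this specializes to $O(|I|^{(\Delta_{\mathcal M}-1)|\mathcal{W}^\star|+1})$, recovering the stronger statement mentioned before Theorem~\ref{thm:smc-allconstant} that only $|\mathcal{W}^\star|$ and $\Delta_{\mathcal M}$ need be constant.

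The main obstacle I anticipate is the correctness of the guess-and-delete reduction: one must argue carefully that ``$M$ feasible with blocking set $\subseteq B$'' is equivalent to ``$M$ feasible and stable in the pruned instance'', handling the subtlety that deleting the wrong end of a list can spuriously create or destroy stability, and that an unmatched person may still block. The cleanest route is to observe that a pair $\{m,w\}\notin B$ is non-blocking for $M$ iff $m$ weakly prefers $M(m)$ to $w$ or $w$ weakly prefers $M(w)$ to $m$, so we branch (within the $O(|I|^{b'})$ budget, or by a secondary $2^{b'}$-sized case distinction that is absorbed into the polynomial factor) on which of these two conditions holds for each pair we want to keep non-blocking, delete preference-list tails accordingly, and then a feasible stable matching of the pruned instance is exactly a feasible matching of the original whose blocking pairs lie in $B$; feasibility is checked directly on the output. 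Everything else is bookkeeping.
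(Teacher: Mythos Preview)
Your high-level strategy is exactly the paper's: invoke Theorem~\ref{thm-smc2-poly-approx} to cap the optimum at $\beta=(\Delta_{\mathcal M}-1)|\mathcal{W}^{\star}|+(\Delta_{\mathcal W}-1)|\mathcal{M}^{\star}|$, and then fall back on the brute-force of Observation~\ref{observ:bounded-b} for $b\le\beta$. That part is correct and matches the paper almost verbatim.

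Where your write-up goes wrong is in the ``main obstacle'' paragraph, i.e.\ your elaboration of the guess-and-delete step. You propose, after guessing $B$, to enforce non-blocking of each pair $\{m,w\}\notin B$ by branching on which side's preference makes it non-blocking, and you claim this is a $2^{b'}$-sized case distinction. It is not: there are $|E(G_I)|-b'$ pairs outside $B$, so that branching is $2^{|E|-b'}$ and destroys the running-time bound. The correct (and simpler) realization of Observation~\ref{observ:bounded-b} is: delete the edges of $B$ from the instance, run Gale--Shapley on the pruned instance, and check feasibility of the resulting stable matching. A matching has blocking set contained in $B$ in $I$ iff it is stable in $I\setminus B$ (blocking pairs are never matching edges, so the optimal $M$ survives the deletion), and by the Rural Hospitals Theorem all stable matchings of $I\setminus B$ cover the same people, so a single Gale--Shapley run plus a feasibility check suffices per guess. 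No tail-truncation or secondary branching is needed.
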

\begin{proof}
%[Proof of Theorem~\ref{thm:smc-allconstant}]
  By Theorem~\ref{thm-smc2-poly-approx}, there is a matching with at most $b_{\textup{max}}=(\Delta_{\mathcal M}-1)|\mathcal{W}^{\star}|+(\Delta_{\mathcal W}-1)|\mathcal{M}^{\star}|$ blocking pairs.
  Hence, if the number $b$ of blocking pairs allowed is at least~$b_{\max}$, then we can simply run the algorithm of Theorem~\ref{thm-smc2-poly-approx}.
  Otherwise, we can use Observation~\ref{observ:bounded-b}, which gives us the required run time.
\end{proof}

Importantly, restricting only three of the values $|\mathcal{W}^{\star}|$, $|\mathcal{M}^{\star}|$, $\Delta_{\mathcal W}$, and $\Delta_{\mathcal M}$ to be constant does not yield tractability for SMC, showing that Theorem~\ref{thm:smc-allconstant} is tight in this sense.
Indeed, Theorem~\ref{thm:1woman-strict} implies immediately that restricting the maximum length of the preference lists on only one side still results in a hard problem:
SMC remains $\mathsf{W}[1]$-hard with parameter $b+\Delta_{\mathcal{M}}$, even if $\Delta_{\mathcal{W} } = 3$, $|\mathcal{W}^\star|=1$, and $|\mathcal{M}^\star|=0$. 
On the other hand, Theorem~\ref{thm:smc-minblock-mainhardness} shows that the problem remains hard even if
$\Delta_{\mathcal{W} } = \Delta_{\mathcal{M} } =3$ and $|\mathcal{M}^\star|=0$. 

\section{SMC with Preference Lists of Length at most Two}
\label{sec:prefatmost2}
In this section we investigate the computational complexity of SMC where the maximum length of 
preference lists is bounded by $2$ on one side. 
This restriction leads to important tractable special cases: we obtain both polynomial-time algorithms
and fixed-parameter tractability results for various parameterizations. 

Let $I$ be an instance of SMC with underlying graph $G$.
Let $M_s$ be a stable matching in $I$, and let~$\mathcal{M}^{\star}_0$ and $\mathcal{W}^{\star}_0$ denote the set of distinguished men and women, respectively, unmatched by~$M_s$.
Furthermore, let $\mathcal{M}_0$ and $\mathcal{W}_0$ denote the set of all men and women, respectively, 
unmatched by $M_s$.
A path $P$ in $G$ is called an \emph{augmenting path}, if $M_s \triangle P$ is a matching, and either both endpoints of~$P$ are in $\mathcal{M}^\star_0 \cup \mathcal{W}^\star_0$, or one endpoint of $P$ is in $\mathcal{M}^\star_0 \cup \mathcal{W}^\star_0$, and its other endpoint is not distinguished. 
This definition ensures that for an augmenting path $P$, the set of distinguished men and women that are matched in $M_s \triangle P$ strictly contains the set of distinguished men and women matched in $M_s$.\footnote{
We remark that our concept of an augmenting path is analogous, but not identical, to the standard notion of an augmenting path in general matching theory. According to the standard definition, an augmenting path for a given matching $M$ is an $M$-alternating path~$P'$
%(that is, a path that alternates between edges of~$M$ and edges not in $M$) 
such that $M \triangle P'$ is a matching containing more edges than $M$. 
In our case, however, instead of increasing the number of edges in the matching, we aim for a path which can be used to increase the number of distinguished men and women that are matched. 
} 
We will call an augmenting path $P$ \emph{masculine} or \emph{feminine} if it contains a man in $\mathcal{M}^{\star}_0$ or a woman in~$\mathcal{W}^{\star}_0$, respectively;
if $P$ is both masculine and feminine, then we call it \emph{neutral}.
If $P$ is not neutral, then we say that it \emph{starts} at the (unique) person from $\mathcal{M}^{\star}_0 \cup \mathcal{W}^{\star}_0$ it contains, and \emph{ends} at its other endpoint. 

\subsection{Covering constraints on one side}
\label{sect:polycase}
Here we deal with the SMC-1 problem where only women need to be covered. 
We first give a polynomial-time algorithm for SMC-1 when each man finds at most two women acceptable, 
and then show $\mathsf{NP}$-hardness of SMC-1 for instances where each woman finds at most two men acceptable. 
We start by considering the special case of SMC-1 where $\Delta_{\mathcal M} \leq 2$.
%we do not impose any bound on the maximum length of the preference lists of women.
\begin{theorem}
\label{thm:restricted-matching-poly}
  There is a polynomial-time algorithm for the special case of \mbox{SMC-1} where each man finds at most two women acceptable.
\end{theorem}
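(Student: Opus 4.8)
The plan is to reduce SMC-1 with $\Delta_{\mathcal M} \le 2$ to a purely combinatorial optimization problem on the symmetric difference of a feasible matching with a fixed stable matching. First I would compute a stable matching $M_s$ in $I$ using Gale--Shapley; by the Rural Hospitals Theorem the set $\mathcal W^\star_0$ of distinguished women left unmatched by $M_s$ is uniquely determined. Any feasible matching $M$ must match every woman in $\mathcal W^\star_0$, and the key structural fact is that $M \triangle M_s$ decomposes into alternating paths and cycles; exactly the components touching $\mathcal W^\star_0$ matter. Since cycles and paths not meeting $\mathcal W^\star_0$ only add (weakly) to the blocking-pair count, an optimal $M$ can be assumed to differ from $M_s$ only along a set of vertex-disjoint augmenting paths, each starting at a distinct woman of $\mathcal W^\star_0$. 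So the problem becomes: select a system of vertex-disjoint augmenting paths covering $\mathcal W^\star_0$ that minimizes the total number of blocking pairs created.

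The second step is to exploit $\Delta_{\mathcal M} \le 2$ to control these augmenting paths. Because each man has at most two acceptable women, in the graph $G$ every man-vertex has degree at most $2$, so $G$ restricted to men of degree exactly $2$ is a disjoint union of paths and even cycles whose internal vertices alternate man/woman. An augmenting path in $M_s \triangle M$ that starts at an unmatched distinguished woman $w$ is forced: from $w$ it must go to some man $m$ (a new matching edge), then $m$ is matched by $M_s$ to its other neighbor $w'$ (an $M_s$-edge to be removed), then $w'$ must be re-matched to her other partner, and so on. The crucial consequence of the man-degree bound is that once the path leaves $w$ toward a particular neighbor $m$, the rest of the path is essentially determined up to the choices made at the women (who may have long lists), but more importantly the set of candidate augmenting paths through any fixed man-structure is highly restricted. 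I would formalize this so that the collection of ``useful'' augmenting paths has polynomial size, or at least admits a polynomial-size representation.

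The third step is to turn the path-selection problem into a minimum-cost flow or minimum-weight matching instance. Each augmenting path $P$ starting at $w \in \mathcal W^\star_0$ has a well-defined cost $c(P)$ equal to the number of blocking pairs it introduces relative to $M_s$ (this can be computed locally by checking, for each woman on $P$ and each man adjacent to her, whether the new assignment blocks); since $M_s$ is stable, every such path has cost at least $1$. I would build a graph where we must choose vertex-disjoint augmenting paths covering all of $\mathcal W^\star_0$ while minimizing $\sum c(P)$, and show this is solvable in polynomial time — either directly as a min-cost flow (routing one unit from each $w \in \mathcal W^\star_0$ to a sink, with capacities enforcing vertex-disjointness and arc costs encoding blocking pairs along the man-degree-$\le 2$ ``corridors''), or by reducing to minimum-weight perfect matching once the structure of corridors is made explicit.

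**Main obstacle.** The hard part will be handling the interaction between augmenting paths: even though each man has only two neighbors, two augmenting paths starting at different distinguished women could in principle compete for the same man or woman, and a path could wind through long chains of women with unbounded lists, so a naive enumeration of paths is exponential. I expect the real work is to prove that the degree-$\le 2$ condition on men confines the ``reroutable'' part of the instance to a union of simple corridors (paths/cycles in $G$ after contracting the women of large degree appropriately), so that the blocking-pair cost is additive over corridor segments and the whole optimization collapses to a tractable flow/matching problem. Getting the cost accounting exactly right at the junctions where a distinguished woman sits, and verifying that restricting attention to vertex-disjoint augmenting paths loses no optimality, are the delicate points.
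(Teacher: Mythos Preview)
Your high-level plan matches the paper's: fix a stable matching $M_s$, argue that an optimal feasible matching differs from $M_s$ only along augmenting paths rooted at the uncovered distinguished women $\mathcal W^\star_0$, and reduce the selection of these paths to a minimum-weight bipartite matching. Two points deserve sharpening, and the second hides a genuine gap.

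First, the structural consequence of $\Delta_{\mathcal M}\le 2$ is cleaner than ``corridors'': any two augmenting paths starting at \emph{distinct} women of $\mathcal W^\star_0$ are vertex-disjoint, except that they may share a single common endpoint lying in $\mathcal M_0$ (the men unmatched by $M_s$). This is the paper's Proposition~\ref{prop-augmenting-paths}. Once you have it, the auxiliary graph is immediate: one side is $\mathcal W^\star_0$, the other side is $\mathcal M_0$ together with one private dummy $w'$ per $w\in\mathcal W^\star_0$; the edge $\{w,m\}$ carries the minimum cost of an augmenting path from $w$ ending at $m\in\mathcal M_0$, and $\{w,w'\}$ carries the minimum cost of an augmenting path from $w$ \emph{not} ending in $\mathcal M_0$. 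No flow machinery is needed.

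Second, and this is where your plan would stall: the blocking-pair cost is \emph{not} additive over disjoint paths. You write that you expect the degree bound to make the cost ``additive over corridor segments'', but the paper's Lemma~\ref{lemma-costs} shows it is only subadditive. The failure occurs exactly at \emph{special} edges $\{m,w\}$ with $m\in\mathcal M_0$ and $w$ the second choice of $m$: such an edge may block $M_s\triangle P$ when $P$ passes through $w$, yet cease to block once some other path terminates at $m$. The paper's fix is to minimize the \emph{special cost} (counting only non-special blocking pairs) in the bipartite matching, and then run a cleanup phase that greedily truncates paths to absorb any remaining blocking special edges; one checks that this cleanup introduces no new non-special blocking pairs. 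Without this device (or an equivalent), minimizing the sum of per-path costs does not yield the optimum.
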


% high-level description 
{\bf High-level description.}
The main observation behind Theorem~\ref{thm:restricted-matching-poly} is that if $\Delta_{\mathcal{M}} \leq 2$, then any two augmenting paths starting from different women in $\mathcal{W}^{\star}_0$ are almost disjoint, namely they can only intersect at their endpoints. 
Thus, we can modify the stable matching $M_s$ by selecting augmenting paths starting from each woman in $\mathcal{W}^{\star}_0$ in an almost independent fashion: intuitively, we simply need to take care not to choose paths  sharing an endpoint---a task which can be managed by finding a bipartite matching in an appropriately defined auxiliary graph.
To ensure that the number of blocking pairs in the output is minimized, we will assign costs to the augmenting paths.
Roughly speaking, the cost of an augmenting path~$P$ determines the number of blocking pairs introduced when modifying $M_s$ along~$P$ (though certain special edges need not be counted); hence, our problem reduces to finding a bipartite matching with minimum weight in the auxiliary graph.

\smallskip
To present the algorithm of Theorem~\ref{thm:restricted-matching-poly} in detail, we start with the following properties of augmenting paths which are easy to prove using that $\Delta_{\mathcal{M}} \leq 2$:
%We now present in detail the algorithm that proves Theorem~\ref{thm:restricted-matching-poly}.
%It is easy to prove the following properties of augmenting paths using that $\Delta_{\mathcal{M}} \leq 2$:
\begin{proposition}
\label{prop-augmenting-paths}
  Suppose $\Delta_{\mathcal{M}} \leq 2$. Let  $P_1$ and $P_2$ be augmenting paths starting at women $w_1$ and~$w_2$, respectively.
  \begin{enumerate} 
    \item[(a)] If $w_1 \neq w_2$, then $P_1$ and $P_2$ are either vertex-disjoint, or they both end at some $m \in \mathcal M_0$, with $V(P_1) \cap V(P_2)=\{m\}$.
    \item[(b)] If there is an edge $\{m,w\}$ of $G$ (with $m \in \mathcal M$ and $w \in \mathcal W$) connecting $P_1$ and $P_2$, then $m \in \mathcal M_0$ and $P_1$ or $P_2$ must end at $m$. 
    \item[(c)] If $w_1 = w_2$ and $P$ is the maximal common subpath of $P_1$ and $P_2$ starting at $w_1$, then either $V(P_1) \cap V(P_2) = V(P)$, 
        % (so the shared vertices of $P_1$ and $P_2$ induce a path), 
      or $P_1$ and $P_2$ both end at some $m \in \mathcal M_0$ and $V(P_1) \cap V(P_2) = V(P) \cup \{m\}$. 
  \end{enumerate}
\end{proposition}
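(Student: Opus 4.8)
The plan is to first distill the rigidity that $\Delta_{\mathcal M}\le 2$ imposes on a single augmenting path, and then read off all three statements with short ``first common vertex'' arguments. Fix an augmenting path $P$ starting at a woman $w\in\mathcal W^{\star}_0$. Because $w$ is unmatched by $M_s$, the edges of $P$ alternate between lying outside and inside $M_s$, starting outside at $w$; in particular, at every internal man-vertex $m$ of $P$ the incoming edge lies outside $M_s$ and the outgoing edge lies inside $M_s$, so $m$ is matched by $M_s$, its successor on $P$ is $M_s(m)$, and --- since $\deg_G(m)\le 2$ --- \emph{both} edges of $G$ at $m$ lie on $P$, hence $m$ has no neighbour off $P$. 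The alternation further yields: (i) every vertex $x$ of $P$ matched by $M_s$ has $M_s(x)$ on $P$ (it is the matched one among the two $P$-edges of $x$ if $x$ is internal, and the sole $P$-edge of $x$ if $x$ is a matched endpoint), so in particular the predecessor on $P$ of a matched woman is her $M_s$-partner; and (ii) a man-endpoint of $P$ is unmatched by $M_s$ (its sole $P$-edge lies outside $M_s$, so a matched man-endpoint would be left with two incident edges in $M_s\Delta P$), hence it lies in $\mathcal M_0$, and even in $\mathcal M_0\setminus\mathcal M^{\star}_0$ since a distinguished such endpoint would make $P$ neutral. Finally, if $P'$ is another augmenting path starting at a woman $w'\ne w$, then $w\notin V(P')$: the women on $P'$ are $w'$, internal ones (all matched), and possibly a matched far endpoint, whereas $w$ is unmatched and distinguished.

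Now turn to (a), with $w_1\ne w_2$; write $z_i$ for the far endpoint of $P_i$. Assume $V(P_1)\cap V(P_2)\ne\emptyset$ and let $v$ be the first vertex of $P_1$, traversed from $w_1$, that lies on $P_2$. By the last observation $v\notin\{w_1,w_2\}$, so $v$ has a predecessor $v'$ on $P_1$, and $v'\notin V(P_2)$ by the minimality of $v$. If $v$ is a woman, then it is matched, so by~(i) its $M_s$-partner lies on both paths; but on $P_1$ that partner is precisely $v'$, a contradiction. If $v$ is a man internal to $P_2$, then $\{v',v\}$ is one of its two $G$-edges and hence lies on $P_2$, forcing $v'\in V(P_2)$, again a contradiction. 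Therefore $v$ is a man and an endpoint of $P_2$, i.e.\ $v=z_2$; by~(ii) it is unmatched, so it is not internal to $P_1$ either, giving $v=z_1$. Thus $v$ is simultaneously the first common vertex and the last vertex of $P_1$, so $V(P_1)\cap V(P_2)=\{v\}$ with $v\in\mathcal M_0$, as claimed. Statement~(b) follows at once from the rigidity fact: if an edge $\{m,w\}$ of $G$ connects $P_1$ and $P_2$ --- say $m\in V(P_1)$ and $w\in V(P_2)\setminus V(P_1)$ --- then $m$ cannot be an internal man-vertex of $P_1$, which has no neighbour off $P_1$, so $m$ is the far endpoint of $P_1$ and therefore $m\in\mathcal M_0$ by~(ii).

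For (c), where $w_1=w_2$: man-vertices are rigid (their $P$-successor is their $M_s$-partner, hence the same on both paths), so $P_1$ and $P_2$ follow a common prefix $P$ and can part only at a \emph{woman}-vertex $v^{\star}$, the last vertex of $P$ --- unless one of them already terminates at $v^{\star}$, in which case the reasoning of~(a) shows $v^{\star}$ is an unmatched man, both paths terminate there, and $P_1=P_2$. The tails of $P_1$ and $P_2$ beyond $v^{\star}$ are now exactly in the setting of~(a): applying the first-common-vertex argument to the portions of $P_1$ and $P_2$ strictly past $v^{\star}$ shows these tails meet in at most one vertex, namely a common far endpoint in $\mathcal M_0$; hence $V(P_1)\cap V(P_2)$ is either $V(P)$ or $V(P)\cup\{m\}$ with $m\in\mathcal M_0$. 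I expect the only part needing genuine care --- rather than a one-line degree count --- to be the endpoint bookkeeping: keeping track that a far endpoint is always either an unmatched man or a matched, undistinguished woman entered through her matched edge, and that the hypothesis that $P_1$ and $P_2$ \emph{start} at $w_1,w_2$ excludes neutral paths, for which ``start'' is undefined.
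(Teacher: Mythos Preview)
The paper does not give a proof; it simply asserts that these properties ``are easy to prove using that $\Delta_{\mathcal M}\le 2$.'' Your argument supplies the missing details correctly for (a) and (b): the key rigidity observation --- that an internal man on an augmenting path has both of its (at most two) $G$-edges on the path, hence its successor is forced to be its $M_s$-partner, and that a man-endpoint must lie in $\mathcal M_0$ --- is exactly what is needed, and your first-common-vertex argument is clean.

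For (c) there is a small slip. You write that if one path terminates at $v^\star$ then ``$v^\star$ is an unmatched man, both paths terminate there, and $P_1=P_2$.'' This overlooks the possibility that $v^\star$ is a \emph{woman}: if, say, $P_1$ ends at $v^\star$ while $P_2$ continues, then $P_1$ equals the common prefix $P$ and $V(P_1)\cap V(P_2)=V(P)$, which is already the first alternative in (c). Only when $v^\star$ is a man does your conclusion $P_1=P_2$ follow (since a man-endpoint is unmatched and hence cannot be internal to the other path). Separately, the claim that the tails beyond $v^\star$ are ``exactly in the setting of (a)'' is a little loose: those tails start at a common matched woman rather than at two distinct unmatched distinguished women, so the step excluding the first man $m'$ past $v^\star$ on $P_1$ from lying on the tail of $P_2$ needs one extra line (if $m'$ were internal to $P_2$ then the edge $\{v^\star,m'\}$, being one of $m'$'s two $G$-edges, would be a $P_2$-edge, forcing $m'\in\{M_s(v^\star),m''\}$, a contradiction; if $m'$ were the far endpoint of $P_2$ then, being unmatched, it is also the far endpoint of $P_1$, which is the allowed outcome). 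With these two easy patches your proof is complete.
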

With a set $P$ of edges (typically a set of augmenting paths) where $M_s \triangle P$ is a matching, we associate a \emph{cost}, which is the number of blocking pairs that $M_s \triangle P$ admits.
A pair $\{m,w\}$ for some $m \in \mathcal M$ and $w \in \mathcal W$ is \emph{special}, if $m \in \mathcal M_0$ and $w$ is the second (less preferred) woman in~$L(m)$. 
As it turns out, such edges can be ignored during certain steps of the algorithm; thus, we define the \emph{special cost} of $P$ as the number of non-special blocking pairs in $M_s \triangle P$.

\begin{lemma}
\label{lemma-costs}
  For vertex-disjoint augmenting paths $P_1$ and $P_2$ with cost $c_1$ and~$c_2$, respectively, the cost of $P_1 \cup P_2$ is at most $c_1+c_2$.
  Further, if the cost of $P_1 \cup P_2$ is less than $c_1+c_2$, then the following holds for $\{i_1,i_2\}=\{1,2\}$: 
  there is a special edge $\{m,w\}$ with $P_{i_1}$ ending at $m$ and $w$ appearing on~$P_{i_2}$; moreover, $\{m,w\}$ is blocking in $M_s \triangle P_{i_2}$, but not in $M_s \triangle (P_1 \cup P_2)$.
\end{lemma}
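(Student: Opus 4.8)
\textbf{Proof plan for Lemma~\ref{lemma-costs}.}
The plan is to compare blocking pairs of $M_s \triangle (P_1 \cup P_2)$ against those of $M_s \triangle P_1$ and $M_s \triangle P_2$, exploiting that $P_1$ and $P_2$ are vertex-disjoint so that $M_s \triangle (P_1 \cup P_2)$ agrees with $M_s \triangle P_i$ on the edges incident to any vertex of $P_i$. First I would observe that if a pair $\{m,w\}$ blocks $M_s \triangle (P_1 \cup P_2)$, then (by stability of $M_s$) at least one of $m,w$ changed its partner, hence lies on $P_1$ or on $P_2$; without loss of generality say it lies on $P_1$. I then want to claim that $\{m,w\}$ already blocks $M_s \triangle P_1$, \emph{unless} the partner status of the \emph{other} endpoint $w$ (or $m$) is affected by $P_2$. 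Concretely: both $m$ and $w$ have the same partner in $M_s \triangle (P_1\cup P_2)$ as in $M_s \triangle P_1$ precisely when neither of them lies on $P_2$; since vertices on $P_1$ are not on $P_2$, this fails only if the partner $M_s(m)$ or $M_s(w)$—a vertex that was ``bumped'' and is now unmatched or rematched—behaves differently. The cleanest route is: a vertex of $P_1$ has the same neighbourhood-in-the-matching in $M_s \triangle P_1$ and in $M_s\triangle(P_1\cup P_2)$ because $P_1,P_2$ are vertex-disjoint; therefore \emph{every} blocking pair of $M_s\triangle(P_1\cup P_2)$ with both endpoints reachable from $P_1$ alone is a blocking pair of $M_s\triangle P_1$, and symmetrically for $P_2$. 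This gives $\mathrm{cost}(P_1\cup P_2) \le c_1 + c_2$, since every blocking pair of the union is counted by one of the two summands; the only subtlety is to make sure no blocking pair is double-counted, which cannot happen because a blocking pair of $M_s\triangle P_1$ that is also a blocking pair of $M_s\triangle P_2$ would have to have one endpoint on each path (as $M_s$ itself is stable), contradicting the fact that such an edge connects $P_1$ and $P_2$ and hence, by Proposition~\ref{prop-augmenting-paths}(b), must be incident to $m\in\mathcal M_0$ at the common endpoint—but then both paths end at $m$, so they are \emph{not} vertex-disjoint. Hence in the vertex-disjoint case the sums $c_1$ and $c_2$ count disjoint sets of pairs.

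For the strict-inequality part, suppose $\mathrm{cost}(P_1\cup P_2) < c_1 + c_2$. Then some pair $\{m,w\}$ that is blocking in, say, $M_s \triangle P_{i_2}$ fails to block $M_s\triangle(P_1\cup P_2)$. By the agreement argument above, this can only happen if one of $m,w$—the one \emph{not} forced to change partner by $P_{i_2}$—nevertheless changed partner in $M_s\triangle(P_1\cup P_2)$, i.e.\ lies on $P_{i_1}$. So exactly one of the two, say $m$, lies on $P_{i_1}$ and the other, $w$, lies on $P_{i_2}$; and $\{m,w\}$ is an edge of $G$ connecting $P_{i_1}$ and $P_{i_2}$. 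By Proposition~\ref{prop-augmenting-paths}(b), $m \in \mathcal M_0$ and $P_{i_1}$ (being the path containing $m$, which is unmatched in $M_s$ and must be an endpoint of its augmenting path) ends at $m$. Since $\{m,w\}$ blocks $M_s\triangle P_{i_2}$ there, $m$—still unmatched there—must have $w$ on its list, and since $w$ blocks with $m$, $w$ must prefer $m$ to its $M_s\triangle P_{i_2}$-partner; as $\Delta_{\mathcal M}\le 2$ and $m$ ranks $w$ somewhere, the remaining option is that $w$ is $m$'s \emph{second} choice (if $w$ were $m$'s first choice, $m$ would be matched to $w$ or something it prefers in the path that ends at $m$—this needs a short check using how the augmenting path through $m$ terminates), so $\{m,w\}$ is a special edge. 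Finally, in $M_s\triangle(P_1\cup P_2)$ the vertex $m$ is matched along $P_{i_1}$ to its first choice, so $\{m,w\}$ no longer blocks, giving exactly the asserted conclusion.

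The main obstacle I anticipate is pinning down \emph{why} the edge $\{m,w\}$ must be special rather than merely $m\in\mathcal M_0$: one has to rule out the case where $w$ is $m$'s \emph{first} choice. The argument is that an augmenting path ending at the unmatched man $m$ reaches $m$ via an $M_s$-edge into $m$'s partner along the path, and the last hop is a non-matching edge from some woman $w'$ to $m$; if $w'$ were not $m$'s first choice, then $m$'s first choice $w$ together with $m$ would block $M_s$ itself (a contradiction) or would already be accounted on the path—so the path must end by matching $m$ to its first choice, forcing $w$ (the woman creating the vanished blocking pair, lying on the other path) to be $m$'s second choice. Making this last step fully rigorous—carefully tracking the orientation of edges along the augmenting path at the endpoint $m$ and invoking stability of $M_s$—is the delicate point; the rest is bookkeeping enabled by vertex-disjointness and Proposition~\ref{prop-augmenting-paths}.
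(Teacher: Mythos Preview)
Your approach mirrors the paper's—compare blocking pairs of the three matchings and invoke Proposition~\ref{prop-augmenting-paths}(b) for edges connecting $P_1$ and $P_2$—but there are two genuine gaps.

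First, your ``no double-counting'' argument is wrong: Proposition~\ref{prop-augmenting-paths}(b) only says that \emph{one} of $P_1,P_2$ ends at $m\in\mathcal{M}_0$, not both, so no contradiction to vertex-disjointness follows. (The correct reason a connecting edge $\{m,w\}$ cannot block $M_s\triangle P_{i_1}$, where $P_{i_1}$ is the path ending at $m$, is that $w$ lies off $P_{i_1}$; hence if $\{m,w\}$ blocked there, then since $m$ is unmatched in $M_s$ it would already block $M_s$.) More importantly, double-counting is irrelevant for the inequality $\mathrm{cost}(P_1\cup P_2)\le c_1+c_2$: overlap only helps. What your argument actually lacks for the inequality is the case of a connecting edge $\{m,w\}$: you must show that if it blocks the union, it also blocks $M_s\triangle P_{i_2}$.

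Second, your proposed route to ``the edge is special'' does not work, and you correctly flag it as the delicate point. You try to show directly that $P_{i_1}$ matches $m$ to his first choice, arguing that otherwise $\{m,\text{first choice}\}$ would block $M_s$; but this fails, since $m$'s first choice may simply prefer her $M_s$-partner to $m$. The paper sidesteps this by case-splitting on whether $w$ is $m$'s first or second choice. If $w$ is $m$'s first choice, then $m$'s only other acceptable woman is precisely the one $P_{i_1}$ matches him to, so in the union $m$ still prefers $w$; since $w$'s situation is identical in $M_s\triangle P_{i_2}$ and in the union, the blocking status of $\{m,w\}$ coincides in both matchings—so this case never produces a strict drop. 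Hence any edge witnessing $\mathrm{cost}(P_1\cup P_2)<c_1+c_2$ must have $w$ as $m$'s second choice (i.e., be special), and then in the union $m$ is with his first choice, so $\{m,w\}$ indeed does not block there. This same case split also closes the gap in your inequality argument.
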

\begin{proof}
%[Proof of Lemma~\ref{lemma-costs}]
  First observe that if some edge $\{m,w\}$ has a common vertex with only one of the paths~$P_1$ and~$P_2$, say $P_1$, then $\{m,w\}$ is blocking in~$M_s \triangle P_1$ if and only if it is blocking in $M_s \triangle (P_1 \cup P_2)$.

  Consider now the case when $\{m,w\}$ connects $P_1$ and $P_2$. 
  By Proposition~\ref{prop-augmenting-paths}, this implies that one of the paths, say $P_1$, ends at $m \in \mathcal M_0$ (and $w$ lies on~$P_2$).
  Clearly, $\{m,w\}$ is not blocking in~$M_s \triangle P_1$, by the stability of $M_s$. 
  If, on the one hand, $w$ is the first choice of $m$, then $\{m,w\}$ is blocking in $M_s \triangle P_2$ exactly if it is blocking in $M_s \triangle (P_1 \cup P_2)$.
  If, on the other hand, $\{m,w\}$ is special, then it cannot be blocking in $M_s \triangle (P_1 \cup P_2)$, but it might be blocking in $M_s \triangle P_2$. 
  Putting all these facts together, the lemma follows immediately.
\end{proof}

We are ready to provide the algorithm, in a sequence of four steps.

\medskip
\noindent
{\bf Step 1: Computing all augmenting paths.}
By Proposition~\ref{prop-augmenting-paths}, if we de\-lete~$\mathcal M_0$ from the union of all augmenting paths starting at some $w \in \mathcal W^{\star}_0$, then we obtain a tree.
Furthermore, these trees are mutually vertex-disjoint for different starting vertices of $\mathcal W^{\star}_0$.
This allows us to compute all augmenting paths in linear time, e.g., by an appropriately modified version of a depth-first search algorithm (so that only augmenting paths are considered).
During this process, we can also compute the special cost of each augmenting path in a straightforward way. 

\medskip
\noindent
{\bf Step 2: Constructing an auxiliary graph.}
Using the results of the computation of Step~1, we construct an edge-weighted single bipartite graph~$G_{\textup{path}}=(U,V;E)$ 
as follows. 
To define the vertices of $G_{\textup{path}}$ we set $U:=\mathcal W^{\star}_0$ and $V:=\mathcal M_0 \cup \{w' \mid w \in \mathcal W^{\star}_0\}$, so for each woman $w\in\mathcal W^{\star}_0$ we create a corresponding new vertex $w'$.
The edge set $E$ contains an edge $\{w,w'\}$ for each $w \in U$, as well as an edge $\{w,m\}$ whenever $w \in U$, $m \in\mathcal M_0$ and there exists an augmenting path with endpoints $w$ and~$m$.
We define the weight of an edge $\{w,w'\}$ as the minimum special cost~$c_w^{\textup{min}}$ of any augmenting path starting at~$w$ and \emph{not} ending in $\mathcal M_0$, and we define the weight of an edge $\{w,m\}$ with $w \in U$ and $m \in\mathcal M_0$ as the minimum special cost of any augmenting path with endpoints $w$ and $m$.
%The vertex set of~$G_\textup{path}$ is the union of $\mathcal W^{\star}_0$ and $\mathcal M_0 \cup \{w' \mid w \in \mathcal W^{\star}_0\}$, so for each woman $w\in\mathcal W^{\star}_0$ we create a corresponding new vertex $w'$. We add an edge between $w \in\mathcal W^{\star}_0$ and $m \in\mathcal M_0$ with weight~$c$ if there exists an augmenting path with endpoints $w$ and $m$ having special cost~$c$ (and no such path with lower special cost exists). Further, for each $w \in\mathcal W^{\star}_0$ we compute the minimum special cost~$c_w^{\textup{min}}$ of any augmenting path starting at $w$ and \emph{not} ending in $\mathcal M_0$, and add an edge between~$w$ and $w'$ with weight $c_w^{\textup{min}}$ in $G_{\textup{path}}$.

\medskip
\noindent
{\bf Step 3: Computing a minimum weight matching.}
We compute a matching $M_P$ in~$G_{\textup{path}}$ covering~$U$ and having minimum weight; this can be done in polynomial time by, e.g., the
Hungarian method~\cite{kuhn-weighted-matching}.
Observe that the matching $M_P$ corresponds to a set of augmenting paths $\mathcal{P}=\{ P_w \mid w \in \mathcal W^{\star}_0\}$ that are mutually vertex-disjoint by Proposition~\ref{prop-augmenting-paths}. 
Recall that the special cost of~$P_w$ is the weight of the edge in~$M_P$ incident to~$w$. 

\medskip
\noindent
{\bf Step 4: Eliminating blocking special edges.}
In this step, we modify $\mathcal{P}$ iteratively. 
We start by setting $\mathcal{P}_{\textup{act}}=\mathcal{P}$.
At each iteration we modify~$\mathcal{P}_{\textup{act}}$ as follows. 
We check whether there exists a special edge  $\{m^*,w^*\}$ that is blocking in $M_s \triangle \mathcal{P}_{\textup{act}}$. 
If yes, then notice that~$m^*$ is not matched in $M_s \triangle \mathcal{P}_{\textup{act}}$, because 
$\{m^*,w^*\}$ is special and thus $m^* \in \mathcal M_0$. 
Let $P$ be the path of~$\mathcal{P}_{\textup{act}}$ containing~$w^*$.
We modify~$\mathcal{P}_{\textup{act}}$ by truncating~$P$ to its subpath between its starting vertex and~$w^*$, and appending to it the edge $\{m^*,w^*\}$. 
This way, $\{m^*,w^*\}$ becomes an edge of the matching~$M_s \triangle \mathcal{P}_{\textup{act}}$. 
The iteration stops when there is no special edge blocking~$M_s \triangle \mathcal{P}_{\textup{act}}$. 
Note that once a special edge ceases to be blocking in $M_s \triangle \mathcal{P}_{\textup{act}}$, it cannot become blocking again during this process, so the algorithm performs at most $|\mathcal M_0|$ iterations.
For each $w \in \mathcal W^{\star}_0$, let~$P^*_w$ denote the augmenting path in~$\mathcal{P}_{\textup{act}}$ covering~$w$ at the end of Step~4; we define $\mathcal{P}^* = \{ P^*_w \mid w \in \mathcal W^{\star}_0\}$ and output the matching~$M_s \triangle \mathcal{P}^*$. 

\medskip
This completes the description of the algorithm; we now provide its analysis.
\begin{lemma}
\label{lem:menlength2correctness}
  $M_{\textup{sol}}:=M_s \triangle \mathcal{P}^*$ is a feasible matching for $I$, and the number of blocking pairs for~$M_{\textup{sol}}$ is at most the weight of $M_P$.
\end{lemma}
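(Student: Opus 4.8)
The plan is to verify feasibility first and then bound the number of blocking pairs by tracking what happens to augmenting paths through Steps~3 and~4. For feasibility: each path in $\mathcal{P}$ starts at a distinct woman of $\mathcal{W}^\star_0$ (because $M_P$ covers $\mathcal{W}^\star_0$ in $G_\textup{path}$), and by Proposition~\ref{prop-augmenting-paths} the matching $M_P$ chosen in Step~3 yields a family of mutually vertex-disjoint augmenting paths; truncating a path in Step~4 and appending a special edge $\{m^*,w^*\}$ keeps the result a valid augmenting path (its new endpoint $m^*$ is in $\mathcal{M}_0$, hence not previously used), and it stays vertex-disjoint from the others since $m^*$ was unmatched in $M_s \triangle \mathcal{P}_\textup{act}$. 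Thus $M_s \triangle \mathcal{P}^*$ is a matching; it matches every woman in $\mathcal{W}^\star_0$ (these are the starting vertices, never removed by truncation since truncation always keeps the starting vertex), and since $M_s$ already matched every distinguished person not in $\mathcal{W}^\star_0$, $M_{\textup{sol}}$ is feasible.

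For the blocking-pair count, the key is to separate \emph{non-special} blocking pairs from \emph{special} ones. First I would argue that the number of non-special blocking pairs in $M_s \triangle \mathcal{P}^*$ is at most the total special cost of $\mathcal{P}^*$, which in turn is at most the weight of $M_P$. The bound on non-special blocking pairs follows from Lemma~\ref{lemma-costs} applied inductively across the vertex-disjoint paths of $\mathcal{P}^*$: the cost of a union of vertex-disjoint augmenting paths is at most the sum of the individual costs, and crucially the discrepancy in Lemma~\ref{lemma-costs} is always caused by a \emph{special} edge ceasing to block — so for \emph{non-special} blocking pairs the count is exactly subadditive, giving that the number of non-special blocking pairs of $M_s \triangle \mathcal{P}^*$ is bounded by the sum over $w$ of the special cost of $P^*_w$. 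Then I would observe that Step~4 does not increase the special cost of any path: truncating $P$ to end at $w^*$ can only remove blocking pairs incident to vertices past $w^*$, and appending the special edge $\{m^*,w^*\}$ adds at most the special edge itself, which is special and hence not counted in special cost; so the special cost of $P^*_w$ is at most the special cost of $P_w$, i.e.\ at most the weight of the edge of $M_P$ at $w$. Summing, the total special cost of $\mathcal{P}^*$ is at most the weight of $M_P$.

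It remains to handle the \emph{special} blocking pairs of $M_{\textup{sol}}$. Here the point of Step~4 is precisely to eliminate them: the loop terminates only when no special edge blocks $M_s \triangle \mathcal{P}_\textup{act}$, and I would verify the termination claim already asserted in the algorithm description — once a special edge $\{m^*,w^*\}$ is added to the matching it stays matched (its endpoint $w^*$ is only ever touched again if some later special edge at $w^*$ blocks, but $w^*$ is now matched to $m^*$ which is its current partner, so no special edge at $w^*$ can block; and $m^*$, once matched, stays matched as no truncation removes an edge of the form $\{m^*,w^*\}$ without $w^*$ being an internal vertex, which it is not). Hence at most $|\mathcal{M}_0|$ iterations occur and $M_{\textup{sol}} = M_s \triangle \mathcal{P}^*$ has \emph{no} special blocking pair. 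Combining: the number of blocking pairs of $M_{\textup{sol}}$ equals its number of non-special blocking pairs, which is at most the total special cost of $\mathcal{P}^*$, which is at most the weight of $M_P$.

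The main obstacle is the bookkeeping in the second paragraph: making the inductive application of Lemma~\ref{lemma-costs} rigorous while correctly arguing that the subadditivity \emph{defect} only ever involves special edges (so non-special blocking pairs are genuinely subadditive), and confirming that Step~4's truncate-and-append operation never raises the special cost of a path nor creates a new non-special blocking pair — the latter because a truncated path is a sub-object and appending a single special edge to its endpoint cannot make a non-special pair block that did not block before.
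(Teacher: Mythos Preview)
Your proposal is correct and follows essentially the same approach as the paper: you use Lemma~\ref{lemma-costs} to argue that non-special blocking pairs are (sub)additive across the disjoint paths, observe that Step~4 never introduces a non-special blocking pair (so the special cost of each $P^*_w$ is bounded by that of $P_w$), and use the termination of Step~4 to conclude there are no special blocking pairs left. The paper packages the same ideas slightly differently—arguing directly that the \emph{cost} of $P^*_w$ is at most the special cost of $P_w$ and then applying Lemma~\ref{lemma-costs} once—but the substance is identical; your ``obstacle'' paragraph correctly identifies where the careful bookkeeping is needed, namely in verifying that the truncate-and-append step of Step~4 creates no new non-special blocking pair (this uses $\Delta_{\mathcal M}\le 2$ via Proposition~\ref{prop-augmenting-paths} to rule out stray edges between the retained and removed portions).
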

\begin{proof}
%[Proof of Lemma~\ref{lem:menlength2correctness}]
  Consider the situation when the iteration in Step 4 deals with a special edge~$\{m^*,w^*\}$ blocking in $\mathcal{P}_{\textup{act}}$.
  Notice that since $w^*$ is the second woman in $L(m^*)$ (by the definition of a special edge), and since $\{w^*,m^*\}$ is blocking in~$M_s \triangle \mathcal{P}_{\textup{act}}$, we know that $m^*$ is unmatched in $M_s \triangle \mathcal{P}_{\textup{act}}$, that is, $m^*$ does not lie on any of the augmenting paths in~$\mathcal{P}_{\textup{act}}$. 
  From this follows that the augmenting paths in~$\mathcal{P}_{\textup{act}}$, and hence in $\mathcal{P}^*$, remain mutually vertex-disjoint. 
  Therefore, $M_{\textup{sol}}$ is indeed a matching.
  As it covers~$\mathcal W^{\star}_0$, and no augmenting path ends at a woman in $\mathcal W^{\star} \setminus \mathcal W^{\star}_0$,  matching $M_{\textup{sol}}$ is feasible. 

  Clearly, Step 4 ensures that there are no blocking special edges in $M_{\textup{sol}}$.
  Note that when the algorithm modifies $P_w$ for some $w \in \mathcal W^{\star}_0$, at most one new blocking pair may arise with respect to $M_s \triangle \mathcal{P}_{\textup{act}}$, and from the stability of $M$ and Proposition~\ref{prop-augmenting-paths} it follows that such an edge must be a special edge (incident to the man at which $P_w$ ends before its modification).
  This means that Step~4 gets rid of all blocking special edges without introducing any non-special blocking edges.
  Hence, we obtain that the cost of $P^*_w$ is at most the special cost of $P_w$, for each $w \in \mathcal W^{\star}_0$.
  By Lemma~\ref{lemma-costs}, the number of blocking pairs that $M_{\textup{sol}}$ admits is at most the sum of the costs of all augmenting paths in~$\mathcal{P}^*$; this finishes the proof.
\end{proof}

%\begin{proof}[Proof of Theorem~\ref{thm:restricted-matching-poly}]
To show that our algorithm is correct and $M_{\textup{sol}}$ is optimal, by Lemma~\ref{lem:menlength2correctness} it suffices to prove that 
%there exists a matching covering $\mathcal W^{\star}_0$ in~$G_{\textup{path}}$ with
the weight of $M_P$ is at most the number of blocking pairs in $M^{\textup{opt}}$, where $M^{\textup{opt}}$ denotes an optimal solution in~$I$.
To this end, we are going to define a matching covering $\mathcal W^{\star}_0$ in~$G_{\textup{path}}$ whose weight is at most the number of blocking pairs in $M^{\textup{opt}}$.

Clearly, $M_s \triangle M^{\textup{opt}}$ contains an augmenting path $Q_w$ covering $w$ for each $w \in \mathcal W^{\star}_0$.
If some~$Q_w$ ends at a man $m \in \mathcal M_0$, then clearly no other path in~$M_s \triangle M^{\textup{opt}}$  can end at $m$.
So let us take the matching~$M_Q$ in~$G_{\textup{path}}$ that includes all pairs $\{m,w\}$ where $Q_w$ ends at $m \in \mathcal M_0$ for some $w \in \mathcal W^{\star}_0$.
Also, we put $\{w,w'\}$ into~$M_Q$ if $Q_w$ does not end at a man of $\mathcal M_0$. 
Note that~$M_Q$ is indeed a matching. 

It remains to show that the weight of $M_Q$ is at most the number of blocking pairs in~$M^{\textup{opt}}$.
By definition, the weight of $M_Q$ is at most the sum of the special costs of the paths~$Q_w$ for every $w \in \mathcal W^{\star}_0$.
By Lemma~\ref{lemma-costs}, any non-special blocking pair in $M_s \triangle Q_w$ remains a blocking pair in $M_s \triangle ( \bigcup_{w \in\mathcal W^{\star}_0} Q_w)$, and hence in~$M^{\textup{opt}}$ as well.
Hence, there is a matching in~$G_{\textup{path}}$ with weight at most the number of blocking pairs 
in an optimal solution, implying the correctness of our algorithm.
As the algorithm runs in polynomial time, Theorem~\ref{thm:restricted-matching-poly} follows.
%\qed 
% TODO: calculate running time more precisely?, add citation for min weight max size matching. 
%\end{proof}

\medskip
By contrast to Theorem~\ref{thm:restricted-matching-poly}, if men may have preference lists of length~$3$, then SMC-1 (and hence SMC) is $\mathsf{NP}$-hard even if each woman finds at most two men acceptable. 

\begin{theorem}
\label{thm-case23}
  SMC-1 is $\mathsf{NP}$-hard even if $\Delta_{\mathcal{W}} = 2$ and $\Delta_{\mathcal{M}} = 3$.
\end{theorem}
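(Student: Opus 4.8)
The plan is to establish $\mathsf{NP}$-hardness of SMC-1 with $\Delta_{\mathcal{W}}=2$ and $\Delta_{\mathcal{M}}=3$ by a polynomial-time reduction from a classical $\mathsf{NP}$-complete problem. A natural candidate is \textsc{Minimum Vertex Cover} (or, equivalently, \textsc{Independent Set}) on graphs of bounded degree, or \textsc{(3,3)-SAT}; given the asymmetry in the list-length bounds, I expect that a restricted variant of \textsc{3-SAT} or of a covering problem is the right source. First I would fix an instance of the source problem and build an SMC-1 instance $I$ together with a designated stable matching $M_s$, so that the symmetric-difference / augmenting-path bookkeeping used throughout Section~\ref{sec:prefatmost2} can be reused: each distinguished woman $w\in\mathcal{W}^\star$ (who has $|L(w)|\le 2$) must be rerouted along an augmenting path, and the budget $b$ on blocking pairs is tuned so that a feasible matching with at most $b$ blocking pairs exists precisely when the source instance is a ``yes''-instance.

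The key gadgets I would design are: (i) a \emph{variable gadget} for each variable (or vertex), forcing a binary choice — the augmenting path of some distinguished woman either goes ``one way'' or ``the other,'' each using up exactly one unit of the blocking-pair budget — and (ii) a \emph{clause gadget} (or edge gadget) that can absorb its distinguished woman's augmenting path with zero extra blocking pairs if and only if at least one incident variable gadget was set to the satisfying side. The fact that $\Delta_{\mathcal{M}}=3$ is essential: men in the variable gadgets need a third entry in their preference list to create the ``fork'' between the two possible continuations of an augmenting path, whereas women can everywhere be kept to lists of length $2$ (a first choice that is their $M_s$-partner and a second choice pointing to the next vertex on the path). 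I would set $b$ equal to the number of variable gadgets (one forced blocking pair each) plus possibly a correction term, and argue that any ``cheating'' in a clause gadget forces at least one extra blocking pair, pushing the total over $b$.

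The two directions of correctness would then be handled as in the earlier proofs. For the forward direction, from a satisfying assignment I would explicitly write down the augmenting paths $P_x$ for the variable gadgets and $P_c$ for the clause gadgets, take $M = M_s \triangle \bigl(\bigcup_x P_x \cup \bigcup_c P_c\bigr)$, check feasibility (every distinguished woman is covered, no augmenting path ends at a non-distinguished woman), and count the blocking pairs, verifying that only the intended forced ones appear. For the reverse direction, given a feasible $M$ with at most $b$ blocking pairs, I would analyse $G_\Delta = M \triangle M_s$: each distinguished woman lies on exactly one augmenting path, each such path costs at least one blocking pair, the counting argument forces each variable gadget's path to cost exactly one and each clause gadget's path to cost zero, and the latter can only happen if an incident variable was set to satisfy the clause — so reading off the variable choices yields a satisfying assignment. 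Here the bounded-degree hypothesis of the source problem keeps the clause-gadget preference lists within the $\Delta_{\mathcal{W}}=2$, $\Delta_{\mathcal{M}}=3$ limits.

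The main obstacle I anticipate is the \emph{consistency enforcement}: making sure that a clause gadget ``sees'' the truth value chosen in a variable gadget without blowing up the men's preference lists beyond length $3$ or the women's beyond length $2$. With such short lists there is very little room to transmit information between gadgets, so the variable gadget must be structured so that its ``satisfying'' continuation naturally passes through (or makes available) exactly the vertices that the relevant clause gadgets need, and the budget argument must be airtight so that any attempt to satisfy a clause gadget by rerouting through an ``unsatisfied'' variable is detectably more expensive. Getting the interleaving of the master-list-free preference lists right — in particular ruling out unintended augmenting paths or unintended blocking pairs that could let the adversary cheat within budget — is the delicate part; the rest is routine verification along the lines already established for Theorems~\ref{thm:smc-minblock-mainhardness} and~\ref{thm:1woman-strict}.
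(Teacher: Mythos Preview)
Your high-level approach matches the paper's: a reduction from \textsc{Vertex Cover} (the paper does not go via SAT), with a node gadget~$G_x$ per vertex, an edge gadget~$G_{\{x,y\}}$ per edge, and a budget of $|V(G)|+k$ blocking pairs. But what you have written is a plan, not a proof --- you never actually construct the gadgets, and the entire content of the theorem is precisely the gadget construction within the $\Delta_{\mathcal W}=2$, $\Delta_{\mathcal M}=3$ constraint that you yourself flag as ``the main obstacle'' and then leave unresolved. Until the preference lists are written down and checked, there is nothing to verify.

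Two places where your sketch diverges from what actually works. First, your cost model (``each variable gadget costs exactly one unit regardless of the choice; each clause gadget costs zero iff satisfied'') is not the one the paper uses: in the paper every node gadget costs at least~$1$, and it costs at least~$2$ exactly when the vertex is placed in the cover (the second blocking pair coming from the pair $\{c_x^2,d_x\}$ when $s_x$ is matched to $d_x$, or from an edge $\{a_{x\to y},b_x^h\}$ when the edge gadget leaves $a_{x\to y}$ unmatched). Edge gadgets contribute no blocking pairs of their own; they merely force one incident node gadget to absorb an extra blocking pair. Second, the paper's reverse direction is a direct count of blocking pairs incident to the men of each gadget; the $M_s\triangle M$ augmenting-path machinery you plan to reuse from Section~\ref{sec:prefatmost2} is not needed here and would complicate rather than simplify the argument.
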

\begin{proof}
%[Proof of Theorem~\ref{thm-case23}]
  We give a reduction from the $\mathsf{NP}$-hard {\sc Vertex Cover} problem, asking whether the input graph~$G$ has a vertex cover of size at most $k$.
  We order the vertices of $G$ arbitrarily, and denote the $h$-th neighbor of some vertex $x$ by $n(x,h)$ for any $h \in \{1, \dots, d_G(x)\}$. 

  Let us construct an instance $I$ of {\sc SMC} as follows; see Figure~\ref{fig:case23} for an illustration. 
%    \tikzset{rdvertex/.style={minimum size=2mm,circle,fill=white,draw, inner sep=0pt},
%        sqvertex/.style={minimum size=2mm,diamond,fill=black,draw, inner sep=0pt},
%        opvertex/.style={minimum size=2mm,diamond,fill=white,draw, inner sep=0pt},
%        decoration={markings,mark=at position .5 with {\arrow[black,thick]{stealth}}}}
%    \begin{figure}
  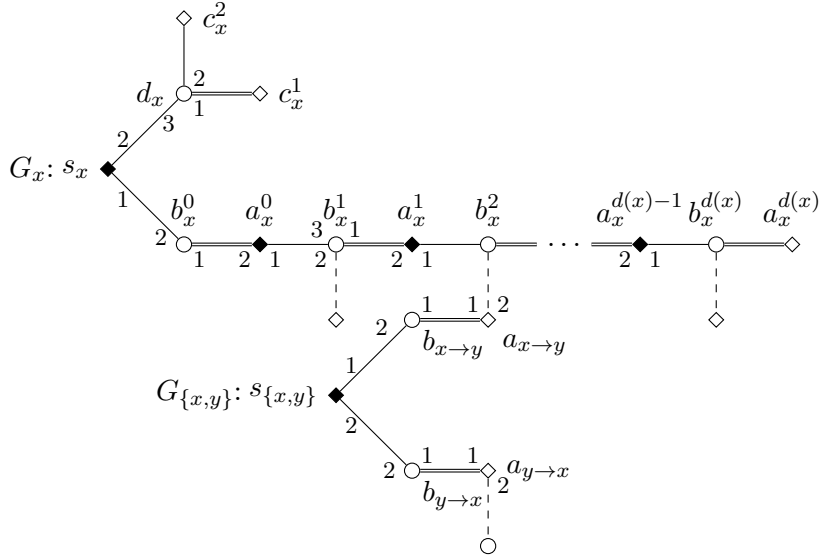
\begin{figure}
    \tikzset{rdvertex/.style={minimum size=2mm,circle,fill=white,draw, inner sep=0pt},
        sqvertex/.style={minimum size=2mm,diamond,fill=black,draw, inner sep=0pt},
        opvertex/.style={minimum size=2mm,diamond,fill=white,draw, inner sep=0pt},
        decoration={markings,mark=at position .5 with {\arrow[black,thick]{stealth}}}}
        \centering
        \begin{tikzpicture}
        \node (gx) at (0,0){$G_x$:};
        \node (sx) at (1,0) [sqvertex,label=left:$s_x$]{};
        \node (sx-1) at (1.2,-0.4) {{\footnotesize $1$}};
        \node (sx-2) at (1.2,0.4) {{\footnotesize $2$}};
        \node (dx) at (2,1) [rdvertex,label=left:$d_x$]{};
        \node (dx-1) at (2.2,0.8){{\footnotesize $1$}};
        \node (dx-2) at (2.2,1.2){{\footnotesize $2$}};
        \node (dx-3) at (1.8,0.6){{\footnotesize $3$}};
        \node (cx1) at (3,1) [opvertex,label=right:$c_x^1$]{};
        \node (cx2) at (2,2) [opvertex,label=right:$c_x^2$]{};
        \node (ex0) at (2,-1) [rdvertex,label=above:$b_x^0$]{}; 
        \node (ex0-1) at (2.2,-1.2){{\footnotesize $1$}}; % itt tartok
        \node (ex0-2) at (1.7,-0.9){{\footnotesize $2$}}; % itt tartok
        \node (ax0) at (3,-1) [sqvertex,label=above:$a_x^0$]{};
        \node (ax0-1) at (3.2,-1.2){{\footnotesize $1$}};
        \node (ax0-2) at (2.8,-1.2){{\footnotesize $2$}};
        \node (ex1) at (4,-1) [rdvertex,label=above:$b_x^1$]{};
        \node (ex1p) at (4,-2) [opvertex]{};
        \node (ex1-1) at (4.25,-0.8){{\footnotesize $1$}};
        \node (ex1-2) at (3.8,-1.2){{\footnotesize $2$}};
        \node (ex1-3) at (3.75,-0.8){{\footnotesize $3$}};
        \node (ax1) at (5,-1) [sqvertex,label=above:$a_x^1$]{};
        \node (ax1-1) at (5.2,-1.2){{\footnotesize $1$}};
        \node (ax1-2) at (4.8,-1.2){{\footnotesize $2$}};
        \node (ex2) at (6,-1) [rdvertex,label=above:$b_x^2$]{};
        \node (axy) at (6,-2) [opvertex,label={below right:$a_{x\rightarrow y}$}]{}; %tartok
        \node (axy-1) at (5.8,-1.8){{\footnotesize $1$}};
        \node (axy-2) at (6.2,-1.8){{\footnotesize $2$}};
        \node (exy) at (5,-2) [rdvertex,label={[label distance=-3pt]-60:$b_{x\rightarrow y}$}]{};
        \node (exy-1) at (5.2,-1.8){{\footnotesize $1$}};
        \node (exy-2) at (4.6,-2.1){{\footnotesize $2$}};
        \node (sxy) at (4,-3) [sqvertex,label=left:$s_{\{x,y\}}$]{};
        \node (sxy-1) at (4.2,-2.6){{\footnotesize $1$}};
        \node (sxy-2) at (4.2,-3.4){{\footnotesize $2$}};
        %    \node (gxy) at (3,-4) {$G_{xy}$:};
        \node (gxy) at (2.2,-3) {$G_{\{x,y\}}$:};
        \node (eyx) at (5,-4) [rdvertex,label={[label distance=-3pt]-60:$b_{y\rightarrow x}$}]{};
        \node (eyx-1) at (5.2,-3.8){{\footnotesize $1$}};
        \node (eyx-2) at (4.7,-4){{\footnotesize $2$}};
        \node (ayx) at (6,-4) [opvertex,label=right:$a_{y\rightarrow x}$]{};
        \node (ayx-1) at (5.8,-3.8){{\footnotesize $1$}};
        \node (ayx-2) at (6.2,-4.2){{\footnotesize $2$}};
        \node (ayxp) at (6,-5) [rdvertex]{};
        \node (axdx1) at (8,-1) [sqvertex,label=above:$a_x^{d_G(x)-1}$]{};
        \node (axdx1-1) at (8.2,-1.2){{\footnotesize $1$}};
        \node (axdx1-2) at (7.8,-1.2){{\footnotesize $2$}};
        \node (exdx) at (9,-1) [rdvertex,label=above:$b_x^{d_G(x)}$]{};
        \node (exdxp) at (9,-2) [opvertex]{};
        \node (axdx) at (10,-1) [opvertex,label=above:$a_x^{d_G(x)}$]{};
        \node (hdots) at (7,-1) {$\hdots$};
        \draw (sx)--(dx);
        \draw (dx)--(cx2);
        \draw[double] (dx)--(cx1);
        \draw (sx)--(ex0);
        \draw[double] (ex0)--(ax0);
        \draw (ax0)--(ex1);
        \draw[dashed] (ex1)--(ex1p);
        \draw[double] (ex1)--(ax1);
        \draw (ax1)--(ex2);
        \draw[dashed] (ex2)--(axy);
        \draw[double] (axy)--(exy);
        \draw (exy)--(sxy);
        \draw (sxy)--(eyx);
        \draw[double] (eyx)--(ayx);
        \draw[dashed] (ayx)--(ayxp);
        \draw[double] (ex2)--(hdots);
        \draw[double] (hdots)--(axdx1);
        \draw (axdx1)--(exdx);
        \draw[dashed] (exdx)--(exdxp);
        \draw[double] (exdx)--(axdx);
        \end{tikzpicture}
    \caption{Illustration of a node gadget $G_x$ and an edge gadget $G_{\{x,y\}}$ constructed in the proof of Theorem~\ref{thm-case23}.
        Double edges denote edges of a stable matching for $I$, and dashed edges are those leaving some gadget.
        The example depicted assumes $y=n(x,2)$.\label{fig:case23}}
  \end{figure}  
  For each vertex $x \in V(G)$ we construct a \emph{node gadget} $G_x$ which contains women $s_x$, $a_x^0, \dots, a_x^{d_G(x)}$, $c_x^1$ and~$c_x^2$, and men $b_x^0, \dots, b_x^{d_G(x)}$, and~$d_x$. 
%  The graph underlying $G_i$ is the path $(c_x^1, d_x,s_x,b_x^0, a_x^0, b_x^1, 
%  a_x^1, \dots, b_x^{d(x)}, a_x^{d(x)})$ together with  the edge $c_x^2 d_x$. 
%  
  For each edge $\{x,y\} \in E(G)$ we also construct an \emph{edge gadget} $G_{\{x,y\}}$ involving women $s_{\{x,y\}}$, $a_{x \to y}$ and~$a_{y \to x}$, and men $b_{x \to y}$ and $b_{y \to x}$. 
%  The graph underlying $G_{xy}$ is the 
%  path $(a_{x \to y}, b_{x \to y}, s_{xy}, b_{y \to x}, a_{y \to x})$.   
%
  Furthermore, there are two edges in the underlying graph connecting $G_{\{x,y\}}$ to $G_x$ and $G_y$, namely $\{a_{x \to y}, b_x^h\}$ and $\{a_{y \to x} , b_y^{\ell}\}$ where $y=n(x,h)$ and $x=n(y,\ell)$. 

  The preference lists of $I$ are given in Table~\ref{table-prefs-case23}.
  Let the set of women with covering constraints be
  \begin{equation*}
%    \mathcal W^\star = \{s_x \mid x \in V(G)\} \cup \{s_{\{x,y\}} \mid \{x,y\} \in E(G)\} \cup \{a_x^h \mid x \in V(G), 0 \leq h <d_G(x) \},
    \mathcal W^\star = 
%    		\left( 
    		\bigcup_{x \in V(G)} \{s_x, a_x^0, \dots, a_x^{d_G(x)-1} \} 
%   		\right) 
    		\cup 
%		\left( 
		\bigcup_{\{x,y\} \in E(G)} \{s_{\{x,y\}} \} 
%		\right) 
		,
  \end{equation*}
  and set the number of blocking pairs allowed to be $|V(G)|+k$. 
  \begin{table}
    \renewcommand{\arraystretch}{1.3}
    \begin{tabular}{lll}
      $L(s_x)$ & $= (b_x^0,d_x),$                  & \\
      $L(b_x^0)$ & $= (a_x^0,s_x),$                  & \\ 
      $L(b_x^h)$ & $= (a_x^h,a_{x \to y},a_x^{h-1})$ & \textrm{ where $1 \leq h \leq d_G(x)$ and $y=n(x,h)$,} \\
      $L(a_x^h)$ & $= (b_x^{h+1},b_x^h)$             & \textrm{ where $0 \leq h < d_G(x)$,} \\
      $L(a_x^{d_G(x)})$ & $= (b_x^{d_G(x)}),$ & \\
      $L(c_x^h)$ & $= (d_x),$ 						& \textrm{ where $h \in \{1,2\}$,} \\  
      $L(d_x)$ & $= (c_x^1,c_x^2,s_x),$ & \\
            %.................EDGE GADGET...........
      $L(s_{\{x,y\}})$ & $= (b_{x \to y},b_{y \to x})$ & \textrm{ where $x$ precedes $y$,}\\
      $L(b_{x \to y})$ & $= (a_{x \to y},s_{\{x,y\}}),$ & \\ 
      $L(a_{x \to y})$ & $= (b_{x \to y},b_x^h)$ & \textrm{ where $y=n(x,h)$.}
    \end{tabular}
        %     \end{array}
        %     \end{equation*}
    \caption{Preference lists of women and men in the proof of Theorem~\ref{thm-case23}.
      When not stated otherwise, indices take all possible values.\label{table-prefs-case23}}
  \end{table}
  We are going to prove that $I$ admits a feasible matching with at most $|V(G)|+k$ blocking pairs if and only if there is a vertex cover of size $k$ in the graph $G$. 

  \medskip
  ''$\Rightarrow$'':
  Let $M$ be a feasible matching with at most $|V(G)|+k$ blocking pairs. 
  We say that the \emph{cost} of some gadget $G_x$ (or $G_{\{x,y\}}$) is the number of edges blocking~$M$ which are incident to some man of $G_x$ (or $G_{\{x,y\}}$, respectively.)  
  We will prove that the set $S$ of vertices $x$ for which~$G_x$ has cost at least~$2$ is a vertex cover of $G$.

  First, let us consider some $x$ for which $M(s_x)=d_x$. 
  In this case, both $c_x^1$ and~$c_x^2$ form a blocking pair for~$M$ with $d_x$, implying $x \in S$. 
  Second, let us consider some~$x$ with $M(s_x)=b_x^0$. 
  Since each $a_x^h$ with $0 \leq h <d_G(x)$ must be matched by $M$ (because it is contained in $\mathcal{W}^\star$), we obtain $M(a_x^h)=b_x^{h+1}$ for each such $h$. 
  Hence, $a_x^{d_G(x)}$ and $b_x^{d_G(x)}$ form a blocking pair for $M$.  
  Moreover, if the woman~$a_{x \to y}$ is unmatched in~$M$ for some~$y$, then $\{a_{x \to y}, b_x^h\}$ is also a blocking pair in $M$ (where $y=n(x,h)$), and implies a cost of at least $2$ for~$G_x$.
  Therefore, we can observe that if $x \notin S$, then~$a_{x \to y}$ must be matched by $M$ to $b_{x \to y}$ for each neighbor $y$ of $x$ in $G$.   

  However, for any $\{x,y\} \in E(G)$, $M$ must match $s_{\{x,y\}}$ either to $b_{x \to y}$ or to~$b_{y \to x}$ (because $s_{\{x,y\}}$ is contained in $\mathcal{W}^\star$), which means that  $x \in S$ or $y \in S$.
  This proves that $S$ is indeed a vertex cover for~$G$. 
  Moreover, the number of vertices in $S$ can be at most $k$, since each~$G_x$ with $x \in S$ has cost at least~$2$, each~$G_x$ with $x \notin S$ has cost at least $1$, and the total cost of all gadgets cannot exceed our budget $|V(G)|+k$. 

  \medskip
  ''$\Leftarrow$'':
  Given a vertex cover $S$ of size at most $k$ for $G$, we define a matching~$M$ with the desired properties.
  Namely, for each $x \in S$ we set~$M(s_x)=d_x$ and $M(a_x^h)=b_x^h$ for each $h \in \{ 0, \dots, d_G(x)\}$.
  In this case, $c_x^1, c_x^2$ are unmatched by $M$, both forming a blocking pair with~$d_x$.
  By contrast, all of the men $b_x^0, \dots, b_x^{d_G(x)}$ get their first choices.

  Next, for each $x \in V(G) \setminus S$ we set $M(s_x)=b_x^0$, $M(c_x^1)=d_x$, and $M(a_x^h)=b_x^{h+1}$ for each $h \in \{ 0, \dots, d_G(x)-1\}$. 
  Note that $a_x^{d_G(x)}$ is unmatched by $M$, and thus forms a blocking pair with~$b_x^{d_G(x)}$.
  Observe also that $d_x$ is not contained in any blocking pair.

  Finally, for some $\{x,y\} \in E(G)$, let us assume $y \in S$ (since $S$ is a vertex cover, it contains~$x$ or~$y$). 
  We set $M(s_{\{x,y\}})= b_{y \to x}$ and $M(a_{x \to y})=b_{x \to y}$.
  Note that~$a_{x \to y}$ gets her first choice, so it cannot be involved in a blocking pair. 
  Although $a_{y \to x}$ is unmatched by $M$, we know that it cannot form a blocking pair with $b_y^{\ell}$ where $x=n(y,\ell)$, because $y \in S$ and hence $b_y^{\ell}$ is assigned her first choice by~$M$.
  Thus, no man or woman of some edge gadget participates in a blocking pair, and therefore we obtain that the total number of blocking pairs for $M$ is exactly $|V(G)|+k$. 

  Since $M$ is feasible, the theorem follows.
\end{proof}

\subsection{Covering constraints on both sides}
Let us now investigate the complexity of SMC with covering constraints both for men and women. 
If we restrict the maximum length of preference lists on both sides to be at most $2$, SMC becomes linear-time solvable. To see this, observe that by $\max (\Delta_{\mathcal{W}},\Delta_{\mathcal{M}}) \leq 2$, the underlying graph $G$  must be a collection of paths and cycles. Thus, we can process the connected components of $G$ one by one. Applying dynamic programming on each component $K$, we can determine the minimum number $b_K$ of blocking pairs for any matching that is feasible for $K$,  together with a matching $M_K$ for $K$ that admits $b_K$ blocking pairs. If $K$ is a path, then this can be done in a straightforward manner, traversing the edges of $K$ one by one in linear time. If $K$ is a cycle, then we can pick any edge $e$ of $K$, guess whether it is contained in $M_K$, or blocks $M_K$, or neither of the two; we can then process the remainder of $K$ (which is a path) taking into account our guess for $e$. To compute the minimum number of blocking pairs that a feasible matching admits in our instance, we can simply sum up the values $b_K$ over each connected component $K$ of $G$.
Hence, we arrive at the following.

\begin{observation}
\label{observ:maxlist2}
  Instances of SMC with $\max(\Delta_{\mathcal{W}},\Delta_{\mathcal{M}}) \leq 2$ are linear-time solvable.
\end{observation}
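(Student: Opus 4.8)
The plan is to exploit the structural simplicity forced by the degree bound. Since every person finds at most two members of the other party acceptable, every vertex of the underlying bipartite graph $G_I$ has degree at most $2$, so $G_I$ is a vertex-disjoint union of paths and cycles. A blocking pair is, by definition, an edge of $G_I$, and every edge lies entirely inside one connected component; likewise, feasibility (covering every distinguished person) decomposes over components, and the union of optimal matchings of the components is an optimal matching of $I$ with blocking count equal to the sum of the per-component counts. Hence it suffices to give, for a single path or cycle $C$, a polynomial-time procedure that either reports that no matching of $C$ covers all distinguished vertices of $C$, or returns the minimum number of blocking edges over all such matchings; the instance is infeasible iff some component is, and otherwise the optimum is the sum of the per-component optima.

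First I would treat a path $C = v_1 v_2 \cdots v_\ell$. A matching of $C$ is exactly a choice of pairwise non-adjacent edges, which I record by bits $x_1,\dots,x_{\ell-1}$ with $x_i=1$ iff $\{v_i,v_{i+1}\}\in M$, subject to $x_i x_{i+1}=0$. The key local observation is that whether a non-matching edge $\{v_i,v_{i+1}\}$ is blocking depends only on the partners of $v_i$ and $v_{i+1}$, and these are determined by the three bits $x_{i-1},x_i,x_{i+1}$ (a vertex is matched to the neighbour selected by these bits, or else is unmatched); similarly whether $v_i$ is left unmatched is read off from $x_{i-1},x_i$. I would therefore run a left-to-right dynamic program whose state is the window $(x_{i-1},x_i)$ together with the accumulated number of blocking edges; a transition to $x_{i+1}$ is forbidden if it would leave a distinguished vertex $v_i$ unmatched, and otherwise it pays the (locally computable) number of blocking edges finalized at $v_i$. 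This runs in $O(\ell)$ time with $O(1)$ states and transitions.

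For a cycle $C = v_1 v_2 \cdots v_\ell v_1$ I would do the same, except that I first branch over the $O(1)$ possibilities for the two bits recording whether $\{v_\ell,v_1\}$ and $\{v_1,v_2\}$ lie in $M$; for each fixed choice I run the path-style dynamic program around the cycle and, at the wrap-around, discard runs inconsistent with the guessed bits or leaving a distinguished vertex uncovered, keeping the best consistent value. This is again linear time, so the whole algorithm is polynomial (in fact linear).

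The only points needing care — routine rather than hard — are (i) writing down precisely, in terms of the windowed bits, exactly when an edge incident to the current vertex is blocking, recalling that an unmatched person ``prefers'' every acceptable partner, and (ii) handling the covering constraint, including degenerate components such as an isolated distinguished vertex or a single-edge path with both endpoints distinguished, which immediately settle feasibility. I expect no genuine obstacle: the bounded degree collapses SMC into independent one-dimensional subproblems, each solvable by a constant-width dynamic program.
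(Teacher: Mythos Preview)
Your proposal is correct and follows the same approach as the paper: both rely on the observation that $\max(\Delta_{\mathcal W},\Delta_{\mathcal M})\leq 2$ forces the underlying graph to be a disjoint union of paths and cycles, after which the problem decomposes componentwise. The paper leaves the per-component argument implicit, whereas you spell out a constant-width dynamic program along each path or cycle; this elaboration is sound and simply makes explicit what the paper treats as routine.
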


Recall that the case where $\Delta_{\mathcal{W}}=2$ and $\Delta_{\mathcal{M}}=3$ is $\mathsf{NP}$-hard by Theorem~\ref{thm-case23}, 
even if there are no distinguished men to be covered. 
However, switching the roles of men and women in Theorem~\ref{thm:restricted-matching-poly}, we obtain that if there are no women to be covered, then $\Delta_{\mathcal W} \leq 2$ guarantees polynomial-time solvability for SMC. 
This raises the natural question whether SMC with $\Delta_{\mathcal W} \leq 2$ can be solved efficiently  
if the number of distinguished women is bounded. 
Next we show that this is unlikely, as the problem turns out to be $\mathsf{NP}$-hard for $|\mathcal{W}^{\star}|=1$.

\begin{theorem}
\label{thm:smc2-npc-1woman}
  SMC is $\mathsf{NP}$-hard, even if $\Delta_{\mathcal{W}}=2$, $|\mathcal{W}^{\star}|=1$ and there is only one man $m$ with $|L(m)|>3$. 
\end{theorem}
\begin{proof}
%[Proof of Theorem~\ref{thm:smc2-npc-1woman}]
  We present a reduction from the following special case of \textsc{Exact-3-Cover}.
  We are given a set $U=\{u_1, \dots, u_n\}$, a family $\mathcal{S}$ of subsets $S_1, \dots, S_m$ of $U$, each having size~$3$, such that each element of~$U$ occurs in at most three sets of $\mathcal{S}$.
  The task is to decide whether there exists a collection of $n/3$ sets in~$\mathcal{S}$ whose union covers $U$; such a collection of subsets is called an \emph{exact cover} for $U$. 
  This problem is $\mathsf{NP}$-complete~\cite[GT2]{GareyJohnson1979}. 
% citation: GJ, Problem SP2. 
  We construct an equivalent instance $I$ of SMC as follows.

  The set $\mathcal{W}$ of women in $I$ contains the women $s_j$, $p_j^1$, $p_j^2$, $p_j^3$, and $q_j$ for each $j \in \{ 1,\hdots,m\}$, women~$x$ and $y$, as well as two women $a_{i,j},b_{i,j}$ for each element $u_i$ contained in $S_j$ for each $j \in \{ 1,\hdots,m\}$. 
  The men defined in $I$ are $\hat{p}^1_j$, $\hat{p}^2_j$, $\hat{p}^3_j$, $\hat{q}_j$, and $t_j$ for each $j \in \{ 1,\hdots,m\}$, a man~$c_i$ for each $u_i \in U$, a man $\hat{b}_{i,j}$ for each element $u_i$ contained in~$S_j$ for each $j \in \{ 1,\hdots,m\}$, plus one additional man~$\hat{y}$.
  (The pairs $\{w,\hat{w}\}$ form a stable matching in $I$.)
  The only distinguished woman in $I$ is $x$, and the set of distinguished men is $\mathcal{M}^\star = \{c_i \mid i = 1,\hdots,n\} \cup \{t_j \mid j=1,\hdots,m\}$.
  The preferences of each person are as shown in Table~\ref{table-prefs-smc2}.
  \begin{table}
    \renewcommand{\arraystretch}{1.3}
    \begin{tabular}{lll} 
      $L(x)$ & $= (\hat{y}),$ & \\
      $L(y)$ & $= (\hat{y}),$ & \\
      $L(s_j)$ & $= (\hat{p}^3_j,\hat{y}),$ & \\
      $L(p^1_j)$ & $= (\hat{p}^1_j,t_j),$ & \\
      $L(p^h_j)$ & $= (\hat{p}^{h-1}_j,\hat{p}^h_j)$ & \textrm{ for $h \in \{2,3\}$,} \\
      %  L(p^3_j)&= (\hat{p}^2_j,\hat{p}^3_j), & \\
      $L(q_j)$ & $= (\hat{q}_j,t_j),$ & \\
      $L(a_{i,j})$ & $= (\hat{b}_{i,j},\hat{p}^h_j)$ & \textrm{ for the unique $h \in \{1,2,3\}$ that satisfies $i=\textup{ind}(j,h)$, }\\
            $L(b_{i,j})$ & $= (\hat{b}_{i,j},c_i),$ & \\
            %%%%%%%% men:
            $L(\hat{y})$ & $= (y,s_1,s_2, \dots, s_m,x),$ &\\
            $L(t_j)$ & $= (p^1_j,q_j),$ & \\
            $L(\hat{p}^h_j)$ & $= (p^h_j,a_{i,j},p^{h+1}_j)$ & \textrm{ for $h \in \{1,2\}$ and $i=\textup{ind}(j,h)$,} \\
            $L(\hat{p}^3_j)$ & $= (p^3_j,a_{i,j},s_j)$ & \textrm{ where $i=\textup{ind}(j,3)$,} \\
            $L(\hat{q}_j)$ & $= (q_j),$ & \\
            $L(\hat{b}_{i,j})$ & $= (b_{i,j},a_{i,j}),$ & \\
            $L(c_i)$ & $= ([B_i])$ & \textrm{ where $B_i=\{b_{i,j} \mid u_i \in S_j\}$ and $[B_i]$ is some fixed ordering of $B_i$.} \\
        \end{tabular}
    \caption{Preference lists of women and men in the proof of Theorem~\ref{thm:smc2-npc-1woman}.
    We denote by $\textup{ind}(j,h)$ the index~$i$ for which~$u_i$ is the $h$-th element in $S_j$. 
    When not stated otherwise, indices take all possible values.\label{table-prefs-smc2}}
  \end{table}
  Note that since each subset $S_j$ contains three elements, and each element $u_i$ is contained in at most three subsets  from~$\mathcal{S}$, we get that all men except for $\hat{y}$ have a preference list of length at most~3, as promised.
  To finish the construction, we set the number of allowed blocking pairs to be $b=2m+2n/3+1$. 

  We claim that $I$ admits a feasible matching with at most $b$ blocking pairs if and only if $(U,\mathcal{S})$ is a ``yes''-instance of \textsc{Exact-3-Cover}. 

  ``$\Rightarrow$'':
  Suppose that $M$ is a feasible matching for $I$ with at most $b$ blocking pairs.
  Clearly, as $x$ is distinguished,~$M$ must contain the edge $\{x, \hat{y}\}$.
  Thus, $\{ y,\hat{y}\}$ is blocking in $M$.
  Second, since~$t_j$ is distinguished for each $j \in \{1, \dots, m\}$, we get that $M$ matches $t_j$ either to~$q_j$ or to $p^1_j$, which in turn implies that either $\{q_j, \hat{q}_j\}$ or $\{p^1_j, \hat{p}^1_j\}$ blocks $M$, leading to $m$ additional blocking pairs for~$M$. 
  Third, consider now any man~$c_i$, $i \in \{1, \dots, n\}$: as~$c_i$ is distinguished, we know $M(c_i)=b_{i,j}$ for some $j$ such that $S_j$ contains $u_i$.
  In this case, $\{b_{i,j}, \hat{b}_{i,j}\}$ is also a blocking pair for $M$, yielding~$n$ blocking pairs of such form.
  Thus, if $b_U$ denotes the number of blocking pairs among the edges $\{b_{i,j}, \hat{b}_{i,j}\}$ for indices $i$ and~$j$ with $u_i \in S_j$, then we get $b_U\geq n$. 
  This adds up to $m+b_U+1 \geq m+n+1$  blocking pairs so far.

  Let us define the set $E_j$ of those edges that are incident to $s_j$, $\hat{p}^3_j$,~$\hat{p}^2_j$, or~$\hat{p}^1_j$, but not to $p^1_j$ for some $j \in \{1, \dots, m\}$ ; note that these sets are pairwise disjoint, and none of them contains any of the (possibly) blocking edges mentioned in the previous paragraph.
  % Thus, there can be at most $b-(n+m+1)=m-n/3$ indices $j$ for which $E_j$ contains a blocking pair. 
  % It follows immediately that there must exist at least $n/3$ 
  Let $k$ be the number of indices $j$ for which~$E_j$ contains no blocking pairs for $M$; we call such indices (and the subsets~$S_j$ corresponding to them) \emph{selected}.
  The $m-k$ non-selected indices clearly correspond to at least $m-k$ blocking pairs for~$M$ (each contained in~$E_j$ for some~$j$).

  Suppose now that $j$ is selected.
  Then, since $\{s_j,\hat{y}\}$ is not blocking, we get $M(s_j)=\hat{p}^3_j$, since $\hat{y}$ prefers $s_j$ to its partner $x$.
  This implies that $M(p^3_j)=\hat{p}^2_j$, as otherwise $\{p^3_j, \hat{p}^3_j\}$ would be blocking in~$M$. 
  Similarly, from this we obtain $M(p^2_j)=\hat{p}^1_j$. 
  Moreover, for each $h \in \{1,2,3\}$, to ensure that $\{\hat{p}^h_j,a_{i,j}\}$ does not block $M$ where $u_i$ is the $h$-th  element of $S_j$, we must have $M(a_{i,j})=\hat{b}_{i,j}$. 
  Hence, $\{b_{i,j},\hat{b}_{i,j}\}$ must be blocking in $M$.
  Since this holds for each $h \in \{1,2,3\}$ and each selected $j$, we get $b_U \geq 3k$. 

  Summing up the blocking pairs identified so far, we know that $M$ admits at least $1+m+(m-k)+\max(n,3k)$ blocking pairs. 
  Using that this must be upper-bounded by $b=1+m+n+(m-n/3)$, it is easy to show that only $k=n/3$ is possible.
  This yields that there exist exactly $n/3$ selected indices, and for all such indices $j$ all the edges 
  $\{b_{i,j},\hat{b}_{i,j}\}$ for which $u_i \in S_j$ are blocking with respect to $M$.
  Moreover, we also must have $b_U=n$, as otherwise the number of blocking pairs would exceed $b$.

  However, observe that for each $i \in \{1, \dots, n\}$, there must exist some $j$ with $u_i \in S_j$ for which the pair $\{b_{i,j},\hat{b}_{i,j}\}$ is blocking in $M$ (because $c_i$ is distinguished), implying that for each $u_i \in U$ there must exist some selected~$S_j$ that contains $u_i$.    
  Since there are exactly $n/3$ selected sets in $\mathcal{S}$, we get that they form an exact covering of~$U$.

  ``$\Leftarrow$'': 
  Suppose that $(U,\mathcal{S})$ is a ``yes''-instance of \textsc{Exact-3-Cover}.
  Let $J$ be the set of indices describing a solution, meaning that the subsets $S_j \in \mathcal{S}$ with $j \in J$ form an exact  covering of~$U$; clearly, $|J|=n/3$.
  We define $\sigma(i)$ as the unique index $j$ in $J$ for which $u_i \in S_j$. 
  We define a feasible matching~$M$ for $I$ with exactly $b$ blocking pairs as follows (indices take all possible values, if not stated otherwise).
  \begin{equation*}
    \begin{array}{lll} 
      M(\hat{y}) &= x, & \\  
      M(t_j) &= p^1_j & \textrm{ if $j \in J$,} \\
      M(t_j) &= q_j  & \textrm{ if $j \notin J$,} \\
      M(\hat{p}^h_j) &= p^{h+1}_j & \textrm{ if $j \in J$, $h \in \{1,2\}$,} \\      
      M(\hat{p}^3_j) &= s_j   & \textrm{ if $j \in J$,} \\      
      M(c_i) &= b_{i,\sigma(i)}, & \\
      M(\hat{b}_{i,\sigma(i)}) &= a_{i,\sigma(i)}, & \\      
      M(\hat{w}) &= w  & \textrm{ if $w \in \mathcal{W}$ and neither $M(w)$ nor $M(\hat{w})$ is defined yet.} 
    \end{array}
  \end{equation*}
  It is easy to check that $M$ indeed is feasible, and the blocking pairs it admits are exactly the pairs $\{ y, \hat{y}\}$, 
  $\{ b_{i,\sigma(i)}, \hat{b}_{i,\sigma(i)}\}$ for each $i \in \{1,\dots, n\}$,
  $\{ p^1_j, \hat{p}^1_j\}$ for each $j \in J$, 
  $\{ q_j, \hat{q}_j\}$ for each $j \notin J$, and
  $\{ s_j, \hat{y}\}$ for each $j \notin J$.
  This proves the lemma. 
\end{proof}

Contrasting Theorem~\ref{thm:smc2-npc-1woman}, we establish fixed-parameter tractability of the case $\Delta_{\mathcal{W}} \leq 2$ with three different parameterizations.
Considering our five parameters, the relevant cases (whose tractability or intractability does not follow from our results obtained so far) are as follows, assuming $\Delta_{\mathcal W} \leq 2$ throughout. 
Since letting the number $b$ of blocking pairs and
the number $|\mathcal{W}^{\star}|$ of distinguished women to be unbounded 
(while assuming $\Delta_{\mathcal{W}} \leq 2$) results in $\mathsf{NP}$-hardness by Theorem~\ref{thm-case23}, in order to obtain fixed-parameter tractability, we need to take either $b$ or $|\mathcal{W}^{\star}|$ as a parameter. 
However, taking \emph{only} $|\mathcal{W}^{\star}|$ as a parameter is not likely to result in tractability, as the case $|\mathcal{W}^{\star}|=1$ is still $\mathsf{NP}$-hard by Theorem~\ref{thm:smc2-npc-1woman}.
Thus we need to take either $|\mathcal{M}^{\star}|$ or $\Delta_{\mathcal{M}}$ as an additional parameter.
Altogether, this results in the following parameterizations of the {\sc SMC} problem, each case subject to the assumption $\Delta_{\mathcal{W}} \leq 2$: 
\begin{itemize}
  \item taking $b$ as the parameter,
  \item taking $|\mathcal{W}^{\star}|+|\mathcal{M}^{\star}|$ as the parameter, and
  \item taking $|\mathcal{W}^{\star}|+\Delta_{\mathcal{M}}$ as the parameter.
\end{itemize}

%Second, we can consider either the number of blocking pairs or the maximum length of men's preference list $\Delta_{\mathcal{M}}$ as the parameter. 
We show fixed-parameter tractability for each three of these parameterizations.
The first two parameterizations can be dealt with an algorithm whose properties are stated in Theorem~\ref{thm:fpt-case2x} (see also Corollary~\ref{thm:fpt-case2xCor}), while the third parameterization will be considered by Theorem~\ref{thm:fpt-case2x-extended}.

\begin{theorem}%[$\star$]
\label{thm:fpt-case2x}
  There is a fixed-parameter algorithm for the special case of {\sc SMC} where each woman finds at most two men acceptable (i.e., $\Delta_{\mathcal{W}} \leq 2$), parameterized by the number~$|\mathcal{W}^{\star}_0|+|\mathcal{M}^{\star}_0|$ of distinguished men and women left unmatched by some stable matching.
\end{theorem}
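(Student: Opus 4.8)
The plan is to start by computing a stable matching $M_s$ (possible in polynomial time by Gale--Shapley~\cite{GaleShapley1962}), and then to repair its infeasibility by augmenting along carefully chosen paths. For any feasible matching $M$, the symmetric difference $M_s\triangle M$ is a disjoint union of alternating paths and even cycles, and the components that meet $\mathcal{W}^\star_0\cup\mathcal{M}^\star_0$ are augmenting paths; each such path newly covers one, or at most two, of these distinguished persons, so there are at most $|\mathcal{W}^\star_0|+|\mathcal{M}^\star_0|$ of them. Using the structure of $\Delta_{\mathcal W}\le 2$, I would argue that there is an optimal feasible matching of the form $M_s\triangle\mathcal{P}$, where $\mathcal{P}$ is a family of pairwise (almost) vertex-disjoint augmenting paths, each meeting $\mathcal{W}^\star_0\cup\mathcal{M}^\star_0$, completed by a stable matching on the vertices untouched by $\mathcal{P}$; since $M_s$ is stable this completion creates no blocking pair among untouched vertices, and each path of $\mathcal{P}$ must carry at least one blocking edge. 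The task thus reduces to choosing $\mathcal{P}$ so as to minimise the number of blocking pairs of $M_s\triangle\mathcal{P}$, counting also the ``cross'' blocking pairs between a path of $\mathcal{P}$ and the stably matched remainder.

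The next step is to exploit $\Delta_{\mathcal W}\le 2$ via (the mirror image of) Proposition~\ref{prop-augmenting-paths}: once an augmenting path enters a woman, it is forced to leave her along her unique remaining acceptable edge, so an augmenting path starting at a distinguished woman $w\in\mathcal{W}^\star_0$ is determined, up to the choice of $w$'s first partner and up to truncation, by at most two trajectories. I would therefore branch into at most $2^{|\mathcal{W}^\star_0|}$ subcases, each fixing the ``direction'' (initial partner) of the augmenting path used to cover every $w\in\mathcal{W}^\star_0$; this pins down each of these trajectories and leaves only their lengths to be decided. In a fixed subcase, the distinguished persons still needing coverage all lie in $\mathcal{M}^\star_0$, and the augmenting paths covering them are precisely augmenting paths starting at a distinguished man in an instance with $\Delta_{\mathcal W}\le 2$ --- which, after interchanging the roles of men and women, is solved in polynomial time by Theorem~\ref{thm:restricted-matching-poly}.

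Finally I would merge the two ingredients inside each subcase by building a single edge-weighted auxiliary graph in the spirit of the graph $G_{\textup{path}}$ from the proof of Theorem~\ref{thm:restricted-matching-poly}: it contains one vertex for each distinguished person still to be covered together with suitable endpoint- and dummy-vertices for the possible path ends, an edge whose weight is the minimum (special) cost over all augmenting paths realising a given pair of endpoints --- where for every $w\in\mathcal{W}^\star_0$ the candidate paths range over all admissible truncations of its already-fixed trajectory --- and a minimum-weight matching (or flow) in this graph, which by the near-disjointness of augmenting paths (Proposition~\ref{prop-augmenting-paths} and its mirror) translates back into an optimal vertex-disjoint family $\mathcal{P}$; the algorithm then returns the best solution found over all $2^{|\mathcal{W}^\star_0|}$ subcases, giving a running time of the form $f(|\mathcal{W}^\star_0|+|\mathcal{M}^\star_0|)\cdot|I|^{O(1)}$. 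The hard part will be the bookkeeping of blocking pairs that straddle a fixed women-path and the stably matched remainder: I expect to show that each such path is incident to only a constant number of potentially-blocking ``cross'' edges (essentially the edge to the other acceptable man of its starting woman and the special edge at its end), so that the $O(|\mathcal{W}^\star_0|)$ of them whose status depends on the residual solution can be resolved by a further $2^{O(|\mathcal{W}^\star_0|)}$-fold guess, or equivalently by deleting the corresponding edges before invoking the residual algorithm, without breaking the overall fixed-parameter running time.
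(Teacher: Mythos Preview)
Your high-level strategy---branch on the feminine side, then invoke the polynomial-time algorithm of Theorem~\ref{thm:restricted-matching-poly} on the masculine side---is indeed the backbone of the paper's proof. However, two of your structural claims are not correct, and each corresponds to a substantial piece of machinery that the paper has to develop.

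First, your assertion that an optimal feasible matching has the form $M_s\triangle\mathcal P$ with $\mathcal P$ a family of augmenting paths is false: $M_s\triangle M^{\textup{opt}}$ may contain \emph{cycles}. A feminine path can end at a woman whose only non-blocking continuation requires a man $m$ with $M^{\textup{opt}}(m)$ preferred to $M_s(m)$, and the component through $m$ may close up into a cycle. The paper proves (Lemma~\ref{lem:non-aug-paths-case2x}) that every non-cycle component is augmenting, but cycles survive, and feminine paths can \emph{rely} on them via type~B dependent edges; this is why the paper has an explicit Step~2 that guesses, for each $w\in\mathcal W^\star_0$, whether $P^{\textup{opt}}_w$ relies on a cycle, and then reconstructs that cycle. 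Your framework has no place for this.

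Second, and more seriously, your claim that each fixed feminine path is incident to only a constant number of potentially-blocking cross edges is wrong. Along the \emph{interior} of a feminine path, every man $m$ may be adjacent to a volatile edge $\{m,w'\}$ with $w'\in\mathcal W_0$ her second choice and $m$ preferring $w'$ to his post-augmentation partner; there can be arbitrarily many such edges on a single path, so you cannot guess their status within an $f(|\mathcal W^\star_0|)$ budget. The paper's resolution is quite different: it shows (Lemma~\ref{lem:novolatile}) that in $M^{\textup{opt}}$ no volatile edge incident to a non-feminine masculine path is blocking, and then, instead of guessing \emph{which} volatile edges are eliminated, it guesses the \emph{elimination sets}---the partition of $\mathcal M^\star_0$ according to which masculine paths jointly eliminate a common volatile edge. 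This is a guess over set-partitions of $\mathcal M^\star_0$, bounded by a function of $|\mathcal M^\star_0|$, and only after fixing it can one safely invoke Theorem~\ref{thm:restricted-matching-poly} on the residual men $\mathcal M_r$. Relatedly, the feminine paths themselves are not determined merely by their first edge and a freely chosen truncation length: feminine paths may rely on \emph{other feminine paths} (type~B edges), forcing ``obligatory men'' onto the latter; the paper guesses these dependencies (Step~4) and proves (Lemma~\ref{lem:case2x-optimalpaths}) that each remaining feminine path equals one of only two canonical candidates $Q_w^1,Q_w^2$. Your single auxiliary min-cost matching cannot encode these inter-path dependencies.
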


Let $M^{\textup{opt}}$ denote an optimal solution for our instance $I$ such that~$M_s \triangle M^{\textup{opt}}$ contains the minimum number of edges; recall that~$M_s$ is a fixed stable matching for~$I$. 
Without loss of generality, we can further assume that there does not exist another optimal solution~$M'$ such that (i) $M'$ has the same number of common edges with $M_s$ as $M^{\textup{opt}}$, and (ii) for each man~$m$, either $M'(m)=M^{\textup{opt}}(m)$ or $m$ prefers $M'(m)$ to $M^{\textup{opt}}(m)$.\footnote{
  Note that there may be optimal solutions that are incomparable to $M^{\textup{opt}}$ in the sense that they are preferred by some of the men while not preferred by some other men, but the existence of such a matching is of no importance to us: we can pick any optimal matching that fulfils our requirement above.
  } Indeed, as long as such a ``superior'' matching $M'$ exists, we can simply pick that instead of~$M^{\textup{opt}}$, until this is no longer possible. This way, we eventually end up with an optimal matching that satisfies our requirement.
  
We denote by~$b$ the number of blocking pairs in~$M^{\textup{opt}}$. 

\medskip
{\bf High-level description.}
%Before giving a high-level description of our algorithm, 
Let us remark first that simply guessing the optimal partners for each woman in $\mathcal{W}^{\star}_0$ and then using the polynomial-time algorithm presented in Section~\ref{sect:polycase} (after exchanging the roles of men and women) does not work, since that algorithm heavily relies on the assumption that we start with a stable matching. 
%we do not see an easy way to minimize the number of blocking pairs after guessing the partners for all men in $\mathcal{M}^{\star}_0$. 
% high-level description 
In fact, the main difficulty to overcome is that feminine and masculine augmenting paths may ``interact'' in the sense that certain blocking pairs introduced by a feminine augmenting path can be ``eliminated'' 
(i.e., made non-blocking again) by an appropriately chosen masculine path.
Therefore, we apply the following strategy.
In Phase I, we find all feminine paths (as well as all cycles) in $M_s \triangle M^{\textup{opt}}$, and in Phase II we proceed with choosing the masculine paths carefully. 
%in order to balance the need for eliminating blocking pairs 
%introduced in Phase I and for introducing as few blocking pairs 
%in the meantime as possible. 
Note that in Phase I it does not suffice to find a cheapest set of feminine augmenting paths, since we may not be able to eliminate as many blocking pairs afterwards as it is possible after an optimal choice of feminine paths. 
Instead, we need to find the exact feminine augmenting paths (and cycles) present in $M_s \triangle M^{\textup{opt}}$; this can be accomplished by guessing certain properties of~$M^{\textup{opt}}$.

In Phase II, the main obstacle is that we do not know which blocking edges should be eliminated in an optimal solution, nor can we guess these edges efficiently. 
%(since the number of blocking edges that we encounter at the end of Phase I can be unbounded). 
We deal with this problem by guessing the sets of those men in $\mathcal{M}^{\star}_0$ whose augmenting paths in $M_s \triangle M^{\textup{opt}}$ contribute to the elimination of a blocking pair; 
this information allows us to find these masculine paths.
Finally,
%by finding a masculine path for each remaining  unmatched distinguished man, minimizing the total number of blocking pairs, 
we apply the algorithm of Theorem~\ref{thm:restricted-matching-poly}. 

\smallskip
%detailed description
For the detailed description of our algorithm, we need a couple of simple observations and some additional notation.
We begin with the following implications of the fact that each woman finds at most two men acceptable. 

\begin{proposition} 
\label{prop:2xpaths}
  Suppose $\Delta_{\mathcal{W}} \leq 2$. Let $P_1$ and $P_2$ be two augmenting paths.
  \begin{enumerate}
    \item[(a)] If $P_1$ and $P_2$ start at some $w \in \mathcal{W}^{\star}_0$ through the same edge, then one of them is a subpath of the other.
    \item[(b)] If $P_1$ and $P_2$ start at different women $w_1$ and $w_2$, respectively, and $P_1$ and $P_2$ are not disjoint, then the set of their common vertices induces a suffix of either $P_1$ or $P_2$ (or both); their first common vertex is a man.
    \item[(c)] If $P_1$ and $P_2$ are disjoint and $e$ is an edge incident to both, then one of the paths starts or ends at a women $w$, and $e$ connects $w$ with a man on the other path. 	
  \end{enumerate}
\end{proposition}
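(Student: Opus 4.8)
The plan is to push everything through the single structural feature guaranteed by $\Delta_{\mathcal W}\le 2$: an augmenting path is essentially \emph{rigid} at women, and a short parity count pins down its endpoints. Concretely, I would first record two observations. \emph{(Rigidity.)} If an augmenting path $P$ reaches a woman $w$ along an edge $e$, then since $|L(w)|\le 2$ the woman $w$ has at most one further incident edge $e'$ in $G$; hence $P$ either ends at $w$ or continues from $w$ through $e'$, and no other possibility exists. Dually, if $P$ reaches a \emph{matched} man $m$ along a non-$M_s$ edge, it cannot end at $m$ (that would leave $m$ matched twice in $M_s\triangle P$), so it must continue through $\{m,M_s(m)\}$; and at an unmatched man it must stop. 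So once the first edge of $P$ at a woman $w\in\mathcal W^\star_0$ is fixed, the whole vertex sequence of $P$ is forced, the only freedom being to terminate at some matched woman that $P$ passes through. \emph{(Parity.)} If $P$ starts at $w\in\mathcal W^\star_0$, then women sit at the even positions and men at the odd positions of $P$, and the edge leaving position $i$ lies in $M_s$ exactly when $i$ is odd; hence the far endpoint of $P$ is either a man entered along a non-$M_s$ edge, which is therefore $M_s$-unmatched, or a woman entered along an $M_s$ edge, which is therefore $M_s$-matched. In particular no $M_s$-unmatched woman other than $w$ lies on $P$, so in parts (b)--(c) we have $w_1\notin V(P_2)$ and $w_2\notin V(P_1)$.

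For part (a): since $w$ is $M_s$-unmatched and $\{w,m\}\in E(G)$, stability of $M_s$ forces $m$ to be matched, so by Rigidity both $P_1$ and $P_2$ trace the same forced walk out of $w$ through $\{w,m\}$, each stopping at some matched woman of that walk or at the unmatched man the walk eventually hits. Thus the vertex sequence of the shorter of $P_1,P_2$ is a prefix of that of the longer, i.e.\ one is a subpath of the other. For part (c): write $e=\{a,b\}$ with $a\in V(P_1)$, $b\in V(P_2)$; exactly one of $a,b$ is a woman because $G$ is bipartite, say $a$. Since $b\notin V(P_1)$ we have $e\notin E(P_1)$, so the $P_1$-edges at $a$ together with $e$ are pairwise distinct edges of $G$ at $a$; as $a$ has at most two incident edges, $a$ has exactly one $P_1$-edge and is therefore an endpoint of $P_1$, while $b$ is the man-endpoint of $e$ lying on $P_2$ — which is the assertion (with $P_1,P_2$ interchanged in the symmetric case).

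Part (b) is where the real work lies. Traverse $P_1$ from $w_1$ and let $v$ be the first vertex that also lies on $P_2$ (it exists since the paths are not disjoint, and $v\notin\{w_1,w_2\}$ by the parity remark). Let $e$ be the edge of $P_1$ entering $v$; its other endpoint precedes $v$ on $P_1$, hence is not on $P_2$, so $e\notin E(P_2)$. If $v$ were a woman, it would be $M_s$-matched (reached along an $M_s$ edge), it would have only one incident edge other than $e$, and since $v\in V(P_2)$ it would have to be an endpoint of $P_2$ whose $P_2$-edge is its $M_s$-edge $\{v,M_s(v)\}$; but then $M_s(v)\in V(P_2)$, and also $M_s(v)\in V(P_1)$ since by Rigidity $M_s(v)$ succeeds $v$ on $P_1$. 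Now $M_s(v)$ is a common vertex, and from it the two paths are again forced to run together (they share the $M_s$-edge at each man they pass, and at each intervening woman the continuation is unique), splitting only at some man or when one of them terminates; the same conclusion holds when $v$ is a man. Either way the common vertices form one contiguous block which is a suffix of $P_1$ or of $P_2$ (or of both), and the vertex of that block nearest the relevant starting woman — the ``first common vertex'' — is a man, because two forced parallel runs can branch apart only at a man.

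The main obstacle I anticipate is exactly the bookkeeping in part (b): verifying that once two feminine paths merge they stay merged, and that they can separate again only at a man and only once, so that the overlap is a single contiguous suffix rather than several disjoint arcs. Everything else reduces to the degree-two bound on women and the endpoint parity count, which are routine.
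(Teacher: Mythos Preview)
The paper does not give a proof of this proposition; it simply introduces it as one of the ``implications of the fact that each woman finds at most two men acceptable'' that are ``easy to prove.'' Your approach---rigidity at women from $\Delta_{\mathcal W}\le 2$, forced continuation through the $M_s$-edge at a matched man reached via a non-$M_s$ edge, and the parity bookkeeping that pins down what kind of vertex can occur at each position---is exactly the right framework, and parts (a) and (c) go through as you wrote them.

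There is, however, a genuine slip in your treatment of the sub-case ``$v$ is a woman'' in part (b). You correctly observe that the $P_1$-edge $e$ entering $v$ is the $M_s$-edge $\{v,M_s(v)\}$ (by parity, women at even positions $\ge 2$ are entered along $M_s$-edges), so the predecessor of $v$ on $P_1$ is $M_s(v)$. But then you claim that the unique $P_2$-edge at $v$ is $\{v,M_s(v)\}$; this cannot be, since $\{v,M_s(v)\}=e\notin E(P_2)$. You also assert that ``$M_s(v)$ succeeds $v$ on $P_1$,'' whereas $M_s(v)$ \emph{precedes} $v$ there. So the chain leading to ``$M_s(v)\in V(P_1)\cap V(P_2)$'' does not hold.

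The fix is that this sub-case is in fact vacuous: if $v$ were a woman, then (since $v\neq w_2$) by parity on $P_2$ the $P_2$-edge entering $v$ would also have to be the $M_s$-edge $\{v,M_s(v)\}=e$, contradicting $e\notin E(P_2)$. Hence the first common vertex $v$ is necessarily a man. From there your forced-continuation argument is correct: both paths leave $v$ through $\{v,M_s(v)\}$, at each subsequent woman the continuation is unique, at each subsequent matched man it is the $M_s$-edge, so the two paths coincide from $v$ until one of them stops; this yields a single contiguous block that is a suffix of whichever path stops first. Your closing remark that ``two forced parallel runs can branch apart only at a man'' already encodes this correct conclusion---so the error is local and easily repaired rather than structural.
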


Observations (a) and (b) of Proposition~\ref{prop:2xpaths} immediately suggest that feminine paths are easy to find, since once we decide which edge to start with, all  possible augmenting paths lead in the same direction; the only difficulty arises in deciding when to stop. Observation (c) describes the limited ways in which two augmenting path can interact; we next look closer at such interactions.

%Before describing the algorithm that proves Theorem~\ref{thm:fpt-case2x}, we introduce additional notation and make a couple of simple observations. 
%To this end, 

We say that an edge $f=\{m,w\}$ of $G$ (with $m \in \mathcal{M}$ and $w \in \mathcal{W}$) 
is \emph{dependent} if it connects two different connected components $K_1$ and $K_2$ of $M_s \triangle M^{\textup{opt}}$ and, in addition, it holds that~$M_s \triangle K_1$ admits more blocking pairs than $M_s \triangle (K_1 \cup K_2)$.
We will say that $f$, and with a slight abuse of the notation, also $K_1$ \emph{relies} on~$K_2$.
We say that 
%By claim (c) of Proposition~\ref{prop:2xpaths}, this is only possible in the following two scenarios, depicted in Figure~\ref{fig:dependentedges}:
\begin{itemize} 
  \item $f$ has \emph{type A}, if $w$ is the endpoint of $K_2$ (which is a path), $f$ connects~$w$ with a man~$m$ on $K_1$ that prefers $M_s(m)$ to $w$, and $w$ to $M^{\textup{opt}}(m)$, and~$w$ is unmatched by $M_s$ and prefers $M^{\textup{opt}}(w)$ to $m$;
  \item $f$ has \emph{type B}, if $w$ is the endpoint of $K_1$ (which is a path), unmatched by $M^{\textup{opt}}$, and~$f$ connects~$w$ with a man $m$ on $K_2$ that prefers $M^{\textup{opt}}(m)$ to~$w$, and $w$ to $M_s(m)$. 
\end{itemize}
See Figure~\ref{fig:dependentedges} for an illustration of the above definitions.

\begin{figure}[thpb]
    \tikzset{rdvertex/.style={minimum size=2mm,circle,fill=white,draw, inner sep=0pt},
        sqvertex/.style={minimum size=2mm,diamond,fill=black,draw, inner sep=0pt},
        opvertex/.style={minimum size=2mm,diamond,fill=white,draw, inner sep=0pt},
        decoration={markings,mark=at position .5 with {\arrow[black,thick]{stealth}}}}
        \centering
        \begin{tikzpicture}
        \def\h{0.8}
        \node (typea) at (\h,\h){type A:};
        \node (k1) at (0,0){$K_1:$};
        \node (k2) at (0,-1) {$K_2:$};
        \node (a1) at (\h,0){$\hdots$};
        \node (a2) at (2*\h,0)[opvertex]{};
        \node (a3) at (2*\h+1,0)[rdvertex,label=above:$m$]{};
        \node (a4) at (2*\h+2,0)[opvertex]{};
        \node (a5) at (2*\h+1,-1)[opvertex,label=below:$w$]{};
        \node (a5-1) at (2*\h+1-0.2,-0.8){\footnotesize $2$};
        \node (a5-2) at (2*\h+1+0.2,-1.2){\footnotesize $1$};
        \node (a6) at (2*\h+2,-1)[rdvertex]{};
        \node (a7) at (2*\h+3,-1)[opvertex]{};
        \node (a8) at (3*\h+3,-1){$\hdots$};
        \node (a9) at (3*\h+2,0){$\hdots$};
        \node (af) at (2*\h+1.15,-0.5){$f$};
        \draw[double] (a1)--(a2);
        \draw (a2)--(a3);
        \draw[double] (a3)--(a4);
        \draw (a4)--(a9);
        \draw[dashed] (a3)--(a5);
        \draw (a5)--(a6);
        \draw[double] (a6)--(a7);
        \draw (a7)--(a8);
        \path (2*\h+0.70,-0.05) edge[bend right,->] node [left] {} (2*\h+0.95,-0.25);
        \path (2*\h+1.05,-0.25) edge[bend right,->] node [left] {} (2*\h+1.30,-0.05);
        %%%
        \draw (5.8,1.25)--(5.8,-1.5);
        %%%
        \def\s{6.3}
        \node (typeb) at (\s+\h,1){type B:};
        \node (k1) at (\s,0){$K_1:$};
        \node (k2) at (\s,-1){$K_2:$};
        \node (b1) at (\s+\h,0){$\hdots$};
        \node (b2) at (\s+2*\h,0)[opvertex]{};
        \node (b3) at (\s+2*\h+1,0)[rdvertex]{};
        \node (b4) at (\s+2*\h+2,0)[opvertex,label=right:$w$]{};
        \node (b4-1) at (\s+2*\h+1.8,0.2){\footnotesize $1$};
        \node (b4-2) at (\s+2*\h+2.2,-0.2){\footnotesize $2$};
        \node (b5) at (\s+2*\h+2,-1)[rdvertex,label=below:$m$]{};
        \node (b6) at (\s+2*\h+1,-1)[opvertex]{};
        \node (b7) at (\s+2*\h+3,-1)[opvertex]{};
        \node (b8) at (\s+2*\h,-1){$\hdots$};
        \node (b9) at (\s+3*\h+3,-1){$\hdots$};
        \node (bf) at (\s+2*\h+1.85,-0.5){$f$};
        \draw[double] (b1)--(b2);
        \draw (b2)--(b3);
        \draw[double] (b3)--(b4);
        \draw[dashed] (b4)--(b5);
        \draw (b5)--(b6);
        \draw[double] (b5)--(b7);
        \draw[double] (b6)--(b8);
        \draw (b7)--(b9);
        \path (\s+2*\h+1.95,-0.75) edge[bend right,->] node [left] {} (\s+2*\h+1.70,-0.95);
        \path (\s+2*\h+2.30,-0.95) edge[bend right,->] node [left] {} (\s+2*\h+2.05,-0.75);
        \end{tikzpicture}
    \caption{Illustration of a dependent edge $f$, running between two connected 
        components~$K_1$ and~$K_2$  of $M_s \triangle M^{\textup{opt}}$ where $K_1$ relies on $K_2$.
        Double lines here %\protect\tikz\protect\draw[double] (0,0.8)--(0.5,0.8); 
        denote edges of~$M_s$, 
        single lines %\protect\tikz\protect\draw (0,0.5)--(0.5,0.5); 
        denote edges of~$M^{\textnormal{opt}}$, 
        and~$f$ is drawn with a dashed line. %\protect\tikz\protect\draw[dashed] (0,0.5)--(0.5,0.5);.
        %K'=K_1, K=K_2
        \label{fig:dependentedges}}
\end{figure}

\begin{lemma}
\label{lemma-type}
% If $K_1$ and $K_2$ are connected components of~$M_s \triangle M^{\textup{opt}}$ such that $M_s \triangle K_1$ admits more blocking pairs than $M_s \triangle (K_1 \cup K_2)$, then $K_1$ and $K_2$ must be connected with a dependent edge (relying on $K_2$) that either has type~A, or has type~B.
  Let $\mathcal K$ be the set of connected components of~$M_s \triangle M^{\textup{opt}}$, and let $H \subset \mathcal{K}$ and $K \in \mathcal{K} \setminus H$.
  Then any edge $f$ that is blocking in $M_s \triangle H$ but is not blocking in $M_s \triangle (H \cup K)$ connects $K$ with a connected component of $H$, is a dependent edge relying on $K$, and has either type~A or type~B.
\end{lemma}
\begin{proof}
  Let $f$ be an edge that blocks $M_s \triangle H$ but does not block $M_s \triangle (H \cup K)$.
  Since $M_s$ is stable, $f$ must have an endpoint in a connected component of $H$, because it blocks $M_s \triangle H$.
  However, as it ceases to be blocking in $M_s \triangle (H \cup K)$, it also must have an endpoint in $K$.
  Let $w$ and $m$ be the woman and the man connected by $f$, respectively.
  We distinguish now between two cases.

  First, let us assume that $w$ is contained in $K$.
  Since $m$ is not contained in~$K$, and $w$ can only be connected to two men, it follows that the degree of~$w$ in $K$ is at most (and hence exactly) 1.
  Since each connected component of $M_s \triangle M^{\textup{opt}}$ is either a cycle or a path, this implies that $K$ is a path with $w$ being one of its endpoints.

  Since $f$ blocks $M_s \triangle H$ and the partner of $m$ under $M_s \triangle H$  is $M^{\textup{opt}}(m)$, we know that $m$ prefers $w$ to $M^{\textup{opt}}(m)$. 
  But as $M_s$ is stable, $m$ must prefer $M_s(m)$ to $w$. 

  Notice now that $w$ is unmatched either in $M_s$ or in $M^{\textup{opt}}$.
  However, as~$f$ blocks $M_s \triangle H$ (where $w$ is matched as in $M_s$) but does not block $M_s \triangle (H \cup K)$ (where $w$ is matched as in $M^{\textup{opt}}$), it must be the case that~$w$ is matched by $M^{\textup{opt}}$ but is unmatched in $M_s$.
  Further, as $f$ is not blocking in $M_s \triangle (H \cup K)$ even though $m$ prefers $w$ to $M^{\textup{opt}}(m)$, we get that $w$ prefers $M^{\textup{opt}}(w)$ to $m$.
  This proves that $f$ has type~A.

  Second, let us assume that $w$ is contained in a connected component $K'$ of $H$.
  As in the previous case, we quickly get that $K'$ must be a path, with $w$ being an endpoint of $K'$.
  Since~$f$ blocks $M_s \triangle H$, we know that $w$ is matched by $M_s$ but is unmatched by $M^{\textup{opt}}$.
  The stability of $M_s$ implies also that $w$ prefers $M_s(w)$ to $m$. 

  Regarding $m$, the fact that $f$ blocks $M_s \triangle H$ implies that $m$ prefers $w$ to~$M_s(m)$.
  However, since $f$ ceases to be blocking in $M_s \triangle (H \cup K)$, we get that~$m$ prefers $M^{\textup{opt}}(m)$ to $w$.
  This proves that $f$ has type~B. 
\end{proof}

%\tikzset{rdvertex/.style={minimum size=2mm,circle,fill=white,draw, inner sep=0pt},
%    sqvertex/.style={minimum size=2mm,diamond,fill=black,draw, inner sep=0pt},
%    opvertex/.style={minimum size=2mm,diamond,fill=white,draw, inner sep=0pt},
%    decoration={markings,mark=at position .5 with {\arrow[black,thick]{stealth}}}}
%\begin{figure}[thpb]

%\medskip
We are now ready to present our algorithm, which is a branching algorithm:
throughout its course, we make several ``guesses'' for which all possibilities have to be explored.
When certain guesses turn out to be trivially wrong, such guesses are discarded, and we might not explicitly mention this in the algorithm.
(In Step 1, we describe such issues in detail for illustration, but later we omit them.)
Phases~I and II consist of Steps 1-5 and Steps 6-8, respectively.

\medskip
\noindent
{\bf Step 1: Guessing the first edges of feminine augmenting paths.}
First, for each $w \in \mathcal{W}^{\star}_0$ with $|L(w)|=2$, we guess the edge of $M^{\textup{opt}}$ incident to~$w$.
This results in at most $2^{|\mathcal{W}^{\star}_0|}$ possibilities, all of which must be explored. 
Naturally, we discard those guesses where the edges $\{w,M^{\textup{opt}}(w)\}$, $w \in \mathcal{W}^{\star}_0$, do not form a matching. 
From now on we assume that we know $M^{\textup{opt}}(w)$ for each $w \in \mathcal{W}^{\star}_0$.

Additionally, we delete those edges $\{m,w\}$ for which $w \in \mathcal{W}^{\star}_0$
and $w$ prefers $M^{\textup{opt}}(w)$ to~$m$. 
Such edges are neither needed in~$M^{\textup{opt}}$, nor can they block any matching that contains all the 
edges $\{w,M^{\textup{opt}}(w)\}$, $w \in \mathcal{W}^{\star}_0$, guessed in this step. 

\medskip

Before proceeding to Step 2, we state an important lemma about augmenting paths.
\begin{lemma}
  \label{lem:non-aug-paths-case2x}
  Each connected component of $M_s \triangle M^{\textup{opt}}$ that is not a cycle is an  augmenting path. 
  Further, assume that Step~1 has already been performed, and~$K_1$ and $K_2$ are connected components of $M_s \triangle M^{\textup{opt}}$ such that~$K_1$ relies on~$K_2$ via a dependent edge $f$.
  Then 
  \begin{itemize}
    \item[(a)] if $f$ has type A, then $K_2$ is a masculine path and not a feminine path;
    \item[(b)] if $f$ has type B, then $K_1$ is a feminine path and not a masculine path, and $K_2$ is either a cycle or a feminine path.
  \end{itemize}
\end{lemma}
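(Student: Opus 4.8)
The plan is to prove the three statements in Lemma~\ref{lem:non-aug-paths-case2x} in order, using the structural properties in Proposition~\ref{prop:2xpaths} together with the minimality of $M_s \triangle M^{\textup{opt}}$ and the stability of $M_s$.

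First I would argue that every maximal path $P$ in $M_s \triangle M^{\textup{opt}}$ is augmenting, i.e.\ that each endpoint $v$ of $P$ either lies in $\mathcal{M}^{\star}_0 \cup \mathcal{W}^{\star}_0$ or is non-distinguished. Suppose $v$ is the endpoint of $P$ and is distinguished but \emph{not} in $\mathcal{M}^{\star}_0 \cup \mathcal{W}^{\star}_0$. Being distinguished but not unmatched by $M_s$ means $v$ is matched by $M_s$; being an endpoint of $P$, the unique $P$-edge at $v$ must be the $M_s$-edge at $v$ (otherwise the $M_s$-edge would extend $P$), so $v$ is unmatched by $M^{\textup{opt}}$. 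But $v$ is distinguished, so $M^{\textup{opt}}$ must match $v$ — contradiction. Hence no such endpoint exists, and $P$ is augmenting. (I would also note that the degree condition forces the two $P$-edges at an internal vertex to be one $M_s$-edge and one $M^{\textup{opt}}$-edge, which is why this case analysis at the endpoint is exhaustive.)

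Next, for part (a): suppose $f=\{m,w\}$ has type A, so $w$ is the endpoint of the path $K_2$, is unmatched by $M_s$, and $f$ connects $w$ to a man $m$ on $K_1$ with $m$ preferring $w$ to $M^{\textup{opt}}(m)$. I claim $K_2$ is a masculine path and not feminine. It is not feminine: if $K_2$ were feminine, it would contain a woman $w' \in \mathcal{W}^{\star}_0$, and, since $w$ is an endpoint of $K_2$ unmatched by $M_s$ with $|L(w)|=2$ (it has the second choice edge to $m$ and the $M^{\textup{opt}}$-edge), $w$ lies in $\mathcal{W}^{\star}_0$ too unless $w$ is non-distinguished; either way, if $K_2$ contained a distinguished woman left unmatched by $M_s$ other than possibly $w$, one can invoke the guessing done in Step~1 — for a woman $w' \in \mathcal{W}^{\star}_0$ with $|L(w')| = 2$ the $M^{\textup{opt}}$-edge at $w'$ was guessed, and edges $\{m',w'\}$ with $w'$ preferring $M^{\textup{opt}}(w')$ to $m'$ were deleted, so no such $w'$ can occur on a path where it would be the \emph{less}-preferred option. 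The cleanest route, which I would take, is: if $K_2$ were feminine (so $w$ itself is in $\mathcal{W}^{\star}_0$, being its natural starting point after Step~1's guesses pin down the first edge), then $w$ is covered by $M^{\textup{opt}}$ and hence the dependent-edge mechanism at $w$ — where $w$ prefers $M^{\textup{opt}}(w)$ to $m$ — is incompatible with $f$ being a type-A dependent edge whose effect is to render $\{m,w\}$ non-blocking in $M_s \triangle (K_1 \cup K_2)$; more carefully, the blocking pair that disappears is counted against $K_1$, and one shows that replacing $K_1 \cup K_2$ by $K_1 \cup (K_2 \text{ truncated at } w) \cup \{f\}$, i.e.\ rerouting, would yield a feasible matching with fewer edges in the symmetric difference, contradicting the minimality of $|M_s \triangle M^{\textup{opt}}|$. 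And $K_2$ is masculine: by Lemma~\ref{lem:non-aug-paths-case2x}'s first sentence $K_2$ is augmenting, so its other endpoint is either in $\mathcal{M}^{\star}_0 \cup \mathcal{W}^{\star}_0$ or non-distinguished; since it is not feminine, if it is not masculine then both endpoints are non-distinguished, making $K_2$ a "free" alternating path whose removal (taking $M_s$ on $K_2$) again reduces the symmetric difference without losing feasibility and without increasing blocking pairs (here stability of $M_s$ and the type-A structure, where $m$ prefers $M_s(m)$ to $w$, guarantee no new blocking pair appears), again contradicting minimality. So $K_2$ is masculine and not feminine.

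For part (b): suppose $f$ has type B, so $w$ is the endpoint of the path $K_1$, unmatched by $M^{\textup{opt}}$, and $f$ joins $w$ to a man $m$ on $K_2$ preferring $M^{\textup{opt}}(m)$ to $w$. First, $K_1$ is feminine: $w$ is an endpoint of an augmenting path with $w$ unmatched by $M^{\textup{opt}}$, so the $P$-edge at $w$ is the $M_s$-edge, meaning $w$ is matched by $M_s$ — hence $w \notin \mathcal{W}^{\star}_0$ and in fact $w$ being distinguished would force $M^{\textup{opt}}$ to match it, so $w$ is non-distinguished. Then, since $K_1$ is augmenting and one endpoint ($w$) is non-distinguished, the \emph{other} endpoint must be in $\mathcal{M}^{\star}_0 \cup \mathcal{W}^{\star}_0$; if it were in $\mathcal{M}^{\star}_0$ only (so $K_1$ masculine, not feminine) then the whole path $K_1$ has no distinguished woman, and one can show — again by stability of $M_s$ and the type-B structure — that deleting $K_1$ (keeping $M_s$ on it) removes the blocking pair $f$ was compensating for and does not create new ones, contradicting minimality; so $K_1$ contains a woman of $\mathcal{W}^{\star}_0$, i.e.\ $K_1$ is feminine. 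Finally, $K_2$ is a cycle or a feminine path: by the first sentence, $K_2$, if a path, is augmenting; $m$ lies on $K_2$ and has degree $2$ there (since $f$ is only incident, not part of $K_2$), so $m$ is internal to $K_2$; now if $K_2$ were a masculine path containing a man $m^\star \in \mathcal{M}^{\star}_0$, consider rerouting $K_2$ at $m$ via $f$ onto the truncation of $K_1$ at $w$: this produces a feasible matching (the distinguished man $m^\star$ stays covered, $w$ stays uncovered as before, everyone on the discarded tail of $K_1$ reverts to $M_s$) and, one checks using the type-B inequalities ($m$ prefers $w$ to $M_s(m)$, $w$ was the less-preferred of $m$'s two choices) that it has no more blocking pairs but strictly fewer symmetric-difference edges, contradicting minimality of $|M_s \triangle M^{\textup{opt}}|$. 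Hence $K_2$ is not a masculine path, so it is either a cycle or a feminine path.

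\medskip

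The main obstacle, I expect, is the last reduction in part (b): arguing that if $K_2$ were a masculine path we could reroute to strictly decrease $|M_s \triangle M^{\textup{opt}}|$ without increasing the number of blocking pairs. This requires carefully tracking which blocking pairs are created or destroyed along the discarded tails and along the new edge $f$, using stability of $M_s$ to rule out new blocking pairs on the reverted segments and the precise preference inequalities from the type-B definition on the spliced segment. The analogous step in part (a) (ruling out a feminine $K_2$, and ruling out a "free" $K_2$) has the same flavor but is slightly easier because type A keeps $m$ on the side where $M_s$ is the better option. I would organize all of these as a single uniform exchange argument: "if a connected component can be locally modified (reverted to $M_s$, or rerouted across a dependent edge) to give a feasible matching with no more blocking pairs and fewer symmetric-difference edges, this contradicts the choice of $M^{\textup{opt}}$" — and then apply it three times with the appropriate local picture.
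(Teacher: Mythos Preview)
Your proof of the first statement has a genuine gap: you misread the definition of \emph{augmenting path}. An augmenting path must have \emph{at least one} endpoint in $\mathcal{M}^{\star}_0 \cup \mathcal{W}^{\star}_0$; it is not enough that every endpoint is ``either in $\mathcal{M}^{\star}_0 \cup \mathcal{W}^{\star}_0$ or non-distinguished''. Your argument correctly rules out endpoints that are distinguished and matched by $M_s$, but it says nothing about the case where \emph{both} endpoints are non-distinguished. That case is the entire content of the paper's proof: one cannot simply revert such a path $Q$ to $M_s$ and argue ``no more blocking pairs'', because other components may rely on $Q$ via type~B dependent edges, so $M^{\textup{opt}} \triangle Q$ can have more blocking pairs than $M^{\textup{opt}}$. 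The paper handles this with a non-trivial rerouting (``joiner'') argument: it picks an outermost man $m$ on $Q$ that is the target of a type~B edge $f$ from some path $P$, replaces $P$ by $P + f + Q_2$ (where $Q_2$ is the tail of $Q$ past $m$), and checks that this loses at most one blocking pair while gaining at most one, with strictly fewer symmetric-difference edges.

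This omission propagates. In part~(a), your argument that $K_2$ must be masculine (``remove $K_2$ and observe no new blocking pair appears'') fails precisely because $K_1$ relies on $K_2$: reverting $K_2$ to $M_s$ leaves $w$ unmatched while $m$ still has $M^{\textup{opt}}(m)$, so $f=\{m,w\}$ becomes blocking. In part~(b), your argument that $K_1$ is feminine (``if $K_1$ were only masculine, delete $K_1$'') breaks feasibility: a masculine $K_1$ would contain a distinguished man $m^\star \in \mathcal{M}^\star_0$, and reverting $K_1$ to $M_s$ leaves $m^\star$ uncovered. The paper's arguments for these two points are in fact much shorter than your exchange sketches: for (a), Step~1 deleted the edge $f$ if $w \in \mathcal{W}^\star_0$, and a parity check shows the other endpoint of $K_2$ cannot lie in $\mathcal{W}_0$; for (b), a parity check along $K_1$ from $w$ (matched by $M_s$, unmatched by $M^{\textup{opt}}$) shows that any man endpoint of $K_1$ is matched by $M_s$, hence not in $\mathcal{M}^\star_0$. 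The only place a genuine exchange argument is needed is the first statement (and its reuse for the ``$K_2$ is feminine or a cycle'' part of (b)), and that is exactly the step you skipped.
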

\begin{proof}
%[Proof of Lemma~\ref{lem:non-aug-paths-case2x}]
  We begin by proving the first sentence of the lemma.
  Let us first suppose that $Q$ is a connected component of $M_s \triangle M^{\textup{opt}}$ that is not a cycle (thus is a path) but is not an augmenting path.
  The feasibility of~$M^{\textup{opt}}$ implies that if $Q$ has a distinguished person~$p$ as its endpoint, then $p$ must be unmatched by~$M_s$.
  This means that~$Q$~can only be non-augmenting if neither of its endpoints is distinguished. 
  This implies that $M_Q:=M^{\textup{opt}} \triangle Q$ is a feasible matching. 
  Recall that~$b$ is the number of blocking pairs~$M^{\textup{opt}}$ admits.
  If $M_Q$ admits at most~$b$ blocking pairs as well, then this contradicts the choice of $M^{\textup{opt}}$, because there are strictly less edges in $M_s \triangle M_Q$ than in~$M_s \triangle M^{\textup{opt}}$.

  Hence, $M^{\textup{opt}} \triangle Q$ admits at least $b+1$ blocking pairs.
  Since $Q$ is a maximal path in $M_s \triangle M^{\textup{opt}}$ and~$M_s$ is stable, Proposition~\ref{prop:alt-paths-blocking-pairs} implies that 
  there must be an edge along $Q$ that blocks $M^{\textup{opt}}$. 
%  It is easy to see that since~$M_s$ is stable and~$Q$ is a path, there must be an edge along $Q$ that blocks~$M^{\textup{opt}}$. 
  Obviously, there is no edge on $Q$ that blocks~$M_s$, since $M_s$ is stable. 
  Hence, modifying~$M^{\textup{opt}}$ by switching the edges of~$M_s$ and~$M^{\textup{opt}}$ along~$Q$ \emph{decreases} the number of blocking edges 
  \emph{among the edges of} $Q$.
  However, the same operation \emph{increases} the \emph{total} number of blocking pairs (from $b$ to at least $b+1$), thus we get that there must exist at least two edges that become blocking only as a result of this switch along $Q$. 
  In other words, there exist at least two edges that are blocking in $M^{\textup{opt}} \triangle Q$ but are not blocking in $M^{\textup{opt}}$.  
  By Lemma~\ref{lemma-type} we obtain that these edges must be dependent edges relying on~$Q$.\footnote{Formally, we need to apply Lemma~\ref{lemma-type} with setting $H$ to contain all connected component of $M_s \triangle M^{\textup{opt}}$ except for $Q$, and setting $K$ as $Q$.}
  By the definition of a type~A edge, at most one type~A edge can rely on $Q$ (as only one endpoint of $Q$ can be a woman unmatched by~$M_s$, and this woman may be incident to at most one edge outside $Q$), so we get that at least one of these dependent edges relying on $Q$ must have type~B.

  Let us call the man endpoint of a type~B dependent edge a \emph{joiner}; 
  by the previous paragraph,~$Q$ contains at least one joiner. 
  Our aim is to fix an ``outer-most'' joiner $m$ on $Q$. 
  However, for technical reasons we also have to take into account a special case where some man $m_\ell$ lying on $Q$, 
  called the \emph{looper} for~$Q$, fulfils the following properties: 
  (i) one endpoint of $Q$ is a woman~$w_\ell$ that is adjacent to~$m_\ell$ in $G$ but prefers $M_s(w_\ell)$ to $m_\ell$, and 
  (ii) $m_\ell$ prefers $w_\ell$ to $M_s(m_\ell)$ but prefers $M^{\textup{opt}}(m_\ell)$ to $w_\ell$. 
  \footnote{Note that there cannot exist two loopers for $Q$, 
  because only one woman endpoint of $Q$ can be matched by $M_s$ and such a woman can only be connected 
  to at most one man on $Q$ other than its partner in $M_s$.}
  Now, we choose a man $m$ that is either a joiner or the looper for $Q$ so that the following holds: if $m$ splits~$Q$ into two subpaths~$Q_1$ and~$Q_2$ with $Q_1$ containing~$M^{\textup{opt}}(m)$, then $V(Q_1) \setminus \{m\}$ contains neither joiners nor the looper for $Q$. 
  
  {\bf Case for a joiner.} First, let us assume that $m$ is a joiner. 
  In this case, there might be several women who form a dependent edge of type B with $m$, so let $w$ denote the one that is most preferred by~$m$.
  Let~$f$ be the edge $\{m,w\}$, and let $P$ be the path of~$M_s \triangle M^{\textup{opt}}$ that has $w$ as its endpoint.
  We illustrate these concepts in Figure~\ref{fig:outermost-joiners}.
%    \tikzset{rdvertex/.style={minimum size=2mm,circle,fill=white,draw, inner sep=0pt},
%        sqvertex/.style={minimum size=2mm,diamond,fill=black,draw, inner sep=0pt},
%        opvertex/.style={minimum size=2mm,diamond,fill=white,draw, inner sep=0pt},
%        decoration={markings,mark=at position .5 with {\arrow[black,thick]{stealth}}}}
%    \begin{figure}[thpb]
  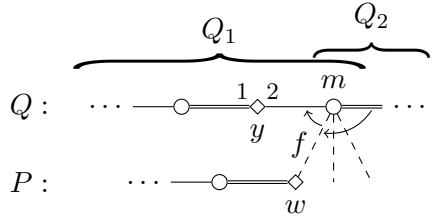
\begin{figure}[thpb]
    \tikzset{rdvertex/.style={minimum size=2mm,circle,fill=white,draw, inner sep=0pt},
        sqvertex/.style={minimum size=2mm,diamond,fill=black,draw, inner sep=0pt},
        opvertex/.style={minimum size=2mm,diamond,fill=white,draw, inner sep=0pt},
        decoration={markings,mark=at position .5 with {\arrow[black,thick]{stealth}}}}
        \centering
        \begin{tikzpicture}
        \node (q) at (0,0){$Q:$};
        \node (p) at (0,-1){$P:$};
        \node (q1) at (1,0){$\hdots$};
        \node (q2) at (2,0)[rdvertex]{};
        \node (q3) at (3,0)[opvertex,label=below:$y$]{};
        \node (q3-1) at (2.8,0.2){\footnotesize $1$};
        \node (q3-2) at (3.2,0.2){\footnotesize $2$};
        \node (q4) at (4,0)[rdvertex,label=above:$m$]{}; 
        \node (q5) at (5,0){$\hdots$};
        \node (p1) at (1.5,-1){$\hdots$};
        \node (p2) at (2.5,-1)[rdvertex]{};
        \node (p3) at (3.5,-1)[opvertex,label=below:$w$]{};
%        \node (qtop1) at (2.5,1){$Q_1$};
%        \node[rotate=90,yscale=11,xscale=1.5] (qtop1brace) at (2.5,0.6){$\}$};
   		\draw[decoration={brace,amplitude=6pt,raise=10pt},decorate]  (q1.north west) -- node[midway,above=15pt] {$Q_1$} (q4.north east); 		
%        \node (qtop2) at (4.5,1.2){$Q_2$};
%        \node[rotate=90,yscale=4,xscale=1.5] (qtop2brace) at (4.5,0.75){$\}$};
   		\draw[decoration={brace,amplitude=6pt,raise=16pt},decorate]  (q4.north west) -- node[midway,above=21pt] {$Q_2$} (q5.north east); 		
        \node (f) at (3.55,-0.5){$f$};
        \draw (q1)--(q2);
        \draw[double] (q2)--(q3);
        \draw (q3)--(q4);
        \draw[double] (q4)--(q5);
        \draw (p1)--(p2);
        \draw[double] (p2)--(p3);
        \draw[dashed] (q4)--(p3);
        \draw[dashed] (q4)--(4,-1);
        \draw[dashed] (q4)--(4.5,-1);
        \path (3.85,-0.25) edge[bend left,->] node [right] {} (3.65,-0.05);
        % for three small arrows:
        %    \path (4,-0.6) edge[bend left,->] node [right] {} (3.75,-0.55);
        %   \path (4.2,-0.45) edge[bend left,->] node [right] {} (4.02,-0.5);
        %  \path (4.35,-0.05) edge[bend left,->] node [right] {} (4.15,-0.25);
        % for one larger arrow:
        \path (4.5,-0.05) edge[bend left,->] node [right] {} (3.87,-0.35); 
        \end{tikzpicture}
    \caption{The joiner $m$ splits $Q$ into subpaths $Q_1$ and $Q_2$. 
        Here and in later figures, double lines denote edges of $M_s$. 
        Single lines denote edges of $M^{\textup{opt}}$, 
        and dashed lines are for dependent edges.        \label{fig:outermost-joiners}}
\end{figure}

  We claim that $M_f=M^{\textup{opt}} \triangle (Q_1 \cup \{f\}) = (M^{\textup{opt}} \triangle Q_1) \cup \{f\}$ is an optimal solution. 
  Observe that~$M_s \triangle M_f$ can be obtained from $M_s \triangle M^{\textup{opt}}$ by deleting~$Q$ and substituting~$P$ by the path $P+f+Q_2$ where the plus sign means concatenation. 
  Let~$x$ denote the endpoint of $Q_1$ that is not~$m$.
  First, $M_f$ is clearly feasible, since $x$ is not distinguished (by our assumption that $Q$ is non-augmenting).
  Next, suppose that some edge $e$ is blocking in~$M_f$ but is not blocking in~$M^{\textup{opt}}$. 
  It is easy to see that by our choice of $w$, $e$ cannot be incident to $w$ or $m$. 
  As only vertices in $Q_1 \cup \{f\}$ are matched differently in~$M_f$ as in $M^{\textup{opt}}$,
we obtain that one endpoint of $e$ must lie on $Q_1$.
  Using the stability of $M_s$, we also get that the other endpoint of $e$ must lie either on a connected component of $M_s \triangle M^{\textup{opt}}$ other than~$Q$ or on $Q_2$. 
  In the former case, Lemma~\ref{lemma-type} implies\footnote{
  Again, we need to apply Lemma~\ref{lemma-type} with setting $H$ to contain all connected component of $M_s \triangle M^{\textup{opt}}$ except for $Q$, and setting $K$ as $Q$.} that $e$ is a dependent edge. By our choice of~$m$, $e$ cannot be of type~B, hence by Lemma~\ref{lemma-type}, $e$ must have type~A, and thus it is incident to $x$ which must be a woman not covered by~$M_s$. 
  In the latter case, supposing that the man endpoint $m'$ of $e$ lies on $Q_1$ we quickly get that it must be the looper for $Q$.
  To see this, first note that the only woman that~$m'$ can be adjacent to on $Q_2$ must be the common endpoint of $Q_2$ and $Q$ (by $\Delta_{\mathcal W} \leq 2$); let $w'$ denote this woman.
  Second, since~$e$ blocks $M_f$ but it does not block $M^{\textup{opt}}$, we know that~$m'$ prefers~$w'$ to $M_s(m')$, but prefers $M^{\textup{opt}}(m')$ to $w'$.
  Third, since~$m'$ prefers~$w'$ to $M_s(m')$, the stability of $M_s$ implies that~$w'$ is matched by~$M_s$ and~$w'$ prefers $M_s(w')$ to $m'$. 
  Hence, $m'$ is indeed the looper for $Q$, contradicting our choice of~$m$.
  Thus, we know that it must be the woman endpoint of $e$ that lies on $Q_1$.
  Since the two endpoints of $Q_1$ are $m \in \mathcal{M}$ and~$x$, by $\Delta_{\mathcal{W}} \leq 2$ we obtain that the woman endpoint of $e$ lying on~$Q_1$ can only be~$x$, which therefore must be a woman not covered by~$M_s$.
  In either case, we can conclude that there can only exist at most one such edge~$e$ (because~$x$ has degree at most $2$ in $G$).
  So the number of edges that are blocking in~$M_f$ but not blocking in~$M^{\textup{opt}}$ is at most one. 
  
  Furthermore, since $f$ is a dependent edge of type~B, by definition we know that~$m$ prefers $y=M^{\textup{opt}}(m)$ to $M_s(m)$, and hence, $y$ must prefer $M_s(y)$ to~$m$ (as otherwise $\{m,y\}$ would be blocking in~$M_s$, which is not possible).
  Note that, in particular, $M_s(y)$ exists and $Q_1$ contains at least two edges. 
  By Proposition~\ref{prop:alt-paths-blocking-pairs},~$Q_1$ must contain at least one edge that blocks~$M^{\textup{opt}}$, and this edge is not blocking in~$M_f$, simply because any edge of $Q_1$ that blocks $M^{\textup{opt}}$ but be an edge contained in $M_s$ and hence in $M_f$.
  % However, it is then straightforward to check that $Q_1$ must contain at least one edge that blocks~$M^{\textup{opt}}$, 
  % and this edge clearly is not blocking in $M_f$.
  % (Note also the implication that $Q_1$ has at least two edges.)
  Thus, the number of edges blocking~$M_f$ cannot be more than $b$.
  Hence, $M_f$ is an optimal solution such that there are less edges in $M_s \triangle M_f$ than in $M_s \triangle M^{\textup{opt}}$, a contradiction.
  This proves the first statement of the lemma for the case of~$m$ being a joiner.

 \begin{figure}[thpb]
    \tikzset{rdvertex/.style={minimum size=2mm,circle,fill=white,draw, inner sep=0pt},
        sqvertex/.style={minimum size=2mm,diamond,fill=black,draw, inner sep=0pt},
        opvertex/.style={minimum size=2mm,diamond,fill=white,draw, inner sep=0pt},
        decoration={markings,mark=at position .5 with {\arrow[black,thick]{stealth}}}}
        \centering
        \begin{tikzpicture}
        \node (q) at (0,0){$Q:$};
        \node (q1) at (1,0){$\hdots$};
        \node (q2) at (2,0)[rdvertex]{};
        \node (q3) at (3,0)[opvertex]{};
        \node (q3-1) at (2.8,0.2){\footnotesize $1$};
        \node (q3-2) at (3.2,0.2){\footnotesize $2$};
        \node (q4) at (4,0)[rdvertex,label=above:$m$]{}; 
        \node (q5) at (5,0){$\hdots$};
        \node (q6) at (6,0)[opvertex,label=right:$w_{\ell}$]{};
        \node (q6-1) at (5.8,0.2){\footnotesize $1$};
        \node (q6-2) at (5.9,-0.3){\footnotesize $2$};
%        \node (qtop1) at (2.5,1){$Q_1$};
%        \node[rotate=90,yscale=11,xscale=1.5] (qtop1brace) at (2.5,0.6){$\}$};
   		\draw[decoration={brace,amplitude=6pt,raise=10pt},decorate]  (q1.north west) -- node[midway,above=14pt] {$Q_1$} (q4.north east); 		
%        \node (qtop2) at (5,1.2){$Q_2$};
%        \node[rotate=90,yscale=8,xscale=1.5] (qtop2brace) at (5.05,0.75){$\}$};
   		\draw[decoration={brace,amplitude=6pt,raise=16pt},decorate]  (q4.north west) -- node[midway,above=21pt] {$Q_2$} (q6.north east); 		
        \node (f) at (5,-0.8){$f$};
        \draw (q1)--(q2);
        \draw[double] (q2)--(q3);
        \draw (q3)--(q4);
        \draw[double] (q4)--(q5);
        \draw[double] (q5)--(q6);
	    \draw [dashed] (q4) to [out=310,in=230] (q6);
        \path (4.22,-0.28) edge[bend left,->] node [right] {} (3.65,-0.05);
        \path (4.5,-0.05) edge[bend left,->] node [right] {} (4.35,-0.3); 
        \end{tikzpicture}
    \caption{The looper $m$ of $Q$ splits $Q$ into subpaths $Q_1$ and $Q_2$. 
	\label{fig:looper}}
\end{figure}

  {\bf Case for a looper.} 
  Let us now assume that $m=m_\ell$ is the looper for~$Q$.
  In this case let $f$ be the edge connecting $m_\ell$ to the woman endpoint $w_\ell$ of~$Q$ that is matched by $M_s$; it is easy to see that $w_\ell$ is the endpoint of $Q_2$ as well.
  See Fig.~\ref{fig:looper} for an illustration.
  Let us again define $M_f=M^{\textup{opt}} \triangle (Q_1 \cup \{f\})$.
  As in the previous case, it is immediate that $M_f$ is feasible.
  Arguing similarly as before and using that~$Q_1$ does not contain any joiners, we get that any edge that blocks $M_f$ but does not block~$M^{\textup{opt}}$ can only be a type~A dependent edge incident to the (woman) endpoint~$x$ of $Q_1$ that is not $m_\ell$. 
  Hence, there can be at most one edge blocking~$M_f$ but not $M^{\textup{opt}}$.
  From this point on, we can use the same reasoning as in the previous case to arrive at the conclusion that~$M_f$ is an optimal solution that has more edges common with $M_s$ than $M^{\textup{opt}}$, a contradiction.
  This proves the first statement of our lemma.

  Let us prove (b) now.
  Suppose that $K_1$ and $K_2$ are two connected components of $M_s \triangle M^{\textup{opt}}$ such that~$K_1$ relies on $K_2$ via an edge $f=\{w,m\}$ of type~B. 
  By the definition of a type B edge, we know that $K_1$ is a path with an endpoint that is a woman covered by $M_s$, so its other endpoint is either a woman not covered by $M_s$, or a man covered by $M_s$.
  Thus, $K_1$ cannot be a masculine path,  so by the first statement of the lemma, it is feminine.
  It remains to show that if $K_2$ is a path, then it is feminine. 
  Assume for contradiction that $K_2$ is a non-feminine path~$Q$ in~$M_s \triangle M^{\textup{opt}}$ and some other path $P$ in $M_s \triangle M^{\textup{opt}}$ relies on $Q$ via a type~B edge $f$.
  In this case we can argue exactly as above to show that there must exist an optimal matching~$M_f$ (defined the same way as we did while proving the first statement of the lemma) for which $M_s \triangle M_f$ contains less edges than $M_s \triangle M^{\textup{opt}}$, a contradiction. 

  To show (a), suppose that $K_1$ and $K_2$ are two connected components of $M_s \triangle M^{\textup{opt}}$ such that~$K_1$ relies on $K_2$ via an edge $f=\{w,m\}$ of type~A.
  By the definition of a type~A edge, the woman endpoint~$w$ of $f$ is unmatched in~$M_s$, and $K_2$ is a path that has $w$ as an endpoint. 
  Also,~$m$ is the second choice of~$w$ and $M^{\textup{opt}}(w)$ is the first choice of $w$. 
  Hence $w \in \mathcal{W}^\star_0$ is not possible, as in that case the edge $f$ would have been deleted in Step~1 of the algorithm.
  Furthermore, since~$K_2$ is an $M_s$-alternating path with an endpoint in $\mathcal{W}_0$, the other endpoint of~$K_2$ cannot be a woman in $\mathcal{W}_0$. 
  Hence, $K_2$ is not a feminine augmenting path. 
  Since we already know that~$K_2$ is an augmenting path, (a) follows. 
\end{proof}

By Lemma~\ref{lem:non-aug-paths-case2x}, each connected component in $M_s \triangle M^{\textup{opt}}$ that is a path must be an augmenting path; 
we let~$P_x^{\textup{opt}}$ denote the augmenting path in $M_s \triangle M^{\textup{opt}}$ that contains 
some $x \in \mathcal{W}^{\star}_0 \cup \mathcal{M}^{\star}_0$ as an endpoint. 

Consider now any woman $w \in \mathcal{W}_0^{\star}$. 
Let us define a \emph{candidate path} for~$w$ as an augmenting path that starts through the edge $\{w,M^{\textup{opt}}(w)\}$.
Observe that~$P_w^{\textup{opt}}$ itself is a candidate path for $w$.
Furthermore, although we cannot determine $P_x^{\textup{opt}}$ directly, Proposition~\ref{prop:2xpaths} implies that there exists a unique maximal candidate path $A_w$ for $w$ that contains $P_w^{\textup{opt}}$ as a subpath. 
We can compute~$A_w$ easily: starting from the edge $\{w,M^{\textup{opt}}(w)\}$, we can build $A_w$ by adding edges of $M_s$ and edges not in $M_s$ in an alternating manner, always appending a new edge to the last vertex of the current subpath of $A_w$.
Observe that after an addition of an edge of $M_s$, the resulting subpath of $A_w$ ends at a woman, and thus $\Delta_{\mathcal{W}} \leq 2$ implies that there exists at most one edge that we can add to our path. 
This shows the uniqueness of~$A_w$.

\medskip
\noindent
{\bf Step 2: Finding cycles in }$M_s \triangle M^{\textup{opt}}$\textbf{.}
We make one more guess for each $w \in \mathcal{W}^{\star}_0$ by guessing whether~$P^{\textup{opt}}_w$ relies on some cycle of $M_s \triangle M^{\textup{opt}}$ or not; this again yields $2^{|\mathcal{W}^{\star}_0|}$ possibilities. 
% Observe that if $P^{\textup{opt}}_w$ relies on some cycle $C$, then by
If~$P^{\textup{opt}}_w$ relies on some cycle~$C$, then by Proposition~\ref{prop:2xpaths}, both $P^{\textup{opt}}_w$ and~$C$ can be found in time $O(|P^{\textup{opt}}_w|+|C|)$ as follows.
We compute the longest possible candidate path $A_w$ for $w$.
Since $P^{\textup{opt}}_w$ relies on cycle~$C$, the last person on~$A_w$ must be a woman $x$ incident to an edge $\{x,m\}$ for which $m$ is already on~$A_w$.
Then the subpath of~$A_w$ between~$m$ and $x$ together with the edge~$\{x,m\}$ form the cycle~$C$.
For an illustration, see Fig.~\ref{fig:finding-a-cycle}. 
%\tikzset{rdvertex/.style={minimum size=2mm,circle,fill=white,draw, inner sep=0pt},
%    sqvertex/.style={minimum size=2mm,diamond,fill=black,draw, inner sep=0pt},
%    opvertex/.style={minimum size=2mm,diamond,fill=white,draw, inner sep=0pt},
%    decoration={markings,mark=at position .5 with {\arrow[black,thick]{stealth}}}}
%\begin{figure}[t]
\begin{figure}[t]
    \tikzset{rdvertex/.style={minimum size=2mm,circle,fill=white,draw, inner sep=0pt},
        sqvertex/.style={minimum size=2mm,diamond,fill=black,draw, inner sep=0pt},
        opvertex/.style={minimum size=2mm,diamond,fill=white,draw, inner sep=0pt},
        decoration={markings,mark=at position .5 with {\arrow[black,thick]{stealth}}}}
        \centering
        \begin{tikzpicture}
        \node (w) at (0,0)[sqvertex,label=left:$w$]{};
        \node (path1) at (1,0)[rdvertex,label=below:$M^{\textnormal{opt}}(w)$]{};
        \node (path2) at (2,0)[opvertex]{};
        \node (hdots1) at (3,0){$\hdots$};
        \node (path3) at (4,0)[rdvertex]{};
        \node (path4) at (5,0)[opvertex]{};
        \node (path4-1) at (4.8,0.2){{\footnotesize $1$}};
        \node (path4-2) at (5.2,-0.2){{\footnotesize $2$}};
        \node (path5) at (5,-1)[rdvertex,label=below:$m$]{};
        \path (4.95,-0.70) edge[bend right,->] node [left] {} (4.70,-0.95);
        \path (5.30,-0.95) edge[bend right,->] node [left] {} (5.05,-0.70);
        \node (path6) at (6,-1)[opvertex]{};
        \node (path7) at (4,-1)[opvertex,label=below:$x$]{};
        \node (path7-1) at (3.8,-0.8){{\footnotesize $1$}};
        \node (path7-2) at (4.2,-0.8){{\footnotesize $2$}};
        \node (hdots2) at (3,-1){$\hdots$};
        \node (path8) at (2,-1)[rdvertex]{};
%        \node (pwopt) at (2.5,1){$P_w^{\textnormal{opt}}$};
%        \node (parantheses) at (2.5,0.5)[rotate=-90,yscale=16,xscale=2]{$\{$};
   		\draw[decoration={brace,amplitude=6pt,raise=8pt},decorate]  (w.north west) -- node[midway,above=14pt] {$P_w^{\textnormal{opt}}$} (path4.north east); 		
        \draw (w)--(path1);
        \draw[double] (path1)--(path2);
        \draw (path2)--(hdots1);
        \draw (hdots1)--(path3);
        \draw[double] (path3)--(path4);
        \draw[dashed] (path4)--(path5);
        \draw[double] (path5)--(path6);
        \draw (path5)--(path7);
        \draw[double] (path7)--(hdots2);
        \draw[double] (hdots2)--(path8);
        \path (path8) edge[bend right] node [left] {} (path6);
        \end{tikzpicture}
    \caption{Step~2 of the algorithm for Theorem~\ref{thm:fpt-case2x}, for finding cycles in $M \triangle M^{\textup{opt}}$.
        %Double edges denote the stable matching~$M_s$.
        \label{fig:finding-a-cycle}}
\end{figure}
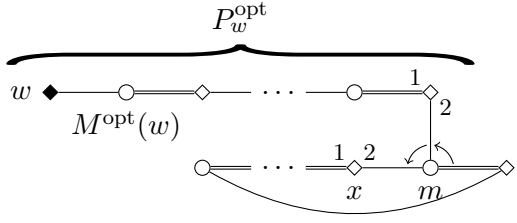

\begin{lemma} 
\label{lem:finding-all-cycles}
  Assuming that the guesses made by the algorithm are correct, all cycles in~$M_s \triangle M^{\textup{opt}}$ and all paths relying on some cycle in~$M_s \triangle M^{\textup{opt}}$ are found in Step~2.
\end{lemma}
\begin{proof}
  Let $C$ be a cycle in $M_s \triangle M^{\textup{opt}}$.
  By the stability of $M_s$, no edge of $C$ blocks the matching $M^{\textup{opt}} \triangle C$.
  Since $M^{\textup{opt}} \triangle C$ has more edges in common with~$M_s$ than~$M^{\textup{opt}}$ does, we know by the choice of $M^{\textup{opt}}$ that there must be an edge incident to $C$ that is blocking in $M^{\textup{opt}} \triangle C$ but is not blocking in~$M^{\textup{opt}}$.
  By Lemma~\ref{lemma-type}, such an edge must be a dependent edge $e$ relying on $C$.
  This proves that there are no cycles in $M_s \triangle M^{\textup{opt}}$ without augmenting paths relying on them.

  By Lemma~\ref{lem:non-aug-paths-case2x} we know that dependent edges of type~A rely on (masculine) paths\footnote{ 
  We remark that it is still possible that $C$ itself relies on a masculine path via a dependent edge of type~A, but such edges do not play a role in Step 2.}, so $e$ must have type~B.
  From this, Lemma~\ref{lem:non-aug-paths-case2x} yields that all paths relying on a cycle must be feminine paths. 
  Hence, Step~2 indeed finds all paths that rely on some cycle, together with the cycles that these paths rely upon, which (as pointed out above) means that Step~2 finds all cycles of $M_s \triangle M^{\textup{opt}}$, proving our claim.
\end{proof} 
%let~$H$ be the union of these cycles and paths. 
%Let $M_H:=M_s \triangle H$, and let $\mathcal{W}^{\star}_{0,H}$
%contain women in $\mathcal{W}^{\star}_0$ unmatched by $M_H$.

\medskip
\noindent
{\bf Step 3: Finding neutral paths.}
In this step, for each $w \in \mathcal W^{\star}_0$ we guess whether $w$ lies on a neutral path in $M_s \triangle M^{\textup{opt}}$, yielding at most
$2^{|\mathcal{W}_0^\star|}$ possibilities.
Clearly, if $w$ lies on a neutral path, then by Proposition~\ref{prop:2xpaths}, $P_w^{\textup{opt}}$ is the maximal candidate path starting with the edge $\{w,M^{\textup{opt}}(w)\}$, that is, $P_w^{\textup{opt}}=A_w$. 
%Hence, finding all neutral paths takes linear time, supposing our guesses are correct. 

\medskip
\noindent
{\bf Step 4: Finding feminine paths relying on other feminine paths.}
In this step, we find all feminine paths that rely on some other feminine path; 
note that feminine paths that do not depend on other feminine paths will be taken care of by Step~5.

In Step 4, we first guess for each $w \in \mathcal{W}^{\star}_0$ whether $P_w^{\textup{opt}}$
relies on another feminine path, and if so, we also guess on which one. 
This means at most $|\mathcal{W}^{\star}_0|$ possibilities for each $w \in \mathcal{W}^{\star}_0$, 
a total of $|\mathcal{W}^{\star}_0|^{|\mathcal{W}^{\star}_0|}$ possibilities. 

Supposing that, according to our guesses, $P_w^{\textup{opt}}$ relies on $P_y^{\textup{opt}}$ for some~$w$ and $y$ in $\mathcal{W}^{\star}_0$, we can find~$P_w^{\textup{opt}}$ easily as follows. 
Consider the maximal candidate paths $A_w$ and~$A_y$ for $w$ and $y$, respectively.
We know that $P_w^{\textup{opt}}$ and $P_y^{\textup{opt}}$ are disjoint, but supposing that our guesses were made correctly, Lemma~\ref{lem:non-aug-paths-case2x} implies that they are connected by a dependent edge~$f$ of type~B relying on~$P_y^{\textup{opt}}$.
Hence, $A_w$ contains~$P_w^{\textup{opt}}$, followed by the dependent edge~$f$ connecting~$P_w^{\textup{opt}}$ and $P_y^{\textup{opt}}$, followed possibly by a suffix of $P_y^{\textup{opt}}$. 
Thus, by Proposition~\ref{prop:2xpaths},~$A_w$ and $A_y$ must share the man endpoint $m$ of $f$ as their first common vertex, and they coincide after~$m$. 
Therefore, $P_w^{\textup{opt}}$ can be obtained by deleting all edges incident to or occurring after $m$ on $A_w$.
This way, $f$ and~$P_w^{\textup{opt}}$ can be found in $O(|P_w^{\textup{opt}}|+|P_y^{\textup{opt}}|)$ time.

Notice that this method only enables us to locate $P_w^{\textup{opt}}$ but not $P_y^{\textup{opt}}$; we only know that $P_y^{\textup{opt}}$ must contain the man $m$ incident to $f$, but we do not know where $P_y^{\textup{opt}}$ ends.
To remind us that $P_y^{\textup{opt}}$ contains~$m$, we store $m$ as an \emph{obligatory} man for $y$. 
%Clearly, this step again takes linear time to perform, for each outcome of the guesses made. 

\medskip
\noindent
{\bf Step 5: Finding all remaining feminine paths.}
In this step, we find all feminine paths that rely neither on a cycle, nor on another feminine path.
To do so, we will use a surprisingly simple method.
Intuitively, it makes sense to construct augmenting paths to be as short as possible, since the longer such a path gets, the more opportunities arise for a blocking pair to appear along the way.
This intuition is almost entirely correct, if no other path (or cycle) relies on the augmenting path $P_w^{\textup{opt}}$ we are looking for: it turns out that $P_w^{\textup{opt}}$ is either indeed the shortest possible, or if this is not the case, then we can ``get rid of'' only one blocking pair by constructing a longer augmenting path---and this, too, we can keep as short as it is possible, by stopping whenever we encounter a decrease in the number of blocking pairs along the augmenting path.
In the case when some other path relies on $P_w^{\textup{opt}}$, we need to be somewhat more careful, and take into account the obligatory men defined in Step~4. 

After this high-level explanation, let us define Step 5 formally. 
Let $\mathcal{W}_r$ be the set of those women $w \in \mathcal W^{\star}_0$ for which $P_w^{\textup{opt}}$ has not been found yet 
(that is, where $P_w^{\textup{opt}}$ is not neutral, and it does not rely on any other path or cycle in $M_s \triangle M^{\textup{opt}}$). 
Let us fix any woman $w \in \mathcal{W}_r$.
With any candidate path~$Q$ for~$w$, we associate a matching $M_{Q}=(M^{\textup{opt}} \triangle P_w^{\textup{opt}}) \triangle Q$; 
we can think of $M_{Q}$ as the matching that results from replacing $P_w^{\textup{opt}}$ with $Q$ in the symmetrical difference of $M_s$ and the solution.
Let us state a few useful properties of candidate paths and their associated matchings. 
\begin{proposition}
\label{prop:candidates}
The following holds for any candidate path $Q$ for $w$:
\begin{enumerate}
\item[(a)] $M_s \triangle Q$ is a matching.
\item[(b)] $M_{Q}$ is a feasible matching.
\item[(c)] $Q$ must be one of the following two forms: \\
(i) $Q$ ends at a man in $\mathcal{M}_0$, in which case $Q=A_w=P_w^{\textup{opt}}$,
or \\
(ii) $Q$ ends at a non-distinguished woman. 
\end{enumerate}
\end{proposition}

\begin{proof}%[of Proposition~\ref{prop:candidates}]
Recall that the definition of an augmenting path (see the beginning of Section~\ref{sec:prefatmost2}) 
directly requires that $M_s \triangle Q$ be a matching, proving (a). 
This immediately implies that if $Q$ ends at a man, 
then this man cannot be matched by $M_s$, from which $Q=A_w$ and thus $Q=P_w^{\textup{opt}}$ follows.
Using again the definition of an augmenting path, we get that the last person on $Q$ is not distinguished 
(because $P_w^{\textup{opt}}$ is not neutral by Step~3), showing (c).

To see that $M_{Q}$ is a matching, 
  let $\mathcal{K}_{-w}$ denote the union of all connected components of $M_s \triangle M^{\textup{opt}}$ except for $P_w^{\textup{opt}}$.
  Then  $M^{\textup{opt}} \triangle P_w^{\textup{opt}}=M_s \triangle \mathcal{K}_{-w}$ is a matching. 
  But as $\mathcal{K}_{-w}$ is vertex-disjoint from $Q$, 
  from (a) we get that $M_{Q}=(M_s \triangle \mathcal{K}_{-w}) \triangle Q$ must be a matching too. 
  The properties of $Q$ described in (c) show also that $M_{Q}$ is feasible.
\renewcommand{\qedsymbol}{$\Diamond$}
\end{proof}

Now, we define two candidate paths for $w$ as follows:
\begin{itemize}
  \item First, we let~$Q_w^1$ be the shortest candidate path for $w$ that contains all obligatory men for~$w$.  	
    In particular, if there are no obligatory men for~$w$, then $Q_w^1$ is the shortest candidate path for $w$;
    see Figure~\ref{fig:remaining-fem-paths} for an illustration.
  \item Second, let $Q_w^2$ be the shortest candidate path containing~$Q_w^1$ such that $M_s \triangle Q_w^2$ admits less blocking pairs than $M_s \triangle Q_w^1$.
    Such a path, however, may not exist, in which case we leave $Q_w^2$ undefined.
\end{itemize}
%See Figure~\ref{fig:remaining-fem-paths} for an illustration.

\begin{figure}[t]
    \tikzset{rdvertex/.style={minimum size=2mm,circle,fill=white,draw, inner sep=0pt},
        sqvertex/.style={minimum size=2mm,diamond,fill=black,draw, inner sep=0pt},
        opvertex/.style={minimum size=2mm,diamond,fill=white,draw, inner sep=0pt},
        decoration={markings,mark=at position .5 with {\arrow[black,thick]{stealth}}}}
        \centering
        \begin{tikzpicture}  
        \def\h{0}   
        \def\w{0.8}   
        \node (p1) at (-0.4,\h){$A_{w_1}$:};
		\node (w1) at (1*\w,\h)[sqvertex,label=left:$w_1$]{};
        \node (p1-m1) at (2*\w,\h)[rdvertex,label=below:$M^{\textnormal{opt}}(w_1)$]{};
        \node (p1-w1) at (3*\w,\h)[opvertex]{};        
        \node (p1-hdots1) at (4*\w,\h){$\hdots$};
        \node (p1-m2) at (5*\w,\h)[rdvertex,label=below:$m_1^{\textup{obl}}$]{};
        \node (p1-w2) at (6*\w,\h)[opvertex]{};
		\node (p1-hdots2) at (7*\w,\h){$\hdots$};
		\draw (w1)--(p1-m1);
		\draw[double] (p1-m1)--(p1-w1);		
		\draw (p1-w1)--(p1-hdots1);				
		\draw (p1-hdots1)--(p1-m2);				
		\draw[double] (p1-m2)--(p1-w2);
		\draw (p1-w2)--(p1-hdots2);
		\draw[decoration={brace,amplitude=6pt,raise=5pt},decorate]  (w1.north west) -- node[midway,above=10pt] {$Q^1_{w_1}$} (p1-w2.north east); 		
		\def\h{-1.5}
        \node (p2) at (-0.4,\h){$A_{w_2}$:};
        \node (w2) at (1*\w,\h)[sqvertex,label=left:$w_2$]{};
        \node (p2-m1) at (2*\w,\h)[rdvertex,label=below:$M^{\textnormal{opt}}(w_2)$]{};
        \node (p2-w1) at (3*\w,\h)[opvertex]{};        
        \node (p2-hdots1) at (4*\w,\h){$\hdots$};
        \node (p2-m2) at (5*\w,\h)[rdvertex,label=below:$m_2^{\textup{obl}}$]{};
        \node (p2-w2) at (6*\w,\h)[sqvertex]{};
        \node (p2-m3) at (7*\w,\h)[rdvertex]{};
        \node (p2-w3) at (8*\w,\h)[sqvertex]{};        
        \node (p2-m4) at (9*\w,\h)[rdvertex]{};
        \node (p2-w4) at (10*\w,\h)[opvertex]{};        
		\node (p2-hdots2) at (11*\w,\h){$\hdots$};
		\draw (w2)--(p2-m1);
		\draw[double] (p2-m1)--(p2-w1);		
		\draw (p2-w1)--(p2-hdots1);				
		\draw (p2-hdots1)--(p2-m2);				
		\draw[double] (p2-m2)--(p2-w2);
		\draw (p2-w2)--(p2-m3);
		\draw[double] (p2-m3)--(p2-w3);
		\draw (p2-w3)--(p2-m4);
		\draw[double] (p2-m4)--(p2-w4);
		\draw (p2-w4)--(p2-hdots2);
		\draw[decoration={brace,amplitude=6pt,raise=5pt},decorate]  (w2.north west) -- node[midway,above=10pt] {$Q^1_{w_2}$} (p2-w4.north east); 		
		\def\h{-3}
        \node (p3) at (-0.4,\h){$A_{w_3}$:};
        \node (w3) at (1*\w,\h)[sqvertex,label=left:$w_3$]{};
        \node (p3-m1) at (2*\w,\h)[rdvertex,label=below:$M^{\textnormal{opt}}(w_3)$]{};
        \node (p3-w1) at (3*\w,\h)[opvertex]{};        
        \node (p3-hdots1) at (4*\w,\h){$\hdots$};
        \node (p3-m2) at (5*\w,\h)[rdvertex,label=below:$m_3^{\textup{obl}}$]{};
        \node (p3-w2) at (6*\w,\h)[sqvertex]{};
        \node (p3-m3) at (7*\w,\h)[rdvertex]{};
		\draw (w3)--(p3-m1);
		\draw[double] (p3-m1)--(p3-w1);		
		\draw (p3-w1)--(p3-hdots1);				
		\draw (p3-hdots1)--(p3-m2);				
		\draw[double] (p3-m2)--(p3-w2);
		\draw (p3-w2)--(p3-m3);
		\draw[decoration={brace,amplitude=6pt,raise=5pt},decorate]  (w3.north west) -- node[midway,above=10pt] {$Q^1_{w_3}$} (p3-m3.north east); 			
		\end{tikzpicture}		
    \caption{An example of the paths $Q_w^1$ for three women $w \in \{w_1,w_2,w_3\} \subseteq \mathcal{W}_r$, as defined in Step~5.
%    	The depicted scenarios show different possibilities for the structure of such paths.
    	The man $m_i^{\textup{obl}}$ for each $i \in \{1,2,3\}$ is the last obligatory man for $w_i$ on the maximal 
    	augmenting path~$A_{w_i}$. In cases where there is no obligatory man for $w_i$, one can identify $m_i^{\textup{obl}}$ with $M^{\textnormal{opt}}(w_i)$ 
    	to get an appropriate figure.
        \label{fig:remaining-fem-paths}}
\end{figure}

\begin{lemma}
\label{lem:case2x-optimalpaths}
  If $P_w^{\textup{opt}}$ does not rely on any other path or cycle in $M_s \triangle M^{\textup{opt}}$, 
  then either $P_w^{\textup{opt}}=Q_w^1$ or $P_w^{\textup{opt}}=Q_w^2$.
\end{lemma}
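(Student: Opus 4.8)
The plan is to prove first that $Q_w^1$ is always a prefix of $P_w^{\textup{opt}}$ (sharing the endpoint $w$), and then that when this inclusion is strict one must have $P_w^{\textup{opt}}=Q_w^2$. To begin, I would record three easy facts about $P_w^{\textup{opt}}$: it is an augmenting path by Lemma~\ref{lem:non-aug-paths-case2x}; it starts at $w$ with the edge $\{w,M^{\textup{opt}}(w)\}$, since $w$ is left unmatched by $M_s$ so its first path-edge is an $M^{\textup{opt}}$-edge; and it passes through every man that was stored as \emph{obligatory} for $w$ in Step~4 (this is exactly how those men were selected). Thus $P_w^{\textup{opt}}$ is one of the paths competing in the definition of $Q_w^1$, so $|Q_w^1|\le|P_w^{\textup{opt}}|$; since both paths start at $w$ through the same edge, Proposition~\ref{prop:2xpaths}(a) says one is a sub-path of the other, and minimality of $Q_w^1$ makes it a prefix (starting at $w$) of $P_w^{\textup{opt}}$. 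If $Q_w^1=P_w^{\textup{opt}}$ we are done, so from now on assume $Q_w^1\subsetneq P_w^{\textup{opt}}$.

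The central tool is a \emph{retraction} argument. For any augmenting path $R$ that is a prefix of $P_w^{\textup{opt}}$ with $Q_w^1\subseteq R$, consider the matching $M_R$ obtained from $M^{\textup{opt}}$ by reverting the part of $P_w^{\textup{opt}}$ after $R$ back to $M_s$ (equivalently, replacing the component $P_w^{\textup{opt}}$ of $M_s\triangle M^{\textup{opt}}$ by $R$). I would verify that $M_R$ is feasible: the only persons whose partner changes lie on the reverted tail, and the only ones left unmatched by $M_R$ but not by $M^{\textup{opt}}$ are the far endpoint of $R$ — non-distinguished because $R$ is an augmenting path whose non-$w$ endpoint is $M_s$-matched — and, possibly, the far endpoint of $P_w^{\textup{opt}}$, non-distinguished because in the branch under consideration $P_w^{\textup{opt}}$ is feminine and non-neutral. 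Crucially, since $P_w^{\textup{opt}}$ relies on no other component and every man on which some component relies is obligatory for $w$, hence lies on $Q_w^1\subseteq R$, reverting the tail neither creates nor destroys any inter-component dependency; tracking the blocking pairs then mirrors the accounting in the proof of Lemma~\ref{lemma-costs}, with dependent edges playing the role of special edges, and shows that the number $\mathrm{bp}(M_R)$ of blocking pairs of $M_R$ is at most $b-\mathrm{bp}(M_s\triangle P_w^{\textup{opt}})+\mathrm{bp}(M_s\triangle R)$. As $M_R$ is feasible and $M_s\triangle M_R$ has fewer edges than $M_s\triangle M^{\textup{opt}}$ whenever $R\subsetneq P_w^{\textup{opt}}$, the choice of $M^{\textup{opt}}$ as a solution with at most $b$ blocking pairs minimising $|M_s\triangle M^{\textup{opt}}|$ forbids $\mathrm{bp}(M_R)\le b$; hence $\mathrm{bp}(M_s\triangle R)>\mathrm{bp}(M_s\triangle P_w^{\textup{opt}})$ for every such proper prefix $R$. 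Taking $R=Q_w^1$ shows that $P_w^{\textup{opt}}$ is an augmenting path containing $Q_w^1$ with $\mathrm{bp}(M_s\triangle P_w^{\textup{opt}})<\mathrm{bp}(M_s\triangle Q_w^1)$, i.e.\ a legal candidate in the definition of $Q_w^2$; so $|Q_w^2|\le|P_w^{\textup{opt}}|$, and because $Q_w^2$ contains $Q_w^1$ it also starts at $w$ with the edge $\{w,M^{\textup{opt}}(w)\}$, so Proposition~\ref{prop:2xpaths}(a) makes $Q_w^2$ a prefix of $P_w^{\textup{opt}}$ containing $Q_w^1$.

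What remains — and what I expect to be the main obstacle — is to exclude $Q_w^1\subsetneq Q_w^2\subsetneq P_w^{\textup{opt}}$; equivalently, to show that no augmenting-path prefix $R$ with $Q_w^1\subseteq R\subsetneq P_w^{\textup{opt}}$ already satisfies $\mathrm{bp}(M_s\triangle R)<\mathrm{bp}(M_s\triangle Q_w^1)$, so that the first proper prefix of $P_w^{\textup{opt}}$ beating $Q_w^1$ is $P_w^{\textup{opt}}$ itself. The plan here is to exploit the fine structure of $P_w^{\textup{opt}}$: between two consecutive blocking edges the path is an ordinary $M_s$-augmenting segment (by stability of $M_s$), so along the chain of augmenting-path prefixes of $P_w^{\textup{opt}}$ the quantity $\mathrm{bp}(M_s\triangle\cdot)$ can decrease only when a prefix is extended just past a blocking edge that thereby ceases to block, and each such decrease is by exactly one; combined with the strict inequalities $\mathrm{bp}(M_s\triangle R)>\mathrm{bp}(M_s\triangle P_w^{\textup{opt}})$ obtained above for all proper prefixes, this pins the first decrease below $\mathrm{bp}(M_s\triangle Q_w^1)$ to occur precisely at $P_w^{\textup{opt}}$. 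Hence $Q_w^2=P_w^{\textup{opt}}$, which completes the proof.
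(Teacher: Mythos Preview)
Your overall approach matches the paper's: establish $Q_w^1$ as a prefix of $P_w^{\textup{opt}}$, then use the minimality of $|M_s\triangle M^{\textup{opt}}|$ via a retraction argument. The first two paragraphs are essentially correct and parallel the paper's proof.

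The gap is in your final paragraph. Even granting that each step along the chain of augmenting prefixes changes the cost by at most one (which itself needs care, since men on the path may have long preference lists), your inference does not follow. From the strict inequalities you only know that every proper prefix $R\supseteq Q_w^1$ satisfies $\mathrm{bp}(M_s\triangle R)\ge c^{\textup{opt}}+1$; to conclude that the \emph{first} prefix with cost below $c_1=\mathrm{bp}(M_s\triangle Q_w^1)$ is $P_w^{\textup{opt}}$, you would need every proper prefix to have cost $\ge c_1$, which is strictly stronger unless $c^{\textup{opt}}=c_1-1$. A hypothetical cost sequence $10,9,8,7$ along the prefixes from $Q_w^1$ to $P_w^{\textup{opt}}$ is consistent with everything you have proved---each step drops by one, and every proper prefix beats $c^{\textup{opt}}=7$---yet it would make $Q_w^2$ the prefix of cost $9$, not $P_w^{\textup{opt}}$.

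What closes the gap is precisely the equality $c^{\textup{opt}}=c_1-1$, and this is where the paper does the real work you skipped. The paper argues that any edge blocking $M_1$ (the retraction to $Q_w^1$) but not $M^{\textup{opt}}$ must be incident to the last vertex $x$ of $Q_w^1$; since $Q_w^1\subsetneq P_w^{\textup{opt}}$, this $x$ is a woman with $|L(x)|=2$, say $m_1=M_s(x)$ and $m_2$ the other man. A short case analysis using the stability of $M_s$ shows that $\{x,m_1\}$ and $\{x,m_2\}$ cannot \emph{both} be blocking in $M_1$ yet non-blocking in $M^{\textup{opt}}$: if both block $M_1$ then $m_2$ prefers $x$ to $M_s(m_2)$, so stability forces $x$ to prefer $m_1$ to $m_2$, whence $\{x,m_1\}$ still blocks $M^{\textup{opt}}$. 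Thus the gain from extending $Q_w^1$ to $P_w^{\textup{opt}}$ is exactly one. Once $c^{\textup{opt}}=c_1-1$ is established, any prefix with cost below $c_1$ has cost $\le c^{\textup{opt}}$, which your strict inequality forbids for proper prefixes; hence $Q_w^2=P_w^{\textup{opt}}$.
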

\begin{proof}
%[Proof of Lemma~\ref{lem:case2x-optimalpaths}]
  First observe that for any $w \in \mathcal{W}_r$, if the guesses made by the algorithm are correct, then $P_w^{\textup{opt}}$ contains $Q_w^1$ as a subpath.
  To see this, recall that if the guesses made in Step~4 of the algorithm are correct, then $P_w^{\textup{opt}}$ indeed must contain all obligatory men for~$w$. 
  Since $P_w^{\textup{opt}}$ is a candidate path for $w$, it indeed must contain~$Q_w^1$.

  Next, suppose that $P_w^{\textup{opt}} \neq Q_w^1$, and consider the matching $M_1:=M_{Q_w^1}$. 
  By Proposition~\ref{prop:candidates}, $M_1$ is a feasible matching, and 
  $P_w^{\textup{opt}} \neq Q_w^1$ implies that $Q_w^1$ ends at a non-distinguished woman $x$.
    
  \begin{claimu}  
  \label{claim2}
    Let $Q$ be a candidate path for $w$ such that $Q_w^1 \subset Q \subseteq P_w^{\textup{opt}}$. 
    If $e$ is an edge blocking in~$M_1$ but not blocking in~$M_{Q}$, then $e$ is adjacent to $x$.
  \end{claimu} 
  \begin{proof}[Proof of Claim~\ref{claim2}.]
    Suppose for contradiction that $e$ is not incident to $x$.
    First note that $e$ must be adjacent to a person in $V(Q) \setminus V(Q_w^1)$, as all other persons (except for $x$) have the same partner in $M_1$ as in $M_{Q}$.
    Second, as $M_s$ is stable, at least one endpoint of $e$ must be matched differently in $M_1$ than in $M_s$.
    Thus, $e$ must connect a person in $V(Q) \setminus V(Q_w^1)$ either (i) with a connected component of $\mathcal{K}_{-w}$, denoting the union of all connected components of $M_s \triangle M^{\textup{opt}}$ except for $P_w^{\textup{opt}}$, or (ii) with $Q_w^1$. 

    In case (i), $e$  must be blocking in $M_s \triangle \mathcal{K}_{-w}$ (as $e$ blocks $M_1$).
    As $e$ is not blocking in $M_{Q}=M_s \triangle (\mathcal{K}_{-w} \cup Q)$, it is not blocking in $M^{\textup{opt}}=M_s \triangle (\mathcal{K}_{-w} \cup P_w^{\textup{opt}})$ either (because any person in $V(Q) \setminus V(Q_w^1)$ has the same partner in~$M_{Q}$ as in $M^{\textup{opt}}$, except possibly for the last person on $Q$ who might be left unmatched in $M_{Q}$).
    Therefore, Lemma~\ref{lemma-type} implies that $e$ is a dependent edge relying on $P_w^{\textup{opt}}$.
    By Lemma~\ref{lem:non-aug-paths-case2x}, only feminine paths may rely on~$P_w^{\textup{opt}}$. 
    However, in Step~4 we have stored the man endpoint of all edges that connect~$P_w^{\textup{opt}}$ with a feminine path relying on~$P_w^{\textup{opt}}$ as an obligatory man, so assuming correct guesses in Step~4, $e$ must have an obligatory man as its endpoint.
    But~$Q_w^1$ contains all the obligatory men for $w$, which contradicts our assumption of case (i).

    In case (ii), let $w_e$ and $m_e$ be the woman and man endpoint of $e$, respectively; we assumed $w_e \neq x$. 
    By $\Delta_{\mathcal{W}} \leq 2$, $w_e$ can only be adjacent to both $V(Q) \setminus V(Q_w^1)$ and $V(Q_w^1)$ if either $w_e=w$, or $Q$ and also $P_w^{\textup{opt}}$ ends at $w_e$. 
    In the former case, $m_e \in V(P_w^{\textup{opt}}) \setminus V(Q_w^1)$, yielding $M_1(m_e)=M_s(m_e)$. 
    As~$e$ blocks~$M_1$, $m_e$ prefers $w$ to $M_s(m_e)$; however, $w$ is not matched in $M_s$, so this means that $e$ blocks $M_s$ as well, a contradiction.
    In the latter case,~$m_e$ lies on $Q_w^1$, yielding $M_1(m_e)=M^{\textup{opt}}(m_e)$. 
    As $e$ blocks $M_1$, man~$m_e$ prefers~$w_e$ to $M^{\textup{opt}}(m_e)$; however, $w_e$ is not matched by $M^{\textup{opt}}$, so this means that $e$ blocks~$M^{\textup{opt}}$ as well, a contradiction.
    Hence the claim follows.
  \renewcommand{\qedsymbol}{$\Diamond$}
  \end{proof}
  
  Let $c_1$ denote the number of blocking pairs with respect to $M_1$ that are incident to a man or woman on $Q_w^1$.
  \begin{claimu}
  \label{claim:candidate-cost}
    If $Q$ is a candidate path for $w$ such that $Q_w^1 \subset Q \subseteq P_w^{\textup{opt}}$, and $c$ is the number of blocking pairs with respect to $M_{Q}$ that are incident to a man or woman on $Q$, then $c \geq c_1-1$.
  \end{claimu}
  \begin{proof}[Proof of Claim~\ref{claim:candidate-cost}]
    By Claim~\ref{claim2}, $c \leq c_1-2$ can only happen if both edges incident to~$x$ are blocking in~$M_1$, and none of them is blocking in $M_{Q}$. 
    However, this cannot happen, as we are going to show now.
    Let $m_1=M_s(x)$ and $m_2=M^{\textup{opt}}(x)$ be the two men in $x$'s preference list; note that~$Q_w^1$ contains $m_1$ but not $m_2$.  
    If both $\{x,m_1\}$ and $\{x,m_2\}$ are blocking in~$M_1$, then $m_1$ prefers~$x$ to $M^{\textup{opt}}(m_1)$, and $m_2$ either prefers $x$ to $M_s(m_2)$ or $m_2$ is unmatched by $M_s$.   
    Therefore the stability of~$M_s$ implies that~$x$ prefers~$m_1$ to $m_2$, as otherwise $\{x,m_2\}$ would block $M_s$.
    But this shows that $\{x,m_1\}$ is a blocking pair in~$M^{\textup{opt}}$ and thus in $M_{Q}$ as well, proving our claim. 
  \renewcommand{\qedsymbol}{$\Diamond$}
  \end{proof}

%  Clearly, $c^{\textup{opt}}<c_1$, as otherwise $M_1$ would also be an optimal solution which would contradict our choice of $M^{\textup{opt}}$ 
%  (as $M_s \triangle M_1$ contains less edges than $M_s \triangle M^{\textup{opt}}$). 
%  From this, we immediately have that $M_s \triangle P_w^{\textup{opt}}$ admits less blocking pairs than $M_s \triangle Q_w^1$.

  Now, let $c^{\textup{opt}}$ denote the number of blocking pairs with respect to $M^{\textup{opt}}$ that are incident to a man or woman on $P_w^{\textup{opt}}$.
  By our assumption $Q_w^1 \neq P_w^{\textup{opt}}$, we get $c^{\textup{opt}}<c_1$, 
  as otherwise $M_1$ would also be an optimal solution which would contradict our choice of $M^{\textup{opt}}$ 
  (as $M_s \triangle M_1$ contains less edges than $M_s \triangle M^{\textup{opt}}$). 
  Hence, Claim~\ref{claim:candidate-cost} implies $c^{\textup{opt}}=c_1-1$. 
  
  By the definition of $Q_w^2$, we know that $Q_w^1 \subset Q_w^2 \subseteq P_w^{\textup{opt}}$, and 
  Proposition~\ref{prop:candidates} yields that $M_2:=M_{Q_w^2}$ is a feasible solution.
  The definition of $Q_w^2$ and Claim~\ref{claim:candidate-cost} together imply that the number of blocking pairs with respect to $M_2$ 
  that are incident to a man or woman on $Q_w^2$ is exactly $c_1-1$.
  Hence,~$M_2$ is an optimal solution.
  Therefore, $M_s \triangle M_2$ cannot contain less edges than $M_s \triangle M^{\textup{opt}}$, because assuming otherwise would contradict our choice of $M^{\textup{opt}}$.
  From this, $Q_w^2 = P_w^{\textup{opt}}$ follows. 
 %This proves the lemma.
\end{proof}

According to Lemma~\ref{lem:case2x-optimalpaths}, we can find all remaining feminine paths in $M_s \triangle M^{\textup{opt}}$ by guessing for each $w \in \mathcal{W}_r$ whether $P_w^{\textup{opt}}$ equals $Q_w^1$ or $Q_w^2$. 
This means at most $2^{|\mathcal W^\star_0|}$ guesses; the computations needed for each guess take linear time. 

\medskip
\noindent
{\bf Step 6: Computing elimination paths.}
Let $\mathcal{F}$ be the union of cycles, feminine and neutral paths found in Steps~1 to 5.
When searching for masculine paths, we will have to deal with edges that might be type A dependent 
edges in the optimum solution.
We call an edge \emph{volatile} if it connects a woman in~$\mathcal{W}_0$ with her second choice. 
The importance of this definition is shown by the following two lemmas.

\begin{lemma}
\label{lem:eliminated-then-volatile}
  Let $f$ be an edge that is blocking in $M_s \triangle \mathcal{F}$, but not blocking in $M^{\textup{opt}}$.
  Then $f$ is volatile, its woman endpoint is contained on a non-feminine path of $M_s \triangle M^{\textup{opt}}$, and its man-endpoint is contained in $\mathcal{F}$.
\end{lemma}
\begin{proof}
  Observe that since $f$ does not block $M^{\textup{opt}}$, there must exists a connected component $Q$ of $M_s \triangle M^{\textup{opt}}$ not in $\mathcal F$ such that $f$ does not block $M_s \triangle (\mathcal{F} \cup Q)$. 
  Clearly, $Q$ is a masculine but not feminine augmenting path by the definition of $\mathcal{F}$.
  Lemma~\ref{lemma-type} then implies that $f$ is a dependent edge relying on $Q$, and
  Lemma~\ref{lem:non-aug-paths-case2x} yields also that $f$ has type A and $Q$ is masculine and not feminine. 
  The remaining statements follow from the definition of a type A edge.
\end{proof}

\begin{lemma} 
\label{lem:novolatile}
  If $f$ is a volatile edge incident to some non-feminine path in $M_s \triangle M^{\textup{opt}}$, then~$f$ is not blocking in $M^{\textup{opt}}$. 
\end{lemma}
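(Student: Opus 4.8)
The plan is a proof by contradiction. Suppose $f=\{m,w\}$ is volatile, so $w\in\mathcal W_0$ and $m$ is the second (least preferred) man on $L(w)$, suppose $f$ is incident to a non-feminine path $K$ of $M_s\triangle M^{\textup{opt}}$, and suppose for contradiction that $f$ blocks $M^{\textup{opt}}$. Here $M^{\textup{opt}}$ is the optimal solution fixed earlier, chosen so that $|M_s\triangle M^{\textup{opt}}|$ is minimum among optimal solutions, and $b$ is its number of blocking pairs. First I would pin down the local picture: by Lemma~\ref{lem:non-aug-paths-case2x} the path $K$ is an augmenting path, and being non-feminine it is masculine, so it contains no woman of $\mathcal W^{\star}_0$ and has an endpoint in $\mathcal M^{\star}_0$. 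Since $m$ is $w$'s worst man, $f$ can block $M^{\textup{opt}}$ only if $w$ is unmatched by $M^{\textup{opt}}$; together with $w$ being unmatched by $M_s$ this shows $w$ is isolated in $M_s\triangle M^{\textup{opt}}$, hence $w\notin V(K)$ and the incidence of $f$ to $K$ is through $m\in V(K)$ (note $f\notin M_s\triangle M^{\textup{opt}}$, so $f$ is not an edge of $K$). Finally, stability of $M_s$ and $w\notin\operatorname{dom}(M_s)$ force $m$ to be matched by $M_s$ with $M_s(m)\succ_m w$.

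I would then split according to whether $m$ is matched by $M^{\textup{opt}}$. If $m$ is unmatched by $M^{\textup{opt}}$, then $M^{\textup{opt}}\cup\{f\}$ is still feasible, and it has strictly fewer blocking pairs than $M^{\textup{opt}}$: the pair $f$ stops blocking, and no new blocking pair appears, because a pair $\{m,w'\}$ with $w'\neq w$ can only become ``less blocking'' now that $m$ is matched, while a pair $\{m',w\}$ with $m'\neq m$ keeps its status since $m$ is $w$'s worst choice. This contradicts optimality of $M^{\textup{opt}}$.

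In the remaining case $m$ is matched by both $M_s$ and $M^{\textup{opt}}$, so $m$ is an interior vertex of $K$, and since $f$ blocks $M^{\textup{opt}}$ we get $M_s(m)\succ_m w\succ_m M^{\textup{opt}}(m)$. Let $K_B$ be the sub-path of $K$ starting at $m$ with the edge $\{m,M^{\textup{opt}}(m)\}\in M^{\textup{opt}}$ and ending at an endpoint $q$ of $K$, and set $N:=M^{\textup{opt}}\triangle K_B\triangle\{f\}$; this re-matches $m$ to $w$, and on $V(K_B)\setminus\{m\}$ the matchings $N$ and $M_s$ coincide. I would then argue three things. (i) $N$ is feasible: the only vertex besides $m$ and $w$ whose matched status changes is $q$, and $q$ becomes unmatched only if its $K$-edge lies in $M^{\textup{opt}}$, i.e.\ only if $q\notin\operatorname{dom}(M_s)$; since $K$ is non-feminine this forces $q\in\mathcal M^{\star}_0$ in the only troublesome situation. (ii) $N$ has at most $b$ blocking pairs: since $N$ agrees with the stable matching $M_s$ on $V(K_B)\setminus\{m\}$, no blocking pair of $N$ has both endpoints there; edges leaving $K_B$ are controlled via Proposition~\ref{prop:2xpaths}(c), and at $m$ and $w$ the count only improves because $m$ prefers $w$ to $M^{\textup{opt}}(m)$ and $m$ is $w$'s worst choice; so the only candidate new blocking pairs sit at $q$ (of the ``type A'' shape), and they are paid for by $f$ no longer blocking. (iii) $|M_s\triangle N| = |M_s\triangle M^{\textup{opt}}|-|E(K_B)|+1\le|M_s\triangle M^{\textup{opt}}|$, with equality only when $K_B$ has a single edge, a sub-case one checks directly. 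Together these contradict the minimality of $|M_s\triangle M^{\textup{opt}}|$.

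The main obstacle is step (i)–(ii) when $q\in\mathcal M^{\star}_0$: flipping all of $K_B$ would leave the required man $q$ unmatched, so $N$ as defined is infeasible. To handle this I would reuse the ``outermost joiner'' technique from the proof of Lemma~\ref{lem:non-aug-paths-case2x}: instead of flipping the whole of $K_B$, truncate it at an outermost man $m'$ on $K_B$ (closest to $m$) that is matched worse by $M^{\textup{opt}}$ than by $M_s$ and is the man-endpoint of a volatile edge, and reroute only the initial segment $m,\dots,m'$ of $K_B$ together with that edge; the same bookkeeping then shows the resulting feasible matching has at most $b$ blocking pairs and strictly smaller symmetric difference with $M_s$. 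Showing that such an $m'$ exists whenever the naive reroute is infeasible, and that the truncated reroute introduces no additional blocking pairs, is the technical heart of the argument.
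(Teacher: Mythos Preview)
Your overall strategy coincides with the paper's: reroute the masculine path at $m$ through $f$ and show the resulting matching contradicts the choice of $M^{\textup{opt}}$. However, the ``main obstacle'' you single out is a phantom, and the genuine difficulty is the one you wave away.

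First, the case $q\in\mathcal M^{\star}_0$ cannot occur. Since $K$ is masculine and non-feminine, one endpoint $x$ lies in $\mathcal M^{\star}_0$ and the other endpoint is not distinguished (two men in $\mathcal M^{\star}_0$ would force both end-edges of the alternating path into $M^{\textup{opt}}$, hence an odd number of edges, while two male endpoints force an even number). Moreover, because $x$ is unmatched by $M_s$, the alternation pattern along $K$ starting from $x$ shows that the edge $\{m,M^{\textup{opt}}(m)\}$ always points \emph{away} from $x$; hence your $K_B$ ends at the non-distinguished endpoint, and $N$ is feasible without further work. Your ``outermost joiner'' fix is aimed at a problem that does not arise.

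Second, the real gap is your clause ``equality only when $K_B$ has a single edge, a sub-case one checks directly.'' When $|E(K_B)|=1$ you get $|M_s\triangle N|=|M_s\triangle M^{\textup{opt}}|$, and $N$ may have exactly $b$ blocking pairs (you lose $f$ but may gain one volatile edge at $q$). So $N$ is merely another optimal solution with the same symmetric-difference size, and nothing you have assumed about $M^{\textup{opt}}$ is contradicted. The paper handles this by strengthening the choice of $M^{\textup{opt}}$ at the outset: among optimal solutions with minimum $|M_s\triangle M^{\textup{opt}}|$, it picks one that is Pareto-undominated for men (no other such solution makes every man weakly happier). Then in the single-edge case the only man whose partner changes is $m$, and he is strictly happier, which yields the contradiction. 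The paper further iterates the rerouting (``filtering'') to deal with a possible chain of newly created volatile blocking edges before invoking this tie-breaker. Without adding that extra assumption on $M^{\textup{opt}}$ (or an equivalent device), your argument does not close.
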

\begin{proof}
%[Proof of Lemma~\ref{lem:novolatile}]
  Recall that $M^{\textup{opt}}$ is an optimal solution for which $M_s \triangle M^{\textup{opt}}$ has as few edges as possible, and there does not exist another optimal solution~$M'$ such that (i) $M'$ has the same number of common edges with $M_s$ as $M^{\textup{opt}}$, and (ii) for each man~ $m$, either $M'(m)=M^{\textup{opt}}(m)$ or $m$ prefers $M'(m)$ to $M^{\textup{opt}}(m)$.

  Suppose for the sake of contradiction, that $f$ is a volatile edge incident to a masculine path~$P_x^{\textup{opt}}$ 
  in $M_s \triangle M^{\textup{opt}}$ for some $x \in \mathcal{M}^{\star}_0$, and $f$ blocks $M^{\textup{opt}}$.
  Let~$m$ and $w$ be the man and woman connected by~$f$. 

  Since $m$ is the second choice of $w$ and $f$ is blocking in~$M^{\textup{opt}}$, we know that~$M^{\textup{opt}}$ does not cover~$w$. 
  However, any woman on a masculine augmenting path is matched by~$M^{\textup{opt}}$ with a man adjacent to her on this path, 
  so $w$ does not lie on $P_x^{\textup{opt}}$.
  Therefore, $m$ lies on $P_x^{\textup{opt}}$.
  
  Define $P$ as the subpath of $P_x^{\textup{opt}}$ from $x$ to $m$ plus the edge $f$. 
  Consider now $M_f=M^{\textup{opt}} \triangle (P_x^{\textup{opt}} \triangle P)$; then~$M_f$ is a matching because $w \in \mathcal{W}_0$.
  Clearly, $M_f$ is feasible, since $P_x^{\textup{opt}}$ is not a neutral path. 

  \begin{claimu}
  \label{claim:volatile-cost} 
    If $f'$ is an edge that blocks $M_f$ %(where $f=\{m,w\}$ and $w \in \mathcal{W}_0)$ 
    but does not block~$M^{\textup{opt}}$, then $P_x^{\textup{opt}}$ ends at a woman $y$, and~$f'$ is a volatile edge incident to $y$.
  \end{claimu}
  \begin{proof}[Proof of Claim~\ref{claim:volatile-cost}]
    Let $w'$ and $m'$ be the woman and man connected by $f'$. 
    The only persons that are matched differently in $M_f$ than in $M^{\textup{opt}}$ are those in $ (V(P_x^{\textup{opt}}) \setminus V(P)) \cup \{m,w\}$; recall that $f=\{m,w\}$ with $w \in \mathcal{W}_0$.
    Since both $m$ and $w$ are better off in $M_f$ than in $M^{\textup{opt}}$, and $f'$ blocks $M_f$ but not~$M^{\textup{opt}}$, we get that one endpoint of $f'$ must be in $V(P_x^{\textup{opt}}) \setminus V(P)$.

    Suppose first that $m' \in V(P_x^{\textup{opt}}) \setminus V(P)$.
    Then $M_f(m')=M_s(m')$ and since~$f'$ blocks $M_f$, we know that $m'$ prefers $w'$ to $M_s(m')$.
    But $M_s$ is stable, so~$w'$ prefers $M_s(w')$ to $m'$, and in particular is matched by $M_s$.
    As $f'$ blocks~$M_f$, we get that $M_f(w') \neq M_s(w')$. 
    Hence, as $w' \neq w$ (because $w \in \mathcal{W}_0$ but $w' \notin \mathcal{W}_0$), it must be the case that $M_f(w')=M^{\textup{opt}}(w')$ and $w'$ is on a connected component of $\mathcal{K}_{-x}$, denoting the union of all connected components of $M_s \triangle M^{\textup{opt}}$ other than $P_x^{\textup{opt}}$.
    This implies that $f'$ is blocking in $M_s \triangle \mathcal{K}_{-x}$ as well.
    Since $f'$ is not blocking in $M^{\textup{opt}}=M_s \triangle (\mathcal{K}_{-x} \cup P_x^{\textup{opt}})$, Lemma~\ref{lemma-type} implies that $f'$ must be a dependent edge relying on $P_x^{\textup{opt}}$.
    Note that since $m' \in V(P_x^{\textup{opt}})$, $f'$ cannot be of type~A (by the definition of a type~A edge), so it must be of type~B.
    However, Lemma~\ref{lem:non-aug-paths-case2x} tells us that a type~B edge relies either on a cycle or on a feminine path, but as $P_x^{\textup{opt}}$ is neither of the two, we arrive at a contradiction.
  
    Thus, we get $w' \in V(P_x^{\textup{opt}}) \setminus V(P)$. 
    Observe that $m'=m$ is not possible, because it would imply $f'=\{m,M^{\textup{opt}}(m)\}$ contradicting the assumption that~$f'$ blocks $M_f$ (note that $m$ prefers $w$ to $M^{\textup{opt}}(m)$ because $f$ blocks $M^{\textup{opt}}$).  
    From this we obtain that $f'$ cannot be contained in $P_x^{\textup{opt}}$: we have $f' \notin M_f$ by definition, and $f' \in M^{\textup{opt}}$ is also not possible, since in that case $m' \notin V(P_x^{\textup{opt}}) \setminus V(P)$ would imply $m'=m$.
    Thus, by $\Delta_{\mathcal W} \leq 2$ we know that $w'=y$, and since $f'$ blocks $M_f$ but not $M^{\textup{opt}}$, we also get that~$m'$ must be the second choice of $w'$. 
    Hence $f'$ is volatile, proving our claim.
  \renewcommand{\qedsymbol}{$\Diamond$}
  \end{proof}

  By Claim~\ref{claim:volatile-cost}, there can be at most one edge that is blocking in $M_f$ but not in $M^{\textup{opt}}$. 
  Since $f$ is an edge that blocks $M^{\textup{opt}}$ but not $M_f$, we get that $M_f$ is an optimal matching. 
  Moreover, either $M_f$ has more common edges with $M_s$ than~$M^{\textup{opt}}$, or $m$ is the last man on $P_x^{\textup{opt}}$, 
  in which case every man has the same partner in $M_f$ as in $M^{\textup{opt}}$ except for $m$ who is better off in~$M_f$ than in~$M^{\textup{opt}}$.
  This contradicts our choice of $M^{\textup{opt}}$.
\end{proof}

% then it must be incident to a connected component $Q$ of $M_s \triangle M^{\textup{opt}}$ that is either a feminine (and possibly neutral) augmenting path or a cycle.
%But since the woman endpoint of $f$ is not matched by $M_s$ and is not distinguished, only the man endpoint of $f$ can lie on $Q$.
%Therefore, if $f$ ceases to be blocking in $M^{\textup{opt}}$, it must be the case that some masculine path contains the woman endpoint of $f$. 

For any volatile edge $f$ we can decide in linear time if there exists a masculine augmenting path disjoint from $\mathcal{F}$ 
that contains the woman endpoint $w$ of~$f$, but not $f$ itself. 
Indeed, we can build such a path starting from $w$ by taking edges not in $M_s$ and edges in $M_s$ in an alternating manner;
by $\Delta_{\mathcal W} \leq 2$ and since we need to start with an edge different from $f$,
%(because no man in $\mathcal{M}_0$ and, consequently, no volatile edge can be contained in a masculine augmenting path), 
we always have at most one possibility to pick our next edge. 
This process may or may not result in a masculine augmenting path, but if it does, then the path is unique, proving the following claim.

\begin{proposition}
\label{unique-elim-path}
  Let $f$ be a volatile edge.
  If there exists a masculine augmenting path disjoint from $\mathcal{F}$ that contains the woman endpoint $w$ of $f$ but not $f$ itself, then this path is unique; we denote it by $Q^{\textup{elim}}_f$.
\end{proposition}
  
Let~$f$ be a volatile edge that is blocking in $M_s \triangle \mathcal{F}$. 
We say that a set $\mathcal{P}_f$ of masculine augmenting paths \emph{eliminates}~$f$ if 
(i) $Q^{\textup{elim}}_f$ exists and $Q^{\textup{elim}}_f \in \mathcal{P}_f$, and 
(ii) for any path $Q \in \mathcal{P}_f$, if there is a volatile blocking edge~$f'$ in $M_s \triangle Q$, then~$Q^{\textup{elim}}_{f'}$ exists and is contained in $\mathcal{P}_f$.
We refer to the (inclusion-wise) minimal set of masculine paths eliminating $f$ as the \emph{elimination paths} for~$f$,
and denote it by~$\mathcal{P}^{\textup{elim}}_f$. 
Further, we refer to the starting vertices of these paths as the \emph{elimination set} for~$f$.
The next lemma illuminates the role of elimination paths.

\begin{lemma}
\label{lem:elimination}
  If $f$ is an edge blocking in $M_s \triangle \mathcal{F}$ but not blocking in $M^{\textup{opt}}$, 
  then $M_s \triangle M^{\textup{opt}}$ contains all paths in $\mathcal{P}^{\textup{elim}}_f$.
\end{lemma}
\begin{proof}
  By Lemma~\ref{lem:eliminated-then-volatile}, the woman endpoint $w$  of $f$ lies on a masculine  path. 
  By Proposition~\ref{unique-elim-path}, this path can only be $Q^{\textup{elim}}_f$, and thus $Q^{\textup{elim}}_f$ must be contained in $M_s \triangle M^{\textup{opt}}$.
  For an inductive reasoning, assume that $Q$ is a masculine path in~$\mathcal{P}^{\textup{elim}}_f$ that is contained in $M_s \triangle M^{\textup{opt}}$ 
  and $f'$ is a volatile edge that is blocking in $M_s \triangle Q$.
  We claim that~$Q^{\textup{elim}}_{f'}$ is contained in $M_s \triangle M^{\textup{opt}}$, which by induction proves our lemma. 

  To see our claim, note that by Lemma~\ref{lem:novolatile}, $f'$ cannot be blocking in $M^{\textup{opt}}$.
  Since the woman endpoint of $f'$ is unmatched by $M_s$, it can only be blocking in $M_s \triangle Q$ if its man endpoint lies on $Q$.
  Thus, since~$f'$ is not blocking in~$M^{\textup{opt}}$ (but is blocking in $M_s \triangle Q$), 
  it must be the case that its woman endpoint lies on a masculine path in $M_s \triangle M^{\textup{opt}}$.
  Since such a path can only be $Q^{\textup{elim}}_{f'}$ by Proposition~\ref{unique-elim-path}, we get that $Q^{\textup{elim}}_{f'} \in \mathcal{P}^{\textup{elim}}_f$ and the lemma follows. 
\end{proof}

\medskip
\noindent
{\bf Step 7: Guessing relevant elimination sets in $M^{\textup{opt}}$.}
We call an edge \emph{relevant} in $M^{\textup{opt}}$, if it is a volatile edge blocking $M_s \triangle \mathcal{F}$, but it does not block~$M^{\textup{opt}}$. 
By Lemma~\ref{lem:elimination}, if $f$ is a relevant edge in $M^{\textup{opt}}$, then $M_s \triangle M^{\textup{opt}}$ must contain all paths in~$\mathcal{P}^{\textup{elim}}_f$. 
Since there may be several volatile edges blocking in $M_s \triangle \mathcal{F}$, we cannot determine the relevant ones among them by simply guessing them.
Instead, we only guess the elimination sets for all relevant edges. 
Clearly, these sets must be pairwise disjoint subsets of~$\mathcal{M}^{\star}_0$, %let $R_1, \dots, R_{\ell}$  denote them.
%It is not difficult to see that this yields at most $(|\mathcal{M}^{\star}_0|+1)^{|\mathcal{M}^{\star}_0|}$ possibilities.
so guessing them results in at most $(|\mathcal{M}^{\star}_0|+1)^{|\mathcal{M}^{\star}_0|}$ possibilities. 
Let us denote by $R_1, \dots, R_{\ell}$ the guessed elimination sets.
% Bell-numbers

\medskip
\noindent
{\bf Step 8: Computing cheapest elimination paths.}
For each set $R_i \subseteq \mathcal{M}^\star_0$ with $i = 1,\hdots,\ell$ that, according to our guesses made in Step 7, forms the elimination set for a volatile edge relevant in $M^{\textup{opt}}$, we determine \emph{some} volatile edge $f$ incident to $\mathcal{F}$ that is blocking in $M_s \triangle \mathcal{F}$ and whose elimination set is exactly $R_i$. 
Namely, we pick an edge $f$ among all such edges in a way that the number of blocking pairs in $M_s \triangle (\mathcal{F} \cup \mathcal{P}^{\textup{elim}}_f)$ is as small as possible.
Let~$f_i$ be the volatile edge chosen this way, and let $\mathcal{P}^{\textup{elim}}=\bigcup_{1 \leq i \leq \ell} \mathcal{P}^{\textup{elim}}_{f_i}$. 

\medskip
\noindent
{\bf Step 9: Computing remaining masculine paths.}
We define $\mathcal{M}_r=\mathcal{M}^\star_0 \setminus (R_1 \cup \dots \cup R_{\ell})$ as the set of distinguished men that are neither covered by~$M_s$ nor contained in any of the sets $R_1, \dots, R_{\ell}$. 
%It remains to determine an augmenting path for each $m \in \mathcal{M}_r$.
For each such $m$ we are going to compute an augmenting path $P_m$ disjoint from~$\mathcal{F}$ such that 
the number of edges that block $M_s \triangle (\mathcal{F} \cup \bigcup_{m \in \mathcal{M}_r} P_r)$ but not $M_s \triangle \mathcal{F}$ 
is minimized. 
We set $\mathcal{P}_r=\{P_m \mid m \in \mathcal{M}_r \}$.

To compute $\mathcal{P}_r$, we use the algorithm of Theorem~\ref{thm:restricted-matching-poly} with some modifications.

\smallskip
\noindent 
{\bf Step 9.1:} We compute all augmenting paths that start at a man $m \in \mathcal{M}_r$ and are disjoint from $\mathcal{F}$.
For each such augmenting path $P$ we define a set $C(P)$ containing those edges that block $M_s \triangle (\mathcal{F} \cup P)$ but not $M_s \triangle \mathcal{F}$, and are either non-volatile or have a woman endpoint in $\mathcal{F}$; we define the \emph{contributing cost} of $P$ as $|C(P)|$.

\smallskip
\noindent 
{\bf Step 9.2:} We construct the auxiliary graph $G_{\textup{path}}=(U,V;E)$ as follows: we set $U=\mathcal{M}_r$ and $V=\mathcal{W}_0 \cup \{m' \mid m \in \mathcal{M}_r\}$.
The edge set $E$ contains an edge $\{m,m'\}$ for each $m \in U$, as well as an edge $\{m,w\}$ whenever $m \in U$, $w \in\mathcal W_0$ and there exists an augmenting path disjoint from $\mathcal{F}$ with endpoints~$m$ and~$w$.
We define the weight of an edge $\{m,m'\}$ as the minimum contributing cost~$c_m^{\textup{min}}$ of any augmenting path starting at~$m$ and \emph{not} ending in $\mathcal W_0$, and we define the weight of an edge $\{m,w\}$ with $m \in U$ and $w \in\mathcal W_0$ as the minimum contributing cost of any augmenting path with endpoints $m$ and $w$, disjoint from $\mathcal{F}$.

\smallskip
\noindent 
{\bf Step 9.3:} We compute a minimum weight matching $M_P$ in $G_{\textup{path}}$ covering~$U$ 
the same way as in the algorithm of Theorem~\ref{thm:restricted-matching-poly}; let $\mathcal{P}$ denote the set of augmenting paths corresponding to the edges of the matching $M_P$.
Note that the paths in $\mathcal{P}$ are pairwise disjoint, by the construction of $G_{\textup{path}}$ and because~$M_P$ is a matching
in $G_{\textup{path}}$.

\smallskip
\noindent 
{\bf Step 9.4:} We eliminate all volatile edges that block $M_s \triangle (\mathcal{F} \cup \mathcal{P})$ but not $M_s \triangle \mathcal{F}$ and are not incident to~$\mathcal{F}$. 
We modify $\mathcal{P}$ iteratively. 
We start by setting $\mathcal{P}_{\textup{act}}=\mathcal{P}$.
At each iteration we modify~$\mathcal{P}_{\textup{act}}$ as follows. 
We check whether there exists a volatile edge  $\{m^*,w^*\}$ with $w^* \in \mathcal{W}_0$ that is not incident to $\mathcal{F}$, and blocks $M_s \triangle (\mathcal{F} \cup \mathcal{P}_{\textup{act}})$ but not $M_s \triangle \mathcal{F}$. 
If yes, then notice that~$w^*$ is not matched in $M_s \triangle (\mathcal{F} \cup \mathcal{P}_{\textup{act}})$, because 
$m^*$ is the second choice for $w^*$.
Let $P$ be the path of~$\mathcal{P}_{\textup{act}}$ containing~$w^*$.
We modify $\mathcal{P}_{\textup{act}}$ by truncating~$P$ to its subpath between its starting vertex and~$m^*$, and appending to it the edge $\{m^*,w^*\}$. 
This way, $\{m^*,w^*\}$ becomes an edge of the matching~$M_s \triangle (\mathcal{F} \cup \mathcal{P}_{\textup{act}})$. 
The iteration stops when there is no volatile edge disjoint from $\mathcal{F}$ blocking~$M_s \triangle (\mathcal{F} \cup \mathcal{P}_{\textup{act}})$
but not $M_s \triangle \mathcal{F}$. 
Note that once a volatile edge ceases to be blocking in $M_s \triangle (\mathcal{F} \cup \mathcal{P}_{\textup{act}})$, 
it cannot become blocking again during this process, so the algorithm performs at most $|\mathcal W_0|$ iterations.
For each $m \in \mathcal{M}_r$, let~$P_m$ denote the augmenting path in~$\mathcal{P}_{\textup{act}}$ covering~$m$ at the end of this step; we set $\mathcal{P}_r:=\{P_m \mid m \in \mathcal{M}_r\}$.

\smallskip
\noindent 
{\bf Step 9.5:}
Finally, we output the matching $M^{\textup{out}}=M_s \triangle (\mathcal{F} \cup \mathcal{P}^{\textup{elim}} \cup \mathcal{P}_r)$.

\medskip
It is straightforward to verify that the number of guesses made are bounded by a function of $|\mathcal{W}^\star_0|+|\mathcal{M}^\star_0|$, and all computations in a branch can be performed in time polynomial in the size~$|I|$ of the instance, yielding a fixed-parameter algorithm with parameter~$|\mathcal{W}^\star_0|+|\mathcal{M}^\star_0|$. 
It remains to prove the correctness of the proposed algorithm.
%The proof of the correctness of the proposed algorithm is deferred to Appendix~\ref{sec:deferredproofs}.  

To this end, we first prove a simple observation. 

\begin{proposition}
\label{prop:no-double-blocking}
  Suppose $\Delta_{\mathcal W} \leq 2$.
%Let $M$ be a matching and $K_1$ and $K_2$ be two connected components in $M_s \triangle M$. 
%Then there is no edge that blocks both $M_s \triangle K_1$ and $M_s \triangle K_2$
  Let $M$ be a matching and $\mathcal{K}_M$ the set of connected components in $M_s \triangle M$. 
  Let also $\mathcal{H}_1$ and $\mathcal{H}_2$ be two disjoint subsets of $\mathcal{K}_M$.
  Then there is no edge that blocks both $M_s \triangle \mathcal{H}_1$ and $M_s \triangle \mathcal{H}_2$
\end{proposition}
\begin{proof}
  Suppose that $e$ blocks both  $M_1:=M_s \triangle \mathcal{H}_1$  and  $M_2:=M_s \triangle \mathcal{H}_2$; by the stability of $M_s$, $e$ connects a person $a_1$ in $\mathcal{H}_1$ with a person $a_2$ in $\mathcal{H}_2$.
  Since $e$ blocks $M_1$, we know that $a_2$ prefers $a_1$ to its situation in $M_s$. 
  Since $e$ blocks $M_2$, we also get that $a_1$ prefers $a_2$ to its situation in $M_s$.
  This contradicts the stability of $M_s$.
\end{proof}

Next, let us prove that Step 9 works as promised.
\begin{lemma}
\label{lem:step9}
  For each $m \in \mathcal{M}_r$ let $P'_m$ be an augmenting path disjoint from~$\mathcal{F}$ and starting at $m$, and let $\mathcal{P}'_r=\{P'_m \mid m \in \mathcal{M}_r \}$.
  Then the number of edges that block $M_s \triangle (\mathcal{F} \cup \mathcal{P}_r)$ but not $M_s \triangle \mathcal{F}$
  is at most the number of edges that block $M_s \triangle (\mathcal{F} \cup \mathcal{P}'_r)$ but not $M_s \triangle \mathcal{F}$.
\end{lemma}
\begin{proof}
  Let $c'$ be the number of edges that block $M_s \triangle (\mathcal{F} \cup \mathcal{P}'_r)$ but not $M_s \triangle \mathcal{F}$.
  Then Claim~\ref{claim:step9-opt-2} implies that there can be at most $c'$ edges in $\bigcup_{m \in \mathcal{M}_r} C(P'_m)$. 
  \begin{claimu}
  \label{claim:step9-opt-2}
    If $e$ is an edge such that $e \in C(P'_m)$ for some $m \in \mathcal{M}_r$, then $e$ blocks $M_s \triangle (\mathcal{F} \cup \mathcal{P}'_r)$ but not $M_s \triangle \mathcal{F}$.
  \end{claimu}
  \begin{proof}[Proof of Claim~\ref{claim:step9-opt-2}.]
    Let $m^*$ and $w^*$ be the man and woman connected by $e$. 
    By $e \in C(P'_m)$ it does not block $M_s \triangle \mathcal{F}$, so we only need to show that $e$ blocks $M_s \triangle (\mathcal{F} \cup \mathcal{P}'_r)$.
    Suppose for contradiction that $e$ does not block  $M_s \triangle (\mathcal{F} \cup \mathcal{P}'_r)$.
    Then it must clearly be adjacent to some path $P'_x$, $x \in \mathcal{M}_r$.
    By $\Delta_{\mathcal{W}} \leq 2$ we know that $w^*$ is the endpoint of one of these paths, $w^* \in \mathcal{W}_0$, and since $e$ blocks $M_s \triangle (\mathcal{F} \cup P'_m)$ but not $M_s \triangle (\mathcal{F} \cup \mathcal{P}'_r)$ we also get that $m^*$ must be the less preferred choice of~$w^*$; this yields that $e$ is volatile.
    However, as $e$ is not incident to $\mathcal{F}$ (because it connects $P'_m$ and $P'_x$), this contradicts $e \in C(P'_m)$. 
  \renewcommand{\qedsymbol}{$\Diamond$}
  \end{proof}

  Let $m_1$ and $m_2$ be two distinct men in $\mathcal{M}_r$; we will show $C(P'_{m_1}) \cap C(P'_{m_2}) = \emptyset$.
  Assuming otherwise, let $e$ be an edge in $C(P'_{m_1}) \cap C(P'_{m_2})$. 
  Then~$e$ does not block $M_s \triangle \mathcal{F}$ but blocks both $M_s \triangle (\mathcal{F} \cup P'_{m_1})$ and $M_s \triangle (\mathcal{F} \cup P'_{m_2})$.
  Hence $e$ connects $P'_{m_1}$ with $P'_{m_1}$, and thus blocks $M_s \triangle P'_{m_1}$ and $M_s \triangle P'_{m_2}$ as well,  contradicting Proposition~\ref{prop:no-double-blocking}.
  Thus, $$\left|\bigcup_{m \in \mathcal{M}_r} C(P'_m)\right|=\sum_{m \in \mathcal{M}_r} |C(P'_m)| \leq c' \enspace .$$
  Then, by the definition of edge weights in $G_{\textup{path}}$, we get that there exists a matching $M'$ in $G_{\textup{path}}$ with weight at most $\sum_{m \in \mathcal{M}_r} |C(P'_m)| \leq c'$ that covers $U=\mathcal{M}_r$.
  Thus, the matching computed by the algorithm in Step 9.3 has weight at most $c'$ as well, yielding that the set $\mathcal{P}_r$ of augmenting path corresponding to this matching is such that $\sum_{m \in \mathcal{M}_r} |C(P_m)| \leq c'$, that is, their total contribution cost is at most $c'$.
  From this, Claim~\ref{claim:step9-opt-1} below implies that there can be at most $c'$ edges that block $M_s \triangle (\mathcal{F} \cup \mathcal{P}_r)$ but not $M_s \triangle \mathcal{F}$.
  
  \begin{claimu}
  \label{claim:step9-opt-1}
    If an edge $e$ blocks $M_s \triangle (\mathcal{F} \cup \mathcal{P}_r)$ but not $M_s \triangle \mathcal{F}$, then $e \in C(P_m)$ for some $m \in \mathcal{M}_r$.
  \end{claimu}
  \begin{proof}[Proof of Claim~\ref{claim:step9-opt-1}.]
    Let $m^*$ and $w^*$ be the man and woman connected by $e$.
    Clearly, $e$ is adjacent to some path in $\mathcal{P}_r$. 

    First, suppose that $e$ is adjacent to only one path $P_m$ where $m \in \mathcal{M}_r$.
    Then $e$ blocks $M_s \triangle (\mathcal{F} \cup P_m)$ as well.
    In that case, to prove $e \in C(P_m)$ we only need to show that either $w^*$ is in $\mathcal{F}$ or is $e$ is non-volatile.
    Suppose for contradiction that $e$ is volatile and $w^*$ is not in $\mathcal{F}$.
    However, as a result of Step 9.4, there can be no volatile edge that blocks $M_s \triangle (\mathcal{F} \cup \mathcal{P}_r)$, unless it is incident to $\mathcal{F}$.
    Hence, we must have that~$m^*$ is in $\mathcal{F}$, and since $e$ blocks $M_s \triangle (\mathcal{F} \cup P_m)$, we get that $m^*$ prefers $w^*$ to its partner in $M_s \triangle (\mathcal{F} \cup P_m)$, which is the same partner he has in $M_s \triangle \mathcal{F}$.
    However, this yields that $e$ blocks $M_s \triangle \mathcal{F}$ as well: since $w^* \in \mathcal{W}_0$ and $w^*$ is not in $\mathcal{F}$, she is unmatched in $M_s \triangle \mathcal{F}$. 
    Hence, we arrive at a contradiction, proving $e \in C(P_m)$.
  
    Second, suppose that $e$ connects two paths $P_{m_1}$ and $P_{m_2}$ where $m_1$ and $ m_2 $ are distinct men in $\mathcal{M}_r$.
    By $\Delta_{\mathcal{W}} \leq 2$, we know that $w^*$ is an endpoint of one of these paths; without loss of generality, we may assume that $w^*$ is an endpoint of $P_{m_1}$.
    Since $P_{m_1}$ is a masculine augmenting path, we get $w^* \in \mathcal W_0$.
    Now, we know that $e$ blocks $M_s \triangle (\mathcal{F} \cup P_{m_1} \cup P_{m_2})$, so it must also block $M_s \triangle (\mathcal{F} \cup P_{m_2})$, because $w^*$ is unmatched in $M_s \triangle (\mathcal{F} \cup P_{m_2})$ and $m^*$ has the same partner in $M_s \triangle (\mathcal{F} \cup P_{m_2})$ as in $M_s \triangle (\mathcal{F} \cup P_{m_1} \cup P_{m_2})$. 
    Again, $e$ cannot be volatile as a result of Step 9.4, yielding that $e$ is contained in $C(P_{m_2})$. 
  \renewcommand{\qedsymbol}{$\Diamond$}
  \end{proof}
   This proves the optimality of $\mathcal{P}_r$ as stated by the lemma.
\end{proof}

Next we state the following useful lemma.
\begin{lemma}
\label{lem:additive-elim-costs}
  Let $f$ be a volatile edge blocking in $M_s \triangle \mathcal{F}$, $Q$ an elimination path in $\mathcal{P}^{\textup{elim}}_{f}$, 
and $Q'$ a non-feminine augmenting path not contained in $\mathcal{P}^{\textup{elim}}_{f}$, disjoint from $\mathcal{F}$.
  Let $e$ be an edge. 
  Then $e$ blocks $M_s \triangle (Q \cup Q')$ if and only if it blocks $M_s \triangle Q$ or $M_s \triangle Q'$.
%; moreover, $e$ cannot block both $M_s \triangle Q$ and $M_s \triangle Q'$.
\end{lemma}
\begin{proof}%[of Claim~\ref{claim:additive-elim-costs}]
  First observe that the statement of the lemma is obviously true for any edge $e$ that is incident to at most one of $Q$ and $Q'$.
  So suppose that $e$ connects~$Q$ with $Q'$.
  Since both $Q$ and $Q'$ are masculine augmenting paths and $\Delta_{\mathcal{W}} \leq 2$, by Proposition~\ref{prop:2xpaths} we know that $e$ connects the woman endpoint~$w$ of one of these paths with a man $m$ on the other path; moreover, $w$ must be unmatched by $M_s$ as it is the endpoint of a masculine augmenting path.
  Let $Q_w$ be the path (either $Q$ or $Q'$) that contains $w$, and let~$Q_m$ be the one containing~$m$.
  By $w \in \mathcal{W}_0$ and the stability of $M_s$, $e$ cannot be blocking in $M_s \triangle Q_w$.

  Now, if $e$ blocks $M_s \triangle (Q_m \cup Q_w)$, then it blocks $M_s \triangle Q_m$ as well, since $m$ prefers $w$ to its partner in $M_s \triangle Q_m$ and $w$ is unmatched in $M_s \triangle Q_m$.

  It remains to show that if $e$ blocks $M_s \triangle Q_m$, then it also blocks $M_s \triangle (Q_w \cup Q_m)$.
  Supposing otherwise, it must be the case that (i) $m$ prefers $w$ to its partner in $M_s \triangle Q_m$, but (ii) $w$ prefers its partner in $M_s \triangle Q_w$ to $m$.
  Consequently,~$e$ is a volatile edge.\footnote{This case is analogous with $e$ being a type~A edge; however, the paths $Q$ and $Q'$ here need not be paths of $M_s \triangle M^{\textup{opt}}$.}
  We distinguish two cases.

  Case (A): $Q_w=Q'$. Then $Q'$ must be the unique masculine augmenting path containing the woman endpoint of $e$ but not $e$ itself, that is, $Q'=Q^{\textup{elim}}_e$.
  Hence, since $e$ is a volatile edge blocking  $M_s \triangle Q$ and $Q \in \mathcal{P}^{\textup{elim}}_{f}$, by the definition of elimination paths we get $Q' \in \mathcal{P}^{\textup{elim}}_{f}$ as well, a contradiction.

  Case (B): $Q_m=Q'$. Then $Q'$ contains the man endpoint of $e$. 
  However, again by the definition of elimination paths, as $Q$ is a path of $\mathcal{P}^{\textup{elim}}_{f}$, the man endpoint of $e$ must be contained either in $\mathcal{F}$ (if $e=f$) or another path of $\mathcal{P}^{\textup{elim}}_{f}$ that triggered the addition of $Q$ into $\mathcal{P}^{\textup{elim}}_{f}$; both possibilities contradict our conditions on $Q'$.
\renewcommand{\qedsymbol}{$\Diamond$}
\end{proof}  

Now we are ready to show that our algorithm is correct.
\begin{proof}[Proof of Theorem~\ref{thm:fpt-case2x}.]
  To prove the correctness of the proposed algorithm, we first show that if all our guesses are true, then the paths and cycles in $\mathcal{F}$ are exactly the feminine paths and the cycles of $M_s \triangle M^{\textup{opt}}$. 
  The correctness of Step~2 is stated by Lemma~\ref{lem:finding-all-cycles}.
  From the description of our algorithm, it should be clear that the correctness of Steps~3 and~4 follows directly from  Proposition~\ref{prop:2xpaths} and Lemma~\ref{lem:non-aug-paths-case2x}. 
  Lemma~\ref{lem:case2x-optimalpaths} guarantees the correctness of Step~5, which proves that in Steps~1--5 the algorithm indeed finds all cycles and feminine paths of $M_s \triangle M^{\textup{opt}}$. 

  Next, let us argue that $M^{\textup{out}}$ is indeed a matching.
  For this, apart from the correctness of Steps~1--5, we need that the masculine paths in $M_s \triangle M^{\textup{out}}$ are disjoint from $\mathcal{F}$. 
  Further, we also need that paths in $\mathcal{P}^{\textup{elim}}$ are disjoint from all remaining masculine paths. To see this, observe that any path~$P$ in $\mathcal{P}^{\textup{elim}}$ ends at a woman $w \in \mathcal{W}_0$ which is connected by a volatile edge (not on $P$) to either $\mathcal{F}$ or to another path in $\mathcal{P}^{\textup{elim}}$.
  Hence, $w$ cannot lie on any masculine path other than $P$ by Proposition~\ref{unique-elim-path}. 
  Thus,~$M^{\textup{out}}$ is a matching.
  Its feasibility is implied by the correctness of Steps 1--5, and the definition of augmenting paths.
%the facts that the algorithm only chooses valid masculine augmenting paths.

  It remains to argue that $M^{\textup{out}}$ admits at most as many blocking pairs as~$M^{\textup{opt}}$.
  First, Lemma~\ref{lem:eliminated-then-volatile} implies that all edges blocking in $M_s \triangle \mathcal{F}$ are either relevant 
  volatile edges in $M^{\textup{opt}}$, or they are also blocking in~$M^{\textup{opt}}$. 
%  Assuming our guesses in Step 7 are correct, there are exactly $\ell$ relevant volatile edges in~$M^{\textup{opt}}$. 
  Furthermore, if~$f_i^{\textup{opt}}$ is a relevant volatile edge with elimination set $R_i$ for some $i \in \{1,\dots, \ell\}$, then by Lemma~\ref{lem:elimination} we know that all elimination paths in $\mathcal{P}^{\textup{elim}}_{f^{\textup{opt}}_i}$ must be contained in $M_s \triangle M^{\textup{opt}}$.
  As $\mathcal{F}$ is the set of feminine augmenting paths and cycles of $M_s \triangle M^{\textup{opt}}$, 
  in Step~8 the algorithm is bound to find \emph{some} volatile edge $f_i$ (though not necessarily $f^{\textup{opt}}_i$) 
  that is blocking in $M_s \triangle \mathcal{F}$ and whose elimination set is~$R_i$. 
  Furthermore, by our choice of $f_i$, there are at most as many blocking pairs in $M_s \triangle (\mathcal{F} \cup \mathcal{P}^{\textup{elim}}_{f_i})$
  as there are in  $M_s \triangle (\mathcal{F} \cup \mathcal{P}^{\textup{elim}}_{f^{\textup{opt}}_i})$.
  
  Let the \emph{contribution} of a volatile edge $e$, denoted by $C(e)$, be the set of edges that 
  block $M_s \triangle (\mathcal{F} \cup \mathcal{P}^{\textup{elim}}_{e})$ but not $M_s \triangle \mathcal{F}$;
  we extend this notion to any set $E$ of volatile edges with pairwise disjoint elimination sets by defining the contribution $C(E)$ 
  of $E$ as the set of edges that 
  block $M_s \triangle (\mathcal{F} \cup \bigcup_{e \in E} \mathcal{P}^{\textup{elim}}_{e})$ but not $M_s \triangle \mathcal{F}$.
  By our choice of $f_i$,  we know  $|C(f_i)| \leq |C(f^{\textup{opt}}_i)|$.
 
  We are going to show 
  $|C(\{f_i \mid 1 \leq i \leq \ell\})| \leq |C(\{f^{\textup{opt}}_i \mid 1 \leq i \leq \ell\})|,$ which implies that 
  $M_s \triangle (\mathcal{F} \cup \bigcup_{1 \leq i \leq \ell} \mathcal{P}^{\textup{elim}}_{f_i})$ has at most as many
  blocking pairs as $M_s \triangle (\mathcal{F} \cup \bigcup_{1 \leq i \leq \ell} \mathcal{P}^{\textup{elim}}_{f^{\textup{opt}}_i})$ does.
  To prove this, it suffices to prove the following claim.
  \begin{claimu}
  \label{claim:contribution-additivity}
    Let $F \cup \{f\}$ be a set of volatile edges with pairwise disjoint elimination sets, $F \neq \emptyset$.
    Then $|C(f)|+|C(F)|=|C(F \cup \{f\})|$.
  \end{claimu}
  \begin{proof}[Proof of Claim~\ref{claim:contribution-additivity}.]
    We need to prove that any edge $e$ blocks $M_s \triangle (\mathcal{F} \cup \mathcal{P}^{\textup{elim}}_f \cup \bigcup_{f' \in F} \mathcal{P}^{\textup{elim}}_{f'})$ but not $M_s \triangle \mathcal{F}$ if and only if it blocks exactly one of $M_s \triangle (\mathcal{F} \cup \mathcal{P}^{\textup{elim}}_f )$ and $M_s \triangle (\mathcal{F} \cup \bigcup_{f' \in F}\mathcal{P}^{\textup{elim}}_{f'})$ but not $M_s \triangle \mathcal{F}$.
    So let us assume that $e$ does not block  $M_s \triangle \mathcal{F}$.
  
    Notice that if $e$ is incident to only one of $\mathcal{P}^{\textup{elim}}_f $ and $\bigcup_{f' \in F} \mathcal{P}^{\textup{elim}}_{f'}$, then our claim is immediate. 
    So suppose that $e$ connects a path $Q \in \mathcal{P}^{\textup{elim}}_f $ with a path $Q' \in \bigcup_{f' \in F} \mathcal{P}^{\textup{elim}}_{f'}$.
    Observe that it suffices to show that $e$ blocks $M_s \triangle (Q \cup Q')$ if and only if it blocks exactly one of $M_s \triangle Q$ and $M_s \triangle Q'$, because $e$ cannot be incident to any connected component of $\mathcal{F} \cup \mathcal{P}^{\textup{elim}}_f \cup \bigcup_{f' \in F} \mathcal{P}^{\textup{elim}}_{f'}$ other than $Q$ and $Q'$.
    Now the claim follows directly from Lemma~\ref{lem:additive-elim-costs} and Proposition~\ref{prop:no-double-blocking}.
  \renewcommand{\qedsymbol}{$\Diamond$}
  \end{proof}  

  It remains to consider the blocking pairs contributed by the paths $\mathcal{P}_r^{\textup{opt}}:=\{P_m^{\textup{opt}} \mid m \in \mathcal{M}_r\}$.
  \begin{claimu}  
  \label{claim:pr-opt-1}
    If $e$ is an edge that blocks $M_s \triangle (\mathcal{F} \cup \mathcal{P}_r^{\textup{opt}})$  but not $M_s \triangle \mathcal{F}$, then $e$ blocks $M^{\textup{opt}}$ but not $M^{\textup{opt}} \triangle \mathcal{P}_r^{\textup{opt}}$.
  \end{claimu}
  \begin{proof}[Proof of Claim~\ref{claim:pr-opt-1}.]
    Recall that by definition $M^{\textup{opt}}=M_s \triangle (\mathcal{F} \cup \mathcal{P}_r^{\textup{opt}} \cup \bigcup_{1 \leq i \leq \ell} \mathcal{P}^{\textup{elim}}_{f_i^{\textup{opt}}} )$ and $M^{\textup{opt}} \triangle \mathcal{P}_r^{\textup{opt}} = M_s \triangle (\mathcal{F} \cup \bigcup_{1 \leq i \leq \ell} \mathcal{P}^{\textup{elim}}_{f_i^{\textup{opt}}}) $, so if $e$ does not have an endpoint in $\bigcup_{1 \leq i \leq \ell} \mathcal{P}^{\textup{elim}}_{f_i^{\textup{opt}}}$, then the claim follows immediately.

    So suppose that $e$ is incident to a path $Q \in  \mathcal{P}^{\textup{elim}}_{f^{\textup{opt}}_i}$ for some $i \in \{1, \dots,\ell\}$.    
    As $e$ blocks $M_s \triangle (\mathcal{F} \cup \mathcal{P}_r^{\textup{opt}})$  but not $M_s \triangle \mathcal{F}$, the other endpoint of $e$ must be contained in a path $P_m^{\textup{opt}}$ for some $m \in \mathcal{M}_r$.
    Since $Q$ and $P_{m}^{\textup{opt}}$ are the only connected components of $M_s \triangle M^{\textup{opt}}$ that $e$ is adjacent to, the condition of our claim yields that $e$ blocks $M_s \triangle P_m^{\textup{opt}}$, and we need to show that $e$ blocks $M_s \triangle (Q \cup P_m^{\textup{opt}})$ but not $M_s \triangle Q$.
    To see this, first notice that by Proposition~\ref{prop:no-double-blocking}, edge~$e$ cannot block $M_s \triangle Q$, because $Q$ and $P_m^{\textup{opt}}$ are disjoint.
    Second, as $e$ blocks $M_s \triangle P_m^{\textup{opt}}$, Lemma~\ref{lem:additive-elim-costs} implies that it must block $M_s \triangle (Q \cup P_m^{\textup{opt}})$ as well, proving our claim.
  \renewcommand{\qedsymbol}{$\Diamond$}
  \end{proof}
 
  Let $c_r$ be the number of those edges that block $M^{\textup{opt}}$ but not $M^{\textup{opt}} \triangle \mathcal{P}_r^{\textup{opt}}$. 
  Then Claim~\ref{claim:pr-opt-1} implies that there are at most $c_r$ edges that block $M_s \triangle (\mathcal{F} \cup \mathcal{P}_r^{\textup{opt}})$ but not $M_s \triangle \mathcal{F}$.
  Since Steps~9.1--9.4 calculate a set $\mathcal{P}_r=\{ P_m \mid m \in \mathcal{M}_r\}$ of augmenting paths that minimizes the number of edges that block $M_s \triangle (\mathcal{F} \cup \mathcal{P}_r)$ but not $M_s \triangle \mathcal{F}$, we know that 
  there are at most $c_r$ such edges. 
  This in turn shows that there can be at most $c_r$ edges that block $M^{\textup{out}}$ but not $M^{\textup{out}} \triangle \mathcal{P}_r$, as implied by Claim~\ref{claim:pr-opt-2}.
  This proves the optimality of~$M^{\textup{out}}$.
  
  \begin{claimu}
  \label{claim:pr-opt-2}
    If $e$ is an edge that blocks $M^{\textup{out}}$ but not $M^{\textup{out}} \triangle \mathcal{P}_r$, then $e$ blocks $M_s \triangle (\mathcal{F} \cup \mathcal{P}_r)$ but not $M_s \triangle \mathcal{F}$.
  \end{claimu}
  \begin{proof}[Proof of Claim~\ref{claim:pr-opt-2}.]
    Since $M^{\textup{out}}=M_s \triangle (\mathcal{F} \cup \mathcal{P}_r \cup \mathcal{P}^{\textup{elim}} )$ and $M^{\textup{out}} \triangle \mathcal{P}_r = M_s \triangle (\mathcal{F} \cup \mathcal{P}^{\textup{elim}})$, we have that if $e$ does not have an endpoint in $\mathcal{P}^{\textup{elim}}$, then the claim follows immediately.

    So suppose that $e$ is incident to a path $Q \in  \mathcal{P}^{\textup{elim}}_{f_i}$ for some $i \in \{1, \dots,\ell\}$.    
    As~$e$ blocks $M^{\textup{out}}$  but not $M^{\textup{out}} \triangle \mathcal{P}_r$, the other endpoint of $e$ must be contained in a path $P_m$ for some $m \in \mathcal{M}_r$.
    Since $Q$ and $P_{m}$ are the only connected components of $M_s \triangle M^{\textup{out}}$ that $e$ is adjacent to, the condition of our claim yields that $e$ blocks $M_s \triangle (Q \cup P_m)$ but not $M_s \triangle Q$, and we only need to show that $e$ blocks $M_s \triangle P_m$, which is directly implied by 
%  (the fact that $e$ does not block $M_s$ and hence $M_s \triangle \mathcal{F}$ is implies by the stability of $M_s$).
    Lemma~\ref{lem:additive-elim-costs}, proving our claim.
  \renewcommand{\qedsymbol}{$\Diamond$}
  \end{proof}
  This finishes our proof of correctness for Theorem~\ref{thm:fpt-case2x}.
\end{proof}

As each augmenting path contains at least one edge that blocks~$M^{\textup{opt}}$, the number of blocking pairs admitted by~$M^{\textup{opt}}$ is at least  $(|\mathcal{W}^{\star}_0|+|\mathcal{M}^{\star}_0|)/2$.
Thus, we get Corollary~\ref{thm:fpt-case2xCor}.
\begin{corollary}
\label{thm:fpt-case2xCor}
  The special case of {\sc SMC} where each woman finds at most two men acceptable (i.e., $\Delta_{\mathcal{W}} \leq 2$) is fixed-parameter tractable for parameter $b$.
%  There is a fixed-parameter algorithm with parameter $b$ for the special case of {\sc SMC} where each woman finds at most two men acceptable (i.e., $\Delta_{\mathcal{W}} \leq 2$). 
    %either the number $|\mathcal{W}^{\star} \cup \mathcal{M}^{\star}|$ of distinguished persons to be covered, or
\end{corollary}

It remains to deal with the third among the parameterizations we have to consider. So let us turn our attention to the complexity of the {\sc SMC} with $\Delta_{\mathcal{W}} \leq 2$ where we take $|\mathcal{W}^{\star}|+\Delta_{\mathcal{M}}$ as the parameter. 

\begin{theorem}
\label{thm:fpt-case2x-extended}
  There is a fixed-parameter algorithm for the special case of {\sc SMC} where each woman finds at most two men acceptable (i.e., $\Delta_{\mathcal{W}} \leq 2$), and the parameter is~$|\mathcal{W}^{\star}|+\Delta_{\mathcal{M}}$, the number of distinguished women plus the maximum length of men's preference lists. 
\end{theorem}

It turns out that Theorem~\ref{thm:fpt-case2x-extended} can be proved by a modified version of the algorithm described above to prove Theorem~\ref{thm:fpt-case2x}.
In fact, we start with applying Steps 1 to 6.
However, we can no longer apply Step 7, because now $|\mathcal{M}^\star|$ can be unbounded, and thus guessing the elimination sets would not yield a fixed-parameter tractable algorithm.
To circumvent this problem, we rely on an observation stated by Lemma~\ref{lem:loose-edges} which shows that ``almost all'' volatile edges that are blocking in $M_s \triangle \mathcal{F}$ are relevant edges, and thus have to be eliminated.
This allows us to efficiently guess the set of relevant volatile edges, leading us to a fixed-parameter tractable algorithm.
Hence, instead of guessing only the elimination sets (that is, the man endpoints of the elimination paths) we can now guess the exact set of edges that the optimal solution eliminates, so in fact the modified algorithm can be viewed as a more simple approach. 

%the number of volatile blocking edges in $M_s \triangle \mathcal{F}$ is at most $(Delta_{\mathcal{M}}-1)|\mathcal{W}^{\star}|$, and hence is upper-bounded by a function of the parameter. Therefore, to obtain fixed-parameter tractability, we can simply guess which of these few volatile blocking edges are relevant, instead of guessing only the elimination sets.

Let $E^{\textup{vol}}$ denote the set of all volatile blocking edges in $M_s \triangle \mathcal{F}$, and let $\{m,w\} \in E^{\textup{vol}}$ for some man~$m$. 
We say that $\{m,w\}$ is \emph{loose}, if 
\begin{itemize}
  \item both $M_s$ and $M^{\textup{opt}}$ assign a partner to $m$,
  \item $m$ prefers $M_s(m)$ to $w$, and $w$ to $M^{\textup{opt}}(m)$, and
  \item $M^{\textup{opt}}(m)$ is not distinguished.
\end{itemize}

\begin{lemma} 
\label{lem:loose-edges}
  All loose edges in $E^{\textup{vol}}$ are relevant, and at most $(\Delta_{\mathcal{M}}-1) |\mathcal{W}^\star|$ edges in $E^{\textup{vol}}$ are \emph{not} loose.
\end{lemma}
\begin{proof}
  First suppose that $\{m,w\}$ for some man $m$ is loose.
%, and let $y=M_s(m)$ and $z=M^{\textup{opt}}(m)$. 
  Since $\{m,w\}$ is volatile, we have that $m$ is the less preferred man acceptable to $w$, and since $\{m,w\}$ is blocking in  $M_s \triangle \mathcal{F}$, we know that $w$ is unmatched in $M_s \triangle \mathcal{F}$.

  For the sake of contradiction, let us suppose that $\{m,w\}$ is not relevant, meaning that it is still blocking in $M^{\textup{opt}}$.
  In this case, $w$ must be unmatched in~$M^{\textup{opt}}$ too.
  We define a matching $M'$ obtained from~$M^{\textup{opt}}$ by modifying only the partner of $m$ to be $M'(m)=w$ and, 
  i.e., $M'=(M^{\textup{opt}} \setminus \{\{m,M^{\textup{opt}}(m)\}\}) \cup \{\{w,m\}\}$. 
  Observe that $M'$ is indeed a matching, because $w$ was unmatched in~$M^{\textup{opt}}$.
  It is also feasible, because only $M^{\textup{opt}}(m)$ becomes unmatched by this modification, and $M^{\textup{opt}}(m)$ is not distinguished (by the definition of a loose edge).
  Furthermore, while $\{m,w\}$ is not blocking in $M'$, the only edge that might be blocking in $M'$ but not in $M^{\textup{opt}}$ is the edge adjacent to $M^{\textup{opt}}(m)$ but not to~$m$, if existent (because both $m$ and $w$ are better off in $M'$ than in~$M^{\textup{opt}}$).
  Hence, $M'$ is also an optimal matching.
  It should also be clear that it has the same number of common edges with $M_s$ as $M^{\textup{opt}}$.
  Note also that each man has the same partner in $M'$ as in $M^{\textup{opt}}$ except for $m$ who is better off in~$M'$ than in~$M^{\textup{opt}}$.
  This contradicts our choice of $M^{\textup{opt}}$, proving that $\{m,w\}$ must indeed be relevant. 

  To prove the second part of the lemma, let us consider any edge $\{m,w\} \in E^{\textup{vol}}$ with $m \in \mathcal{M}$. 
  Again,~$w$ must be unmatched in the matching $M_s \triangle \mathcal{F}$, since otherwise $\{m,w\}$ would not block it (because $m$ is the less preferred man in $w$'s preference list).
  By contrast, $\{m,w\}$ cannot block $M_s$, which can be explained in three different ways, giving rise to the following three cases: 
  \begin{itemize}
    \item Case A: $\{m,w\} \in M_s$,
    \item Case B: $\{m,w\} \notin M_s$, but $M_s(w)$ exists and is preferred by $w$ to $m$, or 
    \item Case C: $\{m,w\} \notin M_s$, but $M_s(m)$ exists and is preferred by $m$ to $w$.
  \end{itemize}

  In Case~A, by $\{m,w\} \in M_s$, the edge $\{m,w\}$ must be contained in an augmenting path of $M_s \triangle \mathcal{F}$, with $w$ being an endpoint (since $w$ is not matched in $M_s \triangle \mathcal{F}$).
  In Case~B, $M_s$ matches $w$ to its first choice, say $x$, so the edge $\{w,x\}$ must be contained in an augmenting path of $M_s \triangle \mathcal{F}$, with $w$ being an endpoint. 
  Thus, in both cases we know that $w$ is an endpoint of a feminine augmenting path of $\mathcal{F}$.
  However, Cases~A and B exclude each other, because in Case~A the augmenting path ending at $w$ contains the edge connecting $w$ to her second choice, while in Case B it connects $w$ to her first choice. 
  Since there are at most $|\mathcal{W}^\star|$ feminine augmenting paths, it follows that the number of such edges $\{m,w\} \in E^{\textup{vol}}$ where the conditions of Cases~A or~B hold is at most $|\mathcal{W}^\star|$.

  In Case~C, let us first observe that $m$ gets matched in $M_s \triangle \mathcal{F}$.
  Indeed, if this were not the case, then $m$ would be the man endpoint of a feminine augmenting path, with the last edge contained in~$M_s$; however, this is not possible (simply because feminine paths start with women and with edges not in~$M_s$).
  Because $\mathcal{F}$ is a collection of connected components of $M_s \triangle M^{\textup{opt}}$, with~$m$ contained in $\mathcal{F}$, it is clear that the partner of $m$ in $M_s \triangle \mathcal{F}$ is the same as his partner in $M^{\textup{opt}}$. 
  Thus, $M^{\textup{opt}}(m)$  exists, and since $\{m,w\}$ blocks $M_s \triangle \mathcal{F}$, we know that $m$ prefers $w$ to $M^{\textup{opt}}(m)$. 
  Therefore, either $\{m,w\}$ is a loose edge, or $M^{\textup{opt}}(m)$ is distinguished; in the latter case, we say that $\{m,w\}$ \emph{belongs} to this distinguished woman. 

  Now, observe that all edges belonging to some woman $z \in \mathcal{W}^\star$ are adjacent to $M^{\textup{opt}}(z)$, and $M^{\textup{opt}}(z)$ is adjacent to at least two edges \emph{not} belonging to $z$: namely, the edges $\{M^{\textup{opt}}(z),z\}$ and $\{ M^{\textup{opt}}(z),M_s(M^{\textup{opt}}(z))\}$. 
  Hence, at most $\Delta_{\mathcal{M}}-2$ edges may belong to $z$, implying that there are at most $(\Delta_{\mathcal{M}}-2)|\mathcal{W}^\star|$ edges in total belonging to distinguished women. 
  Taking into account all three cases, we get that there are at most $(\Delta_{\mathcal{M}}-1)|\mathcal{W}^\star|$ edges in~$E^{\textup{vol}}$ that are not loose.
\end{proof}
 
Let us now describe our algorithm proving Theorem \ref{thm:fpt-case2x-extended} in detail.
This algorithm starts with Steps 1 to 6 (as given for Theorem~\ref{thm:fpt-case2x}), and then applies Steps $7^\star$ and $8^\star$ below.
The last step of our algorithm is Step 9 from the algorithm for Theorem~\ref{thm:fpt-case2x}.

\smallskip
{\noindent \bf Step $7^\star$: Guessing the relevant edges.} To determine which edges among those volatile edges that block $M_s \triangle \mathcal{F}$ are relevant, we first compute the set~$E^{\textup{loose}}$ of all loose edges in $E^{\textup{vol}}$.
Notice that when checking some edge $\{m,w\} \in E^{\textup{vol}}$, we already know $M^{\textup{opt}}(m)$: since $m$ lies on a path or cycle of $\mathcal{F}$, it gets matched in $M_s \triangle \mathcal{F}$ to the same woman as in $M^{\textup{opt}}$.
Therefore, checking whether $\{m,w\}$ is loose is straightforward. 

After determining all loose edges (which are all relevant by Lemma~\ref{lem:loose-edges}), we guess the remaining set of relevant edges in $E^{\textup{vol}} \setminus E^{\textup{loose}}$.
By Lemma~\ref{lem:loose-edges}, this yields at most $2^{(\Delta_{\mathcal{M}}-1)|\mathcal{W}^\star|}$ possibilities, and in the branch where our guess is correct, we obtain the set $E^{\textup{rel}}$ of all relevant volatile edges blocking in $M_s \triangle \mathcal{F}$.

\smallskip
{\noindent \bf Step $8^\star$: Computing cheapest elimination path.} This step is a simplification of Step 8 of the algorithm for Theorem~\ref{thm:fpt-case2x}.
By Lemma~\ref{lem:elimination}, for each $f \in E^{\textup{rel}}$ the elimination paths in $\mathcal{P}^{\textup{elim}}_f$ (determined in Step~6) are all contained in $M_s \triangle M^{\textup{opt}}$; let $\mathcal{P}^{\textup{elim}}=\cup_{f \in  E^{\textup{rel}}} \mathcal{P}^{\textup{elim}}_f$.

\smallskip
{\noindent \bf Step $9^\star$: Computing remaining masculine paths.} 
Let $\mathcal{M}_r$ denote the set of distinguished men in~$\mathcal{M}^\star_0$ not covered by any of the 
elimination paths in~$\mathcal{P}^{\textup{elim}}$. 
For each $m \in \mathcal{M}_r$ we compute an augmenting path $P_m$ disjoint from~$\mathcal{F}$ such that the number of edges that block $M_s \triangle (\mathcal{F} \cup \mathcal{P}_r)$ but not $M_s \triangle \mathcal{F}$ is minimized, where $\mathcal{P}_r$ is the union of all paths $P_m$, $m \in M_r$. 
We accomplish this with the exact same method as in Step 9 of  the algorithm for Theorem~\ref{thm:fpt-case2x}.
Finally, we output the matching $M^{\textup{out}}=M_s \triangle (\mathcal{F} \cup \mathcal{P}^{\textup{elim}} \cup \mathcal{P}_r)$.

%\smallskip
Note that the number of guesses made in Steps 1 to 6 are bounded by a function of $|\mathcal{W}^\star_0|$, and the guesses made in Step $7^\star$ result in at most $2^{(\Delta_{\mathcal{M}}-1)|\mathcal{W}^\star|}$ possibilities. 
Since all computations in a branch can be performed in time polynomial in the size~$|I|$ of the instance, we obtain a fixed-parameter algorithm with parameter $|\mathcal{W}^\star|+\Delta_{\mathcal{M}}$. 
It remains to prove its correctness.
%The proof of the correctness of the proposed algorithm is deferred to Appendix~\ref{sec:deferredproofs}.  

\begin{proof}[Proof of Theorem~\ref{thm:fpt-case2x-extended}.]
  The proof is a straightforward adaptation of our proof for Theorem~\ref{thm:fpt-case2x}. The only difference is that instead of finding \emph{some} elimination paths that eliminate the necessary number of relevant volatile edges blocking in $M_s \triangle \mathcal{F}$, our algorithm now directly finds all of the relevant edges among those volatile edges that block $M_s \triangle \mathcal{F}$, together with the corresponding elimination paths; this is a consequence of Lemma~\ref{lem:loose-edges} and the definition of Step~$7^\star$.
  As Step~$8^\star$ can also be viewed as a simplification of Step~8 as defined for the algorithm for Theorem~\ref{thm:fpt-case2x}, it can be verified that the exact same arguments that prove the correctness of Step~8 in our algorithm for Theorem~\ref{thm:fpt-case2x} also imply the correctness of Step~$8^\star$ for our algorithm for Theorem~\ref{thm:fpt-case2x-extended} in a straightforward manner.  
\end{proof}

\vspace*{-1em}
\section{Discussion}
\label{sec:discussion}
\vspace*{-0.5em}
We provided a systematic study of the computational complexity of \textsc{Stable Marriage with Covering Constraints}.
Our main result is a complete computational complexity trichotomy into polynomial-time solvable cases, $\mathsf{NP}$-hard and fixed-parameter tractable cases, and $\mathsf{NP}$-hard and $\mathsf{W}[1]$-hard cases, for all possible combinations of five natural parameters:
\vspace*{-0.5em}
\begin{itemize}
  \item $|\mathcal M^\star|$: the number of distinguished men,
  \item $|\mathcal W^\star|$: the number of distinguished women,
  \item $\Delta_{\mathcal M}$: the maximum length of preference lists for men, 
  \item $\Delta_{\mathcal W}$: the maximum length of preference lists for women, and
  \item $b$: the number of blocking pairs allowed.
\end{itemize}
As a special case, we solved a problem by Hamada et al.~\cite{HamadaEtAl2014}. 

Figure~\ref{fig-decision-diagram} provides a decision diagram showing that our results indeed fully determine the computational complexity of SMC with respect to the set $S=\{b,|\mathcal W^\star|,|\mathcal M^\star|,\Delta_{\mathcal M},\Delta_{\mathcal W}\}$ of possible parameters. 
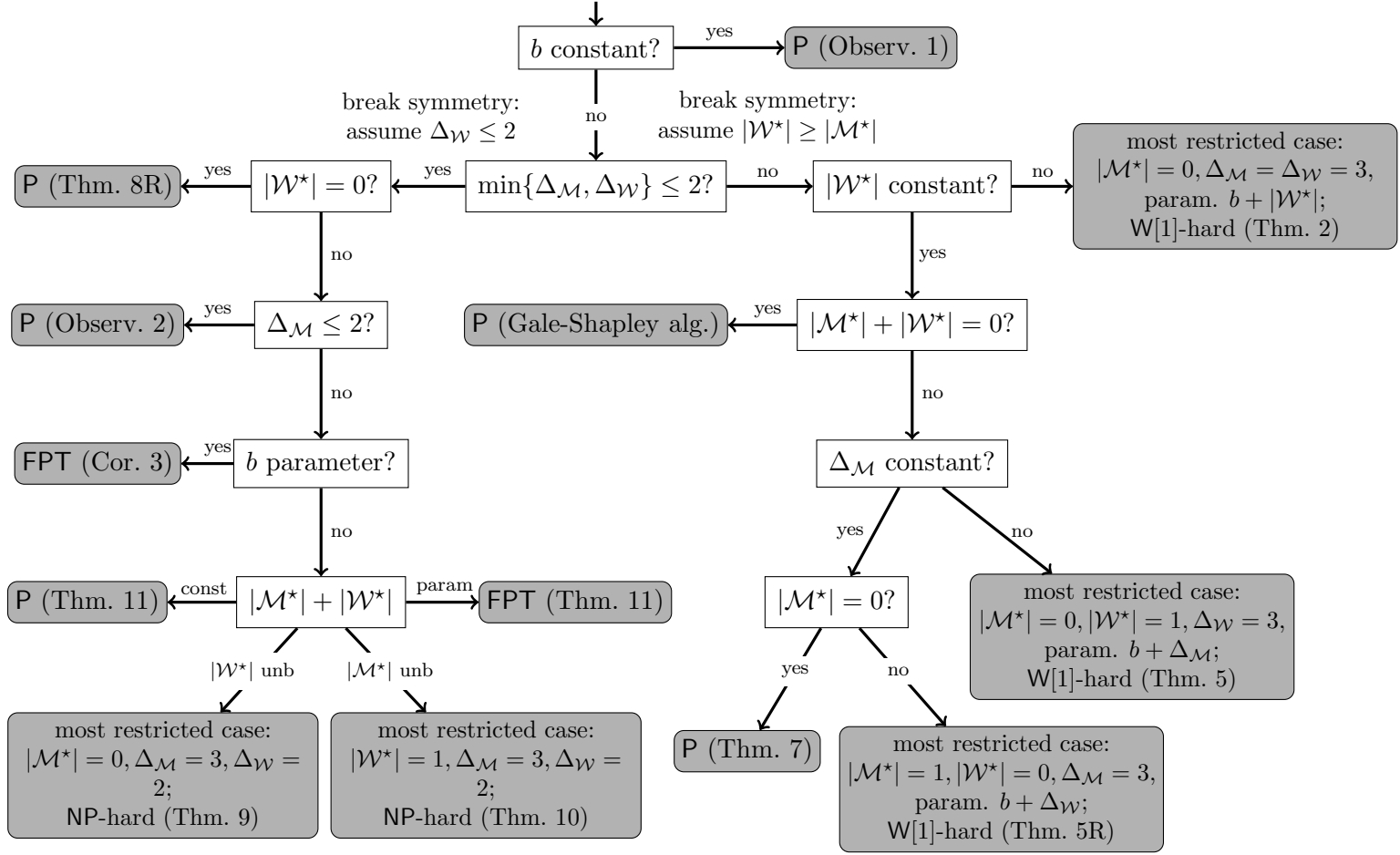
\begin{sidewaysfigure}
    \centering
    %\vspace{16em}
    \begin{tikzpicture}[every node/.style={transform shape}]
    \node (0) at (0,0.8){};
    \node[rectangle,inner sep=5pt,draw] (1) at (0,0){$b$ constant?};
    \node[rectangle,inner sep=3pt,draw,rounded corners,fill=black!30] (2) at (4,0){$\mathsf{P}$ (Observ.~\ref{observ:bounded-b})};
    \node[rectangle,inner sep=5pt,draw] (3) at (0,-1.6){$\min\{\Delta_{\mathcal M},\Delta_{\mathcal W}\}\leq 2$?};
    \node[rectangle,inner sep=5pt,draw] (4) at (-4,-1.6){$|\mathcal W^\star| = 0$?};
    \node[rectangle,inner sep=5pt,draw] (5) at (-4,-2.8){$\Delta_{\mathcal M}\leq 2$?};
    \node[rectangle,inner sep=5pt,draw] (6) at (-4,-4){$b$ parameter?};
    \node[rectangle,inner sep=5pt,draw] (24) at (-4,-5.2){$|\mathcal W^\star|$ unbounded?};    
    \node[rectangle,inner sep=5pt,draw] (7) at (-4,-6.4){$|\mathcal W^\star|+|\mathcal M^\star|$};
    \node[rectangle,inner sep=5pt,draw] (26) at (-4,-7.6){$|\mathcal W^\star|+\Delta_{\mathcal{M}}$};
    \node[rectangle,inner sep=3pt,draw,rounded corners,fill=black!30] (8) at (-7.25,-1.6){$\mathsf{P}$ (Thm.~\ref{thm:restricted-matching-poly}R)};
    \node[rectangle,inner sep=3pt,draw,rounded corners,fill=black!30] (9) at (-7.25,-2.8){$\mathsf{P}$ (Observ.~\ref{observ:maxlist2})};
    \node[rectangle,inner sep=3pt,draw,rounded corners,fill=black!30] (10) at (-7.25,-4){$\mathsf{FPT}$ (Cor.~\ref{thm:fpt-case2xCor})};
%    \node[rectangle,inner sep=3pt,draw,rounded corners,fill=black!30] (11) at (-6.3,-10.5){\small{\parbox{4.3cm}{\centering most restricted case:\\$|\mathcal M^\star|=0,\Delta_{\mathcal M}=3,\Delta_{\mathcal W}=2$;\\$\mathsf{NP}$-hard (Thm.~\ref{thm-case23})}}};
    \node[rectangle,inner sep=3pt,draw,rounded corners,fill=black!30] (12) at (-4,-9){\small{\parbox{4.3cm}{\centering most restricted case:\\$|\mathcal W^\star|=1,\Delta_{\mathcal W}=2$;\\$\mathsf{NP}$-hard (Thm.~\ref{thm:smc2-npc-1woman})}}};
    \node[rectangle,inner sep=5pt,draw] (13) at (4,-1.6){$|\mathcal W^\star|$ constant?};
    \node[rectangle,inner sep=3pt,draw,rounded corners,fill=black!30] (14) at (8.0,-1.6){\small{\parbox{3.5cm}{\centering most restricted case:\\$|\mathcal M^\star| = 0,\Delta_{\mathcal M} = \Delta_{\mathcal W} = 3$,\\ param. $b+|\mathcal W^\star|$;\\ $\mathsf{W}[1]$-hard (Thm.~\ref{thm:smc-minblock-mainhardness})}}};
    \node[rectangle,inner sep=5pt,draw] (15) at (4,-3.0){$|\mathcal M^\star|+|\mathcal W^\star| = 0$?};
    \node[rectangle,inner sep=3pt,draw,rounded corners,fill=black!30] (16) at (8.0,-3.0){$\mathsf{P}$~(Gale-Shapley alg.)};
    \node[rectangle,inner sep=5pt,draw] (17) at (4,-4.4){$\Delta_{\mathcal M}$ constant?};
    \node[rectangle,inner sep=5pt,draw] (18) at (4,-5.6){$|\mathcal M^\star| = 0$?};
    \node[rectangle,inner sep=3pt,draw,rounded corners,fill=black!30] (19) at (2.8,-7.1){$\mathsf{P}$ (Thm.~\ref{thm:smc-allconstant})};
    \node[rectangle,inner sep=3pt,draw,rounded corners,fill=black!30] (20) at (6.6,-7.6){\small{\parbox{4.1cm}{\centering most restricted case:\\$|\mathcal M^\star| = 1, |\mathcal W^\star| = 0,\Delta_{\mathcal M} = 3$,\\ param. $b+\Delta_{\mathcal W}$;\\ $\mathsf{W}[1]$-hard (Thm.~\ref{thm:1woman-strict}R)}}};
    \node[rectangle,inner sep=3pt,draw,rounded corners,fill=black!30] (21) at (8.0,-4.4){\small{\parbox{4.1cm}{\centering most restricted case:\\$|\mathcal M^\star|=0,|\mathcal W^\star|=1,\Delta_{\mathcal W}=3$,\\ param. $b+\Delta_{\mathcal M}$;\\ $\mathsf{W}[1]$-hard (Thm.~\ref{thm:1woman-strict})}}};
    \node[rectangle,inner sep=3pt,draw,rounded corners,fill=black!30] (25) at (0.3,-5.2){\small{\parbox{4.1cm}{\centering most restricted case:\\$|\mathcal M^\star|=0,\Delta_{\mathcal W}=2,\Delta_{\mathcal M}=3$,\\$\mathsf{NP}$-hard (Thm.~\ref{thm-case23})}}};
    \node[rectangle,inner sep=3pt,draw,rounded corners,fill=black!30] (22) at (-7.2,-6.4){$\mathsf{P}$ (Thm.~\ref{thm:fpt-case2x})};
    \node[rectangle,inner sep=3pt,draw,rounded corners,fill=black!30] (23) at (-0.3,-6.4){$\mathsf{FPT}$ (Thm.~\ref{thm:fpt-case2x})};
    \node[rectangle,inner sep=3pt,draw,rounded corners,fill=black!30] (27) at (-7.2,-7.6){$\mathsf{P}$ (Thm.~\ref{thm:fpt-case2x-extended})};
    \node[rectangle,inner sep=3pt,draw,rounded corners,fill=black!30] (28) at (-0.3,-7.6){$\mathsf{FPT}$ (Thm.~\ref{thm:fpt-case2x-extended})};    
    \draw[->,very thick](0)--(1);
    \draw[->,very thick](1)--(2);
    \draw[->,very thick](1)--(3);
    \draw[->,very thick](3)--(4);
    \draw[->,very thick](4)--(8);
    \draw[->,very thick](4)--(5);
    \draw[->,very thick](5)--(9);
    \draw[->,very thick](5)--(6);
    \draw[->,very thick](6)--(10);
    \draw[->,very thick](6)--(24);
    \draw[->,very thick](18)--(19);
    \draw[->,very thick](18)--(20);
    \draw[->,very thick](3)--(13);
    \draw[->,very thick](13)--(15);
    \draw[->,very thick](15)--(17);
    \draw[->,very thick](17)--(18);
    \draw[->,very thick](15)--(16);
    \draw[->,very thick](17)--(21);
    \draw[->,very thick](7)--(22);
    \draw[->,very thick](7)--(23);
    \draw[->,very thick](13)--(14);
    \draw[->,very thick](24)--(25);
    \draw[->,very thick](24)--(7);    
    \draw[->,very thick](7)--(26);    
    \draw[->,very thick](26)--(27);
    \draw[->,very thick](26)--(28);    
    \draw[->,very thick](26)--(12);
    \node[rectangle,minimum width=2em] (a) at (1.8,0.2){\scriptsize{yes}};%
    \node[rectangle,minimum width=2em,fill=white] (b) at (0,-0.8){\scriptsize{no}};%
    \node[rectangle,minimum width=2em] (c) at (2.1,-1.4){\scriptsize{no}}; %
    \node[rectangle,minimum width=2em] (d) at (5.6,-1.4){\scriptsize{no}}; %
    \node[rectangle,minimum width=2em] (e) at (5.8,-2.8){\scriptsize{yes}}; %
    \node[rectangle,minimum width=2em] (f) at (4.3,-2.3){\scriptsize{yes}}; %
    \node[rectangle,minimum width=2em] (f) at (4.3,-3.7){\scriptsize{no}}; %
    \node[rectangle,minimum width=2em] (g) at (4.3,-5){\scriptsize{yes}};
    \node[rectangle,minimum width=2em] (f) at (5.4,-4.2){\scriptsize{no}}; %
    \node[rectangle,minimum width=2em] (h) at (5.3,-6.3){\scriptsize{no}};
    \node[rectangle,minimum width=2em] (i) at (3,-6.35){\scriptsize{yes}};
    \node[rectangle,minimum width=2em] (j) at (-2.3,-1.4){\scriptsize{yes}}; %
    \node[rectangle,minimum width=2em] (k) at (-5.5,-1.4){\scriptsize{yes}}; %
    \node[rectangle,minimum width=2em] (l) at (-5.5,-2.6){\scriptsize{yes}}; %
    \node[rectangle,minimum width=2em] (m) at (-2.3,-5){\scriptsize{yes}}; 
    \node[rectangle,minimum width=2em] (m) at (-5.5,-3.8){\scriptsize{yes}}; %
    \node[rectangle,minimum width=2em] (n) at (-3.7,-2.2){\scriptsize{no}}; %
    \node[rectangle,minimum width=2em] (o) at (-3.7,-3.4){\scriptsize{no}};
    \node[rectangle,minimum width=2em] (o) at (-3.7,-4.6){\scriptsize{no}};
    \node[rectangle,minimum width=2em] (p) at (-3.7,-5.8){\scriptsize{no}};
    \node[rectangle,minimum width=2em] (p) at (-3.3,-7){\scriptsize{$|\mathcal M^\star|$ unb}};
    \node[rectangle,minimum width=2em] (p) at (-3.3,-8.2){\scriptsize{$\Delta_{\mathcal M}$ unb}};
%    \node[rectangle,minimum width=2em,fill=white] (q) at (-5,-9){\scriptsize{$|\mathcal W^\star|$ unb}};
%    \node[rectangle,minimum width=2em,fill=white] (r) at (-3,-9){\scriptsize{$|\mathcal M^\star|$ unb}};
    \node[rectangle,minimum width=2em] (t) at (-5.7,-6.2){\scriptsize{const}};
    \node[rectangle,minimum width=2em] (u) at (-2.2,-6.2){\scriptsize{param}};
        \node[rectangle,minimum width=2em] (tt) at (-5.7,-7.4){\scriptsize{const}};
    \node[rectangle,minimum width=2em] (uu) at (-2.2,-7.4){\scriptsize{param}};
    \node (A) at (-2.4,-0.8){\small{\parbox{0.25\textwidth}{\centering break symmetry:\\ assume $\Delta_{\mathcal W}\leq 2$}}};
    \node (B) at (2.5,-0.8){\small{\parbox{0.25\textwidth}{\centering break symmetry:\\ assume $|\mathcal W^\star|\geq |\mathcal M^\star|$}}};
    %  \node[rectangle,draw,dashed] (C) at (-7,-0.5){\small{\parbox{0.2\textwidth}{\centering Thm. $x$R denotes\\ reflection of Thm. $x$}}};
    \end{tikzpicture}
    \caption{Decision diagram for determining the complexity of SMC.
        We remark that if a certain restriction of SMC where one of the values $v \in S$ is a constant $k$ proves to be $\mathsf{NP}$-hard or $\mathsf{W[1]}$-hard with some parameterization, then it is easy to see that the same hardness result also holds for the case where $v \geq k$ (and all other assumptions are the same). 
        We refer to the ``reflection'' of a result by adding the postfix `R' to its name (so Theorem~$x$R denotes the reflection of Theorem~$x$); here by reflection we mean the statement obtained by switching the roles of men and women. 
    }
    \label{fig-decision-diagram}
\end{sidewaysfigure}

Going through this decision diagram should convince the reader that any parameterized restriction of SMC with respect to the set $S$
is classified as either polynomial-time solvable ($\mathsf{P}$) or $\mathsf{NP}$-hard, and in the latter case, either fixed-parameter tractable ($\mathsf{FPT}$), or $\mathsf{W[1]}$-hard with the given parameterization (if any). 
In particular, when we provide parameterized results, this means that the parameterized restriction of SMC in question is $\mathsf{NP}$-hard without parameterization. 

% Before presenting our decision diagram, we remark that if a certain restriction of SMC  where one of the values $v \in S$ is a constant $k$ proves to be $\mathsf{NP}$-hard or $\mathsf{W[1]}$-hard with some parameterization,  then it is easy to see that the same hardness result also holds for the case where $v \geq k$ (and all other assumptions are the same). Finally, we will sometimes use the ``reflection'' of a result, by which we mean switching the roles of men and women. We refer to the reflection of a result by adding the postfix `R' to its name, so Theorem $x$R denotes the reflection of Theorem $x$.
%\medskip

Given the strong polynomial-time inapproximability bounds, as well as the parameterized intractability results of this paper, we pose as an open question whether \emph{fixed-parameter approximation algorithms} can beat either of these obstacles for solving SMC. 

Another challenge for future research is to investigate possible adaptations of the proposed algorithms to the Hospitals/Residents model
(note that, naturally, all our hardness results for SMC-1 apply to the HRLQ problem), or to a setting where ties are allowed in the preference lists.

%% For one-column wide figures use
%\begin{figure}
%% Use the relevant command to insert your figure file.
%% For example, with the graphicx package use
%  \includegraphics{example.eps}
%% figure caption is below the figure
%\caption{Please write your figure caption here}
%\label{fig:1}       % Give a unique label
%\end{figure}
%%
%% For two-column wide figures use
%\begin{figure*}
%% Use the relevant command to insert your figure file.
%% For example, with the graphicx package use
%  \includegraphics[width=0.75\textwidth]{example.eps}
%% figure caption is below the figure
%\caption{Please write your figure caption here}
%\label{fig:2}       % Give a unique label
%\end{figure*}
%%
%% For tables use
%\begin{table}
%% table caption is above the table
%\caption{Please write your table caption here}
%\label{tab:1}       % Give a unique label
%% For LaTeX tables use
%\begin{tabular}{lll}
%\hline\noalign{\smallskip}
%first & second & third  \\
%\noalign{\smallskip}\hline\noalign{\smallskip}
%number & number & number \\
%number & number & number \\
%\noalign{\smallskip}\hline
%\end{tabular}
%\end{table}

\medskip
\noindent
\textbf{Acknowledgements.} We thank two anonymous reviewers for their remarks which greatly helped to improve a previous version of this paper; we are grateful for their patience and their rigorous reading of earlier drafts.

% BibTeX users please use one of
%\bibliographystyle{spbasic}      % basic style, author-year citations
\bibliographystyle{abbrvnat}      % mathematics and physical sciences
\bibliography{stablematchings}  % name your BibTeX data base

\begin{thebibliography}{45}
\providecommand{\natexlab}[1]{#1}
\providecommand{\url}[1]{\texttt{#1}}
\expandafter\ifx\csname urlstyle\endcsname\relax
  \providecommand{\doi}[1]{doi: #1}\else
  \providecommand{\doi}{doi: \begingroup \urlstyle{rm}\Url}\fi

\bibitem[Arulselvan et~al.(2018)Arulselvan, Cseh, Gro{\ss}, Manlove, and
  Matuschke]{ArulselvanEtAl2016}
A.~Arulselvan, {\'A}.~Cseh, M.~Gro{\ss}, D.~F. Manlove, and J.~Matuschke.
\newblock Matchings with lower quotas: Algorithms and complexity.
\newblock \emph{Algorithmica}, 8\penalty0 (1):\penalty0 185--208, 2018.

\bibitem[Aziz et~al.(2015)Aziz, Seedig, and von Wedel]{AzizEtAl2015}
H.~Aziz, H.~G. Seedig, and J.~K. von Wedel.
\newblock On the susceptibility of the deferred acceptance algorithm.
\newblock In \emph{AAMAS 2015}, pages 939--947, 2015.

\bibitem[Aziz et~al.(2016)Aziz, Schlotter, and Walsh]{aziz-walsh-schlotter}
H.~Aziz, I.~Schlotter, and T.~Walsh.
\newblock Control of fair division.
\newblock In \emph{IJCAI 2016}, pages 67--73. 2016.

\bibitem[Bir{\'o} et~al.(2010)Bir{\'o}, Fleiner, Irving, and
  Manlove]{BiroEtAl2010}
P.~Bir{\'o}, T.~Fleiner, R.~W. Irving, and D.~F. Manlove.
\newblock The college admissions problem with lower and common quotas.
\newblock \emph{Theoret. Comput. Sci.}, 411\penalty0 (34-36):\penalty0
  3136--3153, 2010.

\bibitem[Bir\'o et~al.(2011)Bir\'o, Irving, and Schlotter]{BiroEtAl2011}
P.~Bir\'o, R.~W. Irving, and I.~Schlotter.
\newblock Stable matching with couples: An empirical study.
\newblock \emph{ACM J. Exp. Algorithmics}, 16:\penalty0 1.2:1.1--1.2:1.27, May
  2011.

\bibitem[Bir\'o et~al.(2012)Bir\'o, Manlove, and McDermid]{BiroMcDermid2012}
P.~Bir\'o, D.~F. Manlove, and E.~J. McDermid.
\newblock ``{A}lmost stable'' matchings in the roommates problem with bounded
  preference lists.
\newblock \emph{Theoret. Comput. Sci.}, 432:\penalty0 10--20, 2012.

\bibitem[Brandt et~al.(2016)Brandt, Conitzer, Endriss, Lang, and
  Procaccia]{comsoc-handbook}
F.~Brandt, V.~Conitzer, U.~Endriss, J.~Lang, and A.~D. Procaccia.
\newblock \emph{Handbook of Computational Social Choice}.
\newblock Cambridge University Press, New York, NY, USA, 2016.

\bibitem[Cechl{\'a}rov{\'a} and Fleiner(2017)]{CechlarovaFleiner2016}
K.~Cechl{\'a}rov{\'a} and T.~Fleiner.
\newblock Pareto optimal matchings with lower quotas.
\newblock \emph{Mathematical Social Sciences}, 88\penalty0 (C):\penalty0 3--10,
  2017.

\bibitem[Chen et~al.(2004)Chen, Huang, Kanj, and Xia]{ChenEtAl2004}
J.~Chen, X.~Huang, I.~A. Kanj, and G.~Xia.
\newblock Linear {FPT} reductions and computational lower bounds.
\newblock In \emph{STOC 2004}, pages 212--221, 2004.

\bibitem[Cseh and Manlove(2016)]{CsehManlove2016}
{\'A}.~Cseh and D.~F. Manlove.
\newblock Stable marriage and roommates problems with restricted edges:
  complexity and approximability.
\newblock \emph{Discrete Optim.}, 20:\penalty0 62--89, 2016.

\bibitem[Cygan et~al.(2015)Cygan, Fomin, Kowalik, Lokshtanov, Marx, Pilipczuk,
  Pilipczuk, and Saurabh]{CyganEtAl2015}
M.~Cygan, F.~V. Fomin, {\L}.~Kowalik, D.~Lokshtanov, D.~Marx, M.~Pilipczuk,
  M.~Pilipczuk, and S.~Saurabh.
\newblock \emph{Parameterized algorithms}.
\newblock Springer, Cham, 2015.

\bibitem[Dias et~al.(2003)Dias, da~Fonseca, de~Figueiredo, and
  Szwarcfiter]{DiasEtAl2003}
V.~M. Dias, G.~D. da~Fonseca, C.~M. de~Figueiredo, and J.~L. Szwarcfiter.
\newblock The stable marriage problem with restricted pairs.
\newblock \emph{Theoret. Comput. Sci.}, 306\penalty0 (1--3):\penalty0 391--405,
  2003.

\bibitem[Ehlers et~al.(2014)Ehlers, Hafalir, Yenmez, and
  Yildirim]{EhlersEtAl2014}
L.~Ehlers, I.~E. Hafalir, M.~B. Yenmez, and M.~A. Yildirim.
\newblock School choice with controlled choice constraints: Hard bounds versus
  soft bounds.
\newblock \emph{J. Econ. Theory}, 153:\penalty0 648--683, 2014.

\bibitem[Fellows et~al.(2009)Fellows, Hermelin, Rosamond, and
  Vialette]{FellowsEtAl2009}
M.~R. Fellows, D.~Hermelin, F.~Rosamond, and S.~Vialette.
\newblock On the parameterized complexity of multiple-interval graph problems.
\newblock \emph{Theoret. Comput. Sci.}, 410\penalty0 (1):\penalty0 53--61,
  2009.

\bibitem[Fleiner and Kamiyama(2016)]{FleinerKamiyama2014}
T.~Fleiner and N.~Kamiyama.
\newblock A matroid approach to stable matchings with lower quotas.
\newblock \emph{Math. Oper. Res.}, 41\penalty0 (2):\penalty0 734--744, 2016.

\bibitem[Fleiner et~al.(2007)Fleiner, Irving, and Manlove]{Fleineretal2007}
T.~Fleiner, R.~W. Irving, and D.~F. Manlove.
\newblock Efficient algorithms for generalised stable marriage and roommates
  problems.
\newblock \emph{Theoret. Comput. Sci.}, 381:\penalty0 162--176, 2007.

\bibitem[Fragiadakis and Troyan(2017)]{FragiadakisTroyan2017}
D.~Fragiadakis and P.~Troyan.
\newblock Improving matching under hard distributional constraints.
\newblock \emph{Theoret. Econ.}, 12:\penalty0 863--908, 2017.

\bibitem[Fragiadakis et~al.(2016)Fragiadakis, Iwasaki, Troyan, Ueda, and
  Yokoo]{FragiadakisEtAl2016}
D.~Fragiadakis, A.~Iwasaki, P.~Troyan, S.~Ueda, and M.~Yokoo.
\newblock Strategyproof matching with minimum quotas.
\newblock \emph{ACM Trans. Econ. Comput.}, 4\penalty0 (1):\penalty0 6:1--6:40,
  Jan. 2016.

\bibitem[Gale and Shapley(1962)]{GaleShapley1962}
D.~Gale and L.~S. Shapley.
\newblock College {A}dmissions and the {S}tability of {M}arriage.
\newblock \emph{Amer. Math. Monthly}, 69\penalty0 (1):\penalty0 9--15, 1962.

\bibitem[Gale and Sotomayor(1985)]{GaleSotomayor1985}
D.~Gale and M.~Sotomayor.
\newblock Some remarks on the stable matching problem.
\newblock \emph{Discrete Appl. Math.}, 11\penalty0 (3):\penalty0 223--232,
  1985.

\bibitem[Garey and Johnson(1979)]{GareyJohnson1979}
M.~Garey and D.~S. Johnson.
\newblock \emph{Computers and Intractability: A Guide to the Theory of
  $\mathsf{NP}$-Completeness}.
\newblock W. H. Freeman and Co., New York, 1979.

\bibitem[Goto et~al.(2016)Goto, Iwasaki, Kawasaki, Kurata, Yasuda, and
  Yokoo]{GotoEtAl2016}
M.~Goto, A.~Iwasaki, Y.~Kawasaki, R.~Kurata, Y.~Yasuda, and M.~Yokoo.
\newblock Strategyproof matching with regional minimum and maximum quotas.
\newblock \emph{Artificial Intelligence}, 235:\penalty0 40--57, 2016.

\bibitem[Gusfield and Irving(1989)]{GusfieldIrving1989}
D.~Gusfield and R.~W. Irving.
\newblock \emph{The stable marriage problem: structure and algorithms}.
\newblock Foundations of Computing Series. MIT Press, Cambridge, MA, 1989.

\bibitem[Hamada et~al.(2016)Hamada, Iwama, and Miyazaki]{HamadaEtAl2014}
K.~Hamada, K.~Iwama, and S.~Miyazaki.
\newblock The hospitals/residents problem with lower quotas.
\newblock \emph{Algorithmica}, 74\penalty0 (1):\penalty0 440--465, 2016.

\bibitem[Hopcroft and Karp(1973)]{HopcroftKarp1973}
J.~E. Hopcroft and R.~M. Karp.
\newblock An $n^{5/2}$ algorithm for maximum matchings in bipartite graphs.
\newblock \emph{SIAM J. Comput.}, 2\penalty0 (4):\penalty0 225--231, 1973.

\bibitem[Huang(2010)]{Huang2010}
C.-C. Huang.
\newblock Classified stable matching.
\newblock In \emph{SODA 2010}, pages 1235--1253, 2010.

\bibitem[Immorlica and Mahdian(2005)]{ImmorlicaMahdian2005}
N.~Immorlica and M.~Mahdian.
\newblock Marriage, honesty, and stability.
\newblock In \emph{SODA 2005}, pages 53--62, 2005.

\bibitem[Impagliazzo et~al.(2001)Impagliazzo, Paturi, and
  Zane]{ImpagliazzoEtAl2001}
R.~Impagliazzo, R.~Paturi, and F.~Zane.
\newblock Which problems have strongly exponential complexity?
\newblock \emph{J. Comput. System Sci.}, 63\penalty0 (4):\penalty0 512--530,
  2001.

\bibitem[Irving et~al.(2009)Irving, Manlove, and O'Malley]{IrvingEtAl2009}
R.~W. Irving, D.~Manlove, and G.~O'Malley.
\newblock Stable marriage with ties and bounded length preference lists.
\newblock \emph{J. Discrete Algorithms}, 7:\penalty0 213--219, 2009.

\bibitem[Kamiyama(2013)]{Kamiyama2013}
N.~Kamiyama.
\newblock A note on the serial dictatorship with project closures.
\newblock \emph{Oper. Res. Lett.}, 41\penalty0 (5):\penalty0 559--561, 2013.

\bibitem[Knuth(1976)]{Knuth1976}
D.~E. Knuth.
\newblock \emph{Mariages stables et leurs relations avec d'autres probl\`emes
  combinatoires}.
\newblock Les Presses de l'Universit\'e de Montr\'eal, Montreal, Que., 1976.

\bibitem[Kojima et~al.(2013)Kojima, Pathak, and Roth]{KojimaEtAl2013}
F.~Kojima, P.~A. Pathak, and A.~E. Roth.
\newblock Matching with couples: Stability and incentives in large markets.
\newblock \emph{Quarterly J. Econ.}, 128\penalty0 (4):\penalty0 1585--1632,
  2013.

\bibitem[Kominers and S{\"o}nmez(2016)]{KominersSonmez2016}
S.~D. Kominers and T.~S{\"o}nmez.
\newblock Matching with slot-specific priorities: Theory.
\newblock \emph{Theoret. Econ.}, 11\penalty0 (2):\penalty0 683--710, 2016.

\bibitem[Kuhn(1955)]{kuhn-weighted-matching}
H.~W. Kuhn.
\newblock The {H}ungarian method for the assignment problem.
\newblock \emph{Naval Research Logistics Quarterly}, 2:\penalty0 83--97, 1955.

\bibitem[Lov{\'a}sz and Plummer(2009)]{LovaszPlummer-book}
L.~Lov{\'a}sz and M.~Plummer.
\newblock \emph{Matching Theory}.
\newblock AMS Chelsea Publishing Series. North-Holland, 2009.

\bibitem[Manlove(2013)]{Manlove2013}
D.~F. Manlove.
\newblock \emph{Algorithmics of matching under preferences}, volume~2 of
  \emph{Series on Theoretical Computer Science}.
\newblock World Scientific Publishing Co. Pte. Ltd., Singapore, 2013.

\bibitem[Marx and Schlotter(2010)]{MarxSchlotter2010}
D.~Marx and I.~Schlotter.
\newblock Parameterized complexity and local search approaches for the stable
  marriage problem with ties.
\newblock \emph{Algorithmica}, 58\penalty0 (1):\penalty0 170--187, 2010.

\bibitem[Marx and Schlotter(2011)]{MarxSchlotter2011}
D.~Marx and I.~Schlotter.
\newblock Stable assignment with couples: Parameterized complexity and local
  search.
\newblock \emph{Discrete Optim.}, 8:\penalty0 25--40, 2011.

\bibitem[Mnich and Schlotter(2017)]{MnichSchlotter2017}
M.~Mnich and I.~Schlotter.
\newblock Stable marriage with covering constraints: A complete computational
  trichotomy.
\newblock In \emph{SAGT 2017}, volume 10504 of \emph{Lecture Notes Comput.
  Sci.}, pages 320--332, 2017.

\bibitem[Monte and Tumennasan(2013)]{MonteTumennasan2013}
D.~Monte and N.~Tumennasan.
\newblock Matching with quorums.
\newblock \emph{Econ. Lett.}, 120\penalty0 (1):\penalty0 14--17, 2013.

\bibitem[Roth and Peranson(1999)]{RothPeranson1999}
A.~E. Roth and E.~Peranson.
\newblock The redesign of the matching market for {A}merican physicians: Some
  engineering aspects of economic design.
\newblock \emph{Amer. Econ. Rev.}, 89:\penalty0 748--780, 1999.

\bibitem[Veskioja(2005)]{Veskioja2005}
T.~Veskioja.
\newblock \emph{Stable Marriage Problem and College Admission}.
\newblock PhD thesis, Department of Informatics, Tallinn University of
  Technology, 2005.

\bibitem[Westkamp(2013)]{Westkamp2013}
A.~Westkamp.
\newblock An analysis of the {G}erman university admissions system.
\newblock \emph{Econ. Theory}, 53\penalty0 (3):\penalty0 561--589, 2013.

\bibitem[Yokoi(2017{\natexlab{a}})]{Yokoi2017}
Y.~Yokoi.
\newblock A generalized polymatroid approach to stable matchings with lower
  quotas.
\newblock \emph{Math. Oper. Res.}, 42:\penalty0 238--255, 2017{\natexlab{a}}.

\bibitem[Yokoi(2017{\natexlab{b}})]{Yokoi2017b}
Y.~Yokoi.
\newblock Envy-free matchings with lower quotas.
\newblock In \emph{ISAAC 2017}, volume~92 of \emph{Leibniz Int. Proc.
  Informatics}, pages 67:1--67:12, 2017{\natexlab{b}}.

\end{thebibliography}

\end{document}